\let\OLDthebibliography\thebibliography
\renewcommand\thebibliography[1]{
  \OLDthebibliography{#1}
  \setlength{\parskip}{0pt}
  \setlength{\itemsep}{0pt plus 0.0ex}
}
\newcommand{\beq}{\begin{equation}}
\newcommand{\eeq}{\end{equation}}
\newcommand{\bF}{\ensuremath{\mathbb{F}}}
\newcommand{\bP}{\ensuremath{\mathbb{P}}}
\newcommand{\rd}{/\!\!/\!\!/\!\!/}
\newcommand{\rdd}{/\!\!/}
\newcommand{\scF}{\ensuremath{\mathcal{F}}}
\newcommand{\scQ}{\ensuremath{\mathcal{Q}}}
\newcommand{\scX}{\ensuremath{\mathcal{X}}}
\newcommand{\fg}{\mathfrak{g}}
\newcommand{\fh}{\mathfrak{h}}
\newcommand{\fgh}{\widehat{\mathfrak{g}}}
\newcommand{\fC}{\mathfrak{C}}
\newcommand{\fP}{\mathfrak{P}}
\newcommand{\fR}{\mathfrak{R}}
\newtheorem{theorem}{Theorem}[section]
\newtheorem{conjecture}[theorem]{Conjecture}
\DeclareMathOperator{\Stab}{Stab}
\DeclareMathOperator{\Def}{Def}
\DeclareMathOperator{\Hom}{Hom}
\DeclareMathOperator{\Aut}{Aut}
\DeclareMathOperator{\Obs}{Obs}
\DeclareMathOperator{\Lie}{Lie}
\DeclareMathOperator{\Ell}{Ell}
\DeclareMathOperator{\pt}{pt}
\DeclareMathOperator{\Pic}{Pic}
\DeclareMathOperator{\tr}{tr}
\DeclareMathOperator{\ev}{ev}
\DeclareMathOperator{\Vx}{\mathbf{V}}
\DeclareMathOperator{\tVx}{\widetilde{\Vx}}
\DeclareMathOperator{\QM}{\mathsf{QM}}
\DeclareMathOperator{\End}{End}
\DeclareMathOperator{\Proj}{Proj}
\newcommand{\Ct}{\mathbb{C}^\times}
\newcommand{\bbV}{\mathbb{V}}
\newcommand{\bbW}{\mathbb{W}}
\newcommand{\aroof}{\widehat{\mathsf{a}}}
\newcommand{\Hd}{{H}^{\raisebox{0.5mm}{$\scriptscriptstyle \bullet$}}}
\newcommand{\Cp}{\mathsf{Cap}}
\newcommand{\Glue}{\mathsf{Glue}}
\newcommand{\Fusion}{\mathsf{J}}
\newcommand{\Vertex}{\mathsf{Vertex}}
\newcommand{\tO}{\widehat{\mathscr{O}}}
\newcommand{\vir}{\textup{vir}}
\newcommand{\cF}{\mathscr{F}}
\newcommand{\tX}{\widetilde{X}} 
\newcommand{\tcF}{\widetilde{\cF}}  
\newcommand{\bfC}{\mathbf{C}}
\newcommand{\cK}{\mathscr{K}}
\newcommand{\cL}{\mathscr{L}}
\newcommand{\cO}{\mathscr{O}}
\newcommand{\cT}{\mathscr{T}}
\newcommand{\cV}{\mathscr{V}}
\newcommand{\cM}{\mathscr{M}}
\newcommand{\cU}{\mathscr{U}}
\newcommand{\cG}{\mathscr{G}}
\newcommand{\cW}{\mathscr{W}}
\newcommand{\Q}{\mathbb{Q}}
\newcommand{\C}{\mathbb{C}}
\newcommand{\Z}{\mathbb{Z}}
\newcommand{\bA}{\mathsf{A}}
\newcommand{\bT}{\mathsf{T}}
\newcommand{\bS}{\mathsf{S}}
\newcommand{\be}{\mathbf{e}}
\newcommand{\bd}{\mathbf{d}}
\newcommand{\bla}{\boldsymbol{\lambda}}
\newcommand{\cHom}{\mathscr{H}\!\!\mathit{om}}
\newcommand{\cEnd}{\mathscr{E}\!\mathit{nd}}
\newcommand{\nc}{\newcommand}
\nc{\mc}{\mathcal}
\nc{\wh}{\widehat}
\nc{\ghat}{\wh\fg}
\nc{\on}{\operatorname}
\def\neg{\negthinspace}
\newcommand{\Lfgh}{\wh{^L\fg}}
\nc{\n}{{\mathfrak n}}
\nc{\pone}{{\mathbb C}{\mathbb P}^1}
\nc{\la}{\lambda}
\nc{\al}{\alpha}
\DeclareMathOperator{\bVx}{\overline{\Vx}}
\newtheorem{Proposition}{Proposition} 
\newtheorem{Lemma}{Lemma} 
\newtheorem{Corollary}{Corollary} 
\newtheorem{Theorem}{Theorem}
\theoremstyle{definition}
\newtheorem{remark}[theorem]{Remark}
\def\tr{{\rm tr}\,}
\begin{document}
\baselineskip=28pt  % a la harvmac
\baselineskip 0.7cm

\begin{titlepage}

%% Set the number of the title with 0

% change the footnote symbol
\renewcommand{\thefootnote}{\fnsymbol{footnote}}

\begin{center}

\vspace*{10mm}
\vskip 1in
{\LARGE \bf
Quantum $q$-Langlands Correspondence
}
%\vskip 0.5cm
\vspace*{10mm}

{\large
Mina Aganagic$^{1,2}$, Edward Frenkel$^{2}$, Andrei Okounkov$^{3}$
}
\\
\medskip

\vskip 0.5cm

{\it
$^1$Center for Theoretical Physics, University of California, Berkeley\\
$^2$Department of Mathematics, University of California, Berkeley\\
$^3$Department of Mathematics, Columbia University 
}

\vskip 1cm

{\em Dedicated to Ernest Vinberg on the occasion of his 80th birthday}

\end{center}

\vskip 1cm

%-----------------------------------------
\centerline{{\bf Abstract}}
\medskip

We conjecture, and prove for all simply-laced Lie algebras, an
identification between the spaces of $q$-deformed conformal blocks for
the deformed ${\mc W}$-algebra ${\mc W}_{q,t}(\fg)$ and quantum affine
algebra of $\wh{^L\fg}$, where $^L\fg$ is the Langlands dual Lie
algebra to $\fg$. We argue that this identification may be viewed as a
manifestation of a $q$-deformation of the quantum Langlands
correspondence. Our proof relies on expressing the $q$-deformed
conformal blocks for both algebras in terms of the quantum K-theory of
the Nakajima quiver varieties. The physical origin of the isomorphism
between them lies in the 6d little string theory. The quantum
Langlands correspondence emerges in the limit in which the 6d little
string theory becomes the 6d conformal field theory with (2,0)
supersymmetry.

\noindent\end{titlepage}
\setcounter{page}{1} % don't number title page

\setcounter{tocdepth}{2}
\tableofcontents

%\setcounter{section}{-1}
%%%%%%%%%
\section{Introduction}\label{sec:zero}

\subsection{Overview} 

% Langlands Program was launched by Robert Langlands 50 years ago with
% the goal of relating certain questions in number theory and harmonic
% analysis. It was subsequently generalized to the geometric setting of
% a complex projective algebraic curve ${\mc C}$.

In the 50 years of its existence, the Langlands program and 
the Langlands philosophy have grown to encompass many 
objects of central importance to both mathematics and 
mathematical physics.

In particular, the geometric Langlands correspondence 
starts with a complex projective algebraic curve ${\mc C}$
with the goal, as it is usually understood today, to 
prove an equivalence between certain categories 
associated to a pair $G$, $^LG$ of Langlands dual connected reductive
  complex Lie groups. These are certain categories 
of sheaves
  (of ${\mc D}$-modules and ${\mc O}$-modules, respectively) on the
  moduli stack $\on{Bun}_{^LG}$ of $^LG$-bundles on ${\mc C}$ and
  the moduli stack $\on{Loc}_G$ of flat $G$-bundles on ${\mc C}$.\footnote{The existence of such an equivalence,
    which may be viewed as a categorical non-abelian Fourier
    transform, was originally proposed by Beilinson and Drinfeld;
    later, a precise conjecture was formulated in \cite{AG}. We note
    that some of our notation bucks the usual conventions. In
    particular, the roles of $G$ and $^LG$ are exchanged.}  Kapustin
  and Witten have shown \cite{KW} that this equivalence is closely
  related to $S$-duality of maximally supersymmetric 4d gauge theories
  with gauge groups being the compact forms of $G$ and $^LG$.

 Beilinson and Drinfeld have constructed in \cite{BD} an important part of the geometric Langlands correspondence using the isomorphism
\cite{FF} between the center of the (chiral) affine Kac--Moody
 algebra $\wh{^L\fg}$ at the critical level $^Lk=-^Lh^\vee$ and the
 classical ${\mc W}$-algebra ${\cal W}_\infty({\fg})$. Their
 construction is closely connected to the 2d conformal field theory and
the theory of chiral (or vertex) algebras (see \cite{Frenkel} for a survey; and also  \cite{Wittengrass} in which an
 analogy between 2d CFT and the theory of automorphic representations
 was first observed and investigated).

Since the level of $\wh{^L\fg}$ may be deformed away from the critical
value, and at the same time ${\cal W}_\infty({\fg})$ may be
deformed to the quantum ${\mc W}$-algebra ${\cal W}_\beta({\fg})$, one
is naturally led to look for a quantum deformation of the geometric
Langlands correspondence.
  
Many interesting structures have emerged in the studies under
the umbrella of ``quantum geometric Langlands'' (from the point of
view of 2d CFT \cite{F, Frenkel1, FFS, Stoyanovsky, JT, gaitsW,
  gaitsQ, Sch}; in
the framework of 4d gauge theory \cite{KW,Kapustin,FGT}; and, in the
abelian case, as a deformation of the Fourier--Mukai transform
\cite{PR}).

\subsubsection{}

For us, the main feature of the quantum geometric
  Langlands correspondence is an isomorphism between
the spaces of conformal blocks of certain representations of two
chiral algebras: \beq\label{geomL} \widehat{^L\fg}_{{}^Lk} \qquad
\longleftrightarrow \qquad {\cal W}_{\beta}({\fg}), \eeq the affine
Kac--Moody algebra of $^L{\fg}$ at level ${}^Lk$ and the ${\cal
  W}$-algebra ${\cal W}_{\beta}({\fg})$. The algebra ${\cal
  W}_{\beta}({\fg})$ is obtained by the quantum Drinfeld--Sokolov
reduction \cite{FF:ds,BO,FF} of the affine algebra $\ghat$ at level
$k$, where $\beta = m(k+h^\vee)$ in the notation of
\cite{FR}.\footnote{Thus, what we denote here by ${\cal
    W}_{\beta}(\fg)$ is ${\cal W}_{k}(\fg)$ of \cite{FB,Frenkel},
  where $\beta = m(k+h^\vee)$. In our present notation, the classical
  ${\cal W}$-algebra associated to $\fg$ is ${\cal
    W}_\infty(\fg)$. See Section \ref{glc} for more details.}

We will establish this isomorphism and prove a stronger result in the
case of simply-laced $\fg$ and genus zero curve ${\mc C}$: an identification of
conformal blocks of the two algebras if the parameters are generic and
related by the formula \beq\label{zero1} \beta - m =
\frac{1}{^L(k+h^\vee)}.  \eeq

The relation between the corresponding chiral algebras may be viewed
as a strong/weak coupling transformation. Indeed, if we define $\tau =
\beta/m$ and $^L\tau = - {^L(k+h^\vee)}$, then \eqref{zero1} says that
\beq\label{firstl} \tau -1 = -1/(m ^L\tau), \eeq and so $^L\tau$ near
zero corresponds to large values of $\tau$. The parameters $\tau$ and
$^L\tau$ are related to the complexified coupling constants of the two
$S$-dual 4d Yang-Mills theories. Note the shift $\tau \mapsto \tau-1$,
as compared to the ${\cal W}$-algebra duality formula of \cite{FF}
(see Section \ref{glc} for more details). This is a shift of the theta
angle from 4d gauge theory perspective (see Section \ref{sec:nine}).

Here by identification of the spaces of conformal blocks we mean a
canonical isomorphism between them. However, this canonical
isomorphism arises only after we introduce one more parameter and
perform one more deformation.

\subsubsection{}

We consider a {\em two-parameter deformation} of the geometric
Langlands correspondence: the $q$-defor\-mation together with the
deformation away from the critical level. This turns out to be a
productive point of view.

Namely, we replace the above chiral algebras with their deformed
counterparts: the quantum affine algebra $U_{{\hbar}}(\wh{^L\fg})$, 
which is an
$\hbar$-deformation of the universal enveloping algebra of
$\wh{^L\fg}$ introduced by Drinfeld and Jimbo \cite{Drinfeld,Jimbo},
and the deformed ${\mc W}$-algebra ${\cal
  W}_{q,t}({\fg})$ introduced in \cite{FR} (see also
\cite{SKAO,FF:qw,AKOS} for $\fg=sl_n$), which is a deformation of ${\mc
  W}_\beta(\fg)$. We will refer to both of these as
``$q$-deformations'', both for brevity and because $q$ will appear as
a step in difference equations that are of principal importance to
us. (In our notation, the quantum affine algebra
  $U_{{\hbar}}(\wh{^L\fg})$
  becomes the enveloping algebra of $\wh{^L\fg}$ in the limit $\hbar
  \rightarrow 1$; this agrees with the notation
  used in \cite{OK}. For a fixed non-critical value of $^Lk$, this
  limit is the same as the limit $q\rightarrow 1$.) 

We focus on the case that the curve ${\mc C}$ is an infinite cylinder,
$$
{\cal C}\cong \Ct \cong \textup{infinite cylinder}. 
$$
It should be noted that integrable deformations away from the
conformal point are unlikely to exist unless ${\cal C}$ is flat. The torus case should follow from the case
of the cylinder, by imposing periodic identifications.\footnote{To get
  the deformed conformal blocks on a torus ${\cal C} ={\mathbb
    C^\times}/p^{\mathbb Z}$, one would study with blocks on ${\cal C}
  ={\mathbb C}^{\times}$, but with insertions that are invariant under
  the action of $p^{\mathbb Z}$.} The case when ${\cal C}$ is a plane can be obtained from ours, by taking the radius of the cylinder to infinity.

We conjecture (and prove in the simply-laced case) a correspondence
between $q$-deformed conformal blocks of the quantum affine algebra
$U_{\hbar}(\wh{^L\fg})$ and the deformed ${\cal W}$-algebra
\beq\label{qLanglands}
U_{\hbar}(\wh{^L\fg}) \qquad  \longleftrightarrow \qquad {\cal W}_{q,t}({\fg}),
\eeq
where the parameters 
\beq\label{zero}
q=\hbar^{-^L(k+h^{\vee})}, \qquad t=q^{\beta},
\eeq
are generic and related by the formula 
\beq\label{first}
 t = q^m/ \hbar
\eeq
which yields \eqref{firstl}. 

It is this identification of the deformed conformal blocks that we
refer to as a ``quantum $q$-Langlands correspondence'' in the title of
the present paper.

The physical setting for the correspondence is a six-dimensional
string theory, called the ``$(2,0)$ little string theory''.  The little
string theory \cite{SeibergN, Mm} is a one-parameter deformation of the ubiquitous 6d
$(2,0)$ superconformal theory (see e.g. \cite{Wittenl1}). The
deformation corresponds to giving strings a non-zero characteristic size, and
``converts'' the relevant chiral algebras, such as $\ghat$ and ${\mc
  W}_\beta(\fg)$, into the corresponding deformed algebras.   
  
\subsubsection{}

Some preliminary remarks about deformed conformal
  blocks are in order. In the case of an affine Kac--Moody algebra and a cylinder ${\mc C}$,
the space of conformal blocks is isomorphic to the space of solutions
of the Kniznik-Zamolodchikov (KZ) equations, which behave well as the insertion points are taken to infinity. The space of $q$-deformed
conformal blocks for quantum affine algebras can be similarly defined, following
\cite{FIR}, as the space of solutions of the quantum
Kniznik-Zamolodchikov (qKZ) equations. In either case, there is a particular fundamental solution of the equations which comes from sewing chiral vertex operators. This solution is given by \eqref{electric} in the case of   
deformed conformal blocks of $U_{\hbar}(\Lfgh)$.

To the best of our knowledge, the definition of the
  space of deformed conformal blocks for the deformed ${\mc
    W}$-algebra ${\cal W}_{q,t}({\fg})$ was not available in the
  literature until now. The blocks formally equal correlation
  functions of free field vertex operators of the deformed ${\mc
    W}_{q,t}({\fg})$ algebra in \eqref{magnetic}, constructed in
  \cite{FR}, however the definition is not complete. One has yet to
  specify the space of allowed contours of integration for screening
  charges. Further, the analogues of the qKZ equations were
  previously unknown for the deformed ${\mc W}$-algebras ${\mc
    W}_{q,t}({\fg})$, as far as we know.

One of the results in this paper is a definition of the space of deformed ${\mc W}_{q,t}({\fg})$ algebra conformal blocks, and a characterization of the difference equations they satisfy. The key new insight is the geometric interpretation of these objects in terms of (quantum) $K$-theory of a Nakajima quiver variety $X$ \cite{OK}, whose quiver diagram is based on the Dynkin diagram of ${\fg}$.

\subsection{Statement of the correspondence} 
\label{s_statement}

Let $x$ be a coordinate on ${\cal C}\cong \Ct$. Fix a finite
collection of distinct points on ${\cal C}$, with coordinates $a_i$. 
We propose, and prove in the simply-laced case, 
a correspondence between the following two types of $q$-conformal blocks on ${\cal C}$.

\subsubsection{} 

On the \emph{electric} side, we consider the quantum affine algebra
$U_{\hbar}(\Lfgh)$ blocks \cite{FIR}
\beq\label{electric}
\langle \lambda'| \;\prod_{i} { \Phi}_{{}^L\!\rho_i} (a_{i})  \;|\lambda \rangle
\eeq
where ${\Phi}_{{}^L\!\rho}(x)$ is a chiral vertex operator corresponding to a
finite-dimensional $U_{\hbar}({{\Lfgh}})$-module $^L\!\rho$. 
The state $|\lambda\rangle
$ is the highest weight vector in a level $^Lk$ Verma module. 
Its weight $\lambda \in {^L\fh}^*$ is an element of the dual of the
Cartan subalgebra for $^L{\fg}$. This is illustrated in 
Figure \ref{f_cylinder}. 

\begin{figure}[!hbtp]
  \centering
   \includegraphics[scale=0.5]{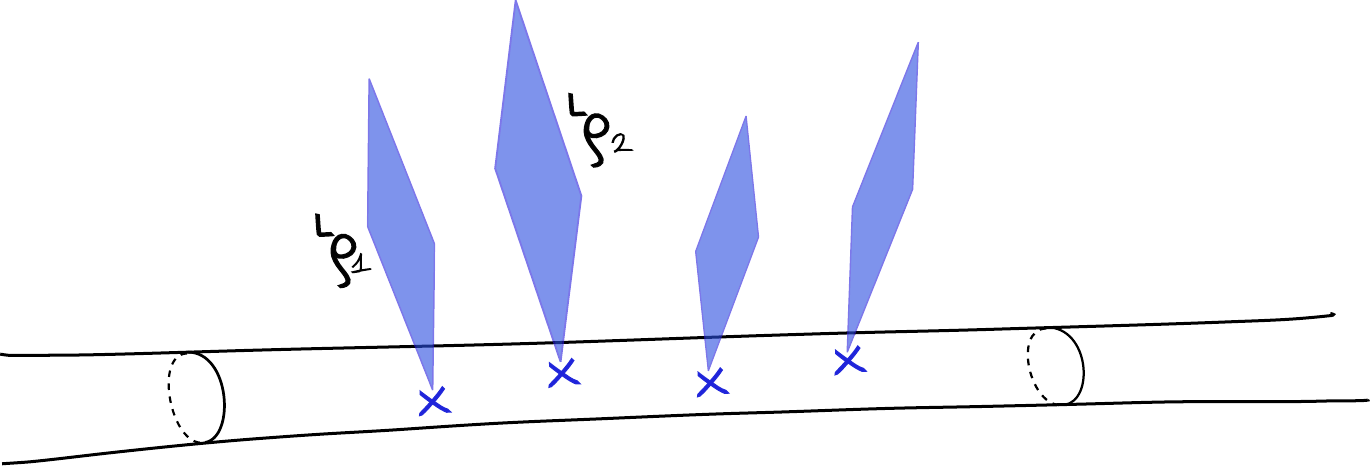}
 \caption{The cylinder ${\cal C}$ with the insertions of 
vertex operators corresponding to 
finite-dimensional $U_{\hbar}({{\Lfgh}})$-modules $^L\!\rho_i$
at the points $a_i\in {\cal C}$. 
Boundary conditions at infinity are 
the highest weight vectors 
 $\langle \lambda'|$ and $|\lambda \rangle$.}
  \label{f_cylinder}
\end{figure}

It suffices to focus on vertex operators corresponding to the 
fundamental representations because all others may be 
generated from these, by fusion. The highest weight of a 
 fundamental representation is one of the fundamental weights 
$^Lw_a$ of $^L{\fg}$. The conformal block  \eqref{electric}
takes values in a weight subspace of 
$$ 
\otimes_{ i} \, ({^L\!\rho}_i)= \otimes_a \,( ^L\!\rho_a)^{\otimes
  m_a} \,,
$$
namely, it has 
\begin{align}
  \textup{weight} & = \lambda' - \lambda \notag \\
 &= \label{weight}
\sum_a m_a\, ^Lw_a - \sum_a d_a \,^Le_a\,, 
\quad d_a \ge 0 \,. 
\end{align}
In \eqref{weight}, we write the weight as the difference 
of the highest weights and simple positive roots $^Le_a$ of 
$^L{\fg}$. The index $a$ runs here from
$1$ to ${\rm rk}({\fg})$.

\subsubsection{}

On the \emph{magnetic} side, we consider $q$-correlators of the ${\cal
  W}_{q,t}({\fg})$ algebra of the form
\beq\label{magnetic}
\langle \mu'| \;\prod_{i} {V}^{\vee}_ i(a_{i}) \prod_a \Bigl({Q}_a^{\vee}\Bigr)^{d_a} \;|\mu \rangle.
\eeq
$V^{\vee}_a(x)$ and $Q^{\vee}_a$ are the vertex and the screening
charge operators defined by E. Frenkel and N. Reshetikhin in
\cite{FR}. They are labeled by coroots and coweights of ${\fg}$,
respectively.  Recall that Langlands duality maps coweights and
coroots of ${\fg}$ to weights and roots of $^{L}{\fg}$, respectively.
The screening charge operators are defined as integrals of screening
current vertex operators $Q^{\vee}_a = \int dx \; S_a^{\vee}(x)$, so
\eqref{magnetic} is implicitly an integral formula for ${\cal
  W}_{q,t}({\fg})$ algebra blocks.

The coweights of ${\fg}$ labeling $V^{\vee}_a(x)$ are the highest
weights of the fundamental representations of $^L{\fg}$.  The operator
${V}^{\vee}_ i(a_{i})$, inserted at a point on ${\cal C}$ with the
coordinate $a_i$, is associated to the same representation of
$^L{\fg}$ as the corresponding vertex operator in
\eqref{electric}. The state $|\mu\rangle$, labeled by an element $\mu
\in {\fh}$ of the Cartan subalgebra of ${\fg}$, generates an
irreducible Fock representation of the ${\cal W}_{q,t}({\fg})$ algebra
\cite{FR}. The (co)weights $\mu$ and $\mu'$ are
  determined by $\lambda$ and $\lambda'$ (the exact formula depends on
  the chosen normalization).

% Note that Langlands duality identifies ${\fh}$ with $^L{\fh}^*$. For
% review of all necessary aspects of the relation between representation
% theory of ${\fg}$ and $^L{\fg}$ see, for example, the appendix of
% \cite{GuW}.

\subsubsection{}

The key result of the paper is the following theorem:
\begin{Theorem}\label{t:one} Let ${\fg}$ be a simply laced Lie algebra.
The deformed conformal blocks of
  $U_{\hbar}({{\Lfgh}})$ in \eqref{electric} and the deformed
  conformal blocks of ${\mc W}_{q,t}({\fg})$ in \eqref{magnetic} are identified 
   by the  
   \beq\begin{aligned}\label{linear}
\textup{specific covector} \times \textup{ $U_{\hbar}(\Lfgh)$ conformal block} =
 \textup{${\cal W}_{q,t}(\fg)$ algebra block},
\end{aligned}
\eeq
provided that the parameters of the two
  algebras are generic and related by equation \eqref{zero}.
  \end{Theorem}

The covector in \eqref{linear}, as well other ingredients of Theorem 
\ref{t:one} are best explained in geometric terms, namely, in terms
of the (quantum) K-theory of a Nakajima quiver variety $X$, see
below. Specifically, the covector in question corresponds to the
insertion of the identity $\cO_X\in K_\bT(X)$ (more precisely, to no insertion) in a
certain enumerative
problem. In geometric representation theory literature, it is customary
to characterize $\cO_X$ by a certain Whittaker property under the
action of lowering operators, see e.g.\ \cite{MO} for a discussion in 
cohomology. While we did not pursue 
such characterization in the present paper, there is little doubt that
it can be given.

 We will also explain, following the predictions of string theory,
 what is the natural setting for the non-simply laced cases, see 
Section \eqref{s_folding}.  As certain crucial geometric
representation theory ingredients are missing in this case, we propose 
the non-simply laced analog of Theorem \ref{t:one} as a conjecture.

\subsection{{Geometry behind the correspondence}}
%\subsection{Quiver varieties and vertex functions}
\label{s_Nakajima}

The central ingredient of our proof is that for Lie algebras of simply-laced type, when
$$
^L{\fg} ={\fg},
$$
we can realize the $q$-conformal blocks \eqref{electric} and \eqref{magnetic}
as {\it vertex functions} in equivariant 
quantum K-theory of a
certain holomorphic symplectic variety $X$. The variety $X$ is the
Nakajima quiver variety with 
$$
\textup{quiver ${\scQ}$} = 
\textup{Dynkin diagram of ${\fg}$} \,.
$$

\subsubsection{}
A Nakajima quiver variety $X$ is a hyper-K\"ahler  quotient
(or a holomorphic symplectic reduction) 
\beq\label{NV}
X=T^*{{\rm Rep }\,\scQ} \rd G_{\scQ}\,, 
\eeq
where 
\beq\label{repQ}
{{\rm Rep }\,\scQ}=\oplus_{a \rightarrow b} \,{\rm Hom}(V_a, V_b) \oplus_{a} {\rm Hom}(V_a, W_a)
\eeq
and 
\beq\label{gaugeC}
G_{\scQ} =\prod_{a} GL(V_a), \qquad G_{W} = \prod_{a} GL(W_a) \,. 
\eeq
The arrows in \eqref{repQ} are the arrows of the quiver. The
dimensions of the vector spaces $V_a$ and $W_a$ correspond 
as follows 
$$
\dim V_a = d_a\,, \quad \dim W_a = m_a
$$
to the weight space data in \eqref{weight}.

\subsubsection{}

The quotient in \eqref{NV} involves a geometric invariant theory (GIT)
quotient, which depends on a choice of stability conditions.
As a result, vertex functions also depend on a stability
condition. This stability condition makes them analytic in a certain 
region of the K\"ahler moduli space of $X$. The transition matrix between
vertex functions and q-conformal blocks will similarly 
depend on the stability condition. This dependence will be understood in what follows. 

\subsubsection{}

The majority of variables in \eqref{electric} and \eqref{magnetic} 
become \emph{equivariant variables} in their
geometric interpretation. We have 
$$
G_W\times \Ct_\hbar \subset \Aut(X)
$$
where $\Ct_\hbar$ rescales the cotangent directions in \eqref{NV} with 
weight $\hbar^{-1}$. This gives the symplectic form on $X$ weight
$\hbar$ under $\Ct_\hbar$. We fix a maximal torus 
$A\subset G_W$ and denote 
$$
T = A \times \Ct_\hbar \,. 
$$
The coordinates $a_i$ of $A$ are the positions at which the vertex
operators are inserted in \eqref{electric} and \eqref{magnetic}, 
while $\hbar$ is the quantum group deformation parameter in 
$U_{\hbar}(\Lfgh)$. 

A multiplicative group $\Ct_q$ acts on quasimaps
$\bP^1\dasharrow X$ by automorphisms
of the domain $\bP^1$. The coordinate $q\in \Ct_q$ is the
$q$-difference parameter from the title of the paper.

\subsubsection{}

In \cite{Nakajima}, Nakajima identified $K_T(X)$ with a space of 
weight \eqref{weight} 
in a $U_{\hbar}(\Lfgh)$-module. This is an important result in geometric representation
theory which generated a lot of further research. In \cite{MO} the authors suggested a somewhat different approach to constructing geometric actions of quantum groups. One of its advantages is its transparent connection with \emph{quantum} cohomology and K-theory of $X$, see \cite{MO,OK}. 

By quantum cohomology and K-theory we mean enumerative
theories of curves in $X$. The precise flavor of such computations
depends on the exact setup of the enumerative problem, including the
choice of the moduli spaces in question.  Givental and collaborators developed a very general K-theoretic
analog of quantum cohomology using the moduli spaces of stable 
rational maps, see e.g.\ \cite{Giv}. This is not the theory that 
will be used here. The following features of the quantum 
K-theory used here will be important:
\begin{itemize}
\item[---] it deals with quasimaps to a GIT quotient as
  in \cite{CKM}, 
\item[---] the quotient \eqref{NV} is a holomorphic symplectic
  reduction of a cotangent bundle, 
\end{itemize}
see \cite{OK} for an introduction.

\subsubsection{}

The basic object of the theory of \cite{OK} is the vertex function ${\bf V}$. The vertex
  function is an equivariant $K$-theoretic count of quasimaps from ${\mathbb C}$ to $X$
 of all possible degrees.   It is an analog of Givental's ${\bf
  I}$-function. 
The variables $z$ in this generating function
are called \emph{K\"ahler parameters}. They are related to the choice 
of the Fock vacuum  $|\lambda\rangle$ in \eqref{electric} and $|\mu\rangle$ in
\eqref{magnetic}.  
Its definition and basic properties will be reviewed
in Section \ref{s_vertex} below. 

\subsubsection{}

A key geometric property of vertex functions are the 
$q$-difference
equations that they satisfy, as functions of both equivariant and 
K\"ahler variables (see \cite{OK}, Section 8, for an introduction).
In particular, the $q$-difference equations in the variables $a_i$ were 
identified in Section 10 of \cite{OK} with the quantum
Knizhnik-Zamolodchikov (qKZ) equations of I.~Frenkel and N.~Reshetikhin 
\cite{FIR}. In \cite{FIR}, these were introduced as the $q$-difference
equations that determine the $q$-deformations of conformal blocks corresponding to ${\Lfgh}$ in \eqref{electric}. 

More precisely, the fundamental solutions of qKZ are
vertex functions counting maps from
$
{\mathbb C}^{\times}$ to  $X$
together with relative insertions at $0 \in \C$
\cite{OK}. 
The relative insertions may be traded for descendent
insertions \cite{Sm_cap, Bethe}. 

In this introductory discussion, we
 will call quasimaps counts with a relative
insertion at $0\in \C$ the {\it vector} vertex functions. This is to distinguish
them from the normal vertex functions counting quasimaps from $\C$ to $X$.

\subsection{Key points of the proof}
Theorem \ref{t:one} follows from connections between
\eqref{electric}, \eqref{magnetic}, and the vertex functions which, in broad strokes, go as
follows.

\subsubsection{Vector vertex functions vs.\
  $U_{\hbar}(\Lfgh)$-conformal blocks}\label{s_steps_qKZ}

On the electric, that is, 
$U_{\hbar}(\Lfgh)$-algebra side, we have a
  characterization of deformed conformal blocks in \eqref{electric} by
  the quantum Knizhnik-Zamolodchikov equations that they satisfy. 
Vector vertex functions provide a different basis of solutions of 
the same qKZ equation. The difference manifests itself through
difference analytic dependence on the equivariant variables $a_i$ and 
the K\"ahler variables $z$.

As correlations functions of chiral operators,
  conformal blocks are analytic in a region of schematic form 
\begin{equation}
|a_5| \gg |a_1| \gg |a_3| \gg \dots,\label{a_region}
\end{equation}
corresponding to time ordering of operators. This the ordering in
which we sew together the chiral vertex operators on ${\cal C}$ to get the
conformal block, and this basic analyticity is unaffected by 
$q$-deformation.

By contrast, vector vertex functions are born as
  convergent power series in the K\"ahler variables $z$, and they have 
 poles in any region of the form \eqref{a_region}.  The variable $z$ 
in which they are holomorphic enter as parameters in the qKZ
equation, namely as an element of the Cartan torus for ${}\Lfgh$.

The dichotomy
 between the two kind of solutions may be axiomatized as in
 \cite{ese}. We have a flat 
$q$-difference connection on a product of two toric varieties (with 
coordinates $a$ and $z$), which is regular in each group of variables 
separately, but is not regular jointly. Regions of the form
\eqref{a_region} and $z\to 0$ are punctured neighborhoods of fixed
points in the two toric varieties. The solutions analytic there 
are called the $a$- and $z$-solutions respectively. With this 
terminology, we can say that 
 \beq\label{Km} 
\textup{ vector vertex functions} = \textup{$z$-solutions to
  $U_{\hbar}(\Lfgh)$ qKZ equation} \,. 
\eeq 

\subsubsection{Elliptic stable envelopes}
\label{s_steps_ese}

Like any two bases of meromorphic solutions to the same difference 
equations, the vector vertex functions and the $U_{\hbar}(\Lfgh)$-conformal blocks 
are connected by a $q$-periodic transition matrix. This $q$-periodic transition matrix
may be called the \emph{pole
  subtraction matrix}, because it literally cancels unwanted poles 
in one set of variables at the expense of introducing poles in another 
set of variables, see \cite{ese} for a detailed discussion.

This pole subtraction matrix was identified geometrically in \cite{ese} 
as the elliptic cohomology version of stable envelopes of the Nakajima variety $X$. 
In equivariant 
cohomology, stable envelopes were introduced in \cite{MO}. They
are the main geometric input in the construction of quantum group
actions suggested there, see Section 9 of \cite{OK} for an overview.
This notion has a natural lift to equivariant 
K-theory, derived categories of coherent sheaves, and, as shown in 
\cite{ese}, also to the equivariant elliptic 
cohomology. 

In parallel to cohomology and K-theory, elliptic stable envelopes
produce an action of a quantum group, namely an elliptic quantum
group. The analysis of \cite{ese} equates the monodromy of qKZ with the braiding for this elliptic 
quantum group. First steps towards such identification were taken
already in \cite{FIR}, with many subsequent developments, as 
discussed in \cite{ese}.

In the enumerative problem, elliptic stable envelopes are inserted via
the the evaluation map at infinity of $\C^\times$, away from the point
$0$ where the relative conditions have been inserted. Later, when we 
discuss integral representation of the solutions, they will appear as 
elliptic functions multiplying the measure of integration as in 
Section \ref{s_contours}. In either interpretation, they map 
vector vertex functions to $U_{\hbar}(\Lfgh)$-conformal blocks.

\subsubsection{Vertex function and $\mathcal{W}$-algebra
  correlators}\label{s_steps_integrals}

On the magnetic, that is, $\mathcal{W}$-algebra
  side we prove in Theorem \ref{t:W} in Section \ref{sec:quasimap-integral} that the vertex functions ${\bf V}$ of
  $X$, counting quasimaps
  \beq\label{scalar}\C \dasharrow X,
  \eeq
 equal the integrals \eqref{magnetic} for a specific
  choices of contours of integration.\footnote{Formal integral
    solutions of differential or $q$-difference equations use 
only the covariance of $\int dx$ with respect to affine linear 
transformations. For $q$-difference equations, $\int dx$ is 
indistinguishable from $\int g(x) dx$, where $g(x)$ is any 
elliptic function. So, by a choice of a contour of integration we 
really mean a choice of both $g(x)$ and $\gamma$ in $\int_\gamma dx \,
g(x) \dots$, where $\gamma$ has to be constrained by the poles of both
the integrand and of $g(x)$, see the discussion in Section
\ref{s_contours}. 
}
The integral formulas for vertex functions of $X$ arise as follows. 

It is well-known (and reviewed in the Appendix) that 
K-theoretic computations on a GIT-quotient by a
reductive group $G$ may be expressed as $G$-invariants in a 
$G$-equivariant computation on the prequotient. The projection onto 
$G$-invariants may be recast, by Weyl integration formula, as an integral over a suitable
cycle in a maximal torus in $G$.
Generalizing this, it is not difficult to show, see Section 
\ref{s_integr}, that
for K-theoretic computations on the moduli spaces of quasimaps to a 
GIT-quotient, there are similar integral formulas. (In fact, such
integral formulas are used routinely in supersymmetric gauge theory
literature. There, they connect two different ways to compute the supersymmetric 
index of the 3d gauge theory on $\C \times S^1$, starting from either
the its Higgs or the Coulomb branch. Coulomb branches of 3d ${\cal
  N}=4$ gauge theories are studied in \cite{Nak13a, Nak13b, Nak13c}.)

To complete the match, it suffices
to recognize in these formulas the integral formulas of \cite{FR} for
the free field correlators of ${\cal W}_{q,t}({\fg})$.

The same dichotomy arises in the discussion of the 
magnetic conformal blocks. Vertex functions are analytic as $z\to 0$,
while the natural requirement for the ${\cal W}_{q,t}({\fg})$-conformal
blocks is to be analytic in regions of the forms
\eqref{a_region}. Very importantly, the very same elliptic stable
envelopes transform the $z$-series into functions with the right 
analyticity in the $a$-variables. The geometry of the correspondence is
tautologically the same, as the insertion of the elliptic stable
envelope happens at infinity, away from the point $0$ which
distinguishes vertex functions from their vector analogs. 
In integral formulas, stable envelopes appear as 
elliptic functions multiplying the measure of integration.

\subsubsection{The match of conformal blocks}

It must be intuitively clear that 
vertex functions are a special case of the vector vertex
functions, namely the one corresponding to no insertion at $0$. Since
the moduli spaces in questions are not really identical, the correct
technical way to see this it is via the degeneration formula as in Section 
\ref{s_degenr}. In particular, it expresses vertex functions as 
vector vertex functions paired with a specific covector, see formula  
\eqref{CGF}.
Applying elliptic stable envelopes to both sides gives the statement
of Theorem \ref{t:one}.

The above identification is a special case of a more 
general important problem in enumerative geometry --- to match 
relative counts with the so-called descendent counts. By definition,
the insertions in the descendent counts are pulled back via the
evaluation map to the quotient stack, while the evaluation map from
the relative moduli spaces goes to the Nakajima variety $X$. While, by the 
degeneration formula, the two kind of counts formally 
contain the same enumerative information, it is very important to
be able to control this equivalence explicitly. A very powerful result
in this direction has been obtained by Smirnov in 
\cite{Sm_cap}, and we use this result here. An alternative, and more 
convenient for our purposes, result has been obtained by two of the 
present authors in \cite{Bethe} after the present work has been completed.

\subsection{First applications and some further directions}

\subsubsection{Difference equations for ${\cal
    W}_{q,t}({\fg})$-conformal blocks} 

The match of the $q$-conformal blocks can be used to
 transfer valuable information in both directions.

On the one hand, the equation \eqref{linear} implies 
that the ${\cal W}_{q,t}({\fg})$-algebra conformal block solve a 
explicit scalar $q$-difference equation gauge equivalent to the qKZ
equations. The existence of such equations is not clear from the 
first principles of deformed ${\cal W}$-algebras as they exist today
and their further investigation is surely a very interesting direction
of research.

Note, in particular, that the monodromy of these 
difference equations is the same as the monodromy of the qKZ 
equations. The stable envelope analysis of \cite{ese} shows 
abstractly that it is given by the $R$-matrices of the corresponding 
elliptic quantum group, as predicted in \cite{FIR}.

\subsubsection{Integral solutions of qKZ} 

In the different direction, any vertex with
  descendants has an integral representation and the match between 
descendant and relative counts gives an integral solutions to qKZ. 
Finding such solutions has been an area of very active research. 
The formulas of \cite{Bethe} give a uniform general answer 
that specializes to the results of 
\cite{ReshqKZ, Matsuo, Matsuo3,
  Varchenko, TV, TV4, TV5} for
$\fg=\mathfrak{sl}_n$.

\subsubsection{General quivers}

The geometric steps outlined above work for a Nakajima
variety associated to a completely general quiver $\scQ$, which may 
have loops at vertices, parallel edges\footnote{
Note that the meaning of parallel edges is different in Nakajima
theory and in the usual notation for Dynkin diagrams. In Nakajima
theory
$$
\textup{Cartan matrix} = 2 - Q- Q^T \,,
$$
where $Q$ is the adjacency matrix of $\scQ$. In particular, 
the Cartan matrix is always symmetric, and so the Lie algebra is
always simply-laced, in this sense.} 
etc. For any such quiver, 
there is a quantum loop group \cite{MO,OK,OS}, and 
the corresponding qKZ equations, which form a 
part of the quantum difference equations. Our argument gives 
an integral solution to these difference equation in a form that 
may be interpreted as a $\mathcal{W}$-algebra
conformal block. 

A representation-theoretic study of these conformal blocks may be an
interesting direction for further research. Note that 
$\mathcal{W}$-algebras associated to quivers appear in the work of Kimura and Pestun \cite{KimPest} in connection with Nekrasov's theory of 
$qq$-character constraints in quiver gauge theories \cite{Nekrasov:2015wsu,Nekrasov:2016qym,Nekrasov:2016ydq}. 
 
\subsection{Non-simply laced groups and folding} 
\label{s_folding} 

%\subsubsection{}
Let $\fg$ be a finite-dimensional simple Lie algebra which is not 
simply-laced, that is, 
$$
^L{\fg}\neq {\fg} \,. 
$$
The Dynkin diagram of ${\fg}$ 
is a quotient of the Dynkin diagram of a simply-laced Lie algebra
${\fg}_0$ by an abelian group $H$ of diagram automorphisms as tabulated in 
\eqref{folding}. 
 This well-known
procedure is called \emph{folding}.

\subsubsection{}
Let the quiver ${\scQ}_0$ be the Dynkin diagram of ${\fg}_0$ and let
$X_0$ be the corresponding Nakajima quiver variety, as before. We
require the dimension vectors to be invariant under $H$. 
Such data is labeled by representations of $^L\fg$, the Langlands dual
Lie algebra of $\fg$, see Sec. \ref{sec:six}. 

\subsubsection{}
We consider $H$-invariant quasimaps to $X_0$, where $H$ acts
simultaneously on the target \emph{and} the source $\bP^1$ of the
quasimaps.  As usual, the $H$-invariant part of the obstruction theory 
defines a perfect obstruction theory for the
 moduli
space $\QM(X_0)^H$ of $H$-invariant quasimaps. Thus we can define the folded 
vertex functions which we denote ${\bf V}^H$. 

These folded vertex function have an integral formula, just like the 
unfolded ones. By inspection, these match the 
integral formulas for the ${\cal W}_{q,t}({\fg})$ deformed 
conformal blocks. 

\subsubsection{}

We conjecture that the steps from Sections \ref{s_steps_qKZ} and 
\ref{s_steps_ese} generalize. 
This requires the development of elliptic stable envelopes (and, as a 
consequence, K-theoretic stable envelopes) in the folded setting.
If true, this would prove our conjectural correspondence in full 
generality.

% be the vertex function of $H$-equivariant quantum K-theory of $X_0$, defined to count $H$-twisted quasi-maps from ${\mathbb P}^1$ to $X_0$. These are quasi-maps which are invariant under a simultaneous rotation of the ${\mathbb P}^1$ and action of a generator $h \in H$ on $X_0$. The $H$ action on $X_0$ is induced from its action on the quiver $\scQ_0$. 

% We conjecture that $H$-equivariant vertex functions ${\bf V}^H$ compute $q$-correlators of ${\cal W}_{q,t}({\fg})$ in \eqref{magnetic}, on one hand, and solve the qKZ equation of $U_{\hbar}(\Lfgh)$, on the other. If so, the $q$-conformal blocks of ${\bf V}_{\fC}^{H}$, the solutions to the qKZ equation holomorphic in a chamber ${\fC}$ equivariant moduli space (the H-invariant subspace of the $A$-equivariant moduli space of $X_0$), are related to ${\bf V}^H$ by a pole-subtraction matrix, an analogue of \eqref{connection},
% $${\bf V}_{\fC}^H = {\fP}_{\fC}^H \,{\bf V}^H.$$
% The pole subtraction matrix  ${\fP}_{\fC}^H$ should be the $H$-equivariant elliptic stable envelope [...] of $X_0$.

\subsection{String theory origin}

The $q$-conformal blocks of $U_{\hbar}(\Lfgh)$ and ${\cal W}_{q,t}(\fg)$ algebras 
are the partition functions of the six-dimensional ``little'' string
theory with $(2,0)$ supersymmetry. 

Little string theory has a conformal limit, in which it becomes a point particle theory, the 6d $(2,0)$ superconformal field theory. This theory is sometimes denoted as theory $\scX({\fg})$; it has been related to quantum Langlands correspondence in \cite{Wittenl1, NW}, following \cite{KW, Kapustin}. 

The conformal limit of little string turns out to coincide with conformal limit of the algebras, when $q$-deformations go away.

\subsubsection{}

For ${\fg}$ a simply laced Lie algebra, one takes the ${\fg}$-type little string theory on a six-manifold 
\beq\label{6man}
M_6 ={\mathcal C} \times {\mathbb C}\times {\mathbb C}.
\eeq 
Here ${\cal C}$ is the Riemann surface on which the chiral algebras
live. The parameters $q$ and $t^{-1}$ are related to equivariant
rotations of the two complex planes in
\eqref{6man};  $\hbar$ is associated with an $R$-symmetry twist, and \eqref{first} is required to preserve supersymmetry.

The vertex operator insertions in \eqref{electric} and
\eqref{magnetic} correspond to introducing codimension four defects of
the little string theory, supported at points of ${\cal C}$ and the
complex plane in \eqref{6man} rotated by $q$. This is illustrated in
Figure \ref{f_branes}. 

\begin{figure}[!hbtp]
  \centering
   \includegraphics[scale=0.48]{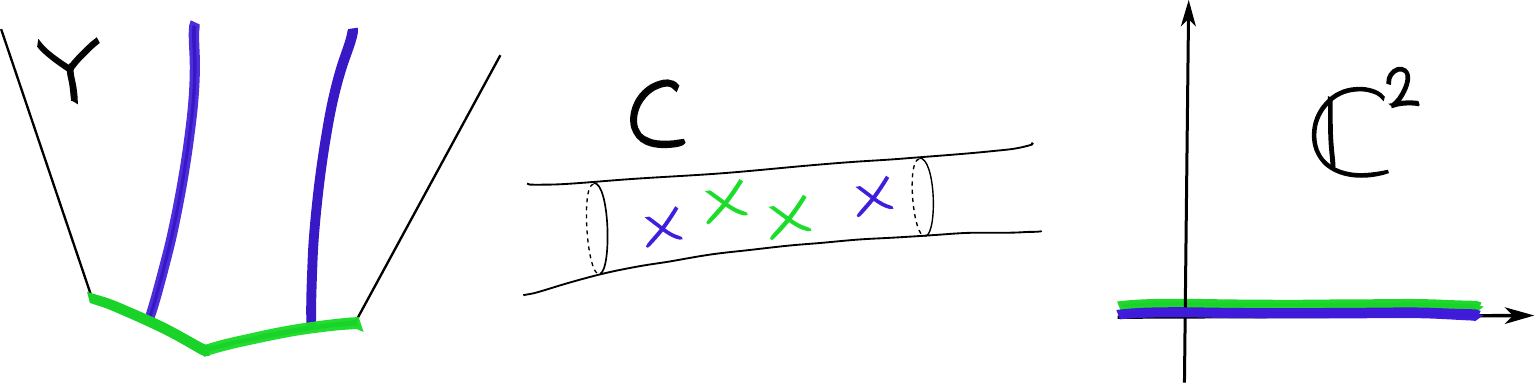}
 \caption{The 10d spacetime of the IIB string
 is the product of an
ADE surface $Y$, the cylinder ${\cal C}$, and $\C^2$. 
The defects we consider are located at $\C\subset \C^2$ that 
is rotated by $q$, at points in ${\cal C}$, and at middle-dimensional
cycles in $Y$. Compact cycles in $Y$, shown in green, give rise to the screening 
operators, while the dual noncompact cycles $H_2(Y, \partial Y, \Z)$
produce vertex operators in fundamental representations.}
  \label{f_branes}
\end{figure}

\subsubsection{}
The ${\fg}$-type little string on \eqref{6man} arises in a limit of IIB string theory on $Y\times M_6$ where $Y$ is an ADE surface of type ${\fg}$. 
The defects of little string theory on $\C$ in $M_6$ lift to D3 branes of IIB string. In $Y$, the D3 branes are supported on 2-cycles whose homology class in $H_2(Y, \partial Y, \Z)$ is identified with the weight in \eqref{weight} using the identification of $H_2(Y, \partial Y, \Z)$ with the weight lattice of $\fg$.

\subsubsection{}

The partition function of the six dimensional little string theory on $M_6$ in \eqref{6man} with the defect D3 branes turns out to localize, due to supersymmetry, to the partition function of the theory on the defects themselves. The theory on defects is \cite{DM} a 3d quiver gauge theory with quiver $\scQ$ whose Higgs branch is the Nakajima variety $X$ in \eqref{NV} (the theory has ${\cal N}=4$ supersymmetry). The 3d gauge theory is supported on 
\beq\label{3man}
{\mathbb C}\times S^1,
\eeq 
where ${\C}$ is identified with the complex plane in $M_6$ supporting the defects. 
%We can neglect all excitations along $Y$ due to the limit we are taking to get from type IIB to the little string theory on $M_6$.
The fact that a defect on $\C$ in $M_6$ supports a three dimensional gauge theory is due to a stringy effect. Given a D3 brane at a point on ${\cal C}$, there are winding modes of strings which begin and end on the brane, and wind around the circle in ${\cal C}\cong{\mathbb C}^{\times}$. These winding modes are mapped to momentum modes on the (T-)dual circle, corresponding to the $S^1$ in \eqref{3man}.

The partition function of the 3d gauge theory are the vertex functions of $X$, computed by quantum K-theory of \cite{OK}. They give either the electric or magnetic blocks, depending on the boundary conditions at infinity in ${\mathbb C}$. 

Many other examples of relations between partition functions of supersymmetric gauge and string theories and ($q-$)conformal blocks  (called BPS/CFT correspondence \cite{BPS}) appeared in physics literature over the years, \cite{IH, NO1, AGT} are a few. One should note that the relation between the 6d $(2,0)$ theory and gauge theories we use here is {\it different} from that in \cite{AGT}.  We use supersymmetry to localize the 6d theory to the theory on its defects -- and observe, following \cite{AH}, that in little string theory, the theory on the defects is a Nakajima quiver gauge theory, for any ${\fg}$, and all possible defects.

\subsubsection{}

To get non-simply laced theories, we start with the little string corresponding to a simply laced Lie algebra $\fg_0$ compactified on $M_6$ in \eqref{6man}, and add an $H$-twist. The $H$-action is represented by a simultaneous rotation around the origin of the ${\mathbb C}$-plane supporting the defects, and permutation of the modes of the theory induced from the action by generator $h$ of $H$ on the Dynkin diagram of ${\fg}_0$.

The theory on the D3 brane defects is a described by starting with an $H$-invariant quiver gauge theory based on ${\scQ}_0$, compactified on $S^1\times {\mathbb C}$. The $H$-twist restricts the fields of the theory on the defects to be those invariant under the simultaneous rotation of ${\mathbb C}$ and $h$-action on the quiver.

\subsubsection{}
All of our discussion so far corresponds to {\it unramified} case of the geometric Langlands correspondence. An important generalization is to include {\it ramifications} at a number of points on ${\cal C}$. 

From the string theory perspective, this is straightforward: ramifications correspond to another class of defects in little string theory on $M_6$ supported at points on ${\cal C}$ and filling ${\mathbb C}\times {\mathbb C}$. These defects were studied in \cite{AH}. They originate from D5 branes supported on 2-cycles in $Y$ in IIB on $Y\times M_6$.
Their effect on the D3 brane gauge theory is to introduce an additional sector, coming from D3-D5 strings, which breaks supersymmetry to
${\cal N}=2$ in 3d. The partition function of this theory on  ${\C}\times S^1$ is a ${\cal W}_{q,t}({\fg})$ algebra conformal
blocks with vertex operators which are $q$-deformations of ${\cal
  W}({\fg})$ algebra primaries. The mathematical implication of this
is a precise statement what the variety $X$ becomes in the ramified
case (the Higgs branch of the 3d ${\cal N}=2$ quiver gauge theory); and a conjecture for ramified quantum $q$-Langlands
correspondence. On the left hand side in \eqref{qLanglands}, one
considers $q$-conformal blocks of $U_{\hbar}(\Lfgh)$ with vertex
operators labeled by the Verma module representations of
$U_{\hbar}(^L{\fg})$ inserted at ramification points; on the right, we get the
${\cal W}_{q,t}({\fg})$ algebra blocks from \cite{AH}.

\subsubsection{}

Little string theory of ${\fg}_0$ on $M_6/H$ is related to both of the 4d Yang-Mills theories with gauge groups based on Lie algebras ${\fg}$ and $^L{\fg}$. $S$-duality relating the 4d gauge theories is a consequence of T-duality in string theory.
One views $M_6/H$ as a $T^2$ fibration over  ${\cal C}\times B$. The two gauge theories arise by T-duality on one or the other cycle of the $T^2$, after one takes the limit in which the characteristic size of the string and the size of the torus go to zero.

In the limit, the partition function of little string theory on the one hand computes conformal blocks of $\Lfgh$ and ${\cal W}_{\beta}(\fg)$ algebras; and on the other it computes the partition functions of the 4d gauge theories based on $^L{\fg}$ and ${\fg}$, respectively. We also derive from this the identification of the parameters of the two 2d CFT's with the parameters $\tau = m(\beta-1)$ and $^L\tau = ^L(k+h^{\vee})$ of the two gauge theories. (See Section \ref{sec:nine}.)

We hope that our work will help provide a unified framework for the
quantum geometric Langlands correspondence relating the 2d conformal
field theory and the 4d gauge theory approaches of
\cite{F,Frenkel1,FFS,Stoyanovsky,JT,gaitsW,gaitsQ,Sch} and
\cite{KW,GuW,Kapustin,FGT}.

\subsection{Plan of the paper}

The paper is organized as follows. In Section 2 we review relevant
aspects of the $U_{\hbar}({\Lfgh})$ and ${\cal W}_{q,t}({\fg})$
algebras.  In Section 3 and 4, we specialize to the case of simply
laced ${\fg}$. In Section 3 we first review relevant aspects of
quantum $K$-theory and of vertex functions. Then, we develop integral
representation of vertex functions and relate them to free field
correlators of the ${\cal W}_{q,t}({\fg})$ algebra in
\eqref{magnetic}.  In Section 4, we review the results of \cite{OK}
relating vertex functions to solutions of qKZ equation corresponding
to $U_{\hbar}({\Lfgh})$, and the role of elliptic stable envelopes of
\cite{ese}.  This completes the proof the quantum Langlands
correspondence for simply laced ${\fg}$. The ${\fg} =A_1$ example is
worked out in detail in Section 5. It should help 
the reader connect  the results of 
present paper to earlier work.  In Section 6 we discuss
various approaches to the quantum geometric Langlands correspondence
and explain why the existence of isomorphisms between conformal blocks
of the affine Kac--Moody algebra $\Lfgh$ and the ${\mc W}$-algebra
${\mc W}_\beta(\fg)$ may be viewed as its manifestation. We relate
this to a conjectural equivalence of two braided tensor categories
associated to $\Lfgh$ and ${\mc W}_\beta(\fg)$. We also discuss the
identification between these conformal blocks using the integral (free
field) representation, and give explicit examples of what our results
in the $q$-deformed case imply in the conformal limit.  In Sections 7
and 8 we explain the relation to physics of three dimensional gauge
theories, and to their string theory embedding. This leads us to the
conjecture for the non-simply laced cases. In Section 9 we explain the
relation of little string theory to 4d gauge theories that were
related to Langlands correspondence in \cite{KW}. The last section is
the appendix which reviews the theory of GIT quotients in K-theory.

\subsection{Acknowledgments}

We benefited greatly from discussions with Tomoyuki Arakawa, Pavel
Etingof, Sergei Gukov, Nathan Haouzi, Victor Kac, Nicolai Reshetikhin,
Shamil Shakirov, Andrey Smirnov, Joerg Teschner and Cumrun Vafa.

MA's research is supported by NSF grant \#1521446, by the Simons
Foundation as a Simons Investigator and by the Berkeley Center for
Theoretical Physics. EF's research was supported by the NSF grant
DMS-1201335. AO thanks the Simons foundation for being financially
supported as a Simons investigator, NSF for supporting enumerative
geometry at Columbia as a part of FRG 1159416, and 
Russian Academic Excellence Project '5-100'. 

\section{$q$-deformed conformal blocks}

\subsection{Electric side} \label{sec:two}

\subsubsection{}

Let ${\cal C}\cong\Ct$ be the Riemann surface from 
Section \ref{s_statement}. For any simple Lie algebra $^L{\fg}$, 
I.~Frenkel and N.~Reshetikhin in \cite{FIR} described the
$\hbar$-deformation of the  $ {\Lfgh}_{^Lk}$ WZW model conformal blocks
on ${\cal C}$ based on the quantum affine algebra
$U_{\hbar}(\Lfgh)$. We briefly recall some of their results here. Throughout this subsection, the normalization of the bilinear form  $(,)_{^L \fg}$ on the Lie algebra is chosen so the longest root has length squared equal to $2$; these are the usual conventions for affine Lie algebras.

\subsubsection{}
The deformed conformal blocks are correlators 
of chiral vertex operators:
\beq\label{WZ}
\Psi(a_1, \ldots, a_{\ell}, \ldots, a_n) = \langle \lambda'| \Phi_{1}(a_1)\ldots  \Phi_{\ell} (a_\ell) \ldots \Phi_{n} (a_n)  |\lambda\rangle.
\eeq
State $|\lambda \rangle$ is a highest weight vector of a Verma
module ${\rho}_{\lambda, ^Lk}$ for $U_{\hbar}(\Lfgh)$ at level $^Lk$. 
These are $\hbar$-deformations of Verma modules of $\Lfgh$. 
%We will denote by $\rho_{\lambda, k}$ the Verma module representation, generated by the highest weight vector $|\lambda\rangle$.
\subsubsection{}
A chiral vertex operator ${\Phi}_{\ell}(a)$ is labeled by a finite dimensional representation ${\rho}_{\ell}$ of $U_{\hbar}( \Lfgh)$ and 
acts as an intertwiner
\beq\label{chiral}
\Phi_{\ell}(a) :  \, \rho_{\lambda_i, ^Lk} \rightarrow \rho_{\lambda_j, k} \otimes \rho_{\ell}(a) \, a^{h(\lambda_i)-h(\lambda_j)},
\eeq
where $\rho_{\lambda_{i,j}, ^Lk}$ are Verma modules of
$U_{\hbar}({\Lfgh})$ and 
\begin{align*}
  \rho_{\ell}(a) =\,\, & \textup{the representation $\rho_\ell $} \\
& \textup{precomposed by the action of $x\in \Ct$} \\
& \textup{by a loop rotation automorphism of $U_{\hbar}({\Lfgh})$}
\end{align*}
is an analog of a evaluation representation for $U_{\hbar}({\Lfgh})$. 

Above,  $h(\lambda)$ is the same factor as for the affine Lie algebra, given by the conformal weight of the state $|\lambda\rangle$.
%$h(\lambda)= C(\lambda)/2(k+h)$. 
The space of intertwiners with this data is 
$$
H_{\lambda_i}^{\lambda_j , \rho_\ell}= {\rm Hom}_{U_{\hbar}(^L{\fg})}(\rho_{\lambda_i}, \rho_{\lambda_j} \otimes {\rho_{\ell}}).
$$ 
where $U_{\hbar}(^L\fg)$ is the finite quantum group, and $\rho$'s are
the corresponding modules --- a direct generalization of the WZW
statement.

\subsubsection{}
The deformed conformal block $\Psi(a)$ takes values in 
a weight subspace  
\begin{equation}
\Psi(a_1, \ldots, a_{\ell}, \ldots ,a_n)  \in (\rho_1\otimes \ldots
\rho_{\ell} \otimes \ldots \otimes
\rho_n)_{\lambda'-\lambda}, \label{weightPsi} 
\end{equation}
the weight $\lambda'-\lambda$. 

\subsubsection{}
As a deformation of the Knizhnik-Zamolodchikov equations \cite{KZ}, 
I.~Frenkel and N.~Reshetikhin obtain a holonomic system of 
$q$-difference equations for the conformal block \eqref{WZ}. 
It is called the  quantum Knizhnik-Zamolodchikov equations
and is the most powerful description of \eqref{WZ}. It has the 
form: 
\beq\label{qKZ}
\begin{aligned}
\Psi(a_1, \ldots q a_\ell, \ldots a_n) =&{\cal R}_{\ell \ell-1}(q a_{\ell}/a_{\ell-1}) \cdots {\cal R}_{\ell 1} (q a_{\ell}/a_{\ell-1})   (\hbar^{\rho})_\ell\\
\times &{\cal R}_{\ell n}(a_{\ell}/a_n) \ldots {\cal R}_{\ell \ell+1}(a_{\ell}/a_{\ell+1})\Psi(a_1, \ldots a_\ell, \ldots a_n) 
\end{aligned}
\eeq
where
$$
q = \hbar^{-^L(k + h^{\vee})},
$$
and 
$${\cal R}_{i j}(x) \subset  {\rm End}({\rho}_i \otimes {\rho}_j)
$$ 
is the $U_{\hbar}(\Lfgh)$ $R$-matrix corresponding to a pair $\rho_i, \rho_j$ finite dimensional $U_{\hbar}(\Lfgh)$ modules.  
Furthermore, $(\hbar^{\mu})_{\ell}$ acts on $\ell$'th component of the tensor, corresponding to representation $\rho_{\ell}$. Its action on a vector $v_w$ of weight $w$ is
$$
\hbar^{\mu}(v_w) =\hbar^{(\mu, w)} v_w
$$
The vector $\rho$ is the Weyl vector, equal to half the sum of positive roots of $^L\fg$.
Once we fix 
a specific ordering of vertex operators  in 
\eqref{WZ} or, equivalently, a region of the form 
$$
|a_5| > |a_2| > |a_7| >  \dots, 
$$
 the qKZ equation determines the $q$-conformal blocks completely. 
The solutions in each region are  labeled by elements of 
$$
\bigoplus_{\lambda_1,\dots,\lambda_{n-1}} 
H_{\lambda_1}^{\lambda_0 , \rho_1} \otimes\ldots \otimes  H_{\lambda_\ell}^{\lambda_{\ell-1} , \rho_\ell} \otimes \ldots  \otimes H_{\lambda_\infty}^{\lambda_{n-1} , \rho_n}
$$
where $\lambda_1,\dots,\lambda_{n-1}$ are the highest weights 
of Verma modules 
in intermediate channels. Note that the dimension of this space 
equals the dimension of \eqref{weightPsi}. 

\subsubsection{}

The notation here differs from that of I.~Frenkel and N.~Reshetikhin in \cite{FIR} by:
\beq\label{identify}
(q)_{\textup{ here}} =(p)_{\textup{FR}}, \qquad  (\hbar)_{\textup{ here}} =(q^{2})_{\textup{FR}},
\eeq
and $(p = q^{-2^L(k+h)})_{\textup{FR}}$.

\subsubsection{}

In the conformal limit, when 
\beq\label{limitwzw}
  \hbar, \;\; q \qquad  \longrightarrow \qquad 1
\eeq
with $^L(k+h)$ fixed,
the qKZ equation reduces to KZ equation 
\beq\label{KZ}
^L(k+h)\, a_{\ell}{\partial \over \partial a_{\ell}} \,\Psi = 
\left(\sum_{j \neq \ell} {r_{\ell j}}(a_\ell/a_j)+ r_{\ell 0} +
  r_{\ell \infty}
\right)\, \Psi.  
\eeq
derived in \cite{KZ}. 
The matrix 
$$
r_{ij}(a_i/a_j) = \frac{r_{ij} a_i+r_{ji} a_j}{a_i-a_j}
$$
with 
$$
r= {1\over 2} \sum_a h_a\otimes h_a + \sum_{\alpha>0}
e_{\alpha}\otimes e_{-\alpha}
$$
in the standard Lie theory notation, 
is the classical $R$-matrix of WZW on
$\mathcal{C}=\Ct$. This is referred to as the trigonometric $R$-matrix, as opposed to the rational one, corresponding to the case ${\cal C} = \C$.

\subsection{Magnetic side}\label{sec:four}

% \subsubsection{}

% There is no general definition of a deformed chiral algebra available.
% For quantum affine algebras, one can define their $q$-conformal blocks as the solutions of the
% quantum Kniznik-Zamolodchikov equations above. For the deformed ${\cal W}$-algebras
% ${\cal W}_{q,t}({\fg})$ such equations are not known. (For $\fg=A_1$, ${\cal W}_\beta(\fg)$ is the Virasoro algebra,
% and the conformal blocks satisfy the BPZ equations; however, their deformations are
% not known.) We will give a definition of the ${\cal W}_{q,t}(\fg)$ $q$-conformal blocks in terms of certain "vertex" functions of a class of Nakajima quiver varieties.

\subsubsection{}

Let ${\cal C}$ and $\fg$ be a before.  
The deformed ${\cal W}_{q,t}(\fg)$ algebra, and certain classes of
vertex and screening operators, were constructed by E. Frenkel and R. Reshetikhin in
\cite{FR} in terms of free fields, as a $q$-deformation of ${\cal W}_{\beta}(\fg)$ algebra, where $t=q^{\beta}$. (See also
\cite{SKAO,FF:qw,AKOS} for $\fg=sl_n$.)
The free field realization implies that the $q$-conformal blocks on ${\cal C}$ of the form 
\beq\label{mcor}
\langle \mu'| V_1^{\vee}(a_1) \ldots V_{n}^{\vee}(a_n) \prod_a(Q_a^\vee)^{d_a} | \mu\rangle
\eeq
have a direct description in terms of certain contour integrals.

\subsubsection{}
Let ${\fg}$ be a simple Lie algebra, and $C_{ab}$ its Cartan matrix, defined as
\beq\label{Cm}
C_{ab} = 2 (e_a, e_b)/(e_a, e_a) = (e_a^{\vee}, e_b),
\eeq
in terms of simple positive roots $e_a$, the coroots $e_{a}^{\vee}$ and the invariant inner product $(,)$ on the Lie algebra.
Let $m$ be the lacing number, the maximum number of arrows connecting a pair of nodes in the Dynkin diagram. Unless $m=1$ and the theory is simply laced, the Cartan matrix is not symmetric. Instead, the symmetric matrix is
\beq\label{Bm}
B_{ab} = m_a C_{ab} = m(e_a, e_b),
\eeq
where we defined 
$$m_a=m {(e_a, e_a)/2}.
$$  
We choose the normalization of the inner product $(,)_{\fg}$ on so that  $m_a=m$ for long roots and  $m_a=1$ for short roots. 
\subsubsection{}
To define the deformed ${\cal W}_{q,t}(\fg)$, one starts \cite{FR} with the $q$-deformed Heisenberg algebra ${\cal H}_{q,t}({\bf g})$ in terms of ``root'' type generators $e_a[k]$, for $k\in {\mathbb Z}$ where $a$ labels the simple positive root of $\fg$. The generators satisfy commutation relations
$$
[e_a[k], e_b[\ell]] = {1\over k} (q^{k\over 2} - q^{-{k\over 2}})(t^{{k\over 2} }-t^{-{k\over 2} })B_{ab}(q^{k} , t^{k} ) \delta_{k, -\ell}.
$$
Here, $B_{ab}(q,t)$ is a $q$-deformation\footnote{The quantum number $n$ is defined as $[n]_q = {q^{n/2} - q^{-n/2} \over q^{1/2} - q^{-1/2}}$. } of \eqref{Bm},
$$
B_{ab}(q,t) = [m_a]_q\;  C_{ab}(q,t),
$$
where $C_{ab}(q,t)$ is the $q$-deformed Cartan matrix 
$$
C_{ab}(q,t)= q^{m_a\over 2} t^{-{1 \over 2}} + q^{-{m_a\over 2}}t^{1\over 2}-[I_{ab}]_q,
$$
and $I_{ab}$ is the classical incidence number, $I_{ab} = 2\delta_{ab} - C_{ab}$.

\subsubsection{}
We get a Fock representation of the Heisenberg algebra, denoted by $\pi_{\mu}$, by starting with the state $|\mu\rangle$, such that
$$
e_a[k]|\mu\rangle =0, k>0, \qquad e_a[0] |\mu\rangle = (\mu, e_a) |\mu\rangle, 
$$
and acting by the algebra generators.

\subsubsection{}
One defines the magnetic and the electric screening currents following
\cite{FR}:
\beq\label{magneticS}
S^{\vee}_a(x) =[\ldots]x^{-e_a[0]/m_a} : \exp\Bigl(\sum_{k\neq 0}{ e_a[k] \over q^{k m_a\over 2} - q^{-{k m_a \over 2}}} x^k\Bigr):
\eeq
and
\beq\label{electricS}
S_a(x) =[\ldots]x^{e_a[0]/\beta} : \exp\Bigl(-\sum_{k\neq 0}{ e_a[k]
  \over t^{k\over 2} - t^{-{k \over 2}}} x^k\Bigr): \,.
\eeq
The terms denoted by $[\ldots]$ above are
operators responsible, in part, for
shifts of momenta $\mu$ in $\pi_{\mu}$ in \eqref{screen}.  
If  ${\fg}$ is simply laced, 
there is a symmetry exchanging $S^{\vee}_a$ and $S_a$ and swapping $q$
and $t$.

The algebra ${\cal W}_{q,t}({\fg})$ is
defined as the associative algebra generated by the (Fourier
coefficients of) operators $T(x)$ which commute with the screening
charges $S^{\vee}_a(x)$ and $S_a(x)$ up to a total difference, e.g.
\begin{equation}
\left[T(x), S^{\vee}_a(x')\right] = {\cal D}_{x',q} f(x, x'), \qquad {\cal
   D}_{x,q}f(x) = { f(x)-f(qx)\over x(1-q)}  \label{comm_q}  \,. 
\end{equation}

\subsubsection{}\label{s_contours}

For the corresponding screening charges
\beq\begin{aligned}\label{screen}
Q_a^{\vee} &= \int  S_a^{\vee}(x)\;: \pi_0\rightarrow \pi_{- e_a \beta /m_a } \cr
 Q_a&= \int  S_a(x)\; :  \pi_0\rightarrow \pi_{e_a } 
\end{aligned}
\eeq
equation \eqref{comm_q} implies 
$$
\left[T(x), \int S^{\vee}_a(x') \right] = 0  \,. 
$$
Here 
$
f(x) \mapsto \int  f(x)
$
is any linear functional such that 
$$
\int \frac{f(x)}{x} = \int \frac{f(qx)}{x}  \,. 
$$
For example, we can take 
$$
\int f(x) = \int_\gamma f(x) \, dx 
$$
for any path $\gamma$ such that 
$$
q\cdot \gamma - \gamma = 0 \in H_1(\Ct \setminus 
\textup{singularities of the integrand})  \,. 
$$
More flexibly, we can take, 
\begin{equation}
\int f(x) = \int_\gamma f(x) \, g(x) \, dx \,, \quad g(qx) = g(x)
\,, \label{intgx}
\end{equation}
with the same assumption on 
the integration cycle $\gamma$.

As we will see below, the insertion of the right
  elliptic function $g(x)$ under the integral as in \eqref{intgx} 
corresponds geometrically to the insertion of an elliptic stable 
envelope in quasimap enumeration. These elliptic stable envelope 
transform the $z$-solutions that appear naturally in the enumerative 
problem into $a$-solutions that correspond to conformal blocks.

For $S_a(x)$, the analysis is the same, except $q$-shifts are replaced
by $t$-shifts. 

\subsubsection{}

The weight type generators $w_a[k]$ are associated with fundamental weights of $\fg$. They are defined by
$$
e_a[k] = \sum_{b} C_{ab}(q^k, t^k) w_b[k]
$$
which satisfy
$$
[e_a[k], w_b[\ell]] = {1\over k} (q^{k m_a\over 2} - q^{-{km_a \over 2}})(t^{{k\over 2} }-t^{-{k\over 2} })\delta_{ab} \delta_{k, -\ell}
$$

\subsubsection{}
Similarly, there are magnetic $V_a^{\vee}$ and electric $V_a$
degenerate vertex operators, associated with fundamental coweights
and weights (as defined in Section 9 of \cite{FR}, with some slight relabeling): 
\beq\label{weightv}
V^\vee_a(x) = x^{w_a[0]/m_a}:  \exp\Bigl( - \;\sum_{k\neq 0} {w_a[k]\over q^{k m_a\over 2} - q^{-{k m_a\over 2}}}  \, x^{k}\Bigr) :
\eeq
and 
\beq\label{weightvb}
V_a(x) = x^{-w_a[0]/\beta}:  \exp\Bigl( \;\sum_{k\neq 0} {w_a[k]\over t^{k\over 2} - t^{-{k \over 2}}}  \, x^{k}\Bigr) :
\eeq
for the electric vertex operators.  The insertion at infinity is determined by charge conservation.

\subsubsection{}

We denote 
\begin{equation}
\varphi_q(s) = \prod_{n=0}^\infty (1-q^n s) \,, \label{defvarphi}
\end{equation}
In the taxonomy of special function, the function \eqref{defvarphi} is
best described as the reciprocal of the $q$-analog of the
$\Gamma$-function. It is also known under many other names. 

The infinite product \eqref{defvarphi} is a half of the odd genus $1$ 
theta-function 
\begin{equation}
\theta_q(s) =  \varphi_q(s) \, \varphi_q(q/s)  \,. 
\label{vth}
\end{equation}
which is vanishing at $s=1$ and normalized to 
be a single-valued function of $s$.  

For simply laced $\fg$ we will often drop the subscripts and then $\varphi(s) = \varphi_q(s)$, etc.

\subsubsection{}\label{s:explicit}

Collecting the definitions above, the ${\mc W}_{q,t}
(\fg)$ correlator in \eqref{mcor} 
\beq\label{onem}
\langle \mu'| V_1^{\vee}(a_1) \ldots V_{n}^{\vee}(a_n) \prod_a(Q_a^\vee)^{d_a} | \mu\rangle
\eeq
is the integral 
\beq\label{Wai}
\int \,d_{\rm Haar} x \,x^{\mu}\; \Phi(x,a).
\eeq
where we defined
$$
 x^{\mu} = \prod_{a, \alpha}  x_{a, \alpha}^{(\mu, e_a)}.
$$
and $d_{\rm Haar} x =\prod_{a, \alpha} dx_{a,\alpha}/x_{a, \alpha}$.
The integrand $\Phi(x,a)$ is a product of three terms, which come from normal (re)ordering of the operators in \eqref{onem}:
The first is comes from the screening currents associated to node $a$:
\beq\label{sctdasl1}
\prod_{\alpha < \alpha'} \langle S^{\vee}_a(x_{\alpha, a})S^{\vee}_a(x_{\alpha',a}) \rangle = \prod_{\alpha \neq \alpha'}{ \varphi_{q_a}(  x_{\alpha ,a}/x_{\alpha',a})\over \varphi_{q_a}(   t\,x_{\alpha, a}/x_{\alpha',a})}  \prod_{\alpha < \alpha'}{ \theta_{q_a}( t x_{\alpha ,a}/x_{\alpha',a})\over \theta_{q_a}(    \,x_{\alpha, a}/x_{\alpha',a})} 
\eeq
Here and below, $\langle\rangle$ stand for expectation values computed in state $|0\rangle$. The second factor comes from  screening charges associated to pair of nodes $a,b$ connected in by a link the dynkin diagram of ${\fg}$, and equals:
\beq\label{sctdasl2}
\prod_{\alpha , \beta} \langle S^{\vee}_a(x_{\alpha, a}) S^{\vee}_b(x_{\beta, b}) \rangle=\prod_{\alpha, \beta} { \varphi_{q_{ab}}( t  v_{ab} \,x_{\alpha,a}/x_{\beta,b})\over \varphi_{q_{ab}}(v_{ab} \,{x_{\alpha ,a} /x_{\beta,b}})}.
\eeq 
 Above $v_a$, $v_{ab}$ are defined as follows:  $v_a = \sqrt{q_a/t}$ and $v_{ab} =\sqrt{q_{ab}/ t}$, where $q_a=q^{m_a}$ and
$q_{ab} = q^{{\rm min}(m_a, m_b)}$. (If either of the nodes $a,b$ corresponds to a short root then $q_{ab} = q$, and, it both of the nodes are long then $q_{ab}=q^m$.) 
The third, and finally factor comes from normal ordering the vertex operator $V_a^{\vee}(a_i)$'s with the screening currents $ S^{\vee}_a$ coming from the same node:
\beq\label{sctdasl3}
\prod_{\alpha , a} \langle S^{\vee}_a(x_{\alpha, a})  V^{\vee}_a(a_i)  \rangle=\prod_{a, \alpha}  { \varphi_{q_a}( t v_a \,{a_i/x_{\alpha,a}})\over \varphi_{q_a}( v_a \,{a_i} /x_{\alpha,a})}.
\eeq
The $\mu$ dependent factor in \eqref{Wai} accounts for the fact that the the incoming stat in  \eqref{onem} is $|\mu\rangle$, and not the trivial vacuum $|0\rangle.$

\subsubsection{}

In writing \eqref{sctdasl1} we assume the argument $x$ of
$\varphi_q(x)$ is less than one, $|x|<1$. Otherwise $\varphi_q(x)$
gets replaced by $1/\varphi_q(q/x) = \varphi_q(x)/ \theta_q(x)$.
This is a feature of deformed chiral algebras, as defined in
\cite{FR:dca}.

\subsection{Conformal limit}

The conformal limit, in which one recovers the ordinary (conformal)
${\mc W}$- algebra ${\cal W}_{\beta}(\fg)$,
$$
{\cal W}_{q,t}({\fg }) \qquad \rightarrow \qquad {\cal W}_{\beta}(\fg)
$$
corresponds to taking 
\beq\label{limitw}
q, \; t=q^{\beta}
 \qquad \rightarrow \qquad 1.
\eeq
keeping ${\beta}$ fixed, as in \cite{FR}. The conformal ${\mc
  W}$-algebra ${\cal W}_{\beta}(\fg)$ with $\beta = m (k+h^{\vee})$ is
obtained from $\widehat{\fg}$ of level $k$ via the quantum
Drinfeld--Sokolov reduction (see \cite{FF,FB} and Section 6 below for
details).

\subsubsection{} The limit \eqref{limitw} requires rescaling of the generators of the algebra. 
The generators of the algebra that stay finite in the limit are $e'_a[k] = e_a[k]/\log(q)$ and $w'_a[k]= w_a[k]/\log(q)$.
In the limit, we get
%The moduli of the hyperkahler metric on $X$ are encoded in periods of three self-dual two-forms ${\omega}_{I, J, K}$  along the 2-cycles $S_a$ generating $H_2(X, {\mathbb Z})$.  When we compactify IIB string theory on $X$, we get two more moduli, coming from periods of IIB two-forms, $B_{NS}$ an $B_{RR}$ on $X$.
\beq\label{clima}
 \langle S^{\vee}_a(x)\, S^{\vee}_b(x') \rangle = (x - x')^{{\beta\over m} (e_a^{\vee} , e_b^{\vee})}, \qquad   \langle S^{\vee}_a(x)V^{\vee}_b(x') \rangle = (x - x')^{-{\beta \over m} (e_a^{\vee} , w_b^{\vee})}
 \eeq
and
\beq\label{climb}
\langle S_a(x)\, S_b(x') \rangle = (x - x')^{{m \over \beta} (e_a , e_b)}
, \qquad \langle S_a(x)V_b(x') \rangle = (x - x')^{-{m\over \beta} (e_a , w_b)} 
\eeq
 where  $e_a^\vee$, $w_a^{\vee}$ are the coroots, and the fundamental
 coweights, respectively\footnote{The fundamental coweights are
   defined by $(e_a, w_b^{\vee})=\delta_{a,b}$.}. The formulas reflect
 the fact that for a pair of Langlands dual Lie algebras, $\fg$ and
 $^L\fg$, there is an isomorphism of the corresponding ${\cal
   W}$-algebras \cite{FF}:
$${\cal W}_{\beta}(\fg)= {\cal W}_{^L\beta}(^L\fg), \qquad \beta\, ^L \beta =m
$$
(see Section 6 for details). One recalls that, under exchanging
${\fg}$ and $^L \fg$, roots and coroots get exchanged, as well as the
weights and coweights, and the inner product gets rescaled,
$(,)_{\fg}= m(,)_{^L \fg}$.

\section{Integral representation of vertex functions}\label{sec:quasimap-integral}

\subsection{Quasimaps to Nakajima varieties}

\subsubsection{} 

Let $X$ be a Nakajima variety as in Section \ref{s_Nakajima}
\begin{align}
  X&=T^*{{\rm Rep }\,\scQ} \rd G_{\scQ} \notag\\
   & = \mu^{-1}(0) \rdd G_{\scQ} \,, \label{Xquo} 
\end{align}
where $\mu$ is the complex moment map for the action of $G_{\scQ}$. 
From \eqref{Xquo}, it is a GIT quotient of an affine algebraic variety 
by an action of a reductive group. 

For such quotients, Ciocan-Fontanine, Kim, and Maulik define in
\cite{CKM} a notion
of quasimap 
\begin{equation}
f: C \dasharrow X\label{fquasi} \,. 
\end{equation}
These are maps to $X$ with certain singularities. Informally, a 
\emph{stable} quasimap $f$ is
allowed to take a GIT-unstable value at finitely many points of
$C$. In what follows, all quasimaps are assumed stable. 

Analogous notions are known in both supersymmetric gauge theory 
and mathematics literature. The precise definition of \cite{CKM} is 
best suited for our goals here. An introductory discussion of 
quasimaps of \cite{CKM} may be found in \cite{OK}.

\subsubsection{}

The vector spaces $V_a$ in the quiver description of $X$ descend to
vector bundles on $X$. These are called \emph{tautological}. 

The data of a quasimap includes vector bundles $\cV_a$ on 
the domain $C$; they coincide with the pullbacks $f^* V_a$ of the 
tautological bundles for regular maps $f: C \to X$ and are part of 
the definition in general. Similarly, we have the trivial bundles
$\cW_a$ on $C$ corresponding to the trivial bundles $W_a$ on $X$. 
We denote by 
\begin{equation}
\label{cM}
\cM = \oplus_{a \rightarrow b} \,\cHom(\cV_a, \cV_b) \oplus_{a}
\cHom(\cV_a, \cW_a)
\end{equation}
the bundle corresponding to \eqref{repQ}.

By definition, a quasimap is collection of bundles $\{\cV_a,\cW_a\}$
together with a stable section 
$$
f \in H^0(C,\cM \oplus \hbar^{-1}\otimes \cM^*) 
$$
satisfying the moment map equation 
$$
\mu(f) = 0 \in H^0(C, \oplus_a \, \cEnd(\cV_a))  \,.
$$
Stability of $f$
means it evaluates to a GIT-stable point
at all but finitely many points of $C$. This data is 
considered up to isomorphisms fixing $C$ pointwise. In other words, 
we consider quasimaps from parametrized domains.

\subsubsection{}

Let $\QM(X)$  be the moduli
space of quasimaps 
from  $C\cong\bP^1$ to $X$. On the domain 
$C$, we fix marked points 
$$
p_1 = 0 \,, \quad p_2 = \infty\,,
$$ 
and denote  
$$
\Ct_q = \Aut(C,p_1,p_2) \,. 
$$
Here the subscript is to distinguish this torus from other tori
present; an element of $\Ct_q$ will be denoted $q$. We take 
$T_{p_1}C$ as the defining (i.e., weight one) 
representation of $\Ct_q$. 

\subsubsection{}

The degree of a quasimap is defined as follows 
$$
\deg f = \left(\deg \cV_1,\deg \cV_2,\dots\right) \in
H_2(X,\Z)_\textup{effective} \,,
$$
see the discussion in Section 7.2 of \cite{OK}. 
This is a locally constant function on $\QM(X)$.

\subsubsection{}

By definition, 
vertex functions for $X$ are computed using $\Ct_q$-equivariant 
K-theoretic localization on the open set 
$$
 \QM_\textup{nonsing $\infty$} \subset \QM
$$
formed by quasimaps nonsingular at $p_2=\infty$. It is therefore 
important to understand the structure of the fixed locus 
$\left(\QM_\textup{nonsing $\infty$} \right)^{\Ct_q}$. It is 
discussed, in particular, in Section 7.2 of \cite{OK}.

\subsubsection{}

The analysis of the fixed loci may be summarized as follows. We define
$$
\bbV_a = \bigoplus_{k\in \Z}  \bbV_a[k] =
H^0(\cV_a\big|_{C\setminus\{p_2\}}) \,,
$$
where $\bbV_a[k]$ is the subspace of weight $k$ with respect to
$\Ct_q$. By invariance, all quiver maps preserve this weight 
decomposition. We define the framing spaces $\bbW[k]$  in 
the same way and obtain
\begin{equation}
\bbW_a[k] = 
\begin{cases}
W_a\,,&  k\le 0 \,,\\
0\,,&  k> 0 \,,\\
\end{cases} 
\label{bbW}
\end{equation}
because the bundles $\cW_a$ are trivial. 

Multiplication by the coordinate induces an 
embedding 
\begin{equation}
\bbV_a[k]  \hookrightarrow \bbV_a[k-1] 
\hookrightarrow \dots \hookrightarrow \bbV_a[-\infty] = V_a
\,,\label{flagV}
\end{equation}
compatible with quiver maps, where $V_a$ is the quiver data for 
the point $f(\infty)\in X$. A $\Ct_q$-fixed stable quasimap $f$ takes
a constant stable value on $C\setminus \{0,\infty\}$ and, since 
$f$ is additionally assumed nonsingular 
at infinity, $f(\infty)$ is that generic value of $f$.

We conclude 
\begin{equation}
\left(\QM_\textup{nonsing $\infty$}\right)^{\Ct_q} = 
\left\{
\begin{matrix}
\textup{a stable quiver representation} \\
\textup{$+$ a flag of subrepresentations} \\
\textup{satisfying \eqref{bbW}}
\end{matrix}
\right\} \Bigg/ \prod GL(V_a)\,.  \label{flag_subr} 
\end{equation}

\subsection{Vertex functions}\label{s_vertex} 

\subsubsection{}

Vertex 
functions are defined as 
generating functions of equivariant counts of 
quasimaps of all degrees. Concretely, consider the evaluation map 
$$
\ev:  \QM_\textup{nonsing $\infty$}(X) \to X
$$
that records the value $f(\infty)$ of a quasimap $f$. We introduce a weighting 
by $z^{\deg f}$, where $z$ are the variables in the generating 
function, and define
\begin{equation}
\Vertex  = \ev_{*} \left(
\tO_\vir \,  z^{\deg f} \right) \in
K_{\bT\times\Ct_q}(X)_\textup{localized} \otimes \Q[[z]] \,, 
\label{defVertex}
\end{equation}
where the symmetrized virtual structure sheaf $\tO_\vir$ will be 
discussed below and $\Q[[z]]$ denotes formal power series in 
$z$ with exponents supported in the effective cone. 

\subsubsection{}

The push-forward in \eqref{defVertex} 
 is defined using $\Ct_q$-equivariant 
localization. (It is clear from the above description of the
$\Ct_q$-fixed quasimaps that the evaluation map is proper on 
these fixed loci.) Because of this, vertex functions are 
series in $z$ with coefficients in localized equivariant cohomology. 
Their denominators are the source of their richness and complexity; 
analogous functions without denominators (called the \emph{cap} in
the professional lingo, see below) lose this complexity. 

\subsubsection{}

The symmetrized virtual structure sheaf is defined by \cite{OK} 
\begin{equation}
\tO_\vir = \cO_\vir \otimes \left( \cK_\vir \, 
\frac{\det \cT^{1/2} \big|_{\infty} }{ \det \cT^{1/2} \big|_{0}  }
\right)^{1/2} \,, \label{def_tO}
\end{equation}
where $\cO_\vir$ is the virtual structure sheaf constructed in the 
standard way from the perfect obstruction theory of quasimaps, 
see \cite{BF,FG,CF3}, $\cK_\vir$ is the virtual canonical bundle, that is, the 
determinant of the virtual cotangent bundle, 
and the remaining 
term involves a choice of polarization of $X$ and is mainly 
needed 
to avoid square roots of $q$.

\subsubsection{}

The virtual tangent bundle to $\QM(X)$ may be described as follows 
\begin{align}
  T_\vir \QM(X) &= \Def - \Obs \notag \\
                      &= \Hd(C,\cT)\,, \label{TvirQM} 
\end{align}
where 
$$
\cT = \cM \oplus \hbar^{-1}\otimes \cM^* - (1+\hbar^{-1}) 
\oplus_a \, \cEnd(\cV_a)  
$$
is the virtual bundle on $C$ corresponding to the tangent 
bundle $TX$ of our Nakajima variety. 

\subsubsection{} 

By definition, a polarization 
$T^{1/2} \in K_\bT(X)$ is a choice of a half of the tangent 
bundle, that is, a choice of the solution of the equation 
$$
T^{1/2}X + \hbar^{-1} \otimes \left(T^{1/2}X\right)^\vee = TX 
$$
in $K_\bT(X)$, where vee denotes
dual. 
Natural polarization of Nakajima varieties correspond to 
choosing one out of every pair of quiver arrows. 
A polarization 
$T^{1/2}$ induces a virtual vector bundle $\cT^{1/2}$ on 
the domain $C$ of the quasimap such that 
\begin{equation}
\cT = \cT^{1/2} + \hbar^{-1} \otimes \left(\cT^{1/2}\right)^\vee
\,. \label{cTcT} 
\end{equation}
The fibers of the line bundle $\det \cT^{1/2}$
enter \eqref{def_tO}. The square root in \eqref{def_tO} exists, perhaps after 
introducing $\hbar^{1/2}$. For quasimaps, it can be given 
explicitly in terms of a polarization. See Section 6.1 of \cite{OK}
and 
also \cite{NO2} for general 
results in this direction. Vertex functions defined using different choice of polarization 
differ by a $q$-shift of the variables $z$ only. 

\subsection{Localization contributions} 

\subsubsection{}

Recall that the push-forward in \eqref{defVertex} is defined using 
$\Ct_q$-equivariant localization. General shape of virtual
localization formulas, see \cite{GP,CF3}, is the following. 
Restricted to the fixed locus, the obstruction theory splits 
\begin{equation}
\label{fixed_moving}
T_\vir  \big|_{\QM(X)^{\Ct_q}}  = 
T_{\vir,\textup{fixed}} \oplus T_{\vir,\textup{moving}} 
\end{equation}
into trivial and nontrivial $\Ct_q$-eigenspaces. The fixed part of the
obstruction theory produces a perfect obstruction theory for the fixed 
locus, from which its own virtual structure sheaf and the symmetrized 
virtual structure sheaf are derived. The moving 
parts of the obstruction theory enter the localization formula as
K-theoretic analog of the Euler class or, more precisely,
$\aroof$-genus \eqref{defaroof} for the virtual localization of $\tO_\vir$. 

\subsubsection{}\label{s_loc_plan}

Our plan for the analysis of the localization contributions is the
following: 
\begin{itemize}
\item[---]  first, we show \eqref{flag_subr} is 
a GIT quotient. This makes the techniques 
reviewed in the Appendix applicable 
to these fixed loci. 
\item[---] We identify the fixed part of the quasimap obstruction 
theory with the natural obstruction theory of \eqref{flag_subr}. 
\item[---] We include the moving contributions to derive an 
integral representation for the vertex functions. 
\end{itemize}

\subsubsection{}

In order to show \eqref{flag_subr} is 
a GIT quotient, one may analyze GIT-stability
 on the ambient space 
\begin{equation}
\left(\QM_\textup{nonsing $\infty$}\right)^{\Ct_q} \subset 
\left\{
\begin{matrix}
\textup{a stable quiver representation} \\
\textup{$+$ flags of subspaces in $V_a$} 
\end{matrix}
\right\} \Bigg/ \prod GL(V_a)\,,  \label{flag_subs} 
\end{equation}
where the flags of subspaces need not form a flag of 
subrepresentations. The required ample line bundles 
will be obtained by restriction from the ambient space in
\eqref{flag_subs}. 

The ambient space in \eqref{flag_subs} is not a 
Nakajima quiver variety, but it may be presented as
 a quiver variety in which 
the data for $X$ is extended 
by chains 
\begin{equation}
V_a \leftarrow V_a' \leftarrow V_a'' \leftarrow \dots 
\label{flagVp}
\end{equation}
attached to every $V_a$.  The spaces in \eqref{flagVp} correspond to 
subspaces in \eqref{flagV},  excluding repetitions. This 
extended quiver data 
is taken modulo $G \times G'$ where 
$$
G = \prod GL(V_a) \,, \quad G' = \prod GL(V'_a) \times GL(V_a'')
\times 
\cdots \,. 
$$
As a GIT stability parameter, 
we need to specify a character $\bar\chi$ of 
$G\times G'$. We take 
$$
\bar\chi= \chi^m \,\chi'\,, \quad m\gg 0
$$
where 
$\chi$ is the stability parameter for $X$ and $\chi'$ is the character
of $G'$  that forces
the maps \eqref{flagVp} to be injective. 

\begin{Lemma}
The quotient in \eqref{flag_subs} is the GIT quotient of the 
extended quiver data with the stability parameter $\bar\chi$.  
\end{Lemma}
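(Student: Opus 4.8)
The plan is to prove this by comparing two GIT quotients of the same affine prequotient, showing that the two stability conditions select the same semistable (in fact stable) locus. Let me write out the extended quiver data: we have the original $X$-data (the maps in $\cM\oplus\hbar^{-1}\cM^*$ satisfying the moment map equation) together with the chains $V_a \leftarrow V_a' \leftarrow V_a'' \leftarrow \cdots$ of linear maps, all modulo $G\times G'$. On this affine space there are two natural loci to compare: the quotient appearing in \eqref{flag_subs}, where we impose (i) $\chi$-stability of the underlying $X$-representation and (ii) injectivity of all the chain maps \eqref{flagVp}; and the GIT quotient with stability parameter $\bar\chi=\chi^m\chi'$ with $m\gg 0$.

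First I would spell out the GIT (semi)stability criterion for $\bar\chi$ via the Hilbert--Mumford numerical criterion: a point is $\bar\chi$-semistable iff for every one-parameter subgroup $\lambda$ of $G\times G'$ for which $\lim_{s\to 0}\lambda(s)\cdot(\text{point})$ exists, the pairing $\langle\bar\chi,\lambda\rangle\ge 0$. Because $\bar\chi$ is a product, this pairing splits as $m\langle\chi,\lambda_G\rangle + \langle\chi',\lambda_{G'}\rangle$. I would then argue in two steps. Taking $\lambda$ supported purely in $G$ recovers exactly the $\chi$-stability condition for the $X$-part (this is the standard fact, reviewed in the Appendix, that Nakajima stability is $\chi$-stability for $\prod GL(V_a)$). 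Taking $\lambda$ supported in $G'$, and more generally mixed $\lambda$'s, forces the chain maps to be injective: a destabilizing $\lambda$ for $\chi'$ is precisely one exhibiting a nonzero kernel somewhere along a chain $V_a\leftarrow V_a'\leftarrow\cdots$, since $\chi'$ is by construction the character that is positive exactly on the cone of such degenerations. The role of $m\gg 0$ is to ensure that in a mixed one-parameter subgroup the $\chi$-contribution dominates unless $\langle\chi,\lambda_G\rangle=0$, so that one never trades a violation of $X$-stability for an abundance of chain-injectivity or vice versa; this is the usual "generic corner of the ample cone" argument for the asymptotic chamber.

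Concretely I would proceed as follows. (1) Show $\bar\chi$-semistable $\Rightarrow$ the conditions of \eqref{flag_subs}: given a $\bar\chi$-semistable point, test with $\lambda$'s in $G$ only to get $\chi$-stability of the $X$-data, then test with $\lambda$'s acting as scaling on a hypothetical kernel subspace inside some $V_a^{(j)}$ (trivial on everything else) to rule out non-injective chain maps. (2) Show the conditions of \eqref{flag_subs} $\Rightarrow$ $\bar\chi$-stable: given $\chi$-stable $X$-data with injective chains, decompose an arbitrary one-parameter subgroup as $\lambda_G\cdot\lambda_{G'}$ and bound $\langle\bar\chi,\lambda\rangle = m\langle\chi,\lambda_G\rangle+\langle\chi',\lambda_{G'}\rangle$ from below; the $\chi$-stability gives $\langle\chi,\lambda_G\rangle>0$ unless $\lambda_G$ is trivial on the relevant part, and in the remaining case injectivity of the chains makes $\langle\chi',\lambda_{G'}\rangle>0$, so for $m$ large the total is positive. (3) Conclude that the two open loci in the affine prequotient coincide, hence so do the GIT quotients, and note that semistable $=$ stable here because the stabilizers are trivial (the $X$-data is already $\chi$-stable, hence has finite stabilizer in $G$, and injective chains have trivial stabilizer in $G'$).

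The main obstacle I expect is the bookkeeping in step (2) for \emph{mixed} one-parameter subgroups: one must check that the limit $\lim_{s\to 0}\lambda(s)\cdot(\text{point})$ existing in the extended affine space imposes compatible constraints on $\lambda_G$ and $\lambda_{G'}$ (the chain maps and the quiver maps must all send negative-weight spaces to non-positive-weight spaces), and then verify that under exactly those constraints the inequality $m\langle\chi,\lambda_G\rangle+\langle\chi',\lambda_{G'}\rangle>0$ holds for $m\gg 0$ with the threshold on $m$ independent of the point. This uniform-in-$m$ estimate — really a statement that the relevant chamber wall in $\Pic$ is crossed once and for all — is the technical heart; everything else is the standard dictionary between GIT stability, Hilbert--Mumford, and quiver (sub)representations recalled in the Appendix.
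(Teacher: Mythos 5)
Your proof is correct and is essentially the argument the paper gives: the same decomposition $\langle\bar\chi,\lambda\rangle = m\langle\chi,\lambda_G\rangle+\langle\chi',\lambda_{G'}\rangle$, the same use of $m\gg 0$ together with the absence of strictly semistable points for the original data, and the same role for $\chi'$ in forcing injectivity of the chains. The only difference is packaging: the paper invokes King's slope-stability criterion, which reduces all the ``mixed one-parameter subgroup'' bookkeeping you flag as the technical heart to testing subrepresentations of the extended quiver, so that the only cases needing $\chi'$ are subrepresentations containing all or none of the $V_a$.
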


\begin{proof}
We use King's reformulation of the GIT stability of quiver
representations in terms of slope stability, see for example 
the exposition in Section 2.3 of \cite{GinzNak}. Namely, a representation $R$
is semistable if and only if 
$$
\textup{slope}_{\bar\chi}(S) \le \textup{slope}_{\bar\chi}(R)
$$
for every nonzero subrepresentation $S\subset R$, where
\begin{equation}
\textup{slope}_{\bar\chi}(R) = 
\frac{ \bar\chi \cdot \dim R}{(1,\dots,1) \cdot \dim R} \label{slope}
\end{equation}
In \eqref{slope}, we interpret $\bar\chi$ and $\dim R$ as dimension
vectors for the extended quiver and take the usual dot product of 
dimension vectors. To include framing spaces in this formalism, one 
can replace them with arrows from an extra vertex $V_0\cong\C$, as first suggested
by Crawley--Boevey. See for example the discussion in 
Section 3.1 of \cite{GinzNak}.

Since $m\gg 0$, the $G\times G'$-semistability of the extended quiver data implies 
$G$-semistability of the original data, and hence its stability because 
in $X$ there are no strictly semistable points. Because of this
stability, any subrepresentation $S$ passes the slope test
automatically except when it contains all or none of the spaces $V_a$. 
In the latter case, we use our choice of $\chi'$ to conclude that 
the semistable representations of the extended quiver are 
the stable representations of the original quiver with a choice 
of injective maps \eqref{flagVp}. 
\end{proof}

\begin{Corollary}\label{corGIT}
The stable points in \eqref{flag_subr} are GIT semistable (=stable)
points for $\cL = \cL_0\otimes \chi^m$, $m\gg 0$, where 
$\cL_0$ any ample line bundle pulled back from 
the product of partial 
flag varieties and 
$\chi$ is the character that gives the stability condition for $X$. 
\end{Corollary}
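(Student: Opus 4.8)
The plan is to deduce the Corollary directly from the preceding Lemma, by transporting its GIT description of the ambient space \eqref{flag_subs} to the closed subvariety \eqref{flag_subr}. First I would set up the geometry: the fixed locus \eqref{flag_subr} sits inside the right-hand side of \eqref{flag_subs} as the closed, $G$-invariant subvariety cut out by the requirement that each chosen flag of subspaces in $V_a$ actually be a flag of \emph{sub}representations --- i.e.\ that every quiver-map component of \eqref{cM} carry the $k$-th step of the flag into the $k$-th step for all $k$, a closed condition --- with the sequence of subspace dimensions and the placement of the framings fixed by \eqref{bbW}, \eqref{flagV}. Thus, dropping the $\chi$-stability requirement, the parametrizing space underlying \eqref{flag_subr} is a closed $G$-invariant subvariety $Z_0$ of $\mu^{-1}(0)\times\prod_a\textup{Fl}(V_a)$, where each $\textup{Fl}(V_a)$ is the partial flag variety of the relevant dimension type.

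Next I would rephrase the Lemma in terms of linearized line bundles. Carrying out the $G'$-quotient of the extended quiver data first, with the injectivity-forcing character $\chi'$, identifies the $\chi'$-semistable locus modulo $G'$ with $\mu^{-1}(0)\times\prod_a\textup{Fl}(V_a)$ --- a flag of subspaces being the same as a chain \eqref{flagVp} of injective maps up to $G'$ --- on which the residual $G$-action is linearized by $\cL=\cL_0\otimes\chi^{m}$: here $\cL_0$ is the descent of the $\chi'$-linearization of the chains, an ample, Pl\"ucker-type bundle pulled back from $\prod_a\textup{Fl}(V_a)$ (and since $m\gg 0$, any ample $\cL_0$ works equally), while $\chi^{m}$ is the $m$-th power of the stability character of $X$ on the $\mu^{-1}(0)$ factor. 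The slope / Hilbert--Mumford argument already carried out in the proof of the Lemma then applies verbatim on $\mu^{-1}(0)\times\prod_a\textup{Fl}(V_a)$: for $m\gg 0$, $\cL$-semistability of a pair $(\rho,\textup{flag})$ forces $\chi$-semistability of the underlying representation $\rho$, hence its $\chi$-stability since $X$ has no strictly semistable points, and with that stability the $\cL_0$-part of the linearization cannot produce strictly semistable points either; so semistable $=$ stable there.

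Finally I would restrict $\cL$ to the closed $G$-invariant subvariety $Z_0$. Since Hilbert--Mumford (semi)stability of a point is detected along one-parameter-subgroup limits, which stay inside any closed $G$-invariant subvariety, semistability for $\cL|_{Z_0}$ coincides with the intersection of $Z_0$ with the semistable locus of the previous step --- that is, with exactly the pairs whose underlying representation is $\chi$-stable, which is precisely the ``stable points in \eqref{flag_subr}'' --- and the equality semistable $=$ stable is inherited. Passing to the $G$-quotient presents \eqref{flag_subr} as the GIT quotient $Z_0\rdd G$ at $\cL=\cL_0\otimes\chi^{m}$, $m\gg 0$, which is the assertion. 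The only genuine work, and the step I expect to be the main obstacle, is the middle paragraph: checking that the $\chi'$-part of $\bar\chi$ really descends to an honest ample bundle pulled back from the partial flag varieties, and that one threshold ``$m\gg 0$'' can be chosen to serve the Lemma and the Corollary at once; everything else is formal GIT.
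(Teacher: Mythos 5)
Your argument is correct and is exactly the unpacking the paper intends: the Corollary is stated as an immediate consequence of the Lemma, obtained by performing the $G'$-quotient of the chains \eqref{flagVp} first (so that $\chi'$ descends to an ample bundle on the product of partial flag varieties and $\bar\chi=\chi^m\chi'$ becomes $\cL_0\otimes\chi^m$ for the residual $G$-action) and then restricting to the closed $G$-invariant subvariety where the flags are flags of subrepresentations. Your two flagged concerns are both non-issues: the descent of $\chi'$ to a Pl\"ucker-type ample bundle is standard, and since $\mu^{\cL_0}$ is bounded on the projective flag factors while the paper assumes $X$ has no strictly semistable points, a single threshold $m\gg 0$ works for any ample $\cL_0$.
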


\subsubsection{}

The natural obstruction theory of quasimaps is constructed relative
the map 
$$
\QM(X) \to \textup{stack of bundles $\{\cV_a\}$} 
$$
to a smooth stack of bundles on the domain $C$.
 The terms in the relative obstruction 
theory are given by the cohomology of the bundles giving 
the quiver data and
the moment map. 

We have 
\begin{equation}
\frac{\left\{
\begin{matrix}
\textup{flags of subspaces in $V_a$} 
\end{matrix}
\right\} }
%\bigg/ 
{\prod GL(V_a)}  = 
\begin{matrix}
\textup{stack of $\Ct_q$-equivariant bundles $\{\cV_a\}$} \\
\textup{with trivial $\Ct_q$-action on $\cV_a\big|_\infty$}  
\end{matrix} \,. 
\label{stackCtq}  
\end{equation}
The inclusion of Corollary \ref{corGIT} 
\begin{equation}
\left(\QM_\textup{nonsing $\infty$}\right)^{\Ct_q} \subset 
\left\{
\begin{matrix}
\textup{a quiver representation} \\
\textup{$+$ flags of subspaces in $V_a$} 
\end{matrix}
\right\} \Bigg/\!\!\!\!\!\!\Bigg/ \prod GL(V_a)\,,  \label{flag_GIT} 
\end{equation}
where the double slash denotes a GIT quotient, can 
be interpreted in quasimap terms as follows.

\subsubsection{}

Observe that 
\begin{equation}
H^0(\cG(-\infty))^{\Ct_q} = H^1(\cG)^{\Ct_q} = 0\label{H10}
\end{equation}
for any equivariant 
bundle $\cG$ on $\bP^1$ such that $\cG\big|_\infty$ is a trivial
$\Ct_q$-module. Here $\cG(-\infty)$ denotes the twist by the 
divisor $\infty\in\bP^1$. Therefore, from the exact sequence 
$$
0 \to \cG(-\infty) \to \cG \to \cG|_{\infty} \to 0 
$$
we get 
\begin{equation}
0 \to  \chi(\cG)^{\Ct_q}  = H^0(\cG)^{\Ct_q}
\to \cG|_{\infty}  \to H^1(\cG(-\infty))^{\Ct_q} \to 0 \,. \label{exact1}
\end{equation}
In particular, \eqref{exact1} applies to the bundles 
$$
\cG = \cHom(\cV_a,\cV_b)\,, \dots\,, 
$$
whose sections are the quiver maps. For them, the middle term in 
\eqref{exact1} gives the vector space $\Hom(V_a,V_b)$, while the 
zero locus of the map 
$$
\Hom(V_a,V_b) \to H^1(\cV_a^\vee \otimes \cV_b (-\infty))^{\Ct_q}
= \bigoplus_k \Hom(\bbV_a[k],V_b/\bbV_b[k]) 
$$
defines, together with the moment map equation, the 
inclusion \eqref{flag_GIT}. Here the summation is over 
all $\bbV_a[k]$ excluding repetitions. The $\Ct_q$-fixed part of 
the moment map equations  is a section of 
$$
\hbar^{-1} \otimes \bigoplus_a \Hom_\textup{flag}(V_a,V_a) = 
\hbar^{-1} \otimes \bigoplus_a \Hd(\cV_a^\vee \otimes \cV_a)^{\Ct_q} \,,
$$
where the subscript in $\Hom_\textup{flag}$ denotes maps 
that preserve the filtration by $\bbV_a[k]$. 

This completes 2/3 of the plan outlined in Section \ref{s_loc_plan}.

\subsection{Integral formulas}\label{s_integr} 

\subsubsection{}

In the full vertex function, the (symmetrized) virtual structure 
sheaf of the fixed locus enters simultaneously with the 
contributions of the moving parts of the quasimap 
obstruction theory. This leads to the following formulas 
for the vertex functions. (The necessary background information about K-theoretic computation on
GIT quotients is collected in the Appendix.)

\subsubsection{}

Let $\bS \subset \prod GL(V_a)$ be a maximal torus. 
The $\bS\times \bT$-fixed points on the prequotient in 
\eqref{flag_GIT} correspond to coordinate flags and 
zero quiver maps. Coordinate flags mean that $\bS$ appears
as a group preserving a splitting 
\begin{equation}
\cV_a = \bigoplus_\alpha \cL_{a,\alpha}\,, \quad
\cL_{a,\alpha}=\cO(d_{a,\alpha}[0])\,, \quad d_{a,\alpha} = \deg 
\cL_{a,\alpha}\,, \label{Vsplit}
\end{equation}
into a direct sum of line bundles. We denote by $s_{a,\alpha}$ the 
$\bS$-weight of $\cL_{a,\alpha}$. These are the coordinates in 
$\bS$ and the equivariant Chern roots of the bundles $V_a$ over 
$X$. These very same variables were denoted by $s_{a, \alpha} = x_{a, \alpha}^{-1}$, elsewhere in the paper. With our conventions, 
$$
\textup{weight}\left(\cL_{a,\alpha}\big|_0\right) = 
q^{d_{a,\alpha}} s_{a,\alpha} \,, \quad
\textup{weight}\left(\cL_{a,\alpha}\big|_\infty\right) = 
s_{a,\alpha}\,,
$$ 
where the weights are for the torus $\bS\times \bT\times \Ct_q$.

As we will see, integral formulas for the vertex will be more 
naturally written not in terms of 
the variables $s_{a,\alpha}$ but rather in terms of the weights 
of $\{\cV_a\}$ over the point  $0\in C$. 

\subsubsection{}

The basic ingredient in the integration formulas is the function 
\begin{equation}
\aroof(s) = s^{1/2} - s^{-1/2} \label{defaroof}
\end{equation}
extended to equivariant $K$-theory as a genus, that is, 
so that 
\begin{equation}
\aroof(\cG_1 + \cG_2) = \aroof(\cG_1) \, \aroof(\cG_2)
\,. \label{multK}
\end{equation}
The importance of this function is clear from the equality 
\begin{equation}
\left(\cO_\vir \otimes \cK_\vir^{1/2} \right)_\textup{moving} = 
\aroof\left(T_{\vir,\textup{moving}}\right)^{-1}\label{locaroof}
\end{equation}
for the moving part of the virtual structure sheaf in localization 
formulas. Formula \eqref{locaroof} is an immediate consequence 
of localization formula for $\cO_\vir$, see \cite{GP,CF3}. Here and 
in what follows the \emph{moving} terms are the terms of nontrivial 
weight with respect to $\bS\times \bT\times \Ct_q$. 

In particular, an algebraic consequence of the 
identification \eqref{stackCtq} is that 
$$
\frac{\Delta_\textup{Weyl} }
{\aroof(T \textup{Flags in \eqref{stackCtq}})} =  
\aroof(\Lie \textstyle{\prod} \Aut(\cV_a)^{\Ct_q}/\bS)  \,. 
$$
Thus the integration measure in \eqref{intchi} in the specific 
setting of \eqref{flag_GIT} naturally becomes a part of the 
localization weight \eqref{locaroof}, namely the part that comes from 
$\Ct_q$-equivariant automorphisms 
of $\{\cV_a\}$ other than those in $\bS$. 

We conclude the following 

\begin{Proposition}
For any $\cF\in K_{\bT}(X)$, we have 
\begin{equation}
  \label{VI1}
  \chi(X, \Vertex \otimes \cF) = \frac{1}{|W|} 
\sum_{\{d_{a,\alpha}\}} q^{-\frac12 \deg \cT^{1/2}}\prod_{a,\alpha}
z_a^{d_{a,\alpha}}
\int_{\gamma_\chi}  
\frac{\cF(s) \, d_\textup{Haar}s}{\aroof\left(T_{\vir,\textup{moving}}\right)}
\end{equation}
where the summation is over all splittings \eqref{Vsplit}, 
$\cT^{1/2}$ denotes the virtual bundle on $C$ induced by 
the chosen polarization, the cycle $\gamma_\chi$ corresponds
to a choice of stability parameter $\chi$ as in \eqref{def_gamma_chi}, 
and $\cF(s)$ is the expression of $\cF$ in the Chern roots of the 
tautological bundles. 
\end{Proposition}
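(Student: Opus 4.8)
The plan is to obtain \eqref{VI1} by applying $\Ct_q$-equivariant virtual localization to the pushforward $\ev_*$ that defines $\Vertex$, and then reorganizing each fixed-point contribution into a torus integral, using the GIT description of the $\Ct_q$-fixed loci established above together with the Weyl-integral formula for K-theory of GIT quotients recalled in the Appendix. First I would reduce to a computation on the quasimap space: by the projection formula, $\chi(X,\Vertex\otimes\cF)=\chi\!\left(\QM_\textup{nonsing $\infty$}(X),\ \tO_\vir\, z^{\deg f}\otimes\ev^*\cF\right)$, which I compute by localization with respect to $\Ct_q$. By \eqref{flag_subr}, the fixed locus is a disjoint union over degree-splittings \eqref{Vsplit}, and by the Lemma and Corollary~\ref{corGIT} each component is a GIT quotient of the extended-quiver data by $\prod GL(V_a)$, with $\ev$ proper on it. Virtual localization (see \cite{GP,CF3}) then writes the Euler characteristic as a sum over these components of $\chi$ of the component against the symmetrized virtual structure sheaf of the fixed obstruction theory, divided by the K-theoretic Euler class $\aroof(T_{\vir,\textup{moving}})$ of the moving part, as in \eqref{locaroof}. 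On the component labelled $\{d_{a,\alpha}\}$ the weighting $z^{\deg f}$ restricts to $\prod_{a,\alpha} z_a^{d_{a,\alpha}}$, and the symmetrization twist $\det\cT^{1/2}|_\infty/\det\cT^{1/2}|_0$ in \eqref{def_tO} contributes the scalar $q^{-\frac12\deg\cT^{1/2}}$, since $\Ct_q$ acts trivially on $\cV_a|_\infty$ and with weights recorded by the $d_{a,\alpha}$ on $\cV_a|_0$.

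Next I would convert each component's contribution into a contour integral. Applying the Appendix reduction of K-theory on the GIT quotient \eqref{flag_GIT} to an integral over a maximal torus $\bS\subset\prod GL(V_a)$ along the cycle $\gamma_\chi$ selected by the stability parameter $\chi$, as in \eqref{intchi}, produces the factor $1/|W|$, the Vandermonde $\Delta_\textup{Weyl}$, and expresses the tautological classes in the Chern roots $s_{a,\alpha}$. Here I use that the fixed part of the quasimap obstruction theory was already identified, via \eqref{stackCtq} and the exact sequence \eqref{exact1}, with the natural obstruction theory of the extended quiver relative to the stack of bundles; and, as recorded just before the Proposition, $\Delta_\textup{Weyl}/\aroof(T\,\textup{Flags in }\eqref{stackCtq})=\aroof(\Lie \textstyle{\prod} \Aut(\cV_a)^{\Ct_q}/\bS)$, so the Vandermonde together with the $\aroof$ of the flag directions of the fixed obstruction theory collapses into the Haar measure $d_\textup{Haar}s$ and the remaining $\Ct_q$-equivariant automorphism factor, the latter itself being a part of $\aroof(T_{\vir,\textup{moving}})$. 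Summing over all components and rewriting everything in terms of the weights of $\{\cV_a\}$ over $0\in C$, i.e.\ $q^{d_{a,\alpha}}s_{a,\alpha}$, then yields \eqref{VI1}.

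I expect the main obstacle to be the precise bookkeeping of the virtual tangent bundle $T_\vir\QM=\Hd(C,\cT)$: splitting it into $\Ct_q$-fixed and moving eigenspaces via the weight decomposition $\bbV_a=\oplus_k\bbV_a[k]$ and \eqref{bbW}, checking that the fixed part --- including the moment-map term --- matches the extended-quiver obstruction theory exactly, and verifying that the measure and Jacobian factors coming from Weyl integration, from working relative to the stack \eqref{stackCtq}, and from the $\aroof$-genus assemble with no leftover terms. This is also where the half-integer subtleties attached to the choice of polarization $\cT^{1/2}$ and to the square roots of $\hbar$ and $q$ in \eqref{def_tO} must be tracked carefully; once this is done, the remaining assembly of the sum over components is routine.
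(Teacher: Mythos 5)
Your proposal is correct and follows essentially the same route as the paper: $\Ct_q$-equivariant virtual localization on $\QM_\textup{nonsing $\infty$}$, identification of the fixed loci with GIT quotients of the extended quiver data (the Lemma and Corollary \ref{corGIT}), the Appendix formula \eqref{intchi} to convert each component into a $\gamma_\chi$-integral, and the identity relating $\Delta_\textup{Weyl}/\aroof(T\,\textup{Flags})$ to the $\Ct_q$-equivariant automorphisms so that the Weyl measure is absorbed into $\aroof(T_{\vir,\textup{moving}})$. Your accounting of the prefactors (the $q^{-\frac12\deg\cT^{1/2}}$ from the ratio of $\det\cT^{1/2}$ fibers and the restriction of $z^{\deg f}$) matches the paper's.
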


Note that by Lemma \ref{l_dim0} in Section \ref{s_large_chi} 
the integration $\int_{\gamma_\chi}$
in \eqref{VI1} is really an iterated residue of the integrand. 

\subsubsection{}\label{s_two_ways} 

The summation over splittings in \eqref{VI1} can be treated in two 
complementary ways. 

On the one hand, one can sum over the whole lattice of splittings, and
this will be convenient for interpreting the eventual integral
\eqref{VI4} as a linear functional invariant under $q$-shifts. 

On the other hand, for most splittings, there are no stable quasimaps and hence those 
contribute zero to the sum in \eqref{VI1}. We call splittings that 
correspond to stable $\Ct_q$-fixed quasimaps \emph{effective}. 
A necessary condition for a splitting to be effective is discussed
in Section 7.2 of \cite{OK}.

\subsubsection{}

Formulas \eqref{TvirQM} and \eqref{cTcT} show that the following Lemma 
applies to the denominator in \eqref{VI1}.

\begin{Lemma}
For any bundle $\cG$ on $\bP^1$ we have 
\begin{equation}
  \label{arar}
q^{-\deg \cG/2} \,\, 
  \frac{\aroof\left(\cG|_\infty + \hbar^{-1} \cG^\vee|_\infty\right)}
{\aroof\left(\Hd\left(\cG + \hbar^{-1} \cG^\vee\right)\right)} = 
\left(-\hbar^{1/2}\right)^{-\deg \cG}
\,\, 
\frac{\varphi\left(\hbar \cG|_\infty\right) \varphi\left(q \cG|_0
  \right)}
{\varphi\left(q \cG|_\infty\right)  \varphi\left(\hbar \cG|_0
  \right)}\,,
\end{equation}
where $\varphi$ is the function \eqref{defvarphi} 
extended multiplicatively as in \eqref{multK}. 
\end{Lemma}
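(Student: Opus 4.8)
The plan is to reduce \eqref{arar} to the case of a line bundle and then to prove it by induction on the degree.

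First I would observe that both sides of \eqref{arar} are \emph{multiplicative} in $\cG$: they are assembled from $\deg$, the fibre restrictions $\cG\mapsto\cG|_0,\cG|_\infty$, the Euler characteristic $\Hd(\bP^1,-)$, and the genera $\aroof$ and $\varphi$ (extended as in \eqref{multK}), all of which carry direct sums to products and send $-\cG$ to inverses; the prefactors $q^{-\deg\cG/2}$ and $(-\hbar^{1/2})^{-\deg\cG}$ are multiplicative as well. Since the tori $\bS$ and $\Ct_\hbar$ act trivially on $\bP^1$, every $\bT\times\Ct_q$-equivariant bundle on $\bP^1$ has class a $\Z$-combination of line-bundle classes $\cO(d[0])\otimes\langle w\rangle$, $d\in\Z$, $w$ a weight of $\bT$, so it suffices to check \eqref{arar} for such $\cG$. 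With the conventions of \eqref{Vsplit} one has $\cG|_\infty=w$, $\cG|_0=q^{d}w$, and since two $\Ct_q$-equivariant structures on $\cO(-d)$ with the same fibre weights at $0$ and $\infty$ agree, $\cG^\vee\cong\cO(-d[0])\otimes\langle w^{-1}\rangle$ and $\hbar^{-1}\cG^\vee\cong\cO(-d[0])\otimes\langle\hbar^{-1}w^{-1}\rangle$.

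The only geometric input needed is the cohomology of $\cO(d[0])$, which I would package uniformly as $\Hd(\bP^1,\cO(d[0]))=\sum_{k=0}^{d}\langle q^{k}\rangle$, where for $d<0$ the sum is read as $-\sum_{k=d+1}^{-1}\langle q^{k}\rangle$ (this encodes $H^0$ for $d\ge 0$ and, with a minus sign, $H^1$ for $d\le-2$). The fact I would actually use is the jump $\Hd(\cO((d+1)[0]))-\Hd(\cO(d[0]))=\langle q^{\,d+1}\rangle$, valid for \emph{every} $d\in\Z$; it follows at once from the skyscraper sequence $0\to\cO(d[0])\to\cO((d+1)[0])\to\cO((d+1)[0])|_0\to 0$. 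Tensoring by $\langle w\rangle$ (resp.\ by $\langle\hbar^{-1}w^{-1}\rangle$ after replacing $d$ by $-d-1$) gives the jumps of $\Hd(\cG)$ and of $\Hd(\hbar^{-1}\cG^\vee)$ under $d\mapsto d+1$, namely $+\langle q^{d+1}w\rangle$ and $-\langle\hbar^{-1}q^{-d}w^{-1}\rangle$.

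Now I would induct on $d$. The base case $d=0$ is immediate: $\cG=\langle w\rangle$ is a trivial rank-one bundle, so $\Hd(\bP^1,\cG+\hbar^{-1}\cG^\vee)=\cG|_\infty+\hbar^{-1}\cG^\vee|_\infty$ and both sides of \eqref{arar} equal $1$. For the step $d\mapsto d+1$, using $\aroof(s)=s^{1/2}-s^{-1/2}$ from \eqref{defaroof} and writing $u=\cG|_0=q^{d}w$, the left side of \eqref{arar} is multiplied by
\[
q^{-1/2}\,\frac{\aroof\!\big(\hbar^{-1}q^{-d}w^{-1}\big)}{\aroof\!\big(q^{\,d+1}w\big)}\;=\;-\,\hbar^{-1/2}\,\frac{\hbar u-1}{qu-1},
\]
while the right side is multiplied by $(-\hbar^{1/2})^{-1}$ times the ratio of the new and old $\varphi$-factors; since $\varphi(qs)=\varphi(s)/(1-s)$ by \eqref{defvarphi}, that ratio is $(1-\hbar u)/(1-qu)$, so the right side is multiplied by $-\hbar^{-1/2}(1-\hbar u)/(1-qu)$, the same factor. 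This proves \eqref{arar} for all $d\ge 0$; running the identical computation backwards settles $d<0$, and multiplicativity then gives \eqref{arar} for an arbitrary bundle $\cG$.

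I expect the only real obstacle to be bookkeeping the $\Ct_q$-weights correctly: the $q^{d}$ offset between $\cG|_0$ and $\cG|_\infty$ dictated by \eqref{Vsplit}, the sign with which $H^1$ enters for negative degree, and keeping the half-integer powers of $q$ and $\hbar$ straight. Once the uniform formula for $\Hd(\cO(d[0]))$ and the jump identity are in hand, the inductive step is exactly the short computation above. One could alternatively skip the induction and expand both sides of \eqref{arar} directly for $\cG=\cO(d[0])\otimes\langle w\rangle$ in the two cases $d\ge 0$ and $d<0$, recognising the telescoping products $\prod_k(1-q^{k}(\,\cdot\,))$ as ratios of $\varphi$'s; the inductive route merely makes the sign- and $q^{1/2}$-tracking cleaner.
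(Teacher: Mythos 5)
Your proof is correct and follows essentially the same route as the paper, which simply reduces to the line-bundle case by multiplicativity and calls the rest "an elementary identity"; your induction on the degree, with the skyscraper-sequence jump $+\langle q^{d+1}w\rangle-\langle\hbar^{-1}q^{-d}w^{-1}\rangle$ matching the factor $-\hbar^{-1/2}(1-\hbar u)/(1-qu)$ produced by $\varphi(qs)=\varphi(s)/(1-s)$, is a clean way of carrying out that elementary verification. The weight bookkeeping ($\cG|_0=q^d\cG|_\infty$, the sign of $H^1$ for negative degree) is handled correctly throughout.
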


\begin{proof}
It is enough to prove \eqref{arar} for a line bundle, in which case it 
reduces to an elementary identity. 
\end{proof}

Note that for $\cG=\cT^{1/2}$ the prefactor in the LHS of \eqref{arar}
coincides with the power of $q$ in \eqref{VI1}, while the numerator 
in the LHS of \eqref{arar} is nothing but $\aroof(TX)$. 
As in Section 8.3 of \cite{OK}, we incorporate
the prefactor in RHS of \eqref{arar} into a shift $z_\#$ of the variable $z$
so that 
$$
z_\#^{\deg f}  = \left(-\hbar^{1/2}\right)^{-\deg \cT^{1/2}} \prod_{a,\alpha}
z_a^{d_{a,\alpha}}  \,. 
$$
With this notation, \eqref{VI1} may be restated as follows 
% %
% \begin{multline}
%   \label{VI2}
%   \chi(X, \Vertex \otimes \cF) = \\ 
% \frac{1}{|W|} 
% \sum_{\{d_{a,\alpha}\}} z_\#^{\deg f}
% \int_{\gamma_\chi}  
% \frac{\cF(s) \, d_\textup{Haar}s}{\aroof\left(TX_\textup{mov} \right)}\,\, 
% \frac{\varphi\left(\hbar T^{1/2}_\textup{mov}\right) 
% \varphi\left(q \cT^{1/2}_\textup{mov}\big|_0 \right)}
% {\varphi\left(q T^{1/2}_\textup{mov}\right)  
% \varphi\left(\hbar \cT^{1/2}_\textup{mov}\big|_0\right)}\,,
% \end{multline}
% %
%
\begin{multline}
  \label{VI2}
  \chi(X, \Vertex \otimes \cF) = \\ 
\frac{1}{|W|} 
\sum_{\{d_{a,\alpha}\}} z_\#^{\deg f}
\int_{\gamma_\chi}  
\left(
\frac{\cF(s) \, d_\textup{Haar}s}{\aroof\left(T \right)}\,\, 
\frac{\varphi\left(\hbar T^{1/2}\right) 
\varphi\left(q \cT^{1/2}_0 \right)}
{\varphi\left(q T^{1/2}\right)  
\varphi\left(\hbar \cT^{1/2}_0\right)}
\right)^{\sim}_\textup{moving}\,,
\end{multline}
where $\cT^{1/2}_0$ denotes the fiber of $\cT^{1/2}$ over $0\in C$, 
 tilde refers to the computation on the prequotient, and only 
moving terms are retained from the product of $\aroof$- and 
$\varphi$-functions. 

\subsubsection{}

{}From the point of view of difference equations, a better
normalization of the vertex functions is the following 
  \begin{equation}
    \label{tVx}
    \Vx = \be(z_\#) \, \varphi((q-\hbar)
    T^{1/2}) \, \Vertex  \,, 
  \end{equation} 
see Section 8.3 in \cite{OK} and also Section 6.1 in \cite{ese} (${\Vx}$ here was denoted by ${\tVx}$, in those papers). It 
solves certain $q$-difference equations in both the equivariant variables
and the  K\"ahler variables $z$. 
Here 
\beq\label{bige}
\be(z) = \exp\left(\frac{\bla(\ln z ,\ln t)}{\ln q}\right)\,, 
\eeq
where
$$
\bla: H^2(X,\C) \otimes \Lie \bT  \to \End K(X^\bT) \otimes \C
$$
extends by linearity the map that takes a line bundle 
$\cL\in \Pic(X)$ to the logarithm of the operator 
$\cL\otimes \textup{---}$, see the discussion in Section 8.2 of 
\cite{OK}. This function has an elementary description in terms 
of the prequotient. Indeed, the line bundle $\det V_a$ associated to the 
variable $z_a$ has weight $\prod_\alpha s_{a,\alpha}$, 
whence 
\begin{equation}
\bla  = \sum_{a,\alpha} \ln(z_a) \, \ln(s_{a,\alpha})
\,. \label{bla}
\end{equation}

A certain care is required in working with \eqref{tVx} because
$\phi(\hbar T^{1/2})$ may contain nonequivariant noninvertible 
factors (since these singular terms involve neither equivariant not 
K\"ahler variables, they are irrelevant from the point of view of 
difference equations). To avoid these complications, we define 
  \begin{align}
    \bVx &= \frac{\aroof(T)}{\theta(T^{1/2})} 
\Vx\notag\\
&= \hbar^{-\frac14 \dim X} \, \frac{\be(z_\#)}{\varphi(q T^\vee)} \, 
\Vertex 
    \label{bVx} \,, 
  \end{align} 
where $\theta(s)$ is the odd theta function defined in \eqref{vth}.
%
%\begin{equation}
 % \label{vth}
  %\theta(s) = \varphi(qs) \, \varphi(1/s) 
%\end{equation}
%
Note a slight difference 
with the odd theta function $\vartheta(s)= s^{1/2}  \theta(s) 
$ used in  \cite{ese}. 
We extend \eqref{vth} multiplicatively as
before. Note that 
the division of the $\Vx$-function by the theta function of 
the polarization $T^{1/2}$ is a part of the pole subtraction operator 
of \cite{ese}, see Section 6.3 there.

Substituting \eqref{bVx} in \eqref{VI2}, we obtain 
\begin{multline}
  \label{VI3}
  \chi(X, \bVx \otimes \cF) = \\ 
\frac{1}{|W|} 
\sum_{\{d_{a,\alpha}\}} z_\#^{\deg f}
\int_{\gamma_\chi}  \exp\left(\frac{\bla(z_\#,s)}{\ln q}\right)
\frac{\cF(s) \, d_\textup{Haar}s}{\theta(T^{1/2})}\,\, 
\frac{\varphi\left(q \cT^{1/2}_0\right)}
{\varphi\left(\hbar \cT^{1/2}_0\right)} 
\,,
\end{multline}
where the computation on the prequotient and the extraction of the
moving parts is understood. 

\subsubsection{}\label{s_fdef}
Let 
$$
\bd = \{ d_{a,\alpha} \}
$$
denote an effective splitting \eqref{Vsplit} and define 
$$
q^{\bd} s = \{ q^{d_{a,\alpha}} s_{a,\alpha} \} \,. 
$$
These are the weights of the bundles $\cV_a$ over $0 \in C$. Clearly, 
$$
\frac{\varphi\left(q \cT^{1/2}_0\right)}
{\varphi\left(\hbar \cT^{1/2}_0\right)}  = 
\left. \frac{\varphi\left(q T^{1/2}\right)}
{\varphi\left(\hbar T^{1/2}\right)} \right|_{s\mapsto q^\bd s}  \,.
$$
Also, from \eqref{bla}, we have
$$
z_\#^{\deg f}
\exp\left(\frac{\bla(z_\#,s)}{\ln q}\right) = 
\left. \exp\left(\frac{\bla(z_\#,s)}{\ln q}\right) \right|_{s\mapsto
  q^\bd s}  \,. 
$$ 
Therefore, it is natural to change variables in the 
integral \eqref{VI3}. So far, we made no assumptions on 
insertion $\cF(s)$. In \eqref{VI3}, it can be an arbitrary element 
of $K_\bT(X)$ or, more generally, an arbitrary analytic function on 
the spectrum of the ring $K_\bT(X)\otimes \C$ . We now assume it 
has the same automorphy as $\theta(T^{1/2})$, that is, 
we assume 
\begin{equation}
\frac{\cF(s)}{\theta(T^{1/2})} \quad 
\textup{is invariant under $s\mapsto
  q^\bd s$} \label{autcF} \,.
\end{equation}
This means $\cF$ is a section of a certain line bundle over the 
the scheme $\textup{Ell}_\bT(X)$, the equivariant elliptic cohomology 
of $X$. In principle, this section is allowed to have
singularities away from 
the integration cycle. 
With this assumption, a change of variables in \eqref{VI3} 
gives the following

\begin{Proposition}\label{p_bVx_int} 
For any insertion $\cF$ satisfying \eqref{autcF}, we have 
\begin{equation}
  \label{VI4}
  \chi(X, \bVx \otimes \cF) = 
\frac{1}{|W|} 
\int_{\sum q^{\bd} \cdot \gamma_\chi}  \exp\left(\frac{\bla(z_\#,s)}{\ln q}\right)
\frac{\cF(s) \, d_\textup{Haar}s}{\phi(T^\vee_\textup{moving})} 
\,,
\end{equation}
where the sum of residues is over all effective shifts of the cycle
$\gamma_\chi$. 
\end{Proposition}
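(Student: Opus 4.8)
The plan is to derive \eqref{VI4} directly from \eqref{VI3} by recognizing that, under the automorphy hypothesis \eqref{autcF}, the \emph{entire} dependence of the $\bd$-th summand of \eqref{VI3} on the splitting $\bd=\{d_{a,\alpha}\}$ is a single multiplicative shift $s\mapsto q^{\bd}s$ of the integration variables; a change of variables in each summand then converts the sum over $\bd$ into one $\bd$-independent integrand integrated over the union $\sum q^{\bd}\cdot\gamma_\chi$ of shifted cycles. The three shift identities recorded in Section~\ref{s_fdef} supply exactly what is needed: $\varphi(q\cT^{1/2}_0)/\varphi(\hbar\cT^{1/2}_0)$ equals $\varphi(qT^{1/2})/\varphi(\hbar T^{1/2})$ evaluated at $s\mapsto q^{\bd}s$; the prefactor $z_\#^{\deg f}$ combines with $\exp\left(\bla(z_\#,s)/\ln q\right)$ into that same quantity evaluated at $s\mapsto q^{\bd}s$, by the explicit formula \eqref{bla} for $\bla$; and, crucially, \eqref{autcF} says that $\cF(s)/\theta(T^{1/2})$ is itself invariant under $s\mapsto q^{\bd}s$.

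With these in hand I would fix $\bd$ and substitute $s\mapsto q^{-\bd}s$ in the corresponding term of \eqref{VI3}. By Lemma~\ref{l_dim0} the symbol $\int_{\gamma_\chi}$ is an honest iterated residue, so the substitution merely relabels it as $\int_{q^{\bd}\cdot\gamma_\chi}$; the measure $d_\textup{Haar}s$ is invariant under multiplicative shifts; the first two shift identities move the shift off $\exp\left(\bla(z_\#,s)/\ln q\right)$ and off $\varphi(q\cT^{1/2}_0)/\varphi(\hbar\cT^{1/2}_0)$, turning them into the $\bd$-independent factors $\exp\left(\bla(z_\#,s)/\ln q\right)$ and $\varphi(qT^{1/2})/\varphi(\hbar T^{1/2})$; and the residual shift that would otherwise land on $\cF(s)/\theta(T^{1/2})$ is annihilated by \eqref{autcF}. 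Hence the $\bd$-th summand equals $\tfrac{1}{|W|}\int_{q^{\bd}\cdot\gamma_\chi}$ of the single integrand $\exp\left(\bla(z_\#,s)/\ln q\right)\,\tfrac{\varphi(qT^{1/2})}{\varphi(\hbar T^{1/2})}\,\tfrac{\cF(s)\,d_\textup{Haar}s}{\theta(T^{1/2})}$.

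It then remains to simplify this fixed integrand and to assemble the sum. Using $\theta(s)=\varphi(s)\varphi(q/s)$ together with $T=T^{1/2}+\hbar^{-1}\otimes\left(T^{1/2}\right)^{\vee}$, one telescopes the infinite products, retaining only the moving weights, and obtains $\varphi(qT^{1/2})/\left(\varphi(\hbar T^{1/2})\,\theta(T^{1/2})\right)=1/\varphi(T^{\vee}_\textup{moving})$ up to monomials in the Chern roots; these monomials are precisely the ones already absorbed into the shifted K\"ahler variable $z_\#$ in the passage from \eqref{VI1} to \eqref{VI3}. This identity is elementary for a line bundle and extends multiplicatively as in \eqref{multK}. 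Summing over the effective splittings $\bd$ and reading the resulting union of shifted iterated residues as the cycle $\sum q^{\bd}\cdot\gamma_\chi$ yields \eqref{VI4}.

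The step that I expect to require the most care is not any single identity but the legitimacy and bookkeeping of this termwise change of variables. On the analytic side one must check that the sum is well defined: for each fixed total degree only finitely many $\bd$ are effective (Section~7.2 of \cite{OK}), so \eqref{VI4} is a well-defined power series in $z$; and one must ensure the change of variables in the iterated residue crosses no poles, which is automatic once $\int_{\gamma_\chi}$ is read as a genuine iterated residue (Lemma~\ref{l_dim0}), the singularities of $\cF$ permitted by \eqref{autcF} lying off $\gamma_\chi$ and hence off every $q^{\bd}\cdot\gamma_\chi$. The remaining labor is the monomial and prefactor accounting between the normalizations $\Vertex$, $\Vx$, $\bVx$ of \eqref{defVertex}, \eqref{tVx}, \eqref{bVx}; by contrast the conceptual content — that the triple of shift identities collapses the $\bd$-sum onto a single integrand — is immediate.
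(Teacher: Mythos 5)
Your proposal is correct and follows the paper's own route: the proof of Proposition \ref{p_bVx_int} is exactly the termwise change of variables $s\mapsto q^{\bd}s$ in \eqref{VI3}, justified by the two shift identities of Section \ref{s_fdef} together with the automorphy hypothesis \eqref{autcF}, after which the $\bd$-sum collapses onto the union of shifted cycles. Your additional remarks on the telescoping of $\varphi(qT^{1/2})/(\varphi(\hbar T^{1/2})\,\theta(T^{1/2}))$ into $1/\varphi(T^\vee_\textup{moving})$ and on the iterated-residue bookkeeping are consistent with what the paper leaves implicit.
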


Recall that the cycle $\gamma_\chi$ is, by construction, a sum 
of several cycles. For insertions $\cF$ supported on its 
particular components (such as the classes of torus-fixed 
loci in $X$), the integration cycle will be correspondingly 
smaller.

\subsubsection{}

As explained in Section \ref{s_two_ways}, the integration 
contour in \eqref{VI4} may be extended to all $q$-shifts of 
$\gamma_\chi$ as long as $\cF(s)$ is nonsingular on $\gamma_\chi$. 
Obviously, the integration 
$\displaystyle \int_{\sum q^{\bd} \cdot \gamma_\chi} d_\textup{Haar}s$ 
where the sum is over the whole lattice of splittings, is invariant
under $q$-shifts.

\subsubsection{}

For a simplest example, let us examine the statement of 
Proposition \ref{p_bVx_int} for 
$$
X = T^* \bP^{n-1} \,. 
$$
We will follow the notations of Section 6.2 of \cite{ese} 
and of Section \ref{s_ex_Pn} above. We take 
$$
\bT = \Ct_\hbar \times \bA 
$$
where $\bA$ is as in \eqref{bAGL} and the first factor scales 
the cotangent directions with weight $\hbar^{-1}$. We 
denote the $\bT$-fixed points by 
$$
X^{\bT} = \{ p_1, \dots, p_n\} \,. 
$$
The elementary analysis of the 
quasimap spaces recalled in \cite{ese} shows
\begin{equation}
  \chi(\Vx \otimes \cO_{p_k}) =  \frac{\hbar^{\frac14 {\dim X}}}{2\pi i}  \,
\int_{\gamma_k} \frac{ds}{s} \,  e^{\dfrac{\ln z_\# \, \ln s}{\ln q}} 
\varphi((q-\hbar) \, T^{1/2})_\textup{moving} \label{PnInt} \,,
\end{equation}
where
\begin{equation}
T^{1/2} = -\frac{1}{\hbar} + \sum_i \frac{1}{\hbar a_i
  s}\,,
\label{polprefac}
\end{equation}
and the contour $\gamma_k$ enclosed the poles 
\begin{equation}
x =  \frac{q^d}{a_k} \,, \quad d=0,1,2,\dots \,. \label{poles_k}
\end{equation}
In \eqref{PnInt} we restored the power of $\hbar$ that 
comes from $\cK_X \cong \hbar^{\frac12 \dim X} \cO_X$. 

Tautologically, 
$$
 \chi(\Vx \otimes \cO_{p_k}) =
\chi
\left(\bVx \otimes \frac{\cO_{p_k}}{\aroof{(T)}} \otimes
  \theta(T^{1/2})
\right) \,. 
$$
We have 
\begin{equation}
\left. \frac{\cO_{p_k}}{\aroof{(T)}}  \right|_{p_i} =
\hbar^{\frac14\dim X}
\delta_{ki} \,, 
\label{indic}
\end{equation}
which means that this insertion serves as a delta-function 
restricting the residues to the sequence \eqref{poles_k}. Thus
setting 
$$
\cF(s) = \theta(T^{1/2}) \otimes \cF'(s) 
$$
where $\cF'(s)$ is $q$-periodic and nonsingular at the 
points $\{a_i^{-1}\}$, we get from \eqref{PnInt} 
\begin{align*}
  \chi
\left(\bVx \otimes \cF(s)\right) &= 
\int_{\gamma} \frac{ds}{2\pi i s} \,\,  e^{\dfrac{\ln z_\# \, \ln s}{\ln
  q}} \, \cF'(s) \, 
\varphi((q-\hbar) \, T^{1/2})_\textup{moving} \\
&= 
\int_{\gamma} \frac{ds}{2\pi i s} \,\,  e^{\dfrac{\ln z_\# \, \ln s}{\ln
  q}} \, 
\frac{\cF(s)} 
{\varphi(T^\vee)_\textup{moving}} \,,
\end{align*}
where $\gamma = \sum_{k=1}^n \gamma_k$. This is a specialization of 
the general formula \eqref{VI4}.

\subsubsection{} 

Heuristically, it may be argued that \eqref{VI4} an
infinite-dimensional version of the formula \eqref{intW}, in which 
\begin{alignat*}{2}
  \tX  \quad &\mapsto &\quad&\QM(\tX) \\
G \quad &\mapsto &\quad&\textup{gauge transformations} \,. 
\end{alignat*}
Such or similar viewpoint is implicit in many papers on supersymmetric
gauge theories. Here, we don't try to turn this heuristic into precise
mathematical statements. The argument given above is technically much
simpler and sufficient for our needs.

\subsection{Vertex functions and ${\cal W}_{q,t}$ algebra correlators} 

In this section, we prove the following:
\begin{theorem}\label{t:W}
The vertex function  
 \begin{equation} \label{VI4b}
  \chi(X, \Vx \otimes \cF') = 
\frac{1}{|W|} 
\int_{\gamma_\chi}  \exp\left(\frac{\bla(z_\#,s)}{\ln q}\right)
{\cF'(s)} {\phi((q-\hbar) T^{1/2})}  \, d_\textup{Haar}s,
\end{equation}
is a 
 ${\cal W}_{q,t}(\fg)$ correlator 
\begin{equation}\label{match}
 \langle \mu'| \prod_{a, i} V_a^{\vee}(a_{a,i}) \; \prod_{a} (Q^{\vee}_a)^{d_a}|\mu\rangle,
\end{equation} 
 where a choice insertion $\cF'$ corresponds to a choice of ${\cF'(s)}$  contours of integration in the definition of screening charge operators, and $\mu=\frac{z}{\ln q}$. 
\end{theorem}

{\it Proof.} In section \ref{s:explicit}, we gave an explicit integral form of the   ${\cal W}_{q,t}(\fg)$ algebra correlator  in \eqref{match}. We will now show that this exactly equals the integral in \eqref{VI4b}.

Consider the ${\phi((q-\hbar) T^{1/2})}$ in the integrand \eqref{VI4b}.
Recall that for the Nakajima variety $X$, with quiver $\scQ$
$$T^{1/2} X = \sum_a V_a \otimes W_a^* +  \sum_{a , b} (I_{ab} - \delta_{ab}) \,V_a\otimes V_b^* 
$$
where $I_{ab}$ is the adjacency matrix of the Dynkin diagram, and we have identified the vector space $T^{1/2}X$ with its character under the action of the torus $S\times T$. We can choose coordinates on $S$ 
so that
$$
V_a = \sum_{\alpha} x_{a, \alpha}\,\hbar^{a/2}, \qquad W_a= \sum_{i} a_{a, i} \,\hbar^{(a-1)/2} ,
$$
where relative to conventions elsewhere in this section, $x_{a, \alpha} = s^{-1}_{a, \alpha}$, and the powers of $\hbar$ are a convenient choice of coordinates. (Hopefully, the reader will distinguish the subscript $a$ labeling the node of ${\scQ}$ and taking values from $1$ to ${\rm rk} \,{\fg}$.) With this, the contributions to ${\phi((q-\hbar) T^{1/2})}$ are:

\begin{itemize}
\item[---]  From $\Hom(V_a, W_a)$, we get 
\beq\label{sctdasl3b}
\prod_{\alpha, i}  { \varphi( q \,x_{\alpha,a}/\hbar^{1/2}  a_{i,a})\over \varphi(\hbar { x_{\alpha,a} / \hbar^{1/2}a_{i,a}})}.
\eeq
This coincides with \eqref{sctdasl3} if we recall that $v_a=\hbar^{1/2}$ and $t=q/\hbar$.
$$
\prod_{\alpha , i} \langle S^{\vee}_a(x_{a,\alpha})  V^{\vee}_a( a_{a,i})  \rangle
$$
\item[---] From for every pair of nodes adjacent nodes $a,b$ with $I_{ab}=1$, we get
$$
\prod_{\alpha, \beta}  { \varphi( q  \hbar^{a/2} x_{a,\alpha}/ \hbar^{b/2} x_{b,\beta})\over \varphi(\hbar \,  \hbar^{a/2} x_{a,\alpha} / \hbar^{b/2}x_{b,\beta})}
$$
from the contributions of $Hom(V_a, V_b)$ to $T^{1/2}X$. This coincides with 
$$
\prod_{\alpha , \beta} \langle S^{\vee}_a(x_{a,\alpha})   S^{\vee}_b(x_{b,\beta})   \rangle
$$
in  \eqref{sctdasl2}. 
\item[---] From $\Hom(V_a, V_a)$, we get
$$
\prod_{\alpha\neq  \beta}  { \varphi( \hbar \,x_{a,\alpha}/  x_{a,\beta})\over \varphi(q x_{a,\alpha} /x_{a,\beta})}
$$
up to an overall constant.
Recall that \eqref{sctdasl1}
$$
\prod_{\alpha < \beta} \langle S^{\vee}_a(x_{a,\alpha})S^{\vee}_a(x_{a,\beta}) \rangle = \prod_{\alpha \neq \beta}{ \varphi(  x_{a,\alpha}/x_{a,\beta})\over \varphi(   t\,x_{a,\alpha}/x_{a,\beta})}  \prod_{\alpha < \beta}{ \theta( t x_{a,\alpha}/x_{a,\beta})\over \theta(    \,x_{a,\alpha}/x_{a,\beta})} 
$$
Using $t=q/\hbar$, and the $\theta(x) = \theta(q/x)=\varphi(x)\varphi(q/x)$ property of theta function, above coincides with
\beq\label{thcol}
\prod_{\alpha < \beta} \langle S^{\vee}_a(x_{a,\alpha})S^{\vee}_a(x_{a,\beta}) \rangle = \prod_{\alpha \neq \beta}{ \varphi( \hbar x_{a,\alpha}/x_{a,\beta})\over \varphi(   q\,x_{a,\alpha}/x_{a,\beta})}  \prod_{\alpha<\beta}{ \theta( q x_{a,\alpha}/x_{a,\beta})\over \theta( \hbar  \,x_{a,\alpha}/x_{a,\beta})} 
\eeq
up to the ratio of theta functions.
\end{itemize}
In summary, we showed that the contribution of ${\phi((q-\hbar) T^{1/2})}$ to \eqref{VI4b} coincides with the contribution of $\Phi(x, a)$ to \eqref{match}, up to the collection of theta functions in \eqref{thcol}. The exponential terms in \eqref{VI4b} correspond to the exponential $x^\mu$ terms in \eqref{Wai}, with identification  
$$
z_{a} = q^{(\mu, e_a)}.
$$
From perspective of the difference equations the effect of the ratio of the theta functions in \eqref{thcol} is to produce a shift in the Kahler variables $z_a = q^{(\mu, e_a)}$ by a power of $\hbar^{1/2}$. These shifts are collected in \eqref{VI4b} in replacing  $z \rightarrow z_{\#}$.
This proves of the theorem.

\subsubsection{}\label{s_matchw}
Explicitly, for $X=T^*{\mathbb P}^{n-1}$, the right hand side of \eqref{match} becomes
$q$-conformal block
$$
\langle \mu'| V^{\vee}(a_{1}) \ldots V^{\vee}(a_{n}) \; Q^{\vee}|\mu\rangle.
$$
of the ${\cal W}_{q,t}(A_1)$ algebra (the algebra is also known as the $q$-Virasoro algebra.) The algebra has a single family of generators $e[k]$, $k\in {\mathbb Z}$, satisfying
$$
[e[k], e[\ell]] = {1\over k} (q^{k\over 2} - q^{-{k\over 2}})(t^{{k\over 2} }-t^{-{k\over 2} })(q^{k\over 2} t^{-{k \over 2}} + q^{-{k\over 2}}t^{k\over 2}) \delta_{k, -\ell},
$$
with Fock representation $\pi_{\mu}$  defined as 
$$
e[k]|\mu\rangle =0,\;\; \textup{for}\;\; k>0, \;\; \textup{and} \;\; e[0] |\mu\rangle = (\mu, e) |\mu \rangle.
$$
The screening charge operator is
\beq\begin{aligned}\label{screena1}
Q^{\vee} &= \int dx \; S^{\vee}(x)\;: \pi_0\rightarrow \pi_{- e \beta}
\end{aligned}
\eeq
where
\beq\label{magneticSa1}
S^{\vee}(x) =[\ldots]x^{-e[0]} : \exp\Bigl(\sum_{k\neq 0}{ e[k] \over q^{k\over 2} - q^{-{k \over 2}}} x^k\Bigr):.
\eeq
The $[\ldots]$ in stand in for operators responsible for the shift of $\mu$ in \eqref{screen}.The magnetic degenerate vertex operator is
$$
V^\vee(x) =[..] x^{w[0]}:  \exp\Bigl( - \;\sum_{k\neq 0} {w[k]\over q^{k\over 2} - q^{-{k \over 2}}}  \, x^{k}\Bigr) :
$$
where $w[k] = e[k]/(q^{k\over 2} t^{-{k \over 2}} + q^{-{k\over 2}}t^{k\over 2})$, and the dots stand for operator responsible for shifting the Fock vacuum
$$
V^\vee(x) \;: \pi_0\rightarrow \pi_{ w \beta}.
$$
From the definitions, one computes
$$
\langle \mu'|  V^{\vee}(a_{1}) \ldots V^{\vee}(a_{n}) \; Q^{\vee}|\mu\rangle = 
\int {dx} \; x^{-(\mu,e)-1}    \Phi(x,a).
$$
where
$$\Phi(x,a)=  \prod_{j=1}^n {\varphi(t x/ a_j) \over \varphi( x /a_j)},
$$
$t=q/\hbar$, and
$$
\mu' ={ \mu}+ (n w - e)\beta.
$$
By inspection, $\Phi(x,a) = \varphi((q-\hbar)T^{1/2})$, with $z_{\#} = q^{(\mu, e)}.$

%\subsubsection{}
%For future reference, lets compute the integral 
%\beq\label{vertexA}
%{\bf V}\big|_{p_{\ell}}  = \int_{\gamma_{\ell}} {dx}  \; x^{\lambda-1}   \Phi(x,a),
%\eeq
%by residues, with contour defined as in \eqref{poles_k}. We find it
%equals:
%\begin{align}
%  \label{vertex_hyper}
%\Vx\big|_{p_\ell}&= (a_{\ell})^{\lambda}\, 
 %\frac{\varphi( q/\hbar)}{\varphi(q)}  \prod_{i\ne \ell} \frac{\varphi(q a_{\ell}/\hbar a_i)}{\varphi(a_{\ell}/a_i)} \,  \bF \left[ \left. 
%\begin{matrix} \hbar a_1 /a_{\ell}, &  \hbar a_2/ a_{\ell}, &\dots \\
%                      q a_1/ a_\ell, &q a_2/a_\ell, &\dots
%\end{matrix} \,\, \right| \, z/{\hbar}^n \,
%                                               \right] \,.
%\end{align}
%where we used that $q^n q^{-\lambda}  =z$.

\subsubsection{}
While the ${\cal W}_{q,t}(\fg)$ algebra has a nice conformal limit, the same is not true of quantum $K$-theory. While we can formally take the limit \eqref{limitw} of the generating functions, their contributions have no enumerative meaning. (There is a natural limit of the theory where degenerate counts in K-theory to cohomology, but this is not the limit we need here.)
\section{Vertex functions and qKZ}

\subsection{Degeneration formula}\label{s_degenr}

\subsubsection{}
Recall that the domain C of the quasimaps \eqref{fquasi} is a fixed,
that is, parametrized curve. We can let it degenerate to a union $C_0$
of two rational curves, e.g.\ by taking a trivial family $C \times \C$ 
over $\C$ and blowing up a point $(c,0) \in C\times \C$. We denote 
by $\varepsilon$ the parameter of the degeneration and write
$\C_\varepsilon$ to denote the base 
$$
\pi: \bfC = \textup{Bl}_{(c,0)} C\times \C_\varepsilon \to \C_\varepsilon 
$$
of the degenerating family. 

Clearly
$$
C_\varepsilon = \pi^{-1}(\varepsilon) \cong C
$$
canonically for $\varepsilon \ne 0$, while 
the special fiber of the new family is the 
union 
\begin{equation}
C_0 = C_{0,1} \cup C_{0,2} \,, \quad C_{0,1} \cong C \label{C0}
\end{equation}
of two rational curves joined at the point $c\in C_{0,1}$. If $c=0$, that is, 
if $c$ is a fixed point of $\Ct_q$ other than $\infty\in C$, then this 
degeneration is $\Ct_q$-equivariant. 

\subsubsection{}

A key geometric question is to put a good central fiber into the 
family $\QM(C_{\varepsilon} \to X)$ over $\C_\varepsilon \setminus
\{0\}$ and it is answered by a beautiful theory due principally to 
Jun Li, see \cite{Li1,Li2,LiWu} and also, for example, \cite{OK} for an introductory
discussion. 
 The central fiber, which we still denote 
$\QM(C_0 \to X)$, is the moduli space of quasimaps from a 
nodal curve $C_0$; however, an important geometric idea is 
hidden here in the definition of a quasimap from a nodal source
curve. 

To keep the obstruction theory perfect, quasimaps need to be
nonsingular at the nodes of the source curve. To satisfy this 
constraint and the usual properness requirements at the same time, 
one has to say
what to do with a $1$-parameter family of quasimaps that 
develops a singularity at the node of a special fiber. It is treated
by version of the familiar semistable reduction process, in which the 
offending node is being blown up until the singularity at it goes
away. In the process, the node becomes replaced by a chain of rational 
curves, considered up to an isomorphism fixing the points at 
which it is attached to the original nodal curve. 

This motivates defining $\QM(C_0 \to X)$
 as the moduli spaces of quasimaps 
of the form 
\begin{equation}
\xymatrix{
C'_0 \ar[d]^g
\ar@{-->}^{f'}[rr]&& X\\
C_0
}\label{def_rel2}
\end{equation}
in which
\begin{enumerate}
\item[---] the map $g$ collapses a chain of rational curves to the
  node of $C_0$, 
\item[---] $f'$ is nonsingular at the nodes 
of $C'_0$, 
\item[---] the automorphism group of $f'$ is finite. 
\end{enumerate}
Here the source of automorphisms is the group 
$$
\Aut(C'_0,g) = \left(\Ct\right)^{\textup{\# of new components}} \,. 
$$
Pictorially, the domain $C_0'$ may be represented as in Figure
\ref{f_springs}. As usual, one of the uses of the finiteness of 
$\Aut(f')$ is to prevent unnecessary blowups in the course of 
the semistable reduction. 
\begin{figure}[!htbp]
  \centering
   \includegraphics[scale=0.5]{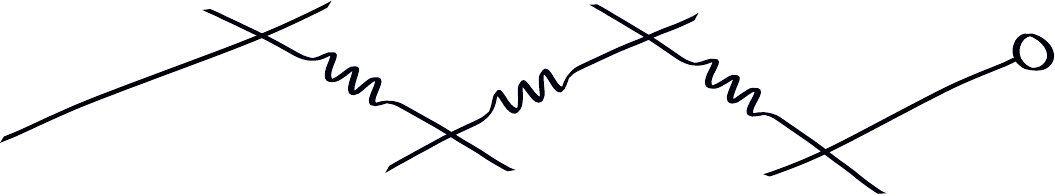}
 \caption{\sl
A semistable curve $C_0'$ whose stabilization is the nodal curve
   $C_0$. Components with $\Ct$ automorphisms are indicated by
   springs; they are often called accordions. The point $\infty\in C \cong \C_{0,1}$ at which the
   quasimaps are required to be nonsingular is indicated by a
   circle. }
  \label{f_springs}
\end{figure}

\subsubsection{}

The family 
\begin{equation}
\QM(C_\varepsilon \to X) \to \C_\varepsilon \label{famQM}
\end{equation}
has a natural relative obstruction theory, given by the cohomology,
that is, push-forward along $\pi$, of quiver sheaves in
question. Its restrictions to fibers of \eqref{famQM} is the
obstruction theory for the spaces $\QM(C_\varepsilon \to X)$ and hence
the virtual structure sheaves and the symmetrized virtual structure
sheaves of these spaces fit into a flat family over $\C_\varepsilon$. 
This means, one can count quasimaps from $C$ in terms of 
quasimaps from $C_0$. 

\subsubsection{}

Quasimaps from $C_0$ can, in turn, be glued out of pieces that
correspond to the pieces in the domain curve in Figure
\ref{f_springs}. Indeed, moduli of 
quasimaps from a fixed nodal curve $C_0'$ 
nonsingular at the nodes are the product of moduli of 
quasimaps from the
components over the evaluation maps at the nodes. Because
 the number of
the accordions, that is, nonparametrized components of $C_0'$ is
dynamical, the correct decomposition to take is the one 
depicted in Figure \ref{f_pieces}. 
\begin{figure}[!htbp]
  \centering
   \includegraphics[scale=0.4]{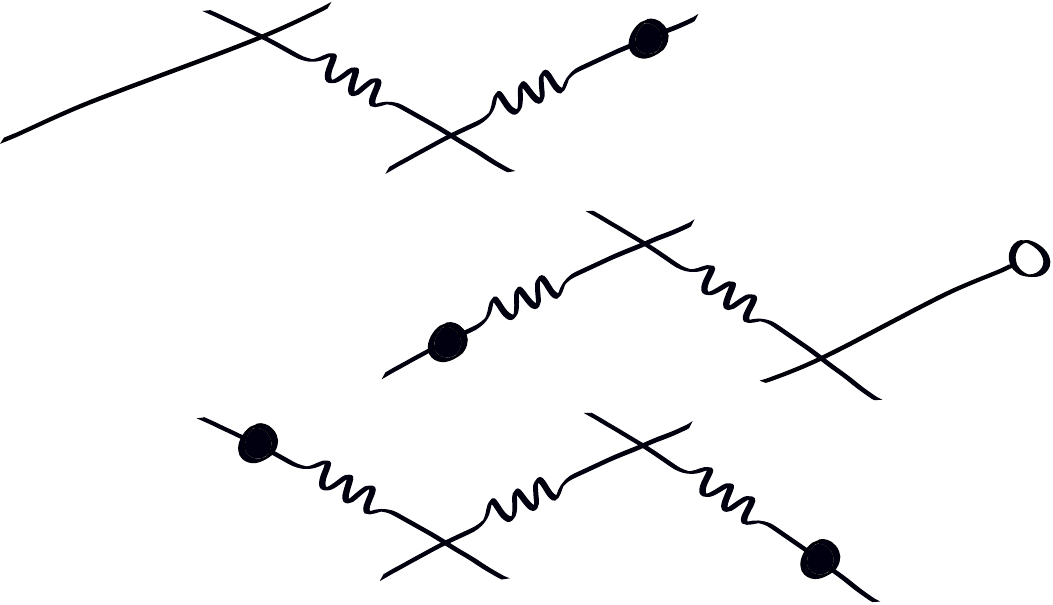}
 \caption{\sl
Three kind of moduli spaces that appear as pieces in the degeneration
formula. In the first line, $C_{0,2}$ is linked by a chain of
accordions to a marked point (bold circle), at which quasimaps must be
nonsingular. In the second line, we have the same for $C_{0,1}$
together with original evaluation point $\infty\in C \cong
C_{0,1}$. In the third line, the domain is a chain of accordions
joining two marked points. }
  \label{f_pieces}
\end{figure}

\subsubsection{}

The difference between the new marked points shown in bold and the
original evaluation point $\infty\in C \cong
C_{0,1}$ is the following. While evaluation at $\infty$ requires
explicitly throwing out quasimaps with singularities there, 
singularities cannot get to the bold points by the nature of  moduli
spaces. Any time a singularity tries to get to the point $\bullet$, a new
accordion opens (by semistable reduction), and the point $\bullet$ gets
away. 

This new kind of quasimaps are called quasimaps \emph{relative} a 
point $\bullet$ of the domain. The above informal discussion 
means formally that the evaluation maps 
\begin{alignat}{3}
\ev_\bullet: \,\, &\QM(C_{0,1})_\textup{relative $\bullet$} &&\to &&\quad X
\notag\\
\ev_{\bullet,\bullet}: \,\, &\QM(\textup{accordions})_\textup{relative $\bullet$,$\bullet$}
&&\to &&\quad X \times X 
  \label{ev2}
\end{alignat}
are \emph{proper}. Using them, we can define
\begin{equation}
\Cp  = \ev_{\bullet,*} \left(
\tO_\vir \,  z^{\deg f} \right) \in
K_{\bT\times\Ct_q}(X)\otimes \Q[[z]] 
\label{defCap}
\end{equation}
and 
\begin{equation}
\Glue  = \ev_{\bullet,\bullet,*} \left(
\tO_\vir \,  z^{\deg f} \right) \in
K_{\bT}(X^2)\otimes \Q[[z]] \,, 
\label{defGlue}
\end{equation}
where localization is not required in contrast to \eqref{defVertex}. 
Note that \eqref{defGlue} does not depend on $q$ since $\Ct_q$ acts
trivially on quasimaps from nonparametrized curves. 
The absence of denominators makes these tensors much simpler
objects than the vertex, or than its analog 
\begin{equation}
\Fusion  = \ev_{\bullet,\circ,*} \left(
\tO_\vir \,  z^{\deg f} \right) \in
K_{\bT\times \Ct_q}(X^2)_\textup{localized}\otimes \Q[[z]] \,. 
\label{defFusion}
\end{equation}
Here the quasimaps are from the domain shown in the middle line of
Figure \ref{f_pieces}. 
The tensor \eqref{defFusion} is called the \emph{capping} operator in \cite{OK} and 
denoted by $\mathsf{J}$ there. Here we use the same notation. It would
be nice to have a name for this operator which better reflects the
role in plays in the correspondence studied in the the present paper. 

\subsubsection{}

The correspondences
 \eqref{defGlue} and \eqref{defFusion} act on 
$K_{\bT\times\Ct_q}(X)$ and the statement
 of the degeneration formula may be written as follows 
 \begin{equation}
\Vertex = \Cp \, \Glue^{-1} \, \Fusion \label{CGF}
\end{equation}
where we compose the operators in the order in which we draw the
component of $C'_0$. {}From definitions, 
$$
\Glue = \cK_{X}^{1/2} + O(z)
$$
where $\cK_X$ is the canonical bundle of $X$ viewed as an operator 
of tensor multiplication, and 
so the inverse $\Glue^{-1}$ is well-defined as a formal series in
$z$. 

The discovery that the operator $\Glue^{-1}$ enters degeneration
formula was originally made by Givental in his study of K-theoretic
analogs of Gromov-Witten counts, see \cite{Giv,Lee}. The adaptation of this
idea to K-theory of quasimap moduli spaces is straightforward, see
e.g.\ \cite{OK} for the details. 

\subsection{Difference equations}\label{s_difference}

\subsubsection{} 
The geometric construction of the operator \eqref{defFusion} makes it
easy to show that it is a fundamental solution to a compatible system
of difference equations in both K\"ahler and equivariant variables,
see Section 8 of \cite{OK}. 

Here by a fundamental solution we mean an operator that conjugates a
difference connection to a constant coefficient difference connection
or to some other standard form. Concretely, for $q$-shifts of
equivariant variables discussed in Section 8.2 of \cite{OK}, that
standard form is a difference equation solvable in
$\varphi$-functions. This is the origin of $\varphi$-prefactor in
\eqref{tVx}. 

\subsubsection{}

An algebraic identification of these $q$-difference equations requires
a development of geometric representation theory ideas in the
present setting. In includes an identification 
\begin{align}
K_\bT(X) = \quad &\textup{weight subspace in}  \notag\\
 & \textup{a representation $F$ of $\cU_\hbar(\fgh)$\,,}
  \label{repr}
  \end{align}
for a certain quantum group $\cU_\hbar(\fgh)$. Such geometric 
realizations of quantum groups 
go back to the
pioneering work of Nakajima \cite{Nakajima} and have been studied by many
researchers since. The particular point of view on \eqref{repr}
developed in \cite{MO} and 
further in \cite{OS,ese,OK} will be important in what follows. It gives,
among other things, a natural collection of identifications
\begin{equation}
F\otimes \Q(\bT) \cong \bigotimes_{a\in I} \bigotimes_{\alpha=1}^{\dim W_a} 
F_a (a_{a,\alpha}) \otimes \Q(\bT) \label{tensF}
\end{equation}
indexed by all possible orderings of the coordinates 
of the maximal torus 
$$
\bA=\{\textup{diag}( a_{a,\alpha}) \} \subset \prod_a GL(W_a) 
\subset \Aut(X,\omega) \,. 
$$
In \eqref{tensF}, we have 
\begin{align*}
 a = \quad &\textup{an element of the set $I$ of vertices of the quiver} \,, \\
 F_a  = \quad&\textup{the corresponding fundamental representation of 
$\cU_\hbar(\fgh)$} \\
&\textup{a.k.a.\ Kirillov-Reshetikhin module\,,} \\
a_{a,\alpha} = \quad &\textup{equivariant parameter for $GL(W_a)$ and} \\
&\textup{an evaluation parameter for $F_a$} \,. 
\end{align*}
The identification \eqref{repr} is in integral K-theory, and so a
certain integral form of both the quantum group and of its module
appears in the right-hand side. In \eqref{tensF} we tensor with the
field $\Q(\bT)$ of rational functions of $\bT$, which corresponds to
localization in $\bT$-equivariant K-theory. Correspondingly,
$R$-matrices that intertwine the identifications \eqref{tensF} for
different ordering of the evaluation points act in localized
K-theory. 

Geometrically, it is the tensor structure, that is, the maps
\eqref{tensF} that are constructed first in the approach of \cite{MO}. 
They are a particular instance of certain very special maps of 
the form
$$
K_\bT(X^\bA) \to K_\bT(X) 
$$
called \emph{stable envelopes}, see e.g.\ Section 9 of \cite{OK} for
an introduction. 
The structure of a module over a quantum group is then reconstructed
from this tensor structure. 

\subsubsection{}
On the right-hand side of \eqref{tensF} we have a canonical 
$q$-difference connection in the evaluation parameters $a_{a,\alpha}$,
namely the quantum Knizhnik-Zamolodchikov connection of 
I.~Frenkel and N.~Reshetikhin \cite{FIR}. It takes as a parameter an 
element 
$$
z \in \left(\Ct\right)^I = e^{\fh}\,, \quad \fh \subset \fg \subset \fgh
$$
of the torus of group-like elements of $\cU_\hbar(\fgh)$. 
It corresponds to the Cartan subalgebra $\fh\subset \fg$ of the 
Lie algebra $\fg$, the affinization of which is $\fgh$. This torus 
is naturally identified with the K\"ahler torus from before. 

A technical result of \cite{OK} identifies the geometric 
$q$-difference connection in variables 
$\{a_{a,\alpha} \}$ with the qKZ connection. See Section 10 in \cite{OK} and
also \cite{MO} for a proof in the setting of equivariant
cohomology. Thus 
\begin{equation}
  \label{FusionqKZ_}
 \Fusion = \textup{fundamental solution of qKZ} \,,
\end{equation}
with the following very important detail that needs to be mentioned. 

\subsubsection{}
The difference connection in $a$ and $z$ 
solved by \eqref{defFusion} is by
construction flat. Moreover, it is regular in either $a$ or $z$
separately. However, it is \emph{not} jointly regular in the variables
$a$ and $z$. This simple, but important new 
phenomenon for difference
equations is discussed at length in \cite{ese}.  It does not occur for differential equations by a deep
theorem of Deligne. As a
result, one cannot find a fundamental solution which will be
holomorphic in both $z$ and $a$ in some asymptotic region of the torus
of variables. 

Recall that one is usually looks for solutions of qKZ analytic in an
asymptotic 
region of the schematic form 
$$
|a_{2,5}| \gg |a_{1,7}| \gg |a_{3,3}| \gg \dots \,,
$$
that is, in a certain neighborhood of a fixed point in a toric
compactification of $\bA$. We will call such solutions $a$-solutions. 
By the results of \cite{FIR}, $q$-deformed WZW conformal blocks are
$a$-solutions of the qKZ equations. 

Instead, \eqref{defFusion} is a series in $z$, which means it is 
holomorphic is a neighborhood of a torus-fixed point of the 
K\"ahler moduli space. This K\"ahler moduli space is the toric variety constructed from the 
fan of ample cones of flops of $X$ in $\Pic(X)$. We call such
solutions $z$-solutions. A more precise version of \eqref{FusionqKZ_}
is thus the following 

\begin{Theorem}[\cite{OK}] The operator \eqref{defFusion} 
is the fundamental $z$-solution of qKZ equations in 
variables $\{a_{a,\alpha}\}$\,. 
\end{Theorem}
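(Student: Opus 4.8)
The plan is to upgrade the identification \eqref{FusionqKZ_}, already recalled above, to the sharper statement that \eqref{defFusion} is the \emph{fundamental $z$-solution} of the qKZ system. The two technical ingredients --- that the geometric $a$-difference connection carried by \eqref{defFusion} is flat and compatible with the quantum difference connection in the K\"ahler variables $z$ (Section~8 of \cite{OK}), and that this $a$-connection is, under the identification \eqref{repr}--\eqref{tensF}, precisely the qKZ connection of \cite{FIR} with $q$ related to $\hbar$ and ${}^L(k+h^\vee)$ as in \eqref{zero} (Section~10 of \cite{OK}, with cohomological precursor in \cite{MO}) --- I would simply cite. What then remains is (i) to check fundamentality, i.e.\ invertibility of the operator \eqref{defFusion}, and (ii) to identify the region of the qKZ parameter space in which it is holomorphic.

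For (i): the degeneration formula \eqref{CGF} together with $\Glue=\cK_X^{1/2}+O(z)$ shows that \eqref{defFusion} differs from the vertex, and from $\Cp$, only by operators that are invertible power series in $z$; moreover the degree-zero term of \eqref{defFusion}, coming from constant quasimaps, reduces to the identity operator on $K_\bT(X)$ (up to an invertible normalization). Hence the operator-valued solution \eqref{defFusion} has invertible leading $z$-coefficient and is therefore a fundamental solution, i.e.\ its images of a basis of $K_\bT(X)$ form a basis of the space of qKZ solutions over the ring of $q$-periodic functions of the $a_{a,\alpha}$.

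For (ii): by construction \eqref{defVertex}--\eqref{defFusion}, the operator \eqref{defFusion} is a $\Q[[z]]$-series with exponents in the effective cone of $X$, hence holomorphic in a punctured neighborhood of the torus-fixed point $z\to 0$ of the toric K\"ahler moduli space attached to the chosen GIT chamber; this is exactly the defining property of a $z$-solution. One must also observe, as in \cite{ese}, that the flat connection in $(a,z)$ fails to be jointly regular in $a$ and $z$ --- a genuinely new phenomenon for difference (as opposed to differential) equations --- so that \eqref{defFusion} is genuinely \emph{not} an $a$-solution; in particular it is not one of the $q$-deformed WZW conformal blocks of \cite{FIR}, which are the $a$-solutions. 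This is what makes the distinction in the statement meaningful; combining (i) and (ii) gives the theorem.

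The main obstacle is the ingredient I propose to cite rather than reprove: matching the purely enumerative $a$-shift operators on quasimap K-theory with the Frenkel--Reshetikhin $R$-matrices and the Cartan twist $\hbar^{\rho}$ of \eqref{qKZ}. This rests on the stable-envelope construction of the $\cU_\hbar(\fgh)$-action of \cite{MO} and on the comparison, carried out in \cite{OK}, of the geometric $R$-matrix with the trigonometric $R$-matrix of the quantum affine algebra; everything else in the argument is formal once that comparison is available.
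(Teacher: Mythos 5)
Your proposal is correct and follows essentially the same route as the paper, which likewise cites Section 10 of \cite{OK} (and \cite{MO}) for the identification of the geometric $a$-difference connection with the qKZ connection and then reads off the $z$-solution property from the fact that \eqref{defFusion} is by construction a power series in $z$ supported on the effective cone. Your additional remarks on invertibility via the degeneration formula \eqref{CGF} and on the failure of joint regularity in $a$ and $z$ match the surrounding discussion in Section \ref{s_difference} and add nothing contradictory.
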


\subsubsection{}
Meromorphic 
solutions to a $q$-difference equation form a vector space over 
$q$-periodic meromorphic 
functions of dimension equal to the rank. Therefore,
there exists a uniquely defined matrix transforming $z$-solution to
$a$-solutions. Taking into account the constant coefficient
$q$-difference equations to which fundamental solutions conjugate 
the original equation, this matrix is best seen as a
meromorphic section of a certain vector bundle on the elliptic curve
$$
E= \Ct / q^\Z \,. 
$$ 
It is called the \emph{pole subtraction} matrix in \cite{ese}, as it
quite literally removes the poles in one set of variables at the
expense of poles in another set of variables. 

This matrix is linked to elliptic analog of stable envelopes in
\cite{ese}. Concretely, Theorem 4 in \cite{ese} shows that elliptic 
stable envelopes transform $z$-solutions of the equations satisfied 
by the vertex functions to the corresponding $a$-solutions. 

\subsubsection{}
Difference equations satisfied by the vertex functions follow from the 
following qualitative 

\begin{Proposition}\cite{OS,Sm_cap} 
The cap \eqref{defCap} and the glue operator \eqref{defGlue} are 
rational functions of all variables, including the K\"ahler
variables. 
\end{Proposition}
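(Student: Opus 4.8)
The plan is to prove rationality one matrix coefficient at a time. Passing to $\bT\times\Ct_q$-equivariant K-theory localized at the maximal torus, $K_\bT(X)$ and $K_\bT(X^2)$ become free modules over $\Q(\bT)$ with bases indexed by torus-fixed loci, and for any pair of such basis classes the corresponding entry of $\Cp$ (resp.\ of $\Glue$) is, by \eqref{defCap}--\eqref{defGlue}, an element of $\Q(\bT)[[z]]$; the task is to exhibit it as the Taylor expansion at $z=0$ of a rational function. Two structural facts make this possible. First, in contrast to the bare vertex, the cap and the glue operator are \emph{chamber-independent}: a singularity in a one-parameter family of quasimaps can never migrate onto a relative marked point, since a new accordion opens instead, so the maps $\ev_\bullet$ and $\ev_{\bullet,\bullet}$ of \eqref{ev2} are proper and the same moduli problem computes $\Cp$ and $\Glue$ for every GIT stability. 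Hence the degree-generating series for each of them converges near the torus-fixed point of the Kähler moduli space --- the toric variety built from the fan of ample cones of the flops of $X$ --- attached to \emph{every} chamber of $\Pic(X)\otimes\bR$, in particular near both $z=0$ and $z=\infty$; at the two ends the exponents run over opposite effective cones and are therefore integral, and the two expansions are restrictions of one single-valued function.

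Second, $\Cp$ and $\Glue$ are intertwiners for the quantum difference equation of \cite{OS} in the Kähler variables: its shift operators $\mathsf{M}_{\mathcal L}(z)$, $\mathcal L\in\Pic(X)$, are rational functions of $z$, assembled from wall operators $\mathbf{B}_w(z)$ --- one for each wall, i.e.\ flop, of $X$ in $\Pic(X)\otimes\bR$ --- with $\mathbf{B}_w(z)$ rational in $z$ and singular only along the divisor of $w$. The wall-crossing analysis of \cite{OS} shows moreover that on crossing a wall the series for $\Cp$ and $\Glue$ picks up precisely $\mathbf{B}_w(z)$ and nothing else, so their only singularities in $z$ lie on the finitely many wall divisors and their $q^\Z$-translates. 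I would then conclude by a Liouville-type argument: a function holomorphic and single-valued near $z=0$ and near $z=\infty$ with integral exponents there, having only these finitely many poles modulo $q^\Z$, and solving a $q$-difference equation with rational coefficients, must be rational in $z$, of degree bounded in terms of the (finite) wall set --- dividing by a rational fundamental solution leaves a $q$-periodic factor that descends to a meromorphic section of a line bundle on $E=\Ct/q^\Z$ with prescribed divisor and regular behaviour over the images of $0$ and $\infty$, and the dimension count for such sections leaves only functions pulled back from rational functions of $z$. Given this $z$-degree bound, $\Cp$ and $\Glue$ are determined by finitely many of their $z$-Taylor coefficients; each such coefficient is an equivariant K-theoretic count over a fixed, proper, finite-dimensional moduli space of quasimaps of the given degree, hence a rational function of $\hbar$, $q$ and the $a_{a,\alpha}$, and rationality in all variables follows.

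The step I expect to be the main obstacle is the interface of these two inputs: proving carefully that the capped and glued moduli are genuinely independent of the GIT stability chamber --- so that the $z=0$ and $z=\infty$ expansions really are analytic continuations of one another rather than a priori unrelated series --- and that crossing a wall of $\Pic(X)\otimes\bR$ contributes exactly the rational wall operator $\mathbf{B}_w(z)$, with no extra singularities, so that the full singular locus in $z$ is the finite union of wall divisors. This is the wall-crossing geometry of \cite{OS}, built on Jun Li's degeneration formula applied in the Kähler direction, the same technology reviewed around \eqref{def_rel2}--\eqref{CGF} above. Once it is in place, upgrading ``meromorphic on $\Ct$ with finitely many poles modulo $q^\Z$ and regular at both horns'' to ``rational in $z$'' is soft; and the statement for $\Glue$ is if anything easier than for $\Cp$, because quasimaps from non-parametrized domains carry no $\Ct_q$-action, so $q$ does not even appear.
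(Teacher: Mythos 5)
Your strategy is genuinely different from what the paper does: the paper does not argue via analytic continuation at all, but simply cites the explicit formulas of \cite{OS} (the glue operator as a $q\to\infty$ limit of the K\"ahler $q$-difference connection operators, which are finite products of rational wall operators) and of \cite{Sm_cap} (an explicit rational-function formula for the capped descendent vertex), in which rationality is manifest. Your soft route has the right general shape for such statements, but it contains two genuine gaps.

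First, the pivot of your argument --- that the $z=0$ and $z=\infty$ expansions ``are restrictions of one single-valued function'' --- does not follow from properness of $\ev_\bullet$ and $\ev_{\bullet,\bullet}$. Properness holds chamber by chamber and only tells you that each Taylor coefficient lies in integral (non-localized) equivariant K-theory; it does not identify the answers computed for different stability chambers, because changing the chamber changes the target $X$ itself by a flop, and the moduli problems are a priori unrelated. The identification of these series across walls, up to insertion of exactly the rational wall operators $\mathbf{B}_w(z)$, \emph{is} the hard wall-crossing theorem of \cite{OS}; it is not delivered by semistable reduction or by ``the same moduli problem computes both.'' You flag this as the main obstacle, but as written your argument treats it as a consequence of properness, which it is not --- note that the evaluation map for the bare vertex is also proper on the relevant fixed loci, and the vertex is $q$-hypergeometric, not rational. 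Second, the Liouville step is misformulated: the points $0$ and $\infty$ do not descend to the elliptic curve $E=\Ct/q^\Z$, so ``regular behaviour over the images of $0$ and $\infty$'' is not meaningful, and the only functions on $E$ pulled back from rational functions of $z$ are constants (a non-constant rational function is never $q$-periodic). The correct statement you want is that a $q$-periodic function extending holomorphically across $z=0$ is constant along orbits accumulating at $0$, hence constant; the ``dimension count for sections with prescribed divisor'' does not close the argument as stated, and in any case requires first exhibiting a rational intertwiner to divide by --- which is precisely what \cite{OS} and \cite{Sm_cap} construct explicitly.
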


The statement about the glue operator follows from the 
results of \cite{OS} because  the glue operator
may be obtained as a $q\to\infty$ limit of operators of the 
K\"ahler $q$-difference connection,  see
 Section 8.1 of \cite{OK}. The statement about cap 
is shown in \cite{Sm_cap}. In both cases, there is 
an explicit formula for these objects that makes 
rationality manifest.

Thus \eqref{CGF} gives an explicit gauge equivalence between the
scalar difference equation of degree $\textup{rk} K(X)$ satisfied by the
vertex functions and the quantum Knizhnik-Zamolodchikov equations. 

% \begin{proof}
% {}From the results in Section 8.1 of \cite{OK}, the glue operator
% may be obtained as a $q\to\infty$ limit of operators of the 
% K\"ahler $q$-difference connection. These operators were computed in 
% \cite{OS} and the result is a rational function of all parameters,
% whence the conclusion for \eqref{defGlue}. 

% The large framing vanishing discussed in Section 7.5 of \cite{OK}
% shows that the cap is trivial, that is 
% $$
% \Cp = \cK_{X}^{1/2} \,, 
% $$
% with no $z$-dependent contributions, as soon as the vector 
% $(\dim W_i)$ is
% sufficiently large. The statement for all possible framings is then 
% a formal consequence of the relationship between the cap
% for $X$ and the cap for $X^\bA$, where $\bA$ is a torus acting 
% on the framing spaces. Along these lines, an
%  explicit formula for the cap and, 
% more generally, for the cap with descendents was obtained by 
% A.~Smirnov \cite{Sm_cap}. 
% \end{proof}

The results of \cite{OS} identify the operators of the 
 K\"ahler $q$-difference connection with the lattice in 
what can be called the dynamical quantum affine Weyl 
group of $\cU_\hbar(\fgh)$. It coincides with the object
studied by Etingof and Varchenko in \cite{EV} for quivers 
of finite type and generalizes it to the case when $\fgh$ is not
generated by real root subspaces. {}From this perspective, 
the glue operator generalizes the longest element in the 
finite quantum dynamical Weyl group. 

\subsubsection{}
The cap with descendents mentioned above refers to the 
generalization of \eqref{defCap} constructed as follows 
\begin{equation}
  \Cp(\lambda) \,
  = \ev_{\bullet,*} \left(
\tO_\vir \,  z^{\deg f} \otimes \lambda(\cV_i|_0) \right)
\label{defCd}
\end{equation}
where 
$$
\lambda \in K_{\bT\times GL(V)} (\pt)  
$$
is a tensor functor in the fibers of the tautological bundles 
$\cV_i$ over a $\Ct_q$-fixed point $0$ in the domain of the
quasimap. We can identify $K_{\bT\times GL(V)} (\pt) = K_\bT(\fR)$,
where $\fR$ is the stack of quiver representations that contains 
$X$ as the set of the stable points satisfying the moment map 
equations. Using the surjectivity of \cite{mn}, 
we get 
\begin{align}
K_{\bT\times \Ct_q} (\fR)[[z]]
&\xrightarrow{\,\, \Cp(\,\cdot\,) \,\,} K_{\bT\times \Ct_q}(X)[[z]] \to
0 \label{capdesc} \,, \\
\lambda & \mapsto \lambda\big|_X + O(z)  \quad  \notag
  \end{align}
and Smirnov gives an explicit rational function formula for this map \cite{Sm_cap}. 

The degeneration formula \eqref{CGF} remains unmodified, 
giving 
 \begin{equation}
\Vertex(\lambda) = \Cp(\lambda) \, \Glue^{-1} \, \Fusion 
\,. \label{CGFl}
\end{equation}
 In particular, one can choose the descendent insertions so that they 
precisely cancel the glue matrix in \eqref{CGFl}, and this shows 
\begin{equation}
\Fusion \subset \{\textup{Vertices with descendents} \} 
\label{Fusion_desc} \,. 
\end{equation}

\subsubsection{}\label{s_vert_desc} 

Let $\bVx(\lambda)$ denote the vertex with descendents 
normalized as in \eqref{bVx}. The descendents are expressed 
in terms of the Chern roots of the bundles $\cV_i \big|_0$ which 
are precisely the integration variables in \eqref{VI4}. 
Therefore, we have the following immediate generalization of 
Proposition \ref{p_bVx_int}. 

\begin{Proposition}\label{p_vert_desc} 
For any insertion $\cF$ satisfying \eqref{autcF}, we have 
\begin{equation}
  \label{VI5}
  \chi(X, \bVx(\lambda) \otimes \cF) = 
\frac{1}{|W|} 
\int_{\sum q^{\bd} \cdot \gamma_\chi}  \exp\left(\frac{\bla(z_\#,s)}{\ln q}\right)
\frac{\cF(s) \, \lambda(s) \, d_\textup{Haar}s}{\phi(T^\vee_\textup{moving})} 
\,,
\end{equation}
where the sum of residues is over all effective shifts of the cycle
$\gamma_\chi$. 
\end{Proposition}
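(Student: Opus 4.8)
The plan is to reprove the statement by running the argument of Proposition~\ref{p_bVx_int} essentially verbatim, carrying the single extra factor $\lambda(\cV_i|_0)$ through each step; the point is that after the change of variables this factor becomes $\lambda$ of the integration variables, which is exactly what appears in \eqref{VI5}. First I would take $\Vertex(\lambda)=\ev_{*}\left(\tO_\vir\,z^{\deg f}\otimes\lambda(\cV_i|_0)\right)$, the evaluation-at-$\infty$ analogue of \eqref{defCd}, and normalize it to $\bVx(\lambda)$ exactly as $\bVx$ is obtained from $\Vertex$ in \eqref{bVx}, so that $\bVx(\lambda)=\hbar^{-\frac14\dim X}\,\be(z_\#)\,\varphi(qT^\vee)^{-1}\,\Vertex(\lambda)$. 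The pushforward $\ev_{*}$ is computed by $\Ct_q$-equivariant localization on $\QM_\textup{nonsing $\infty$}$ with fixed loci \eqref{flag_subr}, which by Corollary~\ref{corGIT} are GIT quotients whose relevant prequotient has $\bS\times\bT$-fixed points given by the coordinate splittings \eqref{Vsplit}. Every ingredient of the localization computation leading to \eqref{VI1}--\eqref{VI3} --- the symmetrized virtual structure sheaf $\tO_\vir$, the moving part of the obstruction theory, the $\varphi$- and $\theta$-prefactors of the normalization --- is unchanged; the only new contribution is $\lambda(\cV_i|_0)$, which restricted to a fixed locus is $\lambda$ evaluated at the $\bS\times\bT\times\Ct_q$-weights of the $\cV_a$ over $0\in C$.

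By the fixed-locus analysis recalled right after \eqref{Vsplit}, on a coordinate splitting the line bundle $\cL_{a,\alpha}=\cO(d_{a,\alpha}[0])$ has $\cL_{a,\alpha}|_0$ of weight $q^{d_{a,\alpha}}s_{a,\alpha}$; hence the descendent contributes the factor $\lambda(q^{\bd}s)$ to the integrand of the analogue of \eqref{VI3}, with $q^{\bd}s=\{q^{d_{a,\alpha}}s_{a,\alpha}\}$ as in Section~\ref{s_fdef}. Next I would perform the same change of variables $s\mapsto q^{\bd}s$ used to pass from \eqref{VI3} to \eqref{VI4}: under it $\lambda(q^{\bd}s)$ becomes $\lambda(s)$ in the new variable, while all the remaining factors transform exactly as in the proof of Proposition~\ref{p_bVx_int} (using that $\varphi(q\cT^{1/2}_0)/\varphi(\hbar\cT^{1/2}_0)$ and $z_\#^{\deg f}\exp(\bla(z_\#,s)/\ln q)$ are the $s\mapsto q^{\bd}s$ translates of their counterparts in the bare variables $s$, and that $\cF(s)/\theta(T^{1/2})$ is $s\mapsto q^{\bd}s$ invariant by \eqref{autcF}). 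Summing over effective splittings $\bd$ assembles the integral over $\sum q^{\bd}\cdot\gamma_\chi$ and one reads off \eqref{VI5}. Alternatively one could combine \eqref{VI4} with the descendent degeneration formula \eqref{CGFl}, but the direct route above already produces the integrand.

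As to hypotheses, condition \eqref{autcF} is again imposed on $\cF$ alone: since $\lambda$ is a fixed element of $K_{\bT\times GL(V)}(\pt)$, i.e.\ a Laurent polynomial in the Chern roots $s_{a,\alpha}$, it grows at most polynomially in the $q^{d_{a,\alpha}}$ as the $d_{a,\alpha}\to\infty$, and hence changes neither the convergence of the $z$-series nor the residue prescription $\int_{\gamma_\chi}$. The only genuine content, beyond this bookkeeping, is the identification of the restriction of the tautological descendent $\lambda(\cV_i|_0)$ to the $\Ct_q$-fixed loci with $\lambda(q^{\bd}s)$ --- equivalently, the observation (already emphasized before \eqref{VI5}) that the integration variables in \eqref{VI4} are precisely the weights of the $\cV_a$ over $0$. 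I expect this to be the only step requiring care; it is supplied by the fixed-locus description in Section~\ref{s_vertex}, and given it the proposition is immediate.
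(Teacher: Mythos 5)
Your argument is correct and is exactly the route the paper takes: the paper justifies Proposition \ref{p_vert_desc} as an ``immediate generalization'' of Proposition \ref{p_bVx_int} by the single observation that the descendents are functions of the Chern roots of $\cV_i\big|_0$, which after the substitution $s\mapsto q^{\bd}s$ are precisely the integration variables of \eqref{VI4}. You have simply spelled out the bookkeeping that the paper leaves implicit, and correctly identified the restriction of $\lambda(\cV_i|_0)$ to the fixed loci as the one step carrying content.
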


Smirnov's formula lets one construct collections $\{\lambda_k\}$ 
such that 
the matrix of the corresponding descendent vertices is 
the fundamental $z$-solution of qKZ. Theorem 4 of \cite{ese} applies
equally well to
both ordinary and descendent vertices, therefore, elliptic stable
envelopes provide a connection matrix between this fundamental 
solution and the fundamental $a$-solutions. In particular, for 
quivers of finite type, these $a$-solutions are the $q$-deformed
WZW conformal blocks.

This can be summarized as follows
\begin{Theorem}
There exist a linear map 
\begin{equation}
K_T(X) \owns \alpha \mapsto \lambda_\alpha \in K_T(\fR) \otimes
\Q(z,q)\label{KtoK}
\end{equation}
such that 
$$
\lambda_\alpha|_{X,z=0} = \alpha 
$$
and such that 
the corresponding vertex functions \eqref{VI5} form a fundamental 
$z$-solution of qKZ. With the insertions of the elliptic stable
envelopes, these become the fundamental $a$-solutions of qKZ, that is, 
a basis of the q-conformal blocks for $\cU_\hbar(\fgh)$. The entry 
corresponding to the identity function $\lambda=1$ is the
corresponding ${\mc W}$-algebra q-conformal block. 
\end{Theorem}

A remarkably simple formula for an equivalent version of \eqref{KtoK} 
is obtained in \cite{Bethe}.

\section{${\fg}=A_1$ Example}

To illustrate the results, it may be helpful to work out one example in its completeness. Take
$\fg =sl_2$ with finite dimensional representations $\rho_i$ of highest weights $w_i$ attached to points
$$
x=a_i, \qquad i=1, \ldots n
$$
of the Riemann surface. 

\subsection{qKZ equation and its $z$-solutions}
The $q$-conformal block of $U_{\hbar}({\fgh})$ with this data is a chiral correlation function from \eqref{WZ}
\beq\label{WZa}
\Psi(a_1, \ldots , a_\ell, \ldots a_n) \in (\otimes_i  \, \rho_i)_{\lambda - \lambda'},
\eeq
and where $\lambda_{0,\infty}$ are the weights $|\lambda_{0,\infty}\rangle$, the highest
weight vectors of Verma module representations which enter \eqref{WZ}.

As \cite{FIR} explained, \eqref{WZa} solves the qKZ equation of \eqref{qKZ} 
where the  ${\cal R}$ matrices take the following explicit form.
\subsubsection{}
Let $v_i$ be the highest weight vector of representation $\rho_i$ (with weight $w_i$). Let $f$ be the lowering operator of $\fg =sl_2$. The ${\cal R}$ matrix acts by
\beq\label{R1}
{\cal R}_{ij}(a) v_i \otimes v_j = v_i \otimes v_j
\eeq
\beq\label{R2}
{\cal R}_{ij}(a) f v_i \otimes v_j =  { a \hbar^{m_j} - \hbar^{m_i} \over a -  \hbar^{m_i+m_j}} f v_i \otimes v_j +  { 1- \hbar^{2m_j}  \over a - \hbar^{m_i+m_j}}   v_i \otimes  f v_j
\eeq
\beq\label{R3}
{\cal R}_{ij}(a)  v_i \otimes f v_j =  {a( 1- \hbar^{2m_i} )\over a -  \hbar^{m_i+m_j}} f v_i \otimes v_j +  { a \hbar^{m_i} - \hbar^{m_j} \over a -\hbar^{m_i+m_j}}  v_i \otimes  f v_j
\eeq
where 
$$
m_i =(w_i, e)/2,
$$
and $e$ is the positive root of $sl_2$. Throughout, one should keep in mind the identifications in \eqref{identify}.
Furthermore, $(\hbar^{\mu})_{\ell}$ acts on $\ell$'th component of the tensor, corresponding to representation $\rho_{\ell}$ of highest weight vector $v_{\ell}$ of highest weight $w_{\ell}$ by
$$
\hbar^{\mu}(v_{\ell}) =\hbar^{(\mu, w_{\ell})} v_{\ell}, \qquad \hbar^{\mu}(f v_{\ell}) = \hbar^{(\mu, w_{\ell} - e)}.
$$
The Weyl vector $\rho$, which also enters \eqref{qKZ}, is equal to half the sum of positive roots, $\rho = e/2$ in this case.

\subsubsection{}

The solutions to the qKZ equation, in $n$-dimensional the subspace of weight 
$$
\lambda' - \lambda = w_1+ \ldots +w_n -e,
$$
can be written out explicitly, as follows \cite{Matsuo}. Let
\beq\label{solution}
{\Psi} (a_1, \ldots, a_n) =\,  \sum_{i=1}^n \; {\varphi}_i(a_1, \ldots a_n) \; v_1 \otimes \ldots \otimes f v_i \otimes \ldots \otimes v_n.
\eeq
Further, it is useful to define
\beq\label{solutionctd}
\varphi_i (a_1, \ldots a_n)= q^{(\beta_{i+1} + \ldots + \beta_n)/2} \, a_1^{\beta_1} \ldots a_n^{\beta_n} \; \scF_i( q^{\beta_1/2} a_1, \ldots , q^{\beta_n/2} a_n)
\eeq
where
$$
q^{\beta_i} = \hbar^{(w_i, e)}, \qquad q^{\eta} =\hbar^{-(\lambda ,e)}.
$$
Then, \cite{Matsuo} proves \eqref{solution} is the solution of the qKZ equation for

\beq\label{fusion}
\scF_i(a)  = \int_{\gamma} {dx}  \; x^{\eta -1} \; K_i(x,a)\, \times \prod_{j=1}^n  {\varphi(x/a_j) \over \varphi(q^{\beta_j} \, x /a_j)},
\eeq
where we defined  
$$K_i(x, a) = \prod_{j=1}^{i-1} {(1-q^{\beta_j}\, x/a_j) \over (1-x/a_j)} \times {1\over 1 - x/a_i}.
$$
and $\varphi(x) = \prod_{n=0}^{\infty}(1-q^n x)$, as before.
The equation \eqref{solution} gives solutions to qKZ for any contour $\gamma$ for which the integral\footnote{This implies that, from perspective of difference equations, the integrand $I(x)$ in \eqref{fusion} is equivalent to $I(qx)$: $\int {dx\over x} I(x) = \int{dx \over x}I(qx)$.}  is invariant under $x\rightarrow q x$. One proves this by explicitly studying difference equations satisfied by \eqref{fusion} with respect to operators that take $a_i \rightarrow q a_i$.

\subsubsection{}
Our main example corresponds to $\rho_i$ which is the two-dimensional representation $\rho_i=\rho$, for all $i$. Its highest weight is the fundamental weight $w_i=w$ of ${\fg}=sl_2$, $(w,e)=1$, so that $q^{\beta_i} = \hbar$. Up to redefinition of integration variable $x$, replacing it with $\hbar x$, we have

\beq\label{fusionn}
\scF_i(a)  = \int_{\gamma} {dx}  \; x^{\eta -1}  K_i(x,a) \times \prod_{j=1}^n {\varphi(\hbar^{-1} {x / a_j}) \over \varphi( {x /a_j})},
\eeq
where  we defined  
$$K_i(x, a) = \prod_{j=1}^{i-1} {(1-\, {x/ a_j} )\over (1-   \hbar^{-1}{x/ a_j})} \times {1\over 1 - \hbar^{-1} { x/ a_i}}.
$$
The equations \eqref{solutionctd} and \eqref{fusionn} provide a solution to the qKZ equation, for any choice of the contour $C$. 
The set of linearly independent solutions one gets, by varying the contour $C$ have a geometric and representation theoretic interpretation.

\subsection{Geometric interpretation in terms of $X=T^*{\mathbb P}^{n-1}$}

The geometric interpretation is in terms of counts of quasi-maps to
\beq\label{NE}
X= T^*{\mathbb P}^{n-1},
\eeq
where $z$ keeps track of the degree of the map. $X$ is the Nakajima quiver variety \eqref{NV} corresponding to an ${\fg} =A_1$ quiver ${\scQ}$ a single node and a pair of vector spaces 
$V = {\mathbb C}$ and  $W = {\mathbb C}^n$ associated to it, acted on by
 $$
G_{\scQ} = GL(1) , \qquad G_{W} = GL(n).
$$
The dimension vectors of $W$ and $V$ are determined, respectively, by the highest weight of the module
$$\otimes_i \,\rho_i = \otimes^n \rho,
$$
and the weight of its subspace in which \eqref{WZa} takes values, as explained in \ref{s_Nakajima}.

\subsubsection{}

The vertex function of $X$, counting quasi-maps from $C$ to $X$, has an integral representation \eqref{PnInt},  as one recalls from Section \ref{s_vertex}:

\beq\label{Vertexn}
\Vx= \int_{\gamma} {dx}  \; x^{\eta -1}   \Phi(x,a) 
\eeq
where, in terms of $t=q /\hbar,$

$$
\Phi(x,a)=  \prod_{j=1}^n {\varphi(t \,x/a_j) \over \varphi( x /a_j)}.
$$

Using that, it is easy to recognize that the solutions to the qKZ equation in \eqref{fusionn}
can be rewritten in terms of the geometric quantities of $X$:

\beq\label{fusionab}
{\scF}_i  = \int_{{\gamma}} {dx}  \; x^{\eta-1}  {\Stab}^K_{i}(x,a)  \Phi(x,a).
\eeq
where 

\beq\label{stzero}
{\Stab}_i (x,a)= \prod_{j=1}^{i-1} {(1-\, x/a_j) }  \times \prod_{j=i+1}^n {(1 - \hbar^{-1} x/a_i)},
\eeq
is a collection of classes in $K_T(X)$. 
The integrands in \eqref{fusionab} and \eqref{fusionn} are equal.

This shows that ${\cal F}$, 
the fundamental $z$-solution to the qKZ equation in \eqref{fusionn}, is the geometrically defined operator $\Fusion$ in  \eqref{FusionqKZ_},
$$
 {\cal F} = {\Fusion}.
$$ 
and that the geometric corresponding of \eqref{fusionab} is in terms of vertex functions, counting quasimaps $\C  \dasharrow X$, with descendant insertions at $0\in X$ from  \eqref{Fusion_desc}.
The basis of insertions that leads to the qKZ equation with ${\cal R}$ matrices in the standard form is a special one, as will be explained in \cite{Bethe}. The classes in \eqref{stzero} give the 
K-theoretic stable basis of $X$, defined in \cite{OK}.  For a suitable choice of a chamber, slope and polarization \cite{OK}, \eqref{stzero} gives the basis element corresponding to stable envelope of the $i$-th ${\rm T}$-fixed point in $X$.

\subsubsection{}
We can also consider the stable envelope with slope ${s}$  \cite{OK} 
\beq\label{slopes}
{\Stab}^{K}_{i, (s)}(x,a) \equiv x^{s} \;  {\Stab}^K_i(x,a).
\eeq
The role of the slope $s$ is to change the weight $\lambda$ in \eqref{WZ}, and
leads to a family of solutions to qKZ, differing by the choice of the highest weight vector $|\lambda\rangle$ in \eqref{WZ}.

%%%%%%%%%

\subsection{$q$-Virasoro conformal blocks}
 The vertex function in \eqref{Vertexn}  as we saw in Section  \ref{s_vertex}, 
 coincides with a $q$-conformal block of the ${\cal W}_{q,t}(\fg)$ algebra for ${\fg}=A_1$; the algebra which is the $q$ deformation of Virasoro algebra. Its $q$-conformal blocks are %
\beq\label{oneea}
\langle \mu'| V^{\vee}(a_{1})\ldots V^{\vee}(a_{n}) \; Q^{\vee}|\mu\rangle = 
\int_{\gamma} dx  \; x^{-(\mu, e) -1}   \Phi(x,a).
\eeq
%where 
%
%$$\Phi(x,a)=  \prod_{j=1}^n {\varphi(q x/\hbar a_j) \over \varphi( x /a_j)},
%$$
%
%$t=q^{\beta} = q/\hbar$, and
To completely define the $q$-conformal block  in \eqref{oneea} of we need to specify the contour $\gamma$. As in section \ref{s_matchw}, we will
define the ${\cal W}_{q,t}(\fg)$ algebra blocks to be the components of the vertex function of $X= T^*{\mathbb P}^{n-1}$,

$$
\Vx = \langle \mu'| V^{\vee}(a_{1})\ldots V^{\vee}(a_{n}) \; Q^{\vee}|\mu\rangle 
$$
where the Kahler variable $z$ equals $z=q^{(\mu,e)}$, up to unimportant shift.
The component of the vertex function $\Vx_{\ell}$ where the point at infinity in $C$ maps to the fixed point $p_{\ell}$ in $X$ corresponds to 
$$
\Vx_{\ell} = \chi(X, \Vx \otimes\, {\cal O}_{p_{\ell}})
$$
section \ref{s_fdef}. The insertion of $ {\cal O}_{p_{\ell}}$ amounts the to picking the contour $\gamma_{\ell}$ which picks up the poles at
\beq\label{contour}
\gamma_{\ell}: \qquad x=q^{-n}\; a_\ell , \qquad n=0,1,\ldots.
\eeq
Computing the integral by residues, we find

\begin{align}
  \label{vertex_hyper}
\Vx_{\ell}&= (a_{\ell})^{\eta}\, 
 \frac{\varphi(t)}{\varphi(q)}  \prod_{i\ne \ell} \frac{\varphi(t a_{\ell}/a_i)}{\varphi(a_{\ell}/a_i)} \,  \bF \left[ \left. 
\begin{matrix} \hbar a_1 / a_{\ell}, &  \hbar a_2/  a_{\ell}, &\dots \\
                      q a_1/ a_\ell, &q a_2/a_\ell, &\dots
\end{matrix} \,\, \right| \, z/\hbar^n
                                               \right] \,.
\end{align}
and 

\beq\label{vertexB}
 \bF \left[ \left. 
\begin{matrix}\, \,\hbar a_i/ a_{\ell} & \\
                     \, \,q a_i/a_{\ell} &
\end{matrix} \right| z/\hbar^n
                                               \right] =
\sum_{d\ge 0} (
z/\hbar^n)^d \, \prod_i \frac{(\hbar a_i/a_{\ell})_d}{(q
                  a_i/  a_{\ell})_d}   \notag \\ = \Vertex_{\ell}
\eeq
is the $q$-hypergeometric function. It is also the $\Vertex$ function of $X$, in its canonical normalization. The function ${\bf V}$ differs from it by contributions of constant, zero degree maps (see \eqref{tVx}).

%where $z=q^n z_A$ and $z_\#=q^{-\eta_\#} = q^{-(mu,e)}$.

\subsubsection{}
The $\Vx$ function (a vector) can be written as the covector $W$ contracted with the operator ${\scF} =\Fusion$.

\beq
\label{vertexcoe}
 \sum_{i=1}^n W_i \; {{\scF}^{i}}_{\ell} = {\Vx} _{\ell} 
\eeq
which is the content of \eqref{CGFl}, in the present example. 

The coefficients $W_i$ can be found as follows: The K-theoretic stable envelopes of fixed slope provide a basis of $K(X)$-theory of $X$, so in particular,  the trivial insertion $1$ at $0$ in $X$ written in the stable basis from \eqref{stzero} as:

\beq\label{vertexAC}
1 = \sum_{i=1}^n W_i \;{\Stab}^K_{i}(x,a),
\eeq
where $W_i$ are the coefficients in \eqref{vertexcoe}. The stable basis is upper triangular, as ${\Stab}^K_i(x,a)$ vanishes at $x=a_j/\hbar$, for $i<j$. This lets us find $W_i$ solving \eqref{vertexAC} recursively, solving for $W_i$ in terms of $W_{i+1}, \ldots , W_n$.

%%%%%%%

\subsection{Elliptic stable envelope and $z$- and $a$-solutions}
%The vertex functions ${\bf V}$ in \eqref{vertex_hyper} and \label{IntV} are, up to overall normalization, holomorphic in the Kahler variable $z$, in a neighborhood of $z=0$. 
${\scF}_i$'s generate a space of solutions of qKZ equation, by varying the contours $\gamma$. The solutions to qKZ obtained in \eqref{fusionn} or \eqref{fusionab} are not $q$-conformal blocks of $U_{\hbar}(\fgh)$ since they are not $a$-solutions of qKZ, which are solutions 
 ${\it jointly}$ analytic in a chamber of $A$-parameter space.\footnote{For example, $\Vx\big|_{\ell=1}$ is analytic for $|a_1|< |a_2|, |a_3|, \ldots$, but $\Vx\big|_{\ell=2}$ is analytic for $|a_2|<|a_1|, |a_3|, \ldots$.} Instead, they are  the $z$-solutions, analytic functions of the Kahler variable $z$. The map between the $z$ solutions and the $a$-solutions is provided by elliptic stable envelopes of $X$. 

\subsubsection{}
Pick an $a$-chamber  
$$
\fC : \qquad |a_j| < |a_i|, \qquad\textup{for} \qquad  j<i.
$$
Starting with the vertex function  ${\bf V}=\int  dx\; x^{\eta-1} \; {\Phi}(x) $, we obtain a new vertex function ${\bf V}_{\fC}$, which solves the same set of difference equations as ${\bf V}$ and which is analytic in chamber $\fC$, as follows.  We take

\begin{equation}\label{ThVxInt} 
{\bf V}_{\fC} = \oint  dx \, x^{\eta-1} \; {\Phi}(x) \; {\fP}_{\fC}(x)                                                                                                                                                                                                                                                                                                                                                                                                                                                                                                                                                                                                                                                                                                                                                                                                                                                                                                                                                                                                                                                                                                                                                                                                                                                                                                                                                                                                                                                                                                                                                                                                                                                                                                                                       
\end{equation}
with 

\beq\label{PSM}
{\fP}_{\fC,{\ell}}(x, a) = 
 U_{\fC, \ell}\;  { {\prod\limits_{ i<\ell}} \theta(a_{i}/x)\;\theta(z^{-1} \hbar^{\ell}\,a_\ell/x)\; \prod\limits_{\ell<i} \theta(\hbar \, a_{i}/x) \over \theta( z^{-1} \hbar^{\ell} ) \, \prod_i \theta(\hbar \, a_{i}/x) } {\be}(z,x)^{-1}.
\eeq
The contour of integration is spelled out below equation \eqref{pl2}.
${\bf V}$ and ${\bf V}_{\fC}$ solve the same set of difference equations, since  ${\fP}_{\fC, \ell}(x)$ are pseudo-constants satisfying  

$$\begin{aligned}\label{PS}
{\fP}_{\fC,{\ell}}(x, a_1, \ldots a_i, \ldots a_n; z) & = {\fP}_{\fC,{\ell}}(q x, a_1, \ldots a_i, \ldots a_n; z)\\
& = {\fP}_{\fC,{\ell}}(x, a_1, \ldots q a_i, \ldots a_n; z)\\
& ={\fP}_{\fC,{\ell}}(x, a_1, \ldots a_i, \ldots a_n; qz).
\end{aligned}
$$
The function in \eqref{PSM} is,  up to normalizations, the {\it elliptic stable envelope} of a fixed point $p_{\ell}$ in $X$ in the chamber $\fC$

$$
 {\Stab}^{\textup{\it ell}}_{\fC,{\ell}}(x, a) = 
 { {\prod\limits_{ i<\ell}} \theta(a_{i}/x)\;\theta(z^{-1} \hbar^{\ell}\,a_\ell/x)\; \prod\limits_{\ell<i} \theta(\hbar \, a_{i}/x) \over \theta( z^{-1} \hbar^{\ell} )}.
$$
defined geometrically in \cite{ese}. The normalizations involve 
\beq\label{uc2}
{\be}(z,x)^{-1} =\exp{\log(x) \log(z)\over \log(q)},
\eeq
and
\beq\label{uc1}
  U_{\fC, \ell}=\exp{\Bigl( \log(a_\ell) \log(z^{-1} \hbar^{\ell}) -\sum_{i\leq \ell} \log(a_i) \log( \hbar)  \Bigr)\over \log(q)},
\eeq
which help ensure that ${\fP}_{\fC, \ell}(x)$ satisfy \eqref{PS}. 

\subsubsection{}
The contour of integration in \eqref{ThVxInt}  is defined to separate the poles of the integrand, located at
\begin{alignat}{3} 
  x&=q^{-n} \, a_i \qquad && \ell \le i \,, \quad
  &&n=0,1,\dots \label{pl1} \\
 x&= q^{n} \, a_i\,\hbar  \,  , \qquad && i\le \ell \,, \quad
 &&n=0,1,\dots\,. \label{pl2}
\end{alignat}
For $|q|<1$, the poles in \eqref{pl1} accumulate to $x=\infty$ while the poles in
\eqref{pl2} accumulate to $x=0$.

For $z<1$ we can deform the contour to enclose all poles of the form \eqref{pl1}, to obtain
\beq\label{ld}
{\bf V}_{\fC, \ell}=  \sum_{\ell'}  {\bf V}_{\ell'}\; {\fP}_{\fC, \ell}^{\ell'}\;
\eeq
where 
$${\fP}_{\fC, \ell}^{\ell'} = {\fP}_{\fC, \ell}(a_{ \ell'}).$$
The linear change of basis in \eqref{ld}, determined by elliptic stable envelopes, is the {\it pole subtraction matrix} for chamber ${\fC}$. The name reflects the fact that ${\bf V}_{\fC}$ is pole-free in a neighborhood of $0_\fC$, the origin of the chamber ${\fC}$. 
Note the pole subtraction matrix is triangular,
$$
{\fP}_{\fC, \ell}^{\ell'}  =0, \qquad \ell'<\ell
$$
since the numerators in \eqref{PS} eliminate poles from \eqref{pl1} for $i=\ell'<\ell$. 
 
\subsubsection{}

A contour integral becomes singular when the poles of the integrand, located on opposite sides 
of the contour, coalesce. By studying poles of the integrand in \eqref{ThVxInt}, it follows that
${\bf V}_{\fC}$ is  pole-free in a neighborhood of $0_\fC$. This motivates the name we gave to the matrix ${\fP}_{\fC}$ in \eqref{ld}.

The contour of the integration
in \eqref{ThVxInt} per definition separates  the poles in \eqref{pl1} accumulate to $x=\infty$ while the poles in
\eqref{pl2} accumulate to $x=0$. 
This means that the contour integral in  \eqref{ThVxInt} has singularities at 
$$\frac{a_i}{a_j} =  q^{n}\hbar \,,
$$
with $j \le \ell \le i$ and $n \ge 0$. This is the complement of the chamber $\fC$ in which $|a_j/a_i|<1$ for $j<i$.
 %By our conventions, the chamber ${\fC}$ means that $x_i/x_j <1$ for $j<i$. 

\subsubsection{}
More generally, replacing ${\scF}_i$ in \eqref{ThVxInt} with
 
\beq\label{genF}
( {\scF}^{\fC})_{i\ell} = \oint dx\; x^{\eta-1}\;  {\Stab}^K_{i}(x) \; \Phi(x)\; {\fP}_{\fC,{\ell}}(x) 
\eeq
for each fixed $\ell$, we get a solution of qKZ of the form \eqref{solution}\eqref{solutionctd} which is analytic in chamber $\fC$.  This is a $q$-conformal block of $U_{\hbar}(\Lfgh)$.
Both the $K$-theoretic and the elliptic stable envelopes enter \eqref{genF}, but their roles are different. The K-theoretic stable envelope produces vector-valued solutions of $U_{\hbar}(\fgh)$ qKZ from scalar $q$-conformal blocks of ${\cal W}_{q,t}(\fg)$ algebra. To get analytic solutions of qKZ in chamber ${\fC}$ requires knowing the elliptic stable envelope, which enters the definition of ${\fP}_{\fC,{\ell}}(x, a)$. 

\subsection{ $X=T^*{\mathbb P}^1$ example}
Let's make this fully explicit for $n=2$, when $X=T^*{\mathbb P}^1$. The vertex functions associated to the two fixed points in $X$, corresponding to the north and the south poles of the ${\mathbb P}^1$ are:

$$\begin{aligned}
{\bf V}_{1} &=  \; a_1^{\eta}\; {\varphi(t) \over \varphi(q)}  {\varphi(t a_1/ a_2)\over\varphi({a_1/a_2})} 
\bF \left[ \left. 
\begin{matrix} { \hbar} &  { \hbar} {a_2\over a_1} \\
                      q & q {a_2\over a_1} 
\end{matrix} \,\, \right| \,t { z}'
                                               \right] \\
                                               & = a_1^{\eta_\#} {\theta(t a_1/a_2)\over\theta({a_1/a_2})} \; {\varphi(t) \over \varphi(q)} {\varphi({\hbar} z')\over\varphi({z'})}   \bF \left[ \left. 
\begin{matrix} { t}  &  {t}{ z'} \\
                      q &{q  z' }
\end{matrix} \,\, \right| \, \hbar  {a_2  \over a_1}
                                               \right]
                                               \end{aligned}
                                               $$
$$\begin{aligned} {\bf V}_{2} &=  \; a_2^{\eta}\; {\varphi(t) \over \varphi(q)}  {\varphi(t a_2/ a_1)\over\varphi({a_2/a_1})} 
\bF \left[ \left. 
\begin{matrix} {\hbar} &  { \hbar} {a_1\over a_2} \\
                      q & q {a_1\over a_2} 
\end{matrix} \,\, \right| \, t{z}'
                                               \right]\\
                                               & =a_2^{\eta_\#} {\theta(ta_2/ a_1)\over\theta({a_2/a_1})} \; {\varphi(t) \over \varphi(q)} {\varphi({\hbar} z')\over\varphi({z'})}   \bF \left[ \left. 
\begin{matrix} {\hbar} &  { \hbar}{ z'} \\
                      q &{q  z' }
\end{matrix} \,\, \right| \, \hbar  { a_2  \over  a_1}
                                               \right] 
                                                  \end{aligned}
                                               $$
where we defined $z' = t z$. The right hand side of the equations follows using standard identities for $q$-hypergeometric functions. Clearly the vertex function ${\bf V} = ({\bf V}_1, {\bf V}_2)$ has no nice analyticity properties as functions of $a$'s, but they are analytic for $|z|<1$.

\subsubsection{}
The elliptic stable envelopes provide a change of basis to solutions which are (quasi)-analytic in $a$'s. In the chamber $\fC$ where $|a_1|<|a_2|$, we have 
\beq\label{ch1}
{\fP}_{\fC}=  U_{\fC} \left( \begin{array}{cc}
{1\over \theta(\hbar )} & 0\\
{\theta( { \hbar a_1/z a_2 }) \over  \theta( {\hbar a_1/a_2})  \theta( \hbar/ z)} & {\theta(a_1/a_2) \over  {\theta(\hbar a_1/a_2 ) \theta({\hbar})}}
 \end{array} \right) \be^{-1}
\eeq
where the  $\ell \ell'$ entry of the matrix in \eqref{ch1} corresponds to ${\fP}_{{\fC},\ell}^{\ell'}$ in \eqref{ld}. The matrices $\be$, $U_{\fC}$ are both diagonal, with eigenvalues, and can be read off from \eqref{uc1}, \eqref{uc2}.

Explicitly, using various q-hypergeometric function identities, we find:
$${\bf V}_{\fC, 1} = {1 \over \theta(t)}  \; a_1^{\eta_\#}\; {\varphi(t) \over \varphi(q)}  {\varphi({ \hbar} z')\over\varphi({z'})} 
\bF \left[ \left. 
\begin{matrix} t &  t/ z' \\
                      q &q/ z' 
\end{matrix} \,\, \right| \, \hbar {a_1  \over a_2}
                                               \right] \,.
$$
$${\bf V}_{\fC, 2} = {1 \over \theta(t) }  \; a_2^{\eta_\#}\; {\varphi(t) \over \varphi(q)}  {\varphi({ \hbar/  z'})\over\varphi({1/ z'})}
\bF \left[ \left. 
\begin{matrix} t &  {t z'} \\
                      q &{q  z' }
\end{matrix} \,\, \right| \, \hbar  {a_1  \over a_2}
                                               \right] \,.
$$
The vertex function in the stable basis ${\bf V}_{\fC} = ({\bf V}_{\fC,1}, {\bf V}_{\fC,2})$ is now clearly analytic in the chamber ${\fC}$, corresponding to $|a_1|<|a_2|$.
The map to conformal blocks can be read off from the elliptic stable envelope, and goes as follows:

\beq\label{idc1}\begin{aligned}
{\bf V}_{\fC, 1} &\qquad \longrightarrow \qquad { H}^{\rho_1}_{\lambda_0, \lambda_0 -w } \otimes {H}^{\rho_2}_{\lambda_0 -w , \lambda_0}\\
{\bf V}_{\fC, 2} &\qquad \longrightarrow \qquad {H}^{\rho_1}_{\lambda_0, \lambda_0 +w } \otimes {H}^{\rho_2}_{\lambda_0 +w , \lambda_0}
\end{aligned}
\eeq

\subsection{Conformal limit}\label{mycon}
In the conformal limit, $q\rightarrow 1$ limit, one gets the familiar expressions for the integral solutions of KZ equation, and the corresponding Virasoro conformal blocks. We will review these in detail in Sec. \ref{glc}, for now simply note that the Virasoro block, given by \eqref{oneea} still has the same form, with

\beq\label{oneeb}
\langle \mu'| V^{\vee}(a_{1})\ldots V^{\vee}(a_{n}) \; Q^{\vee}|\mu\rangle = 
\int_{\gamma} d x  \; x^{-(\mu, e) -1}   \Phi(x,a).
\eeq
with
\beq\label{Philb}
\Phi(x,a)   \qquad \longrightarrow\qquad  \prod_{j=1}^n   (1-x/a_i )^{-\beta}.
\eeq

Now consider the limit of the $z$-solutions of qKZ equation, as obtained from the $\Fusion$ operator of $X$, geometrically.
becomes
%$$
%{\Psi} (a_1, \ldots, a_n)=\,  \sum_{i=1}^n \; a_1^{\theta} \ldots a_n^{\theta} \; \scF_i( a_1, \ldots , a_n) \; v_1 \otimes \ldots \otimes f v_i \otimes \ldots \otimes v_n.
%
\beq\label{Philbc}
{\scF}_i  = \int_{{\gamma}} {dx}  \; x^{-(\mu,e)-1}\; {\Stab}^K_{i}(x,a) \; \Phi(x,a),
\eeq
which leads to the \cite{EFK} integral form of solutions of KZ equation, which we review in Sec. \ref{glc}. Namely, the limit of stable envelopes is
$$
 {\Stab}^K_{i}(x,a) \; \qquad \longrightarrow\qquad  \prod_{\stackrel{j\neq i}{ j=1}}^n {(1 - x/a_j)}.
$$
and taking \eqref{Philb} and \eqref{Philbc} together, we get
\beq\label{fusionCFTac}
\scF_i(a)  = \int_{\gamma} {d x}  \; x^{(\mu,e)-1} {1\over 1-x/a_i} \; \times \; \prod_{j=1}^n {(1 - x/a_j)}^{-\beta +1 }.
\eeq
Recalling  \eqref{zero1}, 
$$ \beta - 1 = \theta = 1/ ^L(k+h^{\vee}),
$$
we recognize in \eqref{Philbc} the integral form of solutions of the KZ equation in the weight $n-1$ subspace.

\subsubsection{}
Recall that ${\fP}_{\fC,{\ell}}(x, a)$ is a pseudo-constant with respect to $q$-shifts of all the variables, see \eqref{PS}. Thus, $q$ goes to $1$ it becomes a constant, depending only on 
$$
q' = e^{- 2\pi i{ \ln \hbar \over \ln q}} = e^{2\pi i\over ^L(k+h^{\vee})},
$$ 
but not on any continuous variables. It follows
% The limit is best taken however, not at the level of integral formulas (since the value of the constant depends on whether $|x/a_i|> 1$ of $|x/a_i|<1$, for any $i$), but at the level of their residues such as \eqref{ld}. One easily finds
$$
{{\fP}_{\fC,{\ell}}}^{\ell'} \qquad \rightarrow \qquad (q')^{\#_{\fC, \ell, \ell'}}
$$
where $\#_{\fC, \ell}$ is a number depending only on $\ell$ and $\fC$.

\section{Isomorphism of conformal blocks and the geometric Langlands
  Correspondence}    \label{glc}

In the previous sections, we have established an isomorphism of
$q$-deformed conformal blocks of the deformed ${\mc W}$-algebra
associated to a simple Lie algebra $\fg$ and the quantum affine
algebra associated to the Langlands dual Lie algebra $^L\neg\fg$. It is
natural to ask whether the appearance of dual Lie algebras here is in
some ways related to the geometric Langlands correspondence and its
one-parameter deformation known as the quantum geometric Langlands.

The way the Langlands dual Lie algebra manifests here is all the more
striking because deformed ${\mc W}$-algebras do not exhibit the
duality known to exist in the conformal limit $q \to 1$. Indeed,
recall that in the conformal case, we have an isomorphism between the
${\mc W}$-algebras associated to $\fg$ and $^L\neg\fg$, after a change
of the parameter \cite{FF} (it is recalled in Theorem \ref{Walg}
below). But after the $q$-deformation, no such isomorphism is
available. In other words, there is no longer an isomorphism between
the deformed ${\mc W}$-algebras associated to $\fg$ and $^L\neg\fg$
(unless of course $^L\neg\fg = \fg$). This brings the difference
between the above two algebras, ${\cal W}_{q,t}({\fg})$ and
$U_{\hbar}(\wh{^L\neg\fg})$, into a sharper focus.

We stress that it is quite common in representation theory that
introducing an additional parameter enables one can see a particular
phenomenon more clearly, and notice aspects of it that hitherto could
be more easily missed or ignored. Often, this results in revisiting
the original phenomenon (before the deformation) and adjusting one's
point of view.

For instance, consider the Harish-Chandra isomorphism $c: Z(\fg)
\overset{\sim}\longrightarrow \on{Fun}(\fh^*)^W$ between the center of
the universal enveloping algebra $U(\fg)$ of a simple Lie algebra
$\fg$ and the algebra of Weyl-invariant polynomial functions on the
dual space to the Cartan subalgebra $\fh$ of $\fg$.
%One way to
%construct $c(z), z \in Z(\fg)$ is to say that its value at $\la \in
%\fh^*$ is equal to the value of $z$ on the Verma module $M_\la$ with
%the highest weight $\la$. After that, one can easily check that $c(z)$
%is invariant under a shifted action of the generators of the Weyl
%group $W$, and hence $W$ itself. (The fact that this is an isomorphism
%requires an additional argument, but that that is not important to us
%at the moment.)
There is a strange aspect of this formula that is easy to ignore: $c$
maps $Z(\fg)$ to $\on{Fun}(\fh^*)^W$ rather than
$\on{Fun}(\fh)^W$. But actually this is significant: the
Harish-Chandra isomorphism already contains a germ of the Langlands
duality. The point is that we have a canonical isomorphism between
$\fh^*$ and the Cartan subalgebra ${}^L\neg\fh$ of $^L\neg\fg$. It is more
insightful to express this isomorphism as $Z(\fg) \simeq
\on{Fun}({}^L\neg\fh)^W$, but it is difficult to convince oneself that
this is how we should view it if we remain squarely within the
finite-dimensional context because $\fh$ and $^L\neg\fh$ are so close to
each other (they are canonically isomorphic up to an overall scalar).

However, this phenomenon becomes much more clear after affinization
(which we can think of as introducing an additional parameter into the
picture). Indeed, the affine analogue of the Harish-Chandra
isomorphism is the isomorphism of \cite{FF,Frenkel2} between the
center $Z_{\on{crit}}(\ghat)$ of the enveloping algebra of $\ghat$ at
the critical level and the classical ${\mc W}$-algebra ${\mc
  W}_\infty(^L\neg\fg)$, viewed as a subalgebra of the algebra
$\on{Fun}(\on{Conn}_{^L\neg\fh})$ of functions on the space of connections
on a certain $^L\neg H$-bundle on the punctured disc. As explained in
\cite{Frenkel2}, the algebra $\on{Fun}(\on{Conn}_{^L\neg\fh})$ can be
viewed as an affine analogue of $\on{Fun}({}^L\neg\fh)^W$. Furthermore,
${\mc W}_\infty(^L\neg\fg)$, and hence $Z_{\on{crit}}(\ghat)$, can be
described as the subalgebra of $\on{Fun}(\on{Conn}_{^L\neg\fh})$
consisting of elements invariant under the classical limits of the
screening operators (which can be viewed here as affine analogues of
the simple reflections from $W$). The essential point is that, unlike
in the finite-dimensional case, ${\mc W}_\infty(^L\neg\fg)$ and ${\mc
  W}_\infty(\fg)$ are no longer isomorphic to each other (as Poisson
algebras) if $^L\neg\fg \neq \fg$. Therefore the phenomenon of Langlands
duality becomes much more transparent, and in retrospect, it forces us
to look at the original Harish-Chandra isomorphism in a new light.

This is what we hope our results on the $q$-deformed conformal blocks
can bring us as well: a sharper manifestation of certain phenomena
that would be difficult to see or appreciate in the context of the
undeformed quantum geometric Langlands, at $q=1$ (the same way as the
appearance of $^L\neg\fh$ in the Harish-Chandra isomorphism would be
difficult to appreciate). Understanding such phenomena for $q \neq 1$
could then shine a new light on what was considered as well-known or
well-understood in quantum geometric Langlands.

While we do not claim that we fully understand it yet, we consider the
canonical isomorphism of $q$-deformed conformal blocks conjectured in
this paper (and proved in the simply-laced case) as a significant
phenomenon in the framework of a conjectural $q$-deformed quantum
Langlands correspondence. We believe that it deserves further
study. As far as we know, this is the first attempt to make a precise
statement about $q$-deformed quantum Langlands correspondence (even
though its existence had been anticipated, see e.g. the end of
\cite{Frenkel1}). We hope that more information will come to light in
the future that will enable one to formulate the $q$-deformed quantum
Langlands more precisely.

In this section, we give a brief overview of some aspects of the
geometric Langlands correspondence and its quantum deformation, and
then explain in what sense our isomorphism of $q$-deformed conformal
blocks could be seen as a manifestation of a $q$-deformation of the
quantum geometric Langlands correspondence.

\subsection{Overview}    \label{glc-overview}

As we mentioned in the Introduction, the geometric Langlands
correspondence is usually understood today as a conjectural
equivalence between certain categories of sheaves on two moduli stacks
related to a smooth projective algebraic curve ${\mc C}$ and a pair of
connected Langlands dual complex reductive Lie groups $G$ and $^L\neg
G$. One is the derived category of ${\mc D}$-modules on the moduli
stack $\on{Bun}_{^L\neg G}$ of $^L\neg G$-bundles on ${\cal C}$ and the other
is a certain modification of the derived category of ${\mc O}$-modules
on the moduli stack $\on{Loc}_G$ of flat $G$-bundles on ${\mc C}$ (see
\cite{AG} for a precise formulation; in the abelian case this
equivalence is a version of the Fourier--Mukai transform that has been
proved in \cite{Laumon,Rothstein}).\footnote{Note that our notation
  for $G$ and $^L\neg G$ is opposite to the standard one.}

In \cite{KW}, Kapustin and Witten have connected this equivalence to
the homological mirror symmetry of sigma models with the Hitchin
moduli spaces of $G$ and $^L\neg G$ as target manifolds and to the
$S$-duality of maximally supersymmetric 4d gauge theories with the
gauge groups being the compact forms of $G$ and $^L\neg G$.

This equivalence is expected to satisfy various properties; in
particular, the compatibility with certain functors acting on the two
categories: the Hecke functors on the $^L\neg G$ side and the ``Wilson
functors'' on the $G$ side (they are connected to the 't Hooft and
Wilson line operators of the 4d gauge theory \cite{KW}).

%In particular, consider the ${\mc D}$-module ${\mc F}_{\mc E}$ on
%$\on{Bun}_G$ assigned by this equivalence to the skyscraper ${\mc
%  O}$-module ${\mc O}_{\mc E}$ supported at a generic\footnote{Here
%  ``generic'' means that ${\mc E}$ has no automorphisms other than the
%  elements of the center of $^L\neg G$. Note that our notation for $G$ and
%  $^L\neg G$ is opposite to the standard one.} point of $\on{Loc}_G$
%corresponding to a flat $G$-bundle ${\mc E}$. Since ${\mc O}_{\mc E}$
%is clearly an eigen-object of the Wilson functors (that is to say,
%when we apply a Wilson functor to ${\mc O}_{\mc E}$, it gets tensored
%with a finite-dimensional vector space), it follows that ${\mc F}_{\mc
%  E}$ should be an eigen-object of the Hecke functors. In other words,
%${\mc F}_{\mc E}$ is a ``Hecke eigensheaf''. In fact, this is how the
%geometric Langlands correspondence was thought of initially: as a
%correspondence assigning a Hecke eigensheaf on $\on{Bun}_{^L\neg G}$ to a
%flat $G$-bundle on ${\mc C}$ (see, e.g., \cite{Frenkel,F:bourbaki} for
%a survey and further references). The categorical version has the
%advantage that it restores the symmetry between the two
%sides. Furthermore, in the abelian case this equivalence of categories
%turns out to be a variant of the Fourier--Mukai transform
%\cite{Laumon,Rothstein}, and so one can think of the categorical
%geometric Langlands correspondence in the non-abelian case as a kind
%of non-abelian Fourier transform. This is how it was envisioned by
%Beilinson and Drinfeld.

In \cite{BD}, Beilinson and Drinfeld constructed an important part of
the geometric Langlands correspondence in which on the $G$-side one
takes the subcategory of ${\mc O}$-modules supported on a substack of
$G$-opers in $\on{Loc}_G$. In the case that $G$ is a simple Lie group
of adjoint type (i.e., with the trivial center), to which we restrict
ourselves in this subsection, $\on{Op}_{G}$ is an affine space that is
isomorphic to the space of all flat connections on a specific
$G$-bundle on ${\mc C}$ \cite{BD}.

In their construction, Beilinson and Drinfeld used the description of
the center of the vertex algebra of $\wh{^L\neg\fg}$ at the critical level
given in \cite{FF}. Namely, it was proved in \cite{FF} (see also
\cite{Frenkel2} for a survey) that the center of the completed
enveloping algebra of $\wh{^L\neg\fg}$ at the critical level
$^L\neg k=-{}^L\neg h^\vee$ is isomorphic (as a Poisson algebra) to the
classical ${\mc W}$-algebra associated to $\fg$. The latter is, by
definition, the algebra of functions on the space of $G$-opers on the
punctured disc, and the Poisson structure on it is obtained via the
(classical) Drinfeld--Sokolov reduction. Equivalently, the center of
the vertex (or chiral) algebra of $\wh{^L\neg\fg}$ at the critical level
is isomorphic to the commutative (Poisson) vertex algebra ${\mc
  W}_\infty(\fg)$.

This isomorphism enabled Beilinson and Drinfeld to construct a family
of critically twisted ${\mc D}$-modules on $\on{Bun}_{^L\neg G}$
parametrized by those conformal blocks of ${\mc W}_\infty(\fg)$ on $X$
that are algebra homomorphisms. Furthermore,
%$\wh{^L\neg\fg}$-modules at the critical level parametrized by $G$-opers
%on the disc. To each of them we then associate a ${\mc D}$-module of
%conformal blocks on $\on{Bun}_{^L\neg G}$ using the standard
%``localization'' construction. (It goes back to the works of
%Beilinson--Bernstein and Brylinski--Kashiwara on the localization of
%$\fg$-modules on the flag variety; see, e.g., Section 9 of
%\cite{Frenkel}.) The resulting ${\mc D}$-module is non-zero if and
%only if the $G$-oper extends from the disc to the entire curve ${\mc
%  C}$, and Beilinson and Drinfeld have proved \cite{BD} that if that
%is the case, then this ${\mc D}$-module is a Hecke eigensheaf with
%respect to this oper.
the Beilinson--Drinfeld construction can be placed in the framework of
2d CFT, even though the Kac--Moody chiral algebra at the critical
level is quite unusual (it is missing the stress tensor $T(z)$ because
the quadratic Sugawara current becomes central in this case). See
Section 9 of \cite{Frenkel} for more details.

%The sheaves of conformal blocks of this chiral algebra can still be
%defined, and they give rise to a non-trivial ${\mc D}$-module on
%$\on{Bun}_{^L\neg G}$ (rather than the moduli spaces of punctured curves),
%which is where Hecke eigensheaves are supposed to ``live''. More
%precisely, the fibers of this ${\mc D}$-module are the (duals of)
%spaces of conformal blocks of representations of $\Lfgh$ of the
%critical level, as explained in Section 9 of
%\cite{Frenkel}. Furthermore, these conformal blocks turn out to be
%closely related to the conformal blocks of the (commutative) classical
%${\mc W}$-algebra associated to $\fg$ (see Section 9.3 of
%\cite{Frenkel}). Thus, we see that conformal blocks of both $\Lfgh$ of
%the critical level and the classical ${\mc W}$-algebra associated to
%$\fg$ are inextricably linked to the geometric Langlands
%correspondence.

\subsection{Global quantum Langlands
  correspondence}    \label{glc-def}

As soon as it became clear that there is a link between the
Beilinson--Drinfeld construction of the geometric Langlands
correspondence and 2d CFT at the critical level, a natural question
arose: is it possible to to deform the geometric Langlands
correspondence away from the critical level? The first conjectural
formulation was proposed by Beilinson and Drinfeld themselves (see
\cite{FFS}): the global quantum geometric Langlands correspondence
should be an equivalence of suitably modified derived categories of
twisted ${\mc D}$-modules on $\on{Bun}_G$ and $\on{Bun}_{^L\neg G}$,
provided that the corresponding twist parameters, which can be
identified with the levels $^L\neg k$ and $k$, satisfy the relation
\eqref{kdual} below. There is a precise sense in which the $k$-twisted
${\mc D}$-modules can be identified with ${\mc O}$-modules on
$\on{Loc}_G$ in the limit $k \to \infty$ (see, e.g., Section 6.3 of
\cite{Frenkel}), and it is in this sense that the $k \to \infty$ limit
of this equivalence is expected to yield the categorical Langlands
correspondence of the previous subsection (that is, for $^L\neg
k=-{}^L\neg h^\vee$). A closely related equivalence (of certain
categories of A-branes) was also suggested in the framework of the 4d
gauge theory picture in \cite{KW,Kapustin}.

On the other hand, it is natural to try to develop ``quantum geometric
Langlands'' within the framework of 2d CFT, as a deformation of the
Beilinson--Drinfeld construction at the critical level.

One immediate complication for doing this is that while the chiral
algebra $V_{-{}^L\neg h^\vee}(\wh{^L\neg\fg})$ of $\wh{^L\neg\fg}$ of
level $-{}^L\neg h^\vee$ deforms to the chiral algebra $V_{^L\neg
  k}(\wh{^L\neg\fg})$ of $\wh{^L\neg\fg}$ of level $^L\neg k$, only
the part of the center of $V_{-{}^L\neg h^\vee}(\wh{^L\neg\fg})$
generated by the quadratic Sugawara operators can be deformed. The
center itself cannot be deformed inside $V_{^L\neg k}(\wh{^L\neg\fg})$
if $\fg \neq sl_2$. Luckily, there is another definition of the center
that can be deformed: namely, the definition via the quantum
Drinfeld--Sokolov reduction.

The quantum Drinfeld--Sokolov reduction \cite{FF} (see \cite{FF:ds,BO}
for earlier works and \cite{FB}, Ch. 15 for a survey) is defined by
introducing a BRST complex which is the tensor product of
$V_{^L\neg k}(\wh{^L\neg\fg})$ and the free fermion vertex algebra built on the
Clifford algebra generated by $^L\neg\n(\!(z)\!)  \oplus
{}^L\neg\n^*(\!(z)\!)dz$. Mathematically, it is the complex of Feigin's
semi-infinite cohomology of the Lie algebra $^L\neg\n(\!(z)\!)$ with
coefficients in $V_{^L\neg k}(\wh{^L\neg\fg})$ tensored with a non-degenerate
(Whittaker-like) character. It turns out that this cohomology is
non-zero only in cohomological degree $0$, and the cohomological
degree $0$ part is a vertex algebra called the (quantum) ${\mc
  W}$-algebra associated to $^L\neg\fg$ and level $^L\neg k$ (see
\cite{FF,FB}). (This is one of two known definition of this ${\mc
  W}$-algebra; the other definition, as the intersection of kernels of
the screening operators, is equivalent to it, as explained in
\cite{FF,FB}.)

The notation used for this algebra in \cite{FF,FB} is ${\mc W}_{^L\neg
  k}({}^L\neg\fg)$, but here we will use the notation ${\mc
  W}_{^L\neg\beta}({}^L\neg\fg)$, where $^L\neg\beta = m \cdot {}^L\neg(k+h^\vee)$
($m$ being the lacing number of $^L\neg\fg$ and $\fg$). In particular, in
our notation ${\mc W}_\infty({}^L\neg\fg)$ is the classical ${\mc
  W}$-algebra associated to $^L\neg\fg$ (viewed as a commutative vertex
Poisson algebra).

It turns out that if ${}^L\neg k=-{}^L\neg h^\vee$, then the corresponding ${\mc
  W}$-algebra ${\mc W}_0({}^L\neg\fg)$ also becomes commutative and is in
fact isomorphic to the center of $V_{-{}^L\neg h^\vee}(\wh{^L\neg\fg})$. More
precisely, the natural embedding of the center (placed in
cohomological degree $0$) into the above BRST
complex induces an isomorphism of the cohomologies. This is proved in
\cite{FF} (see also \cite{FB}, Ch. 15). Thus, we obtain an alternative
description of the center at the critical level as the (commutative)
vertex algebra ${\mc W}_0({}^L\neg\fg)$. This description makes it clear how to
deform this vertex algebra: we simply take ${\mc W}_{^L\neg\beta}({}^L\neg\fg)$.

Now recall the isomorphism of \cite{FF} between the center of
$V_{-{}^L\neg h^\vee}(\wh{^L\neg\fg})$ and the classical ${\mc W}$-algebra
associated to $\fg$. In our current notation, it takes the form
\begin{equation}    \label{cWduality}
{\mc W}_0({}^L\neg\fg) \simeq {\mc W}_\infty(\fg).
\end{equation}
It turns out that this isomorphism has a one-parameter deformation
\cite{FF}:

\begin{theorem}    \label{Walg}
For arbitrary complex parameters $\beta$ and $^L\neg\beta$ satisfying
  the relation
\begin{equation}    \label{dualbeta}
\beta = \frac{m}{^L\neg \beta},
\end{equation}
there is an isomorphism of vertex algebra
\begin{equation}    \label{Wduality}
{\mc W}_{^L\neg\beta}({}^L\neg\fg) \simeq {\mc W}_\beta(\fg),
\end{equation}
whose limit as $\beta \to \infty$ is the isomorphism \eqref{cWduality}.
\end{theorem}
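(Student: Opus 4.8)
The plan is to realize the two vertex algebras as one and the same subalgebra of a common Heisenberg (free-field) vertex algebra, cut out by screening operators; this is the Feigin--Frenkel construction \cite{FF,FB}, and the isomorphism \eqref{Wduality} will literally be ``the identity of $\pi_0$ restricted to the common subalgebra.'' So first I would fix the free-field data. Let $r=\textup{rk}\,\fg=\textup{rk}\,{}^L\fg$, let $\fh$ be the Cartan subalgebra of $\fg$, and normalize the invariant form as in Section \ref{sec:two}, so that under the canonical identification $\fh^{*}\cong{}^L\fh$ one has $(\cdot,\cdot)_\fg=m\,(\cdot,\cdot)_{{}^L\fg}$. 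Let $\pi_0$ be the associated rank-$r$ Heisenberg vertex algebra. Quantum Drinfeld--Sokolov reduction produces an embedding ${\mc W}_\beta(\fg)\hookrightarrow\pi_0$ together with screening currents $S_i(z)$, $i=1,\dots,r$, attached to the simple roots $e_i$ of $\fg$ and depending on $\beta$, and ${\mc W}_\beta(\fg)=\bigcap_{i=1}^{r}\ker\!\bigl(\int S_i(z)\,dz\bigr)$ acting on $\pi_0$; by \cite{FF,FB} this intersection-of-kernels description is equivalent to the BRST definition recalled earlier. Running the identical construction for ${}^L\fg$ at level ${}^L\beta$, and transporting the result into the same $\pi_0$ via the identification above, produces a second family of screening currents $\widetilde S_j(z)$, $j=1,\dots,r$, built from the simple roots $e_j^{\vee}$ of ${}^L\fg$ (the simple coroots of $\fg$) and depending on ${}^L\beta$, with ${\mc W}_{{}^L\beta}({}^L\fg)=\bigcap_{j=1}^{r}\ker\!\bigl(\int\widetilde S_j(z)\,dz\bigr)$ inside $\pi_0$.

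The heart of the matter is to prove that these two joint kernels coincide. When $\beta\cdot{}^L\beta=m$ one checks that both $S_i$ and $\widetilde S_i$ are lattice vertex operators $V_\gamma$ for $\gamma$ in the rescaled (co)root lattice, and that $S_i$ and $\widetilde S_i$ carry ``conjugate'' parameters --- the two roots of one common quadratic --- exactly as in the rank-one case, where ${\mc W}_\beta(sl_2)$ is the Virasoro vertex algebra and this is the standard $\beta\leftrightarrow\beta^{-1}$ symmetry of its Coulomb-gas realization. To get ${\mc W}_\beta(\fg)\subseteq\bigcap_j\ker\int\widetilde S_j$ and its reverse I would argue node by node: the compatibility of an $S$-screening at node $i$ with a $\widetilde S$-screening at node $j$ is governed by the rank-$\le 2$ sub-data spanned by $\{i,j\}$, so the statement reduces to the diagrams $A_1$, $A_2$, $B_2=C_2$ and $G_2$, in which it is checked directly from the explicit operator products $S_i(z)S_j(w)$, $S_i(z)\widetilde S_j(w)$ and the singular-vector structure of Fock modules. (Equivalently one can exhibit the BGG-type Fock-module resolution of ${\mc W}_\beta(\fg)$ assembled from the screening operators and note it is manifestly symmetric under $(\fg,\beta)\leftrightarrow({}^L\fg,{}^L\beta)$.) Since both algebras are by construction vertex subalgebras of the single vertex algebra $\pi_0$, equality of underlying subspaces immediately upgrades to an isomorphism of vertex algebras, which is \eqref{Wduality}.

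A quicker but less self-contained route, natural in the present context, is to deduce \eqref{Wduality} by taking the conformal limit of its deformed counterpart: in the simply-laced case the symmetry $S^{\vee}_a\leftrightarrow S_a$, $q\leftrightarrow t$ recalled in Section \ref{sec:four} gives ${\mc W}_{q,t}(\fg)\cong{\mc W}_{t,q}(\fg)$, and $t=q^{\beta}$ forces $\beta\leftrightarrow\beta^{-1}$ in the limit \eqref{limitw}; the non-simply-laced cases follow by the folding procedure used elsewhere in the paper, the lacing number $m$ entering through $q_a=q^{m_a}$. As for the asserted behavior as $\beta\to\infty$: by \cite{FF,FB} the algebra ${\mc W}_\beta(\fg)$ degenerates to the commutative Poisson vertex algebra ${\mc W}_\infty(\fg)$, while ${\mc W}_0({}^L\fg)$ --- the ${}^L\beta\to 0$ end --- is the center of $V_{-{}^{L}h^{\vee}}(\wh{^L\fg})$ at the critical level. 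The screening currents $\widetilde S_j$ and the embedding ${\mc W}_{{}^L\beta}({}^L\fg)\hookrightarrow\pi_0$ depend algebraically on ${}^L\beta$, so the family of isomorphisms \eqref{Wduality} specializes to a well-defined isomorphism at ${}^L\beta=0$, and comparing the leading free-field (Miura) expressions on the two sides identifies it with \eqref{cWduality}.

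The step I expect to be the real obstacle is the rank-two verification in the second paragraph: for non-simply-laced $\fg$ --- the cases $B_2$ and $G_2$ --- the equality of the two joint kernels has no symmetry shortcut and requires an explicit free-field computation with the screening operators. This is precisely the technical content of the Feigin--Frenkel theorem; once it is in place, everything else is formal bookkeeping in the free-field dictionary.
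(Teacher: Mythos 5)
Your free-field strategy --- realizing both algebras as the joint kernel of screening operators inside a common Heisenberg vertex algebra $\pi_0$ and reducing the comparison to rank $\le 2$ sub-diagrams --- is essentially a reconstruction of the proof of the \emph{generic} case, which the paper does not reprove but simply cites as Proposition 5 of \cite{FF}. The genuine content of the paper's argument lies elsewhere: the theorem asserts the isomorphism for \emph{arbitrary} complex $\beta$ satisfying \eqref{dualbeta}, and your argument only covers generic $\beta$. The intersection-of-kernels description you rely on, and the rank-two OPE verifications, are valid only when the relevant Fock modules are generic; at special (e.g.\ rational) values of $\beta$ the kernels of the screening operators can jump, so ``equality of the two joint kernels'' neither characterizes ${\mc W}_\beta(\fg)$ nor follows from the rank-two computations there. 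The paper closes this gap by a flatness/character argument: by Theorem 4.6.9 of \cite{FF:laws}, both ${\mc W}_\beta(\fg)$ and ${\mc W}_{{}^L\beta}({}^L\fg)$ embed in $\pi_0$ as vertex subalgebras whose graded characters are \emph{independent of} $\beta$, so two families of subspaces of constant graded dimension that coincide for generic $\beta$ must coincide for all $\beta$. Your remark that the screenings ``depend algebraically on ${}^L\beta$'' gestures at a specialization argument, but you invoke it only for the $\beta\to\infty$ limit; to obtain the theorem as stated you need it at every non-generic $\beta$, and the character-independence input is what makes it work.

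A secondary but real problem is your proposed ``quicker route'' via the conformal limit of the deformed algebras. The $q\leftrightarrow t$ symmetry exchanging $S_a^{\vee}$ and $S_a$ exists only for simply-laced $\fg$; as the paper emphasizes, for ${}^L\fg\neq\fg$ there is \emph{no} isomorphism between ${\mc W}_{q,t}(\fg)$ and ${\mc W}_{t,q}({}^L\fg)$, so the duality \eqref{Wduality} cannot be obtained as a $q\to 1$ limit of a deformed duality in precisely the cases ($B_2$, $G_2$, \dots) you identify as the hard ones. That route is also circular in the logic of the paper, which uses Theorem \ref{Walg} as classical background for the deformed story rather than as a consequence of it.
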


\begin{proof} In \cite{FF}, Proposition 5, the isomorphism
  \eqref{Wduality} was proved for generic values of $\beta$ and
  $^L\neg\beta$ satisfying \eqref{dualbeta}. Note that
  relation \eqref{dualbeta} is equivalent to
\begin{equation}    \label{kdual}
m(k+h^\vee) = \frac{1}{{}^L\neg(k+h^\vee)}.
\end{equation}
Furthermore, in \cite{FF} the precise definition of the limit $\beta
\to \infty$ was given so that \eqref{Wduality} becomes
\eqref{cWduality} in this limit.

The isomorphism for arbitrary $\beta$ and $^L\neg\beta$ satisfying
\eqref{dualbeta} follows easily from the results of of
\cite{FF:laws}. Namely, according to Theorem 4.6.9 of
\cite{FF:laws},\footnote{To avoid confusion, we note that the
  parameter $\beta$ we use here is equal to $m/\beta_{\on{FF}}^2$,
  where $\beta_{\on{FF}}$ is the parameter denoted by $\beta$ in
  \cite{FF:laws}.} both ${\mc W}_\beta(\fg)$ and ${\mc
  W}_{^L\neg\beta}({}^L\neg\fg)$, with the parameters $\beta, {}^L\neg\beta$
satisfying \eqref{dualbeta}, can be embedded as vertex subalgebras of
the Heisenberg vertex algebra associated to the Cartan subalgebra
${\mathfrak h}$ of $\fg$, so that their graded characters are
independent of $\beta$. Since the two images coincide for generic
$\beta$, they coincide for all $\beta$.
\end{proof}

In the next subsection, as a small aside, we express Theorem
\ref{Walg} in a slightly more satisfying way, as an isomorphism over
the ring of Laurent polynomials $\C[\beta^{\pm 1}] =
\C[({}^L\neg\beta)^{\pm 1}]$.

\subsection{Oneness}

One possible point of view on the isomorphism of Theorem \ref{Walg} is
that there are two families of ${\mc W}$-algebras: for $\fg$ and for
$^L\neg\fg$, and there is an isomorphism between them if we reverse the
parameter: $\beta \mapsto m/\beta$. However, a more fruitful point of
view might be that there is only one ${\mc W}$-algebra, but we can
look at it from two different points of view: as being associated to
$\fg$ or to $^L\neg\fg$. Accordingly, this quantum ${\mc W}$-algebra has
two classical limits corresponding to these two points of view. In
other words, there is one quantum ${\mc W}$-algebra, but it can be
perceived as the quantization of two different vertex Poisson
algebras.

This can be made more precise by exhibiting this ``unified'' ${\mc
  W}$-algebra as a free $\C[\beta^{\pm 1}]$-module which contains
inside a $\C[\beta^{-1}]$-lattice and a $\C[\beta]$-lattice, the
former ``hailing'' from $\fg$ and the latter from $^L\neg\fg$ (in other
words, $\beta^{-1}$ is the quantization parameter from the point of
view of $\fg$, and $\beta$ is the quantization parameter from the
point of view of $^L\neg\fg$).

According to Theorem 4.6.9 of \cite{FF:laws}, ${\mc W}_\beta(\fg)$ is
freely generated by $\ell=\on{rank}(\fg)$ generators
$W_1,\ldots,W_\ell(z)$ such that the degree (or conformal dimension)
of $W_i$ is $d_i+1$, where $d_i$ is the $i$th exponent of $\fg$. For
non-zero $\beta$, the first of these generators, $W_1$, generates the
Virasoro algebra, and each of the remaining generators $W_i,
i=2,\ldots,\ell$, can be chosen so that it is a highest weight vector
of this Virasoro algebra.

The Heisenberg vertex algebra is, as a vector space, the Fock
representation $\pi_0$ of the Heisenberg Lie algebra with the
generators $b^i_n, i=1,\ldots,\ell; n \in \Z$, which we normalize by
the requirement that they satisfy the relations
$$
[b^i_n,b^j_m] = \beta^{-1} (\alpha_i,\alpha_j) n \delta_{n,-m}.
$$
Consider $\pi_0$ as a free $\C[\beta^{-1}]$-module with the basis of
monomials in $b^i_n, i=1,\ldots,\ell; n<0$, applied to the vacuum
vector. This is a vertex algebra over $\C[\beta^{-1}]$. It follows from the
proof of Theorem 4.6.9 of \cite{FF:laws} that each $W_i$ can be
normalized in such a way that $W_i = W^{(0)} + \beta^{-1}(\ldots)$, where
$W^{(0)}_i$ is a polynomial in $b^i_{-1}, i=1,\ldots,\ell$, invariant
under the action of the Weyl group. Furthermore, $W^{(0)}_i,
i=1,\ldots,\ell$, is a set of generators of the ring of Weyl group
invariant polynomials in $b^i_{-1}$ (in fact, using the conformal
dimension $\Z$-grading on $\pi_0$, we find also that $W^{(0)}_i$ is
the symbol of $W_i$ with respect to the standard PBW filtration on
$\pi_0$; note also that the numbers $d_i+1$ are precisely the degrees
of the generators of the ring of Weyl group invariant polynomials).

According to Theorem 4.6.9 of \cite{FF:laws}, the lexicographically
ordered monomials in the creation operators corresponding to the $W_i,
i=1,\ldots,\ell$, applied to the vacuum vector in $\pi_0$, span a free
$\C[\beta^{-1}]$-submodule and a vertex subalgebra of $\pi_0$. This is
${\mc W}_\beta(\fg)$, viewed as a vertex algebra over $\C[\beta^{-1}]$.

On the other hand, let $^L\neg W_i = \beta^{(d_i+1)} W_i$. Then, applying
the same argument, but from the point of view of $^L\neg\fg$ and
$({}^L\neg\beta)^{-1} = \beta/m$ and using the $^L\neg W_i$'s, we construct a free
$\C[\beta]$-submodule of $\pi_0 \underset{\C[\beta^{-1}]}\otimes
\C[\beta^{\pm 1}]$. This is ${\mc W}_{m/\beta}({}^L\neg\fg)$, viewed as a
vertex algebra over $\C[\beta]$.

Finally, tensoring both of these vertex algebras with $\C[\beta^{\pm
  1}]$, we obtain the promised ``unified'' ${\mc W}$-algebra (of $\fg$
{\em and} $^L\neg\fg$), which contains ${\mc W}_\beta(\fg)$ and ${\mc
  W}_{m/\beta}({}^L\neg\fg)$ as a $\C[\beta^{-1}]$- and a
$\C[\beta]$-lattice, respectively. The two classical limits, ${\mc
  W}_\infty(\fg)$ and ${\mc W}_\infty(^L\neg\fg)$, are defined using these
two lattices (as quotients by the maximal ideal in $\C[\beta^{-1}]$
and $\C[\beta]$, respectively). They are commutative vertex Poisson
algebras.

\subsection{Conformal blocks and quantum geometric Langlands}

These results offer a particular interpretation of the quantum
geometric Langlands correspondence in the language of 2d CFT. Recall
from \cite{Frenkel} (see also the discussion at the end of Subsection
\ref{glc-overview}) that the fibers of the $D$-modules
$\on{Bun}_{^L\neg G}$ constructed by Beilinson and Drinfeld \cite{BD} can
be identified with the duals of the spaces of conformal blocks of
$V_{-{}^L\neg h^\vee}(\Lfgh)$, which can in turn be described in terms of
certain conformal blocks of the (commutative) classical ${\mc
  W}$-algebra ${\mc W}_\infty(\fg)$ (namely, those conformal blocks
that are algebra homomorphisms ${\mc W}_\infty(\fg) \to \C$).

Motivated by this observation, we propose that one of the
manifestations of quantum Langlands correspondence is an {\em
  isomorphism between conformal blocks of representations from certain
  categories of representations of the two vertex (or chiral)
  algebras: the affine Kac--Moody vertex algebra $V_{^L\neg
    k}(\wh{^L\neg \fg})$ and the ${\mc W}$-algebra ${\mc
    W}_\beta(\fg)$}, provided that $\beta$ and $^L\neg k$ are generic
or rational with $\beta<0$ and satisfy the relation equivalent to
\eqref{dualbeta} and \eqref{kdual}:
\begin{equation}    \label{betak}
\beta = \frac{1}{^L\neg (k+h^\vee)}.
\end{equation}
We stress that this is not the only manifestation of the quantum
Langlands correspondence, but it is one that fits well with the
isomorphism of $q$-deformed conformal blocks established in this paper
(see Subsection \ref{GL} for a brief discussion of the links with
other approaches).

Here, and in what follows, ``generic'' means a complex number that is
not rational. However, we expect that most of our results and
conjectures below also hold for those rational $^L\neg k$ that are
less than $-{}^L\neg h^\vee$ (which is equivalent to $\beta<0$ under
the relation \eqref{betak}). We will refer to such $^L\neg k$ as
negative rational.

Let us define precisely the two categories of representations
mentioned above. Representations of the vertex algebra $V_{^L\neg
  k}(\wh{^L\neg \fg})$ are the same as representations of the affine
Lie algebra $\Lfgh$ of level $^L\neg k$ satisfying a finiteness
condition: every vector is annihilated by the Lie subalgebra $z^N
\cdot {}^L\neg \fg[[z]]$ for sufficiently large $N$. We denote the
category of such representations by $\Lfgh_{^L\neg k}$-mod. Let
$\Lfgh_{^L\neg k}$-mod$^0$ be the category of those representations of
$\Lfgh$ of level $^L\neg k$ on which the action of the Lie subalgebra
$^L\neg \fg[[z]]$ can be exponentiated to the corresponding Lie group
$^L\neg G[[z]]$. This is the same as a full subcategory of the usual
category ${\mc O}$ of $\Lfgh$ of level $^L\neg k$ whose objects are
the representations whose restriction to the constant Lie subalgebra
$^L\neg \fg$ decomposes as a direct sum of finite-dimensional
representations.

We will assume that $^L\neg k$ is generic or negative rational. Then
simple objects of this category are labeled by dominant integrals
weights $\lambda \in {}^L\neg P^+$ of $^L\neg \fg$. The simple object
$L_{\lambda,{}^L\neg k}$ corresponding to $\lambda \in {}^L\neg P^+$
is the unique irreducible quotient of the Weyl module over $\Lfgh$ of
level $^L\neg k$ induced from the irreducible representation of
$^L\neg \fg$ with highest weight $\lambda$ (note that for any dominant
integral weight, $L_{\lambda,{}^L\neg k}$ is the Weyl module itself if
$^L\neg k$ is not a rational number). The category $\Lfgh_{^L\neg
  k}$-mod$^0$ can be defined as the full subcategory of $\Lfgh_{^L\neg
  k}$-mod whose objects are representations with irreducible
subquotients of this form. Imposing an additional property that
representations have finitely many irreducible subquotients (i.e.,
have finite composition series), we obtain the category extensively
studied by Kazhdan and Lusztig, who in particular defined the
structure of a braided tensor category on it, see \cite{KL}. This will
be our category of representations of the vertex algebra $V_{^L\neg
  k}(\wh{^L\neg \fg})$.

Note that if $^L\neg k$ is generic in the above sense, then the category
$\Lfgh_{^L\neg k}$-mod$^0$ is a semi-simple abelian category that is
equivalent to the category of finite-dimensional representations of
$^L\neg \fg$.

Next, we define a subcategory of the category of representations of
the ${\mc W}$-algebra ${\mc W}_\beta(\fg)$. Recall that the quantum
Drinfeld--Sokolov reduction yields a functor \cite{FF,FKW} from the
category $\Lfgh_{^L\neg k}$-mod to an analogous category of modules
over ${\mc W}_{^L\neg \beta}({}^L\neg \fg)$, which is isomorphic ${\mc
  W}_\beta(\fg)$ if $\beta$ satisfies \eqref{betak} \cite{FF}. We will
henceforth denote the latter category by ${\mc W}_\beta(\fg)$-mod and
the Drinfeld--Sokolov reduction functor by $H^{^L\neg
  \fg}_{\on{DS}}$. This functor sends a $\Lfgh_{^L\neg k}$-module $M$
to the semi-infinite cohomology of $^L\neg \n(\!(z)\!)$ with
coefficients in $M$ tensored with the same non-degenerate character
that was used to define ${\mc W}_{^L\neg \beta}({}^L\neg \fg)$. We
denote it by $H^{^L\neg \fg}_{\on{DS}}(M)$.

It follows from the results of Arakawa \cite{Arakawa} that for generic
$k$ the functor $H^{^L\neg \fg}_{\on{DS}}$ is exact on $\Lfgh_{^L\neg
  k}$-mod$^0$ (see also \cite{Raskin}).

Let now ${\mc W}_\beta(\fg)$-mod$^0$ be the full subcategory of ${\mc
  W}_\beta(\fg)$-mod whose objects have finite composition series with
irreducible subquotients being the modules $H^{^L\neg
  \fg}_{\on{DS}}(L_{\lambda,{}^L\neg k}), \lambda \in {}^L\neg
P^+$. These modules are irreducible and non-zero, according to
\cite{Arakawa1}. This will be our category on the ${\mc W}$-algebra
side.

There is an alternative way to describe the simple objects of ${\mc
  W}_\beta(\fg)$-mod$^0$, in terms of the quantum Drinfeld--Sokolov
reduction of $\ghat_k$ rather than $\Lfgh_{^L\neg k}$. Namely, instead
of applying the quantum Drinfeld--Sokolov reduction functor $H^{^L\neg
  \fg}_{\on{DS}}$ to $L_{\lambda,{}^L\neg k}, \lambda \in {}^L\neg
P^+$, we apply the quantum Drinfeld--Sokolov reduction
$H^{\fg,\lambda}_{\on{DS}}$, with the standard character twisted by
the element $\lambda(z) \in H(\!(z)\!)$, to the vacuum module
$L_{0,k}$ over $\ghat_k$. Using the methods of \cite{FF}, one can
show\footnote{{\em Note added in proof}: This has been proved in
  T. Arakawa and E. Frenkel, Quantum Langlands duality of
  representations of ${\mathcal W}$-algebras, arXiv:1807.01536.}
that for generic $k$ we have
\begin{equation}    \label{iso W}
H^{\fg,\lambda}_{\on{DS}}(L_{0,k}) \simeq H^{^L\neg
  \fg}_{\on{DS}}(L_{\lambda,{}^L\neg k}),
\end{equation}
Namely, using the free field realization of the ${\mc W}$-algebras
along the lines of \cite{FF}, we can construct the modules on the two
sides of \eqref{iso W} as zeroth cohomologies of a finite BGG-type
resolution and use it to see that their characters coincide. We expect
\eqref{iso W} to be true for negative rational ${}^L\neg k$ as well.

From the point of view of ${\mc W}_\beta(\fg)$, these simple modules
correspond to the ``magnetic'' vertex operators, while from the point
of view of ${\mc W}_{^L\neg \beta}({}^L\neg \fg)$ they correspond to
the ``electric'' vertex operators (the duality of ${\mc W}$-algebras
exchanges electric and magnetic vertex operators). We find it
convenient to view quantum Drinfeld--Sokolov reduction $H^{^L\neg
  \fg}_{\on{DS}}$ as a functor from the category $\Lfgh_{^L\neg
  k}$-mod$^0$ to the category ${\mc W}_\beta(\fg)$-mod$^0$, which
sends irreducible modules to irreducible ``magnetic'' modules (the
reason for this will become clear in Subsection \ref{a q-def}).

We conjecture that this functor is in fact an equivalence of
braided tensor categories. This agrees with the fact known in 2d CFT
that magnetic vertex operators for the ${\mc W}$-algebra and vertex
operators corresponding to the Weyl modules over $\Lfgh_{^L\neg k}$
braid as representations of the quantum group $U_{q'}({}^L\neg g)$,
where $q' = e^{2\pi i/{}^L \neg(k+h^\vee)}$.

The fusion tensor product of modules over a given vertex algebra has
been defined by Huang and Lepowsky (see \cite{HL} and references
therein). Although ${\mc W}_\beta(\fg)$ does not satisfy the
conditions of \cite{HL}, various results from 2d CFT suggest that the
fusion tensor product endows ${\mc W}_\beta(\fg)$-mod$^0$ with the
structure of a braided tensor category. This leads us to the following

\begin{conjecture}    \label{cateq}
The category ${\mc W}_\beta(\fg)$-mod$^0$ is a braided tensor category
(with respect to the fusion tensor product) which is equivalent to
the category $\Lfgh_{^L\neg k}$-mod$^0$ as a braided tensor category if
$\beta$ and $^L\neg k$ satisfying equation \eqref{betak} with $\beta$
generic or negative rational.
\end{conjecture}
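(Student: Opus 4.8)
The plan is to reduce the statement to a comparison of braidings, both controlled by the monodromy of Knizhnik--Zamolodchikov-type equations on the corresponding spaces of conformal blocks, and then to invoke the conformal limit of the identification \eqref{linear}. Granting Conjecture \ref{eqabel}, the functor $H_{\on{DS}}$ is already an equivalence of abelian categories; since $\Lfgh_{^Lk}$-mod$^0$ is semisimple for generic $^Lk$ and equivalent to the category of finite-dimensional $^L\fg$-modules, the category ${\mc W}_\beta(\fg)$-mod$^0$ is semisimple with the same simple objects and the same finite-dimensional Hom-spaces. In this semisimple setting one would first construct the fusion tensor product on ${\mc W}_\beta(\fg)$-mod$^0$ directly from the bundles of conformal blocks on the moduli of genus zero curves with marked points together with their flat, regular BPZ/${\mc W}$-type connection, bypassing the hypotheses of \cite{HL} that ${\mc W}_\beta(\fg)$ fails to satisfy; associativity and the hexagon relations then follow from the flatness of this connection and from its behaviour along the boundary strata, exactly as in the Kazhdan--Lusztig treatment \cite{KL} of the affine Kac--Moody case.

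It remains to prove that $H_{\on{DS}}$ is a \emph{braided} tensor functor. Compatibility with the tensor product follows from the fact that quantum Drinfeld--Sokolov reduction of a family of $\Lfgh_{^Lk}$-modules over the moduli of punctured curves carries conformal blocks of $\Lfgh_{^Lk}$ to conformal blocks of ${\mc W}_\beta(\fg)$, intertwining operator product expansions and hence fusion products; exactness of $H_{\on{DS}}$ on $\Lfgh_{^Lk}$-mod$^0$ (Arakawa \cite{Arakawa}) ensures that no cohomology is lost in this identification. Compatibility with the braiding is where the results of this paper enter. On the electric side the braiding of $\Lfgh_{^Lk}$-mod$^0$ is, by Kazhdan--Lusztig, the monodromy of the KZ equations \eqref{KZ}, which is obtained in the limit \eqref{limitwzw} from the monodromy of the qKZ equations \eqref{qKZ}. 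On the magnetic side the braiding of ${\mc W}_\beta(\fg)$-mod$^0$ is the monodromy of the scalar ${\mc W}$-type differential equations, which is obtained in the same limit from the scalar difference equations satisfied by the vertex functions, see \eqref{almost} and \cite{OK}. The identification \eqref{linear} conjugates the one monodromy representation into the other, the elliptic stable envelopes of \cite{ese} degenerating in the conformal limit to the pseudo-constant pole-subtraction matrices ${\fP}_{\fC}$ of Section~\ref{mycon}, which tend to constants depending only on $q' = e^{2\pi i / {}^L(k+h^\vee)}$. Hence the two braidings agree under $H_{\on{DS}}$, giving the conjecture for simply-laced $\fg$.

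The main obstacle will be twofold. First, one needs a genuinely rigorous theory of conformal blocks for ${\mc W}_\beta(\fg)$ on curves of arbitrary genus --- and in particular on $\bP^1$ with marked points, rather than on the cylinder ${\mc C}\cong\Ct$ treated here, the plane being recovered from the cylinder by letting its radius tend to infinity --- together with the vanishing of self-extensions among the modules $H_{\on{DS}}(L_{\lambda,{}^Lk})$ that is the content of Conjecture \ref{eqabel}. Second, the conformal limit must be made precise: the main theorem of this paper concerns solutions of $q$-difference equations on $\Ct$, and passing to $q\to 1$ while keeping control of the monodromy, and of the degeneration of elliptic stable envelopes to their cohomological counterparts, has so far been carried out only in examples such as the $\fg=sl_2$ computation of Section~\ref{mycon}. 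For non-simply-laced $\fg$ the argument is in addition conditional on the folding conjectures of Section~\ref{s_steps_ese}.
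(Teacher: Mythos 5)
The statement you are proving is labeled a \emph{conjecture} in the paper, and the paper offers no proof of it: the authors explicitly say only that ``various results from 2d CFT suggest'' the fusion product endows ${\mc W}_\beta(\fg)$-mod$^0$ with a braided tensor structure, and they present the conjecture as motivated by, but not deduced from, their $q$-deformed results. So there is no proof in the paper to compare yours against. What you have written is an honest and essentially accurate elaboration of the paper's own heuristic: abelian equivalence from Conjecture \ref{eqabel}, fusion compatibility from the behaviour of $H_{\on{DS}}$ on conformal blocks, and braiding compatibility from the monodromy identification encoded in \eqref{linear}, with the elliptic stable envelopes degenerating to the pseudo-constant matrices of Section \ref{mycon}. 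You also correctly flag the two main obstacles (a rigorous fusion/conformal-block theory for ${\mc W}_\beta(\fg)$, and a rigorous conformal limit). To that extent the proposal is a reasonable research plan rather than a proof, and you should present it as such.

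There is one genuine gap you do not address: the parameter mismatch between \eqref{betak} and \eqref{betak1}. The conformal limit of the identification \eqref{linear} proved in the paper lands at $\beta = 1/{}^L(k+h^\vee) + m$, i.e.\ at relation \eqref{betak1}, whereas Conjecture \ref{cateq} is stated at $\beta = 1/{}^L(k+h^\vee)$, i.e.\ at \eqref{betak}. Your monodromy argument therefore directly produces an identification of braidings only at the shifted value of $\beta$. To transport it to \eqref{betak} you need the paper's observation that, for dominant integral weights, the rank-one local system defined by ${\cal I}(x,a)$ --- and hence the monodromy representation on its twisted homology --- is unchanged under $\beta \mapsto \beta + Nm$; without invoking this (or the expected validity of the conjecture along the whole family \eqref{betakN}), your chain of identifications does not reach the statement as written. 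A second, smaller point: the paper's theorems concern insertions from fundamental representations on the cylinder ${\mc C}\cong\Ct$ in genus zero, while a braided tensor equivalence requires control of associativity and braiding isomorphisms for arbitrary objects of the categories and arbitrary iterated fusions; passing from the computed monodromy data on these particular weight spaces to a functorial statement about the whole categories is itself a nontrivial step (essentially a Kazhdan--Lusztig-type reconstruction on the ${\mc W}$-side) that your sketch assumes rather than supplies.
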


A more precise statement is that this should be an equivalence of
chiral categories (see \cite{Raskin:ch} for the definition).

We view Conjecture \ref{cateq} as a purely algebraic manifestation of
the quantum Langlands correspondence. (A closely related
Gaitsgory--Lurie conjecture discussed in Subsection \ref{GL} below has
the algebraically defined category $\Lfgh_{^L\neg k}$-mod$^0$ on one
side and a Whittaker category, which is defined in geometric terms, on
the other side; here the role of the Whittaker category is played by an
algebraically defined category ${\mc W}_\beta(\fg)$-mod$^0$ as
well.) It is a local statement, but it implies non-trivial global
statements: namely, isomorphisms of the spaces of conformal blocks of
representations from the categories $\Lfgh_{^L\neg k}$-mod$^0$ and
${\mc W}_\beta(\fg)$-mod$^0$.

Recall that for any vertex algebra $V$ and a collection of $V$-modules
$M_1,\ldots,M_n$ attached to points $p_1,\ldots,p_n$ of a smooth
projective algebraic curve ${\mc C}$, one can define the vector space
of conformal blocks $C_V({\mc C},(p_i),(M_i))$ (see \cite{FB}, Section
10.1). In the case of the vertex algebra $V = V_k(\ghat)$, this is the
standard definition of the space of conformal blocks for $\ghat$ (see,
e.g., \cite{Frenkel}).

Suppose that ${\mc C} = {\mathbb C}{\mathbb P}^1$. If $M_1,\ldots,M_n$
are objects of a braided tensor category with respect to a fusion
tensor product $\otimes$, then the space $C_V({\mathbb C}{\mathbb
  P}^1,(p_i),(M_i))$ can be expressed as a Hom of this category:
$$
C_V({\mathbb C}{\mathbb P}^1,(p_i),(M_i)) \simeq \on{Hom}_V(M_1
\otimes \ldots \otimes M_{n-1},M_n^\vee).
$$

Therefore Conjecture \ref{cateq} implies (at least, for ${\mc C} =
{\mathbb C}{\mathbb P}^1$, and this should be true for all ${\mc C}$
if Conjecture \ref{cateq} is true at the level of chiral categories):

\begin{conjecture}    \label{isomblocks}
There are isomorphisms of the spaces of conformal blocks
\begin{equation}    \label{isomcb}
C_{V_{^L\neg k}(\wh{^L\neg \fg})}({\mc C},(p_i),(L_{\lambda_i,{}^L\neg k}))
\simeq
C_{{\mc W}_\beta(\fg)}({\mc C},(p_i),(H_{\on{DS}}(L_{\lambda_i,{}^L\neg k})))
\end{equation}
provided that the parameters satisfy the conditions of Conjecture
\ref{cateq}.
\end{conjecture}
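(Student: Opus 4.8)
The plan is to obtain \eqref{isomcb} in two complementary ways: abstractly, as a consequence of Conjecture \ref{cateq}, and concretely, by taking the conformal limit of the $q$-deformed identification \eqref{linear} established in the simply-laced case.

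First I would reduce to $\mc C = \mathbb{C}\mathbb{P}^1$. For a braided tensor category and modules $M_1,\dots,M_n$ placed at marked points of $\mathbb{C}\mathbb{P}^1$, the space of conformal blocks is a $\Hom$ space, so
\[
C_{V_{^Lk}({}^L\fg)}(\mathbb{C}\mathbb{P}^1,(p_i),(L_{\lambda_i,{}^Lk})) \;\cong\; \Hom\big(L_{\lambda_1,{}^Lk}\otimes\cdots\otimes L_{\lambda_{n-1},{}^Lk},\, L_{\lambda_n,{}^Lk}^\vee\big)
\]
in $\Lfgh_{^Lk}$-mod$^0$, and likewise on the ${\mc W}_\beta(\fg)$ side after applying $H_{\on{DS}}$. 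Since $H_{\on{DS}}$ is, by Conjecture \ref{cateq}, an equivalence of braided tensor categories, it carries one $\Hom$ space isomorphically onto the other; this is \eqref{isomcb} for rational curves. The higher-genus case would then follow from the $\mathbb{C}\mathbb{P}^1$ case by the standard gluing/factorization of conformal blocks together with the compatibility of $H_{\on{DS}}$ with the fusion product.

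Second, and this is the part one can actually carry out for simply-laced $\fg$, I would take the limit $q,t\to 1$ with $\beta$ fixed (as in \eqref{limitw}) of the identification \eqref{linear}. On the electric side the qKZ equations degenerate to the KZ equations \eqref{KZ}, and the $q$-conformal blocks of $U_\hbar(\Lfgh)$ become the usual WZW conformal blocks of $\Lfgh$ at level $^Lk$; on the magnetic side ${\mc W}_{q,t}(\fg)$ degenerates to ${\mc W}_\beta(\fg)$, and, after the rescalings of generators indicated in Section \ref{mycon}, the vertex-function definition of deformed blocks passes to the free-field integral realization of conformal blocks of ${\mc W}_\beta(\fg)$ of Schechtman--Varchenko/Feigin--Frenkel type. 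The transition matrix in \eqref{linear} is built from elliptic stable envelopes, whose entries are pseudo-constants under $q$-shifts; in the limit they become a genuine constant, invertible matrix depending only on $q'=\exp(2\pi i/{}^L(k+h^\vee))$, so it identifies the two solution spaces. It remains to match representations: one checks that $H_{\on{DS}}$ sends the integrable module $L_{\lambda_i,{}^Lk}$ to exactly the ${\mc W}_\beta(\fg)$-module carrying the degenerate vertex operator $V^\vee$ of \eqref{magnetic}, and that the Fock highest weights $\mu,\mu'$ are the $H_{\on{DS}}$-images of $\lambda,\lambda'$, as in Section \ref{s_statement}. This yields \eqref{isomcb} with both sides realized as explicit spaces of KZ/free-field integrals, bypassing the abstract categorical input; an essentially self-contained variant uses the Wakimoto free-field resolution on the Kac--Moody side and the screening-charge definition of ${\mc W}_\beta(\fg)$, so that both sides become Coulomb-gas integrals with dual integrands related by the Langlands duality of ${\mc W}$-algebras (Theorem \ref{Walg}, $\beta\,{}^L\beta=m$) and by the exchange of the two families of screening currents, the degenerate stable envelopes matching the admissible cycles.

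The main obstacle is twofold. Technically, quantum $K$-theory has no enumerative conformal limit --- only the integral/generating-function formulas survive --- so one must justify that the $q\to 1$ limits of the vertex functions, of the elliptic-stable-envelope transition matrix, and of the contours of integration all exist and stay compatible; the analytic control of the Mellin--Barnes contours and the pseudo-constancy of $\fP_{\fC}$ is what makes this possible. More seriously, the clean categorical route rests on Conjecture \ref{cateq}, whose hard content --- vanishing of $\on{Ext}^1$ between the modules $H_{\on{DS}}(L_{\lambda,{}^Lk})$ and compatibility of $H_{\on{DS}}$ with the fusion tensor product --- is not supplied by the geometric ($q$-deformed) machinery and would require separate input (exactness of $H_{\on{DS}}$ on $\Lfgh_{^Lk}$-mod$^0$, and a Huang--Lepowsky-type tensor structure on ${\mc W}_\beta(\fg)$-mod$^0$). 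Finally, for non-simply-laced $\fg$ the argument needs the folded elliptic stable envelopes discussed in Section \ref{s_Nakajima}, which remain to be constructed.
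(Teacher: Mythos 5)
Your proposal follows essentially the same two routes as the paper: the formal implication from Conjecture \ref{cateq} via the $\Hom$-space description of genus-zero blocks, and the concrete identification of both spaces of conformal blocks with the same twisted homology space of integration cycles in the Schechtman--Varchenko/free-field integral representations --- the latter being exactly how the paper ``in effect proves'' the conjecture for ${\mc C}$ of genus zero and generic parameters. One point you should make explicit in the second route: the conformal limit of \eqref{linear} (with $t=q^m/\hbar$) lands on the \emph{shifted} relation $\beta = 1/{}^L(k+h^\vee)+m$, i.e.\ \eqref{betak1}, not on the relation \eqref{betak} required by Conjecture \ref{cateq}; the paper bridges this by observing that, since all $\lambda_i$ are dominant integral weights of $^L\fg$, the rank-one local system defined by ${\cal I}(x,a)$ is unchanged under $\beta \mapsto \beta + Nm$, so the homology spaces --- and hence the isomorphism \eqref{isomcb} --- are insensitive to the shift. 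Without that observation the limit of the $q$-deformed identification only proves the $N=1$ case, and the explicit covector $\langle W|$ genuinely does not exist at $N=0$, as the paper's two-point example shows.
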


In the case of ${\mc C} = {\mathbb C}{\mathbb P}^1$, then for generic
$^L\neg k$ the isomorphisms \ref{isomcb} can indeed be constructed
using the integral representation of the spaces of conformal blocks,
as we discuss in Subsections \ref{int} and \ref{exp}. This gives us a
concrete way to prove Conjectures \ref{isomblocks} and \ref{cateq},
and more general Conjectures \ref{isomblocks1} and \ref{cateq1} below.

\subsection{A $q$-deformation}    \label{a q-def}

At this point, it is natural to ask to what extent it is necessary to
invoke the dual Lie algebra in the above conjectures. Indeed, using
the duality of ${\mc W}$-algebras \cite{FF} (see Theorem \ref{Walg}),
we can replace ${\mc W}_\beta(\fg)$ by ${\mc W}_{^L\neg
  \beta}({}^L\neg \fg)$ with $^L\neg \beta = m/\beta$ in Conjectures
\ref{cateq} \ref{isomblocks}. So, at first glance it may appear that
the above results and conjectures can be accounted for by the
Drinfeld--Sokolov reduction alone, and that there is no need to invoke
the Langlands dual Lie algebra (in the same way, one would tend to
dismiss the appearance of $^L\neg\fh$ in the Harish-Chandra
homomorphism, as we explained at the beginning of this section).

However, there are two reasons why Langlands duality is relevant
here. First, as we already explained at the beginning of this section,
{\em the isomorphism of conformal blocks ${\mc W}_\beta(\fg)$ and
  $\wh{^L\neg \fg}$ can be $q$-deformed}, and after the
$q$-deformation the appearance of $\fg$ can't be written off because
there is no longer an isomorphism between the deformed ${\mc
  W}$-algebras associated to $\fg$ and $^L\neg \fg$ (if $^L\neg \fg
\neq \fg$). It is really ${\mc W}_{q,t}(\fg)$ that appears in our
isomorphism of $q$-deformed conformal blocks, and not ${\mc
  W}_{t,q}({}^L\neg \fg)$ (unless $\fg={}^L\neg
\fg$). Furthermore, our isomorphism involves the deformations of the
magnetic vertex operators over the ${\mc W}$-algebra of $\fg$, and
these are no longer equal to the deformations of the electric vertex
operators of the ${\mc W}$-algebra of $^L\neg\fg$. Thus, the
appearance of the Langlands dual Lie algebras becomes more meaningful
after the $q$-deformation (similarly to how the appearance of the
Langlands dual Lie algebra in the Harish-Chandra isomorphism becomes
more meaningful after affinization). This suggests that it is fruitful
to view the isomorphism between conformal blocks at $q=1$, and the
corresponding equivalences of categories, in the light of Langlands
duality as well.

The second reason is that actually Conjectures \ref{cateq} and
\ref{isomblocks} are special cases of more general conjectures
corresponding to the generalized dualities $T^NS$ of the group
$PSL_2(\Z)$ familiar from 4d gauge theory (the standard Langlands
duality corresponds to $S$, i.e. $N=0$). But to apply this duality we
must first apply the duality $S$, exchanging $\fg$ and $^L\neg\fg$,
and then apply $T^N$ (which preserves $\fg$ and
$^L\neg\fg$). Therefore, if we wish to look at the dualities $T^NS$
with $N\neq 0$, then using the dual Lie algebra is necessary already
at $q=1$.

In fact, and this is a crucial point, the isomorphism of conformal
blocks obtained from our canonical isomorphism of $q$-deformed
conformal blocks in the limit $q \to 1$ corresponds not to the
standard relation \eqref{betak} but to the relation
\begin{equation}    \label{betak1}
\beta = \frac{1}{^L\neg(k+h^\vee)} + m,
\end{equation}
where $m$ is the lacing number of $\fg$. Indeed, this is the relation
we obtain when we take the limit $q\to 1$ in the relation
\eqref{first} between the parameters of the algebras ${\mc
  W}_{t,q}({}^L\neg \fg)$ and $U_{\hbar}(\wh{^L\neg \fg})$ using
equation \eqref{zero}.

Formula \eqref{betak1} differs from formula \eqref{betak} in the shift
of $\beta$ by $m$. This shift corresponds to applying, in addition to
the standard Langlands duality $S$, the quantum Langlands duality
$T$. Let us recall how the dualities $T$ and $S$ act on the
parameters of 4d gauge theory.

The duality $S$ exchanges the gauge groups $G$ and $^L\neg G$ (and
hence the corresponding Lie algebras) and acts on the 4d gauge theory
coupling constant $\tau$ as
$$
S: \tau \mapsto -1/m\tau.
$$
The duality $T$ preserves the gauge group and acts on $\tau$ as
$$
T: \tau \mapsto \tau+1
$$
(it is well-defined if $G$ is simply-connected, which we will now
assume to be the case; in general, only certain powers of $T$ are
well-defined). These two dualities generate a subgroup of $PSL_2(\Z)$
(see \cite{KW}).

The connection to our parameters is as follows:
$$
\tau = \beta/m, \qquad {}^L\neg\tau = - {}^L\neg(k+h^\vee).
$$
Hence, formula \eqref{betak1} is equivalent to
\begin{equation}    \label{TS}
\tau = -1/m{}^L\neg\tau + 1 = TS({}^L\neg\tau).
\end{equation}
In order to interpret the relation between conformal blocks
corresponding to $\beta$ and $^L\neg k$ related via formula
\eqref{betak1}, we need a generalization of Conjectures \ref{cateq}
and \ref{isomblocks} in which we replace the relation \eqref{betak}
corresponding to the duality $S$ with \eqref{betak1} corresponding to
$TS$.

We will consider an even more general relation corresponding to the
duality $T^N S$:
\begin{equation}    \label{betakN}
\beta = \frac{1}{^L\neg (k+h^\vee)} + Nm, \qquad N \in \Z,
\end{equation}
and the following conjectures:

\begin{conjecture}    \label{cateq1}
The categories ${\mc W}_\beta(\fg)$-mod$^0$ and $\Lfgh_{^L\neg
  k}$-mod$^0$ are equivalent as braided tensor categories (or chiral
categories) if $\beta$ and $^L\neg k$ satisfy equation \eqref{betakN}
with $\beta$ generic or negative rational.
\end{conjecture}

\begin{conjecture}    \label{isomblocks1}
There are isomorphisms \eqref{isomcb} of the spaces of conformal
blocks provided that the parameters satisfy the conditions of
Conjecture \ref{cateq1}.
\end{conjecture}

Now, the isomorphism \eqref{isomcb} with $\beta$ and $^L\neg k$
related by formula \eqref{betak1} is precisely the $q \to 1$ limit of
the canonical isomorphism of $q$-deformed conformal blocks which we have
conjectured in this paper and established in the simply-laced case. It
is in this sense that we can view our isomorphism as a manifestation
of a $q$-deformation of the quantum geometric Langlands.

\subsection{Connection with the Gaitsgory--Lurie
  conjecture}    \label{GL}

Conjecture \ref{cateq} is related to a conjecture of Gaitsgory and
Lurie (proved by Gaitsgory in \cite{gaitsW} for generic parameters; see
also \cite{gaitsQ}) stating an equivalence of two braided tensor
categories (or chiral categories in the terminology used in
\cite{gaitsW,gaitsQ,Raskin:ch}). In our notation, one of them is the
above category $\wh{^L\neg \fg}_{{}^L\neg k}$-mod$^0$ (which is
denoted by $\operatorname{KL}_{\check{G}}^{\check{c}}$ in
\cite{gaitsQ}). The other is the ``Whittaker category'' denoted by
$\operatorname{Whit}^c(\operatorname{Gr}_G)$ in \cite{gaitsQ}.

Combining Conjecture \ref{cateq} with the theorem of \cite{gaitsW}
(the Gaitsgory--Lurie conjecture for generic $c$), we obtain

\begin{conjecture} \label{lurieW}
  The categories $\operatorname{Whit}^c(\operatorname{Gr}_G)$ and
  ${\mc W}_c(\fg)\on{-mod}^0$ are equivalent as braided tensor (or
  chiral) categories for generic and negative rational $c$.
\end{conjecture}

We note that both categories have simple objects labeled by $\lambda
\in {}^L\neg P^+$, and they should correspond to each other under this
equivalence. There is also a natural functor from
$\operatorname{Whit}^c(\operatorname{Gr}_G)$ to ${\mc
  W}_c(\fg)$-mod. Indeed, according to the definition given in
\cite{gaitsQ}, $\operatorname{Whit}^c(\operatorname{Gr}_G)$ is
$\on{Whit}({\mc D}_k(\on{Gr}_G)\on{-mod})$, the category of
$({}\n(\!(z)\!),\chi)$-equivariant objects in the category of twisted
${\mc D}$-modules on the affine Grassmannian $\on{Gr}_G$ \cite{gaitsQ}
(here $\chi$ is the ``Whittaker functional'' used in the quantum
Drinfeld-Sokolov reduction, and the twisting parameter should be, in
our notation, the level $k$ such that $c=m(k+h^\vee)$). The functor of
global sections on $\on{Gr}_G$ then yields a functor from the latter
category to $\operatorname{Whit}(\ghat_k\on{-mod})$, which is
equivalent to the category ${\mc W}_c(\fg)$-mod according to the
results of \cite{Raskin}.

%Furthermore the conjecture is compatible with the limits
%$\beta \to 0$ and $\beta \to \infty$ (for
%$\operatorname{Whit}^c(\operatorname{Gr}_G)$, these limits are
%described in \cite{gaitsQ}).

Conjecture 4.5 of \cite{gaitsQ} links the statement of the
Gaitsgory--Lurie conjecture to the global quantum Langlands
correspondence discussed in Section \ref{glc-def} above. Therefore,
Conjecture \ref{lurieW} provides a link between our Conjecture
\ref{cateq} and the global quantum Langlands correspondence.

\subsection{Integral representation of conformal
  blocks}    \label{int}

Conjecture \ref{isomblocks} in genus 0 can be tested using the
integral formulas for the conformal blocks of affine Kac--Moody
algebras obtained by Schechtman and Varchenko \cite{SV} (as solutions
of the KZ equations). These formulas can also be obtained using the
free field (Wakimoto) realization of $\Lfgh$, see
\cite{ATY,FFR,EFK}. In this section we compare these formulas to the
integral formulas for conformal blocks of ${\mc W}$-algebras. In gives
us a concrete interpretation of the limit of our isomorphism of
$q$-deformed conformal blocks as $q \to 1$.

Our notation for the conformal blocks will be similar to the notation
we used for the $q$-deformed conformal blocks. Namely, we have a
vertex operator ${\Phi}_{{}^L\neg \rho_i}(a_i)$ corresponding to a
finite-dimensional representation $^L\neg \rho_i$ of $^L\neg \fg$ of dominant
integral highest weight $\lambda_i \in {}^L\neg P^+$ inserted at the point
$a_i \in \pone$, for $i=1,\ldots,n$. Then conformal blocks may be
viewed as (multivalued) functions of the $a_i$ with values in a weight
space in the tensor product $\otimes_i {}^L\neg \rho_i$. This weight is
given by the same formula as \eqref{weight} (here we use a slightly
different notation; in particular, we denote the simple roots of
$^L\neg \fg$ by $\al_i$):
\begin{equation}
\gamma = \sum_{i=1}^n {}\lambda_i - \sum_{j=1}^r {}\alpha_{i_j}. 
\end{equation}

In the integral representation, these functions are written as
integrals, over a suitable integration cycle $\Gamma$ (discussed
below), in the space
$$
(\C \backslash \{ a_1,\ldots,a_n \})^r \backslash \on{diag}
$$
with coordinates $x_1,\ldots,x_r$, of a function that is a
product of two factors:

(1) The first factor is the multivalued function, denoted by ${\cal
  I}(x,a)$, which is the product of factors of three types:
$$
(a_i-a_j)^{(\la_i,{}\la_j)/{}^L\neg (k+h^\vee)},
$$
\begin{equation}    \label{mvf}
(a_i-x_j)^{-({}\la_i,{}\al_{i_j})/{}^L\neg (k+h^\vee)}, \qquad
(x_j-x_p)^{(\al_{i_j},{}\al_{i_p})/{}^L\neg (k+h^\vee)}
\end{equation}
(here we use the inner product normalized as in Section
\ref{sec:two}).

(2) The second factor is a rational function $|x_1^{i_1} \ldots
x_r^{i_r}\rangle$ in the $a_i$ and $x_j$ with values in $\left(
  \otimes_i {}^L\neg \rho_i \right)_\gamma$; it can be realized as a
conformal block of the bosonic $\beta\gamma$-system involved in the
free field realization of $\Lfgh$ (see Theorem 4 of
\cite{FFR} as well as \cite{ATY}).

The product of the factors appearing in equation \eqref{mvf} defines a
rank one local system ${\mc L}$ on $(\C \backslash \{ a_1,\ldots,a_n
\})^r \backslash \on{diag}$. For the integral to be well-defined, the
integration cycle $\Gamma$ should be viewed as an element of the $r$th
homology group of
$$
(\C \backslash \{ a_1,\ldots,a_n \})^r \backslash \on{diag}
$$
with coefficients in the dual local system ${\mc L}^*$.

It is known that for generic $^L\neg  k$ the resulting integrals
\begin{equation}    \label{intsol}
\int_\Gamma \; {\cal I}(x,a)  \; |x_1^{i_1} \ldots
x_r^{i_r}\rangle \; dx_1 \ldots dx_r
\end{equation}
(with varying $\Gamma$) span the subspace of highest weight vectors of
the weight space $\left( \otimes_i {}^L\neg \rho_i \right)_\gamma$ with
respect to the diagonal action of $^L\neg \fg$. This may seem puzzling
because if we only had vertex operators ${\Phi}_{{}^L\neg \rho_i}(a_i),
i=1,\ldots,n$, in our set-up, then the space of conformal blocks would
have been isomorphic to the subspace of $^L\neg \fg$-invariant vectors in
$\left( \otimes_i {}^L\neg \rho_i \right)_\gamma$. The explanation is
that we have ``cheated'' a bit because to make this calculation work
we actually need to insert a vertex operator at the point $\infty \in
\pone$ with the lowest weight $-\gamma$ (this is explained in
\cite{FFR}). When we take this into account, the corresponding space
of $^L\neg \fg$-invariant vectors gets identified with the space of highest
weight vectors in $\left( \otimes_i {}^L\neg \rho_i
\right)_\gamma$. (Note that the measure in \eqref{intsol} is different
from the rest of the paper because in this section $x$ is a coordinate
on the complex plane rather than a cylinder. To connect the formulas
in this section to the formulas elsewhere, one should use the change
of variables $x_{\textup{cyl}} = e^{{R} x_{\textup{plane}}}$ and
take $R$ to zero.)

In fact, it follows from the results of Varchenko \cite{Varchenko1}
that for generic $^L\neg k$ the above homology space can be identified with
the space of highest weight vectors in $\left( \otimes_i
  {}^L\neg \rho^{q'}_i \right)_\gamma$ where $\rho^{q'}_i$ is the
representation with the same highest weight $\la_i$
but over the quantum group
$U_{q'}({}^L\neg \fg)$ with ${q'}=e^{2\pi i/{}^L\neg (k+h^\vee)}$. As explained
in \cite{Varchenko1}, these integral formulas may therefore be thought
of as providing a nondegenerate pairing between these spaces of highest
weight vectors, one for the Lie algebra $^L\neg \fg$ and one for the
quantum group $U_{q'}({}^L\neg \fg)$. (As shown in \cite{Varchenko1}, the
fact that these are solutions of the KZ equations can be used to
derive the Kohno--Drinfeld theorem identifying the $R$-matrices of
$U_{q'}({}^L\neg \fg)$ with the ``half-monodromies'' of solutions of the KZ
equations corresponding to exchanging the points $a_i$ and $a_j$. See
also \cite{SV1} and the closely related work by Bezrukavnikov,
Finkelberg, and Schechtman \cite{BFS}.)

It is possible to modify the construction slightly to obtain the
entire weight space $\left( \otimes_i {}^L\neg \rho^{q'}_i \right)_\gamma$
(rather than its subspace of highest weight vectors). For that, we
also insert a vertex operator at the point $0 \in \pone$ as well as a
vertex operator at the point $\infty$ (we assume that $a_i \neq 0$ for
all $i=1,\ldots,n$). If the highest weight of the former is $\la$ and
the lowest weight of the latter is $-\la'$ so that $\la-\la' =
\gamma$, then we can identify our conformal blocks with the matrix
elements
\beq\label{electric1} \langle \lambda'| \;\prod_{i=1}^n {
  \Phi}_{{}^L\neg \rho_i} (a_{i}) \;|\lambda \rangle \eeq
as in formula \eqref{electric} (note that here we switch the points
$0$ and $\infty$ compared to Section \ref{sec:two}; our $\lambda,
\lambda'$ are therefore $\la_\infty, \la_0$ of formula
\eqref{electric}).

The advantage is that we now get solutions that span the entire weight
space $\left( \otimes_i {}^L\neg \rho_i \right)_\gamma$. Indeed, if
$\lambda$ is chosen to be generic, then the space of highest weight
vectors in the tensor product of the Verma module with the highest
weight $\lambda$ and $\otimes_i {}^L\neg \rho_i$ can be identified with
$\otimes_i {}^L\neg \rho_i$. The disadvantage, however, is that we have
to modify formula \eqref{intsol} by inserting an additional factor,
which is a product of powers of the $x_j$ -- this factor comes from
the ``interaction'' of the vertex operator at the point $0$ and the
screening operators. The resulting formula for the conformal blocks
reads
\begin{equation}    \label{intsol1}
\int_\Gamma  \; \prod_{j=1}^r
x_j^{-(\la,{}\al_{i_j})/{}^L\neg (k+h^\vee)}\;  {\cal I}(x,a)  \; |x_1^{i_1} \ldots
x_r^{i_r}\rangle \; dx_1 \ldots dx_r.
\end{equation}
Accordingly, $\Gamma$ is now a cycle in the $r$th homology of
$$
(\C \backslash \{0, a_1,\ldots,a_n \})^r \backslash \on{diag}
$$
with coefficients in the dual local system of the rank one local
system obtained by modifying ${\mc L}$ to include the monodromies
around $0$ specified by the extra factor in \eqref{intsol1}. This
homology space is, according the results of \cite{Varchenko1},
isomorphic to the weight space $\left( \otimes_i {}^L\neg \rho^{q'}_i
\right)_\gamma$.

\bigskip

Now let us discuss conformal blocks of the ${\mc W}$-algebra ${\mc
  W}_\beta(\fg)$, where $\beta$ is related to $^L\neg k$ by formula
\eqref{betak}. We now insert at the points $a_i$ vertex operators of
${\mc W}_\beta(\fg)$ corresponding to the representations
$H_{\on{DS}}(L_{\la_i,{}^L\neg k})$. In the free field realization of
${\mc W}_\beta(\fg)$, these vertex operators are given by the standard
bosonic vertex operators. However, in the same way as in the
Kac--Moody case, we can insert integrals of the screening currents
which commute with the ${\mc W}$-algebra.

As in the case of the deformed ${\mc W}$-algebra, there are two sets
of screening currents: the ``electric'' and ``magnetic'' ones (see
Section 8.6 of \cite{Frenkel}). They are the conformal limits of the
screening currents $S_a(x)$ and $S_a^{\vee}$ corresponding to the roots
and coroots of $\fg$, respectively (see Section
\ref{sec:four}). However, since we only consider the insertions of the
vertex operators corresponding to the representations
$H_{\on{DS}}(L_{\la_i,{}^L\neg k})$, where each $\la_i$ is a dominant
integral coweight of $\fg$ (equivalently, weight of $^L\neg \fg$), only the
magnetic screening currents corresponding to the coroots of $\fg$
(equivalently, roots of $^L\neg \fg$) appear in the formulas for conformal
blocks.

The resulting formula for the conformal blocks, which are the
$q\rightarrow 1$ limits of the deformed blocks given by
\eqref{magnetic}, is
\begin{equation}    \label{Wblock}
\int_\Gamma \; {\cal I}(x,a)  \; \; dx_1 \ldots dx_r
\end{equation}
if we do not include a vertex operator at the point $0$, and
\begin{equation}    \label{Wblock1}
\int_\Gamma \; \prod_{j=1}^r
x_j^{-(\mu,{}\al_{i_j})} \;  {\cal I}(x,a) \; dx_1 \ldots dx_r
\end{equation}
if we do. We include the ${\mc W}$-algebra vertex operator at $0$ with
momentum $\mu$. It is natural to use the definition of the momentum
corresponding to $\fg$ rather than $^L\neg \fg$; for this reason $\mu$ does
not get rescaled by $\beta$ in \eqref{Wblock1}. The powers of the
$x_j$ are the same in \eqref{Wblock1} and \eqref{intsol1}, if we let
$\mu=\lambda/{}^L\neg (k+h^{\vee})$.

The difference between formulas \eqref{intsol} and \eqref{intsol1} on
one side, and \eqref{Wblock} and \eqref{Wblock1} is that the former
take values in $\left( \otimes_i {}^L\neg \rho_i \right)_\gamma$ whereas
the latter are scalar-valued functions. But what matters is that they
are parametrized by the integration cycles $\Gamma$ which belong to
the same homology space. In both case (with or without a vertex
operator at $0$) the spaces of conformal blocks for $\Lfgh$ and for
${\mc W}_\beta(\fg)$ are therefore identified with the {\em same}
homology space.

This enables us to identify the two spaces of conformal blocks, in
effect proving Conjecture \ref{isomblocks} for ${\mc C}$ of genus zero and
generic values of $\beta$ and $^L\neg k$ satisfying the relation
\eqref{betak}.

\bigskip

However, it would be desirable to identify the integral formulas more
directly. By that we mean finding a linear functional (covector)
$\langle W|$ on $\left( \otimes_i {}^L\neg \rho_i \right)_\gamma$ so that
pairing it with the Kac--Moody conformal block \eqref{intsol1} we
would get the conformal block \eqref{Wblock1} of the ${\mc
  W}$-algebra. Morally, $\langle W|$ should be a ``Whittaker-like''
functional (which makes sense since the ${\mc W}$-algebra is obtained
from the affine Kac--Moody algebra via the quantum Drinfeld-Sokolov
reduction that uses a Whittaker functional).

However, by inspecting of formulas \eqref{intsol1} and \eqref{Wblock1}
we can see that such a covector $\langle W|$ does {\em not}
exist. Indeed, for the formulas to match, we need to have
$$
\langle W|x_1^{i_1} \ldots x_r^{i_r}\rangle = 1,
$$
where $|x_1^{i_1} \ldots x_r^{i_r}\rangle$ is the vector appearing in
formula \eqref{intsol1}, but $\langle W|$ should not depend on the
integration variables $x_j$. Explicit formula for $|x_1^{i_1} \ldots
x_r^{i_r}\rangle$ (see, e.g., Theorem 4 of \cite{FFR}) shows that it
goes to $0$ if we take all of the $x_j$ to $\infty$. Therefore the
covector $\langle W|$ satisfying the above formula does not exist.

The results of this paper show nonetheless that such a covector does
exist for the generic $q$-deformation of conformal blocks subject to
the relation \eqref{first} (and it is indeed something like a
Whittaker functional as it represents the identity in the equivariant
K-theory of the corresponding quiver variety). However, the $q \to 1$
limit of this relation is not the standard relation \eqref{betak} but
rather the relation \eqref{betak1} in which $\beta$ is shifted by
$m$. We have conjectured in Conjecture \ref{isomblocks1} that there is
an isomorphism of conformal blocks in this case, and even more general
case of relation \eqref{betakN}, in which $\beta$ is shifted by $Nm$.

Let us discuss this shift in the framework of the above integral
formulas. Recall that the inner product $(\cdot,\cdot)$ on the dual
space to the Cartan subalgebra of $^L\neg \fg$ is normalized in such a
way that the long roots have square norm 2, and so the short roots
have square norm $2/m$ (here $m$ denotes the lacing number of $\fg$
and $^L\neg \fg$, as before). Given that all the $\lambda_i$ are
dominant integral weights of $^L\neg \fg$, we see that the rank one
local system defined by the multivalued function ${\cal I}$ appearing
in the above integral formulas does not change if we shift $\beta$ by
an integer multiple of $m$. Therefore we find that the corresponding
homology groups remain the same, in agreement with Conjecture
\ref{isomblocks1}.

However, the case $N=1$ (relation \eqref{betak1}) turns out to be
special. In this case, we obtain a direct identification of the
integral formulas for the conformal blocks using a covector
$\langle W|$.

In the next subsection, we will give some explicit examples of this
covector.

\subsection{Explicit identification of conformal blocks}    \label{exp}

Let us discuss a concrete example of the identification of conformal
blocks of $\Lfgh$ and ${\mc W}_\beta(\fg)$, with the parameters
satisfying the relation \eqref{betak1}, in the case of $^L\neg \fg =
sl_2$. To simplify our notation, we will denote $^L\neg  k$ by $k$ in this
subsection.

First, suppose there are two points on the complex plane, $a_1$ and
$a_2$, and we insert at each of them the vertex operator of
$\wh{sl_2}$ corresponding to the two-dimensional representation
$\C^2$, in which we choose a basis $\{ v,fv \}$, with $v$ a highest
weight vector and $f$ the standard generator of $sl_2$. We also put a
vertex operator at $\infty$, but for now we will not put a vertex
operator at the point $0$. We will choose the one at $\infty$ to be of
lowest weight $0$, so that the resulting conformal blocks take values
in the subspace of weight $0$ in $\C^2 \otimes \C^2$. This subspace is
two-dimensional, with a basis $\{ v \otimes fv, fv \otimes v \}$.

Since we are not putting anything at $0$, the space of conformal
blocks is one-dimensional, and can be identified with the space of
highest weight vectors of $\C^2 \otimes \C^2$ -- but viewed as a
representation of $U_{q'}(sl_2)$ (with $q'=e^{2\pi i/(k+2)}$) rather than
$sl_2$.

As we discussed in the previous subsection, this is a general
phenomenon: $\Lfgh$-conformal blocks {\em take values} in the subspace
of highest weight vectors in the tensor product of representations of
$^L\neg \fg$, but the space of conformal blocks itself is isomorphic to the
space of the {\em cycles of integration} that can be identified
\cite{Varchenko1} with the same subspace in the tensor product of the
finite-dimensional representations of the same highest weights, but
taken over the corresponding quantum group $U_{q'}({}^L\neg \fg)$. 

In the case at hand, the integral solution \eqref{intsol} of the KZ
equations is given by the formula
\begin{equation}    \label{2pt}
(a_1-a_2)^{\theta/2} \int_\Gamma \left( \frac{fv \otimes v}{x-a_1} +
  \frac{v \otimes fv}{x-a_2} \right) (x-a_1)^{-\theta}
(x-a_2)^{-\theta} \; dx,
\end{equation}
where
$$
\theta = \frac{1}{k+2}.
$$
There are two things to note:

(1) This solution takes values in the subspace of the weight $0$
subspace, spanned by the vector
$$
fv \otimes v - v \otimes vf,
$$
which is precisely the subspace of highest weight vectors of weight 0
in $\C^2 \otimes \C^2$, as expected (see the discussion in the
previous subsection). This follows from the formula
$$
\int_\Gamma \left( \frac{1}{x-a_1} + \frac{1}{x-a_2} \right)
(x-a_1)^{-\theta} (x-a_2)^{-\theta} \; dx
$$
$$
= -\frac{1}{\theta}
\int_\Gamma d \left( (x-a_1)^{-\theta} (x-a_2)^{-\theta} \right) = 0.
$$

(2) For generic $k$, the first twisted
homology of $(\C \backslash \{ a_1,a_2 \})^2 \backslash \on{diag}$
with coefficients in the rank one local system appearing in formula
\eqref{2pt} is one-dimensional. There is a unique (up to a scalar)
cycle of integration $\Gamma$, which generates this homology group.

Indeed, note that in this case the monodromies around $a_1$ and $a_2$
are the same: $e^{-2\pi i \theta}$. The cycle $\Gamma$ can be chosen
as follows: starting at some point $z$ and going counterclockwise
around $a_1$, coming back to $z$ and then going clockwise around $a_2$
and returning to $z$. When we apply the differential of the standard
twisted homology complex to this contour, the first of the two
contours gives the point $z$ multiplied by $(1-e^{-2\pi i \theta})$,
and the second one gives minus the same expression, so they cancel
each other. As explained in \cite{Varchenko1} in the general case, the
action of the differential can be identified with the action of the
generator $e$ of $U_{q'}(sl_2)$. In this case, it is the action on the
weight $0$ subspace of $\C^2_{q'} \otimes \C^2_{q'}$, where
$\C^2_{q'}$ is the two-dimensional irreducible representation of
$U_{q'}(sl_2)$ (and in general, with the action of a sum of the
generators $e_i$ of the quantum group, acting from the given weight
space to the weight spaces corresponding to the shift of the weight by
$\al_i$). This is why one can identify the homology group with the
space of highest weight vectors.

\medskip

Now, let's see whether we can get a conformal block for the Virasoro
algebra ${\mc W}_\beta(\fg)$ by pairing the above solution with a
covector $\langle W|$.
Set
$$
\langle W| = (a_1-a_2)^{-1/2} ((fv \otimes v)^* - (v \otimes fv)^*).
$$
Applying this functional to the conformal block \eqref{2pt} and using
formula
$$
\frac{1}{x-a_1} - \frac{1}{x-a_2} = \frac{a_1-a_2}{(x-a_1)(x-a_2)},
$$
we get
$$
(a_1-a_2)^{1/2(\theta+1)} \int_\Gamma (x-a_1)^{-(\theta+1)}
(x-a_2)^{-(\theta+1)} \; dx,
$$
which is a conformal block of the Virasoro algebra with the parameter
$$
\beta = \theta + 1 = \frac{1}{k+2} + 1.
$$
Here we recognize the shift of $\beta$ by $1$, as in formula
\eqref{betak1}.

\bigskip

Let us now insert a vertex operator at the point $0$ with generic
(non-integral) highest weight $\lambda$ while inserting a vertex
operator with lowest weight $-\lambda$ at $\infty$. Then we again
obtain conformal blocks with values in the weight 0 subspace of $\C^2
\otimes \C^2$, but now the highest weight condition is dropped. As the
result, the formula for the conformal block becomes
$$
(a_1-a_2)^{\theta/2} \int_\Gamma x^{-\lambda\theta} \left( \frac{v
    \otimes fv}{x-a_2} +
  \frac{fv \otimes v}{x-a_1} \right) (x-a_1)^{-\theta}
(x-a_2)^{-\theta} \; dx.
$$
As in the general formula \eqref{intsol1}, there is an extra factor
$x^{-\lambda\theta}$.

The cycle $\Gamma$ is now in the first homology group of $(\C
\backslash \{ 0,a_1,a_2 \})^2 \backslash \on{diag}$ which is
two-dimensional and can be identified with the weight $0$ subspace of
$\C^2_{q'} \otimes \C^2_{q'}$. The corresponding integrals span the
two-dimensional weight $0$ subspace of $\C^2 \otimes \C^2$.

When we take the pairing with $\langle W|$, we obtain
\begin{equation}    \label{virblock2}
(a_1-a_2)^{(\theta+1)/2} \int_\Gamma x^{-\lambda\theta} (x-a_1)^{-(\theta+1)}
(x-a_2)^{-(\theta+1)} \; dx.
\end{equation}
This as a conformal block of the Virasoro algebra with $\beta =
\theta+1$ and momentum $\mu = \lambda \theta$ at the point $0$.
% (we are
%not rescaling the momentum $\mu$ by $\beta$ here, because for the
%vertex operator at $0$ we take the definition of the momentum that is
%$T$-dual to the parameters of the vertex operators inserted at the
%points $a_i$).

%In general, there will be a similar rescaling $\la \mapsto \mu =
%\la/{}^L\neg (k+h^\vee)$ of the momentum of the vertex operator
%inserted at $0$ when we go from the Kac--Moody conformal blocks to the
%${\mc W}$-algebra conformal blocks.

\bigskip

Let us generalize the above example to the case of $n$ points
$a_1,\ldots,a_n$ with the insertion of the vertex operators
corresponding to the two-dimensional representation $\C^2$. We will
focus on the case of weight $2n-2$ subspace, which corresponds to the
case of a single screening operator.

The analogue of formula \eqref{2pt} is
\begin{equation}    \label{npt}
\prod_{i<j} (a_i-a_j)^{\theta/2} \int_\Gamma \sum_{i=1}^n \;
  \frac{v \otimes \ldots \otimes \underset{i}{fv} \otimes \ldots
    \otimes v}{x-a_i} \; \; \prod_{i=1}^n (x-a_i)^{-\theta} \; dx.
\end{equation}
The cycle $\Gamma$ is an element of the first twisted homology group
of $$(\C \backslash \{ a_1,\ldots,a_n \})^n \backslash \on{diag},$$
which is $(n-1)$-dimensional in this case (and can be identified with
the space of highest weight vectors in the weight $2n-2$ subspace of
$(\C^2_{q'})^{\otimes n}$).

The corresponding covector $\langle W|$ is given by the formula
$$
\langle W| = \prod_{i<j}(a_i-a_j)^{1/2} \sum_{i=1}^n \frac{(v \otimes
  \ldots \otimes \underset{i}{fv} \otimes \ldots \otimes
  v)^*}{\prod_{j \neq i} (a_i-a_j)}.
$$
Taking the pairing of $\langle W|$ and the Kac--Moody conformal block
\eqref{npt} and using the formula
$$
\sum_{i=1}^n \frac{1}{(w-a_i) \prod_{j \neq i} (a_i-a_j)} =
\frac{1}{\prod_{i=1}^n (w-a_i)},
$$
we obtain the Virasoro conformal block
\begin{equation}    \label{virblockn}
\prod_{i<j} (a_i-a_j)^{\beta/2} \int_\Gamma \prod_{i=1}^n
(x-a_i)^{-\beta} \; dx,
\end{equation}
where $\beta=\theta+1$.

For general highest weights $\la_1,\ldots,\la_n$ and multiple
screening operators, the explicit formula for the covector $\langle
W|$ becomes increasingly complicated. However, our general results
about the identification of the deformed conformal blocks guarantee
that such a covector always exists and pairing it with a
conformal block for $\Lfgh$ of level $^L\neg  k$, we obtain the
corresponding ${\mc W}_\beta(\fg)$-conformal block provided that the
parameters are related by formula \eqref{betak1}. This yields an
explicit identification stated in Conjecture \ref{isomblocks}.

\begin{remark}
  In the case $\fg=sl_2$, the Kac--Moody conformal blocks have been
  connected to the Virasoro conformal blocks, if the parameters are
  related by the formula $\beta = 1/(k+2)$, by two different changes
  of variables. In both cases, each representation of $sl_2$ (the
  $^L\neg \rho_{\la_i}$ in the notation of the previous subsection) is
  realized in the space of polynomials in one variable $x_i$, viewed
  the coordinate on the big cell of the flag manifold of $SL_2$. In
  the first approach, these variables $x_i$ are identified with the
  positions $a_i$ of the vertex operators \cite{FGPP,PRY}. In the
  second approach, the change of variables is obtained by deforming
  Sklyanin's separation of variables in the $SL_2$ Gaudin model
  \cite{Stoyan,JT,FGT}. In this case, the Fourier dual variables
  to the $x_i$ appearing on the Kac--Moody side are converted, on the
  Virasoro side, into positions of additional degenerate fields of
  type $\Phi_{1,2}$. It is unknown at present how to generalize these
  changes of variables to the case of arbitrary affine Kac--Moody
  algebras.

  In contrast, here we do not introduce any additional degrees of
  freedom. Rather, as a consequence of our general results on the
  identification of the $q$-{\em deformed} conformal blocks, we obtain
  that there exists a covector $\langle W|$ on the tensor product
  $\otimes_i {}^L\neg \rho_{\la_i}$ of finite-dimensional representations
  of $^L\neg \fg$, such that when we couple it with the corresponding
  $\Lfgh$ Kac--Moody blocks at level $^L\neg k$, we obtain conformal blocks
  of the ${\mc W}$-algebra ${\mc W}_\beta(\fg)$, if $\beta =
  1/{}^L(k+h^\vee) + m$. This provides an explicit identification of
  the two types of conformal blocks.
\end{remark}

\section{Quivers from String Theory}\label{sec:six}
\subsection{3d quiver gauge theory}
The quiver ${\scQ}$ from Section \ref{s_Nakajima} labels a gauge theory in three dimensions with ${\cal N}=4$ supersymmetry.
The ranks of vector spaces $V_a, W_a$ attached to the $a$-th node of the quiver ${\scQ}$ are the ranks of gauge $G_{\scQ}$ and global symmetry groups $G_W$: 
\beq\label{GQW}
G_{\scQ}=\prod_{a}U(d_a), \qquad G_W = \prod_{b}U(m_a).
\eeq
The arrows of the quiver encode the representation in which the matter fields transform. For every pair $a,b$ of nodes connected by a link of the Dynkin diagram we get a hypermultiplet transforming  in bifundamental representation $(d_a, \overline{d_b})$ of under $U(d_a)\times U(d_b)$. There are also $m_a$ hypermultiplets in fundamental representation $d_a$ of the $U(d_a)$ gauge group.  

\subsubsection{}
The Nakajima quiver variety $X$ is the Higgs branch of the gauge
theory. The Kahler parameters of $X$ encode Fayet-Illiopolous (FI)
terms in the gauge theory. The equivariant parameters are the real
masses, induced by weakly gauging $G_W$ symmetry. Both the FI terms
and the real masses get complexified once we compactify the gauge
theory on $S^1$, as we will shortly do.\footnote{There are additional
  parameters needed to define the theory, such as the gauge couplings,
  which are not relevant for us, as they do not affect the partition function.} 
The ${\mathbb C}^\times_{\hbar}$ action that scales the symplectic form on $X$ comes from a $U(1)$ subgroup of $SU(2)_H\times SU(2)_V$ $R$-symmetry group. %($SU(2)_H$ acts on the hypermultiplet scalars, and $SU(2)_V$ on vector multiplets. The $U(1)$ subgroup needed will be spelled out below.)
% R-symmetries are symmetries of the theory that act on the supercharges. For generic $\hbar$, supersymmetry is reduced to ${\cal N}=2$ in three dimensions.), generated by those supercharges that are neutral under this $U(1)$.)

\subsection{Quiver gauge theory from IIB string}

The quiver gauge theory with quiver ${\scQ}$ arises on D3 branes in IIB string theory compactified on 
$$Y \times M_6.
$$
Here, $Y$ is an ADE surface, a resolution of ${\mathbb C}^2/\Gamma_{\fg}$ singularity, where ${\Gamma}_{\fg} $ is a 
discrete group of $SU(2)$ related to $\fg$ by McKay correspondence; $M_6 = {\cal C} \times {\mathbb C}\times {\mathbb C}$ is the six-manifold in \eqref{6man}. The Riemann surface ${\cal C}$ is the same one we used to define the $q$-conformal blocks in Section \ref{s_statement}. 

\subsubsection{}
The ranks of the vector spaces $V_a, W_a$ are determined by the homology classes of 2-cycles in $Y$ that the D3 branes wrap.

Recall the relation of geometry of $Y$ to  representation theory of ${\fg}$:  The vanishing cycles of the ADE singularity are topologically $S^2$'s which intersect according to the Dynkin diagram of $\fg$. Denote the vanishing cycles by $S_a$; their homology classes are the positive simple roots of ${\fg}$, $e_a=[S_a]$. They span $H_2(Y, \mathbb Z)$, which can be identified with the root lattice of $\fg$ (with the norm coming from the intersection form on $Y$.) The weight lattice of $\fg$ is the same as the relative homology group $H_2(Y, \partial Y; {\mathbb Z})$. The latter is spanned by a collection of noncompact cycles $S_a^*$ whose homology classes are the fundamental weights, $w_a =[S_a^*]$. (A cycle in the class of $S_a^*$ is the fiber of the cotangent bundle at a generic point on $S_a$.)

To get the quiver ${\scQ}$, we take a collection of non-compact D3 branes in class $[S^*] \in H_{2}(Y, \partial Y; {\mathbb Z})$, where
\beq\label{NC}[S^*] = \sum_a m_a [S_a^*] , \qquad [S] = \sum_a d_a [S_a]
\eeq
together with a collection of compact D3 branes in the class $[S]\in H_{2}(Y, {\mathbb Z})$. 
In addition to their support in $Y$, the D3 branes are distributed at a collection of points on ${\cal C}$, and on the complex plane in $M_6$, associated with $q$ as the equivariant parameter.

\subsubsection{}

The D3 branes on the compact cycles in homology class $[S]$ in \eqref{NC} support $G_{\scQ}$ gauge fields in  \eqref{GQW}. 
The hypermultiplets in $(d_a, {\overline d_b})$ arise from (zero-modes of) strings at the intersections of cycles in classes $[S_a]$ and $[S_b]$, for $a\neq b$. The intersection number $\#(S_a, S_b) = I_{ab}$ is identified with the incidence matrix $I_{ab}$. Strings at the intersections between $S_a$ and $S_a^*$ cycles give rise to hypermultiplets in $(d_a, {\overline m_a})$ representation of $U(d_a)\times U(m_a)$. The flavor symmetry $G_W$ in \eqref{GQW} is the gauge group of non-compact D3 branes on $[S^*]$ in \eqref{NC}; due to non-compactness, the corresponding gauge fields are frozen. 

\subsubsection{}

D3 branes give rise to a three dimensional gauge theory on $S_{R'}^1\times {\mathbb C}$. The circle $S_{R'}^1$ is not geometric in IIB. It arises due to a stringy effect. 

The D3 branes are located at points on ${\cal C}$, which is a cylinder ${\cal C} = {\mathbb R} \times S^1_{R}$, with a circle of radius $R$. Due to strings which wind around the $S_{R}^1$, there are the infinitely particles in the theory on $\C$. They are labeled by the winding modes on $S^1_R$, which are in turn equivalent to momentum modes on another circle $S^1_{R'}$, with radius $R' = 1/(m_s^2 R)$. 

The three dimensional nature of the theory can be made manifest by $T$-duality. The duality relates IIB on ${S}^1_R$ with IIA on $S^1_{R'}$, and D3 branes in IIB at points on ${S}^1_R$ with D4 branes in IIA wrapping the $S^1_{R'}$; these theories are the same. The winding on ${S}^1_{R}$ corresponds to momentum on the $S^1_{R'}$.

\subsubsection{}
%The positions of compact D3 branes on ${\cal C}$ are the Chern-roots of the vector bundles in \eqref{}; 
The positions of the non-compact D3 branes  on ${\cal C}$ are the $A$-equivariant parameters and the complexified real masses: a D3 brane supported at a point $x= a_i$ on ${\cal C}$ leads to an equivariant parameter with same name. This is also an insertion point of a vertex operator in \eqref{electric} and \eqref{magnetic}.
The positions of compact D3 branes on ${\cal C}$ are dynamical parameters; they are the insertion points of screening charge operators in \eqref{magnetic}.
The Kahler moduli of $X$ are identified with the Kahler moduli of $Y,$ as both correspond to FI parameters in the 3d gauge theory. They determine the weights $\lambda$ in \eqref{electric} and \eqref{magnetic}.

\subsection{Little string theory from IIB string}
The ten-dimensional IIB string on $Y\times M_6$ has many more degrees of freedom than we presently need. There is a smaller theory, which captures the physics relevant for us. It is a six dimensional string theory, ``the little string theory'' with $(2,0)$ supersymmetry on $M_6$. 
\subsubsection{}

The ${\fg}$-type little string theory with $(2,0)$ supersymmetry is
defined as the limit of IIB string theory on $Y\times M_6$.  
The limit corresponds to taking the string coupling $g_s$ to zero, keeping fixed the characteristic mass $m_s$ of the IIB string, $m_s$ and the moduli of the 6d $(2,0)$ theory. (The moduli come from periods of five 2-forms in IIB string compactified on $Y \times M_6$, coming from the triplet of self-dual two-forms of $Y$ and  the two $B$-fields of IIB string, with appropriate normalizations.)
\subsubsection{}
The theory one is left with is a string theory on $M_6$: it contains strings whose tension is $m_s^2$, which are inherited from IIB strings. 
One reflection of the fact one gets a string theory, and not a point particle theory, is that the little string theory has a T-duality symmetry. T-duality relates the ${\fg}$-type $(2,0)$ little string, compactified on a circle of radius ${R}$, with the ${\fg}$-type $(1,1)$ little string theory on a circle of radius $R'=1/(m_s^2 {R})$. (The latter is obtained from IIA string on $Y$, in an analogous $g_s$ to zero limit.) The two string theories are equivalent.

\subsubsection{}
The D3 branes of IIB string on $Y$ give rise to codimension four defects of the little string theory on $M_6$. The theory on the defect D3 branes is the quiver gauge theory with quiver ${\scQ}$. The limit which reduces the 10d IIB string to the 6d little string on $M_6$ does not affect the gauge theory on D3 branes at all. The triplet of FI parameters of the 3d gauge theory, for example, is given by $R$ times the moduli of the little string, coming from the triplet of self-dual three-forms on $Y$. The gauge couplings are $R$ times the modulus originating from the NS B-field. Here, $R$ is the radius of the $S^1_R$ in ${\cal C}$. All these remain finite in the limit, since we are keeping both $R$ and  the moduli of $(2,0)$ little string fixed as we take $g_s$ to zero. (See \cite{AH} for more details.)

\subsection{Non-simply laced case}

To get non-simply laced theories, we make use of the fact that every non-simply laced Lie algebra ${\fg}$ arises as a subalgebra of a simply laced Lie algebra ${\fg}_0$, invariant under the outer automorphism group $H$ of ${\fg}_0$. Outer automorphisms of ${\fg}_0$ correspond to automorphisms its Dynkin diagram. 

\subsubsection{}

We start with IIB string on $Y_0$ an ADE singularity corresponding to ${\fg}_0$. We take $Y_0$ to be fibered over $M_6$ in such a way\footnote{In \cite{KimPest}, the twist is around the ${S}^1$ in ${\cal C}$ instead, or more precisely, around its T-dual circle.} that, as we go around the origin of the complex plane in $ M_6$ that supports the D3 branes, $Y_0$ comes back to itself only up to the action of a generator $h\in H$. 
The action of $h$ on $Y_0$ is by permuting the 2-cycles classes in $H_2(Y_0, {\mathbb  Z})$ in a way compatible with the action of $h$ on the root lattice of ${\fg}_0$, and the identification of the latter with $H_2(Y_0, {\mathbb  Z})$ (to our knowledge, this string construction was first used in \cite{Bershadsky}).
\subsubsection{}

The automorphism groups $H$ are all abelian, $H={\mathbb Z}_m$, generated by a single element $h\in H$, with $h^m=1$. The roots of ${\fg}$ are the combinations of roots of ${\fg}_0$ which are invariant under $H$. This way, from $({\fg}_0, H)$ one gets $ {\fg}$ with:
\begin{align}\label{folding}
(A_{2n-1}, Z_2)&\rightarrow C_n\notag\\
(D_{n+1},Z_2)& \rightarrow B_n \notag\\
(D_{4},Z_3)& \rightarrow G_2\notag\\
(E_{6},Z_2)& \rightarrow F_4\notag\\
\end{align}
The root lattice of ${\fg}$ is obtained from the root lattice of ${\fg}_0$ as follows. A simple positive root of ${\fg}$ is a sum over the simple positive roots of ${\fg}_0$ which are in a single orbit of $H$, normalized by the length of the orbit. The short roots of ${\fg}$ come from the simple roots in ${\fg}_0$ which lie in orbits of $H$ of length $m$. The long roots of ${\fg}$ are the simple roots of ${\fg}_0$ invariant under $H$. The length of the root is defined by $(e_a, e_a)$, where $(,)$ comes from the inner product on the root lattice of ${\fg}_0$. Since all the roots of ${\fg}_0$ have length $2$, the length of a short simple root of ${\fg}$ is $2/m$, and the length of a long root is $2$.  The coroots of ${\fg}$ are related to the roots of ${\fg}$ in the usual way $e^{\vee}_{a} =2 e_a/(e_a, e_a)$. It is easy to show that the result is the Cartan matrix of ${\fg}$: $C_{ab} = (e^{\vee}_a , e_b)$. 

\subsubsection{}

%The group $H$ acts on $Y_0$ by permuting its 2-cycles, in a way compatible with the identification of  $H_2(Y_0, {\mathbb Z})$ with the root lattice of ${\fg}_0$. This is implemented by making the moduli of $Y_0$ have monodromy corresponding to $h$ action
% as one goes around the origin of the copy of ${\mathbb C}$ where we place the twist. 
The action of $H$ on $Y_0$ translates into the action on D3 branes supported on the 2-cycles in \eqref{NC}, and on the quiver ${\scQ}_0$ that describes them. The D3 brane configurations that are allowed in the fibered geometry are in one to one correspondence with the configurations of 2-cycles on $Y_0$ which are invariant under the $H$ action: The D3 branes we are considering are supported on 2-cycles in $Y_0$ times the complex plane ${\mathbb C} \in M_6$ where the twist is; any $H$-invariant configuration in $Y_0$ gives rise to a configuration on the fibered product which comes back to itself up to the $h$-twist acting simultaneously on $M_6$ and on $Y_0$.

\begin{comment}
\subsubsection{}

For $H$ to leave the quiver ${\scQ}_0$ invariant, the ranks of vector spaces $(V_a, W_a)$ in a single orbit of $H$ have to be the same. From this it follows that the non-compact D3 branes, corresponding to $W_a$'s, are labeled by fundamental coweights of ${\fg}$, 
and that the compact D3 branes (corresponding to $V_a$'s) are labeled with the simple coroots of ${\fg}$.

D3-brane charges and the ranks of the vector spaces $V_a$ and $W_a$, are labeled by coroots and coweights, respectively \cite{NP}. The fundamental coweights and the simple coroots of ${\fg}$ are simply the sums of the fundamental coweights and the simple coroots of ${\fg}_0$ lying in a single orbit of $H$. This follows easily from the relation of the root lattices of ${\fg}$ and ${\fg}_0$, and the definitions of the coroots and coweights. (The coweight lattice is the lattice dual to the root lattice.) 

It is the representations of $^L{\fg}$ label the $H$-invariant D3 brane configurations, since the coroots and coweights of ${\fg}$, one recalls, correspond to the roots and the weights of the Langlands dual Lie algebra $^L {\fg}$. 

\end{comment}

\subsubsection{}

For $H$ to leave the quiver ${\scQ}_0$ invariant, the ranks of vector spaces $(V_a, W_a)$ associated to the nodes of the Dynkin diagram of ${\fg}_0$ which lie in a single orbit of $H$ have to be the same. From this it follows that the non-compact D3 branes, corresponding to $W_a$'s, are labeled by fundamental weights of $^L{\fg}$, the Lie algebra Langlands dual to ${\fg}$. Similarly, the compact D3 branes, corresponding to $V_a$'s, are labeled with the simple roots of $^L{\fg}$.

%D3-brane charges and the ranks of the vector spaces $V_a$ and $W_a$, are labeled by coroots and coweights, respectively \cite{NP}. 
To see this, one recalls that the simple roots and the fundamental weights of $^L {\fg}$ coincide with the simple coroots and fundamental coweights of ${\fg}$, respectively. The latter are, in turn, simply 
%t fundamental coweights and the simple coroots of ${\fg}$ are simply 
the sums of the fundamental coweights and the simple coroots of ${\fg}_0$ lying in a single orbit of $H$. These are exactly the data labeling the $H$-invariant quivers ${\scQ}_0$. (For the former statement, one merely needs to recall the relation of the root lattices of ${\fg}$ and ${\fg}_0$, and the definitions of the coroots and coweights. The coweight lattice is the lattice dual to the root lattice).

\subsubsection{}\label{sec:mtimes}
The fields of the quiver gauge theory on the D3 branes are a subset of those of the original ${\scQ}_0$ theory which are compatible with folding by $H$.
Let  $z$ be the complex coordinate on the ${\mathbb C}$-plane that supports the D3 branes, and $\phi(z)$ a field of the ${\scQ}_0$ quiver gauge theory.  The fields must obey 
\beq\label{eqf}
\phi(e^{2\pi i}z) = h\cdot \phi(z),
\eeq 
where $h\cdot \phi$ denotes the image of $\phi$ under the $h$ action on the quiver.
%
%The generator $h$ of $H$ acts by permuting ${\fg_0}$ gauge group factors, and the matter fields according to the geometric action on the nodes of the Dynkin diagram, and a
%$2\pi$ rotation of $z$. 
%In going around the disk, the fields come back to themselves only up to $h$ action. 
The later action is trivial for fields that only involve the long roots, corresponding to nodes of the Dynkin diagram of ${\fg}_0$ which are invariant under $H$. For fields $\phi$ that involve the short roots, coming from fields which transform in orbits of $H$ of length $m$, the $H$ action organizes $\phi(z),h\cdot \phi(z), \ldots, h^{m-1} \cdot \phi(z)$ into a single field (equal to their sum), which is single valued only on the $m$-fold cover of the ${\mathbb C}$-plane. If $w$ is a coordinate on the cover, $z= w^m,$ fields coming from orbits of $H$ of length $m$ have integer mode expansion in terms of $w = z^{1/m}$, but fractional mode expansion in terms of $z$.

\subsubsection{}

Langlands duality exchanges ${\fg}$ and $^L{\fg}$, and roots and coroots, while transposing the Cartan matrix. Since some define the Cartan matrix to be the transpose of ours, it is easy to mix-up ${\fg}$ and $^L{\fg}$. An unambiguous way to distinguish them is by the lengths of their roots. While the norm of the inner product $(,)$ is a matter of convention, the ratio of the lengths of the roots is not. For example, $B_n$ has one short root, and $(n-1)$ long ones, while $C_n$ has $(n-1)$ short roots, and one long one. ($B_n$ and $C_n$ are exchanged under Langlands duality, while $F_4$ and $G_2$ map to themselves).

\subsection{Conformal limit}

The $(2,0)$ little string is a string theory, containing strings whose characteristic size is $1/m_s$. It becomes a point particle theory, the conformal field theory in 6d with $(2,0)$ supersymmetry in the limit where one sends $m_s$ infinity,
$$
m_s\qquad \rightarrow \qquad \infty.
$$ 
We will call this theory theory $\scX$, for short. In the conformal limit, we want to keep the moduli of the $(2,0)$ theory fixed, since they become the moduli of theory $\scX$. We also want to keep fixed the Riemann surface it is compactified on, and the positions $x=a_i$ of D-branes on it.  

In the conformal limit, the gauge theory description of the defects is lost. The inverse gauge coupling of the defect 3d quiver theory is given by the modulus of the $(2,0)$ theory (which has dimensions of mass square) times $1/m_s^4$. Thus, in the $m_s$ to infinity limit, the gauge coupling becomes infinite. This means that there is no sense in which we can describe the theory on the defects as a gauge theory.

\section{Vertex Functions from Physics}\label{sec:seven}

%The  explain why the relation of 3d gauge theories to Langlands correspondence is predicted on physical grounds will require us to explain 

The vertex function of Nakajima quiver variety has two closely related, but distinct physics interpretations.
Most directly, they are partition functions of 3d quiver gauge theory from Sec. \ref{sec:six} with quiver $\scQ$, computed on $\C\times S^1_{R'}$. The gauge theory interpretation lets one can make direct contact with vertex functions both their defining formulation, in terms of counting quasimaps $\C \dasharrow X$, and in the integral form of Sec. \ref{sec:quasimap-integral}.

The more far reaching interpretation, however, is that they are also the partition functions of ${\fg}$-type $(2,0)$ little string theory on $M_6$, with codimension four-defects, where the quiver $\scQ$ captures data of the defect. This explains why vertex functions have implications for Langlands duality. We will return to this in Sec. \ref{sec:nine}. 

The two interpretations are related: the partition function of little string theory we need, turns out to {\it equal} the partition function of the theory its the defects. 
We will define the relevant partition functions, explain the mechanism between the equality of the bulk and the defect partition functions, and show how results of Sec. \ref{sec:quasimap-integral}. emerge from the 3d gauge theory perspective.
 %This is the 3d quiver gauge theory from Section \ref{sec:six} with quiver $\scQ$, computed on $\C\times S^1_{R'}$. We will show that this language makes direct contact with vertex functions of $X$, in both their defining formulation, and in the integral form of Section \ref{sec:quasimap-integral}.

\subsection{Little string partition function}
The partition function of the ${\fg}$-type $(2,0)$ little string on $M_6$ is most easily defined in the $T$-dual language, using $T$-duality with respect to the circle in ${\cal C}$.  

The dual of the $(2,0)$ string theory on $M_6$ is the $(1,1)$ little string on 
$${M'_6} = {\cal C}'\times {\mathbb C}\times {\mathbb C},
$$ 
where ${\cal C}' = S^1_{R'}\times {\mathbb R}$. The $(1,1)$ string theory is, at low energies, a six dimensional gauge theory with maximal supersymmetry, and gauge group based on the Lie algebra ${\fg}$. 
Its partition function on $M'_6$ is a supersymmetric index
\beq\label{Trace}
{\mathsf{Index}}={\rm Tr} (-1)^F {\bf g}.
\eeq 
The trace is the trace in going around the $S^1_{R'}$; $F$ is the fermion number so $(-1)^F$ counts bosons and fermions with signs. 
The insertion of ${\fg}$ in the trace has the effect of turning $M_6'$ into a twisted product: as we go around the $S^1_{R'}$, we rotate the two complex planes ${\mathbb C}\times {\mathbb C}$ by  $q$ and $t^{-1}$, respectively.
This is known as the ${\Omega}$-background, defined by Nekrasov and studied e.g. in \cite{LMN, NO1, NW} and in many other papers. 
\subsubsection{}\label{subsec:detail}Explicitly, ${\bf g}$ is the product of generators 
\beq\label{gtwist}
{\bf g} = q^{S-S_H} \times t^{S_H-S_V}.
\eeq
We denoted by $S$ the generator of the rotation of the $\C$-plane in $M'_6$ which is rotated by $q$. $S_V$ generates the action that rotates the second ${\mathbb C}$-plane by $t^{-1}$. $S_H$ is the generator of the $U(1)$ subgroup of $SU(2)$ $R$-symmetry group of the 6d theory. 
The $R$-symmetry twist is needed for the partition function to preserve supersymmetry. 

\subsection{Localization to defects} 
In the absence of defects, the partition function in \eqref{Trace} is trivial. In the presence of defects, it equals to the partition function of the theory on the defects. One simply ends up computing \eqref{Trace}, restricted to the modes on the defect.

\subsubsection{} Without any defects on $M_6'$, the insertion of ${\bf g}$ in \eqref{Trace} ends up commuting with four of the sixteen supercharges of the 6d theory. This is too many for the index to receive non-trivial contributions: the supersymmetries end up relating bosons and fermions in pairs and their contributions to the index cancel out. To get a non-trivial partition function one must reduce the supersymmetries by a half. We will achieve this by adding defect D-branes.\footnote{The relation of bulk and defect perspective is described in more detail in \cite{AHKS, AS, An, AH}.}

With defects present, supersymmetry is broken, but only near the defects \cite{Tasi}. Away from the defects, local physics is that of the 
$(1,1)$ little string, compactified on a circle, with all of its supersymmetries intact. This leads to localization: the only nontrivial contributions to the partition function can come from modes supported on the defects. Computing the trace restricted to such modes is the same as computing the partition function of the theory on the defect.
(The notion of localization used here is in its essence the same mechanism as in the more familiar applications of the term. The defect is fixed by a linear combination of the supersymmetries in the bulk. See \cite{wm} for more explanation.)
\subsubsection{} 
The defects we will use are the D3 brane configurations in Section \ref{sec:six}. The quiver ${\scQ}$ which encodes the data of the defects, as in previous section, also encodes the 3d quiver gauge theory on the defects.  T-duality maps D3-branes at points on ${\cal C}$ in $M_6$ to D4 branes winding around the $S^1_{R'}$ in ${\cal C}' = {\mathbb R} \times S_{R'}^1$, and at the same points in the radial direction.  The position of D3 branes on $S_{R}^1$ becomes the holonomies of the D4 brane gauge fields around $S^1_{R'}$. 
T-duality makes it manifest that the gauge theory on these D-branes is a three dimensional theory on $S_{R'}^1\times {C}$, where $C$ is identified with the complex plane in $M_6'$ supporting the defect; this is  the copy of $\C$ which is rotated by $q$.

(In addition to D3 brane defects, there are other kinds of defects which lead to the same localization effect. Adding D5 brane defects at points in ${\cal C}$ and filling ${\mathbb C}\times {\mathbb C}$, for example, will lead to Langlands correspondence with ramifications.)
\subsection{Defect partition function}

The index \eqref{Trace}, computed in the 3d quiver gauge theory on the defect, becomes the supertrace over the Hilbert space of the theory on $\C$.  The trace is around the $S^1_{R'}$ as before. The identification of $\C$ with the complex plane in $M'_6$ supporting the defect, determines the action of all the generators of ${\bf g}$ in the 3d gauge theory.

 \subsubsection{}

From the 3d gauge theory perspective, the interpretation of various factors in ${\bf g}$ is as follows. Let $\hbar = q/t$. Then, \eqref{gtwist} becomes

$$
{\bf g}  =  q^{S} \times \hbar^{-S_H} \times t^{-S_V}. 
$$
$S$ generates rotation of $C$, the copy of ${\mathbb C}$ that supports the defect. This is a geometric action from both the bulk and the defect perspective.
$S_V$ acts as a rotation of a complex plane transverse to the defect. It becomes an $R$ symmetry generator in the gauge theory. It corresponds to the $U(1)$ subgroup of $SU(2)_V$ $R$-symmetry that acts on scalars in vector multiplets. (A complex scalar in the vector multiplet is the position of the D-branes on ${\mathbb C}_t^{-1}$ plane.) $S_H$ generates the $U(1)$ subgroup of $SU(2)_H$ $R$-symmetry group acting on hypermultiplet scalars;  it generates an $R-$symmetry both in the bulk and on the defect.

There are factors in ${\bf g}$ we have refrained from writing out explicitly, to keep the formulas simpler.
The remaining part of parameters come from global $U(1)$ symmetries of the 3d ${\cal N}=4$ gauge theory. They enter ${\bf g}$ as  the (complexified) holonomies of the corresponding gauge fields around the $S^1_{R'}$. They are associated with the 
$$ 
\mathsf{T}\times \mathsf{A}^{\vee} \times{\mathbb C}^\times_{q}, \qquad \mathsf{T} = \mathsf{A}\times {\mathbb C}^{\times}_{\hbar}
$$ 
symmetry of the theory. The symmetries in $\mathsf{T}$ are associated to real mass parameters; $\mathsf{A}^{\vee}$ are associated to the real FI parameters. (The parameters in $\mathsf{A}$ preserve ${\cal N}=4$ supersymmetry, those in $\mathsf{T}$ but not in $\mathsf{A}$ break it to ${\cal N}=2$.).

\subsection{Index for non-simply laced ${\fg}$}
In non-simply laced cases, there is an ${H}$-twist around  the complex plane that supports the defect. The trace in \eqref{Trace} is the trace over states invariant under $H$ (they correspond to fields obeying \eqref{eqf}). The generator $S$ of rotations of the plane supporting the defects now has eigenvalues that are integer, and half integer, multiples of $1/m$, where $m$ is the order of $H$. This is because some of the modes come back to themselves only upon going around the circle $m$ or $2m$ times, see Section \ref{sec:mtimes}. 
We prefer that only integer and half integer powers of $q$ appear in the partition function; to achieve this we will replace $q$ by $q^m$, and define ${\bf g}$ in \eqref{gtwist} as:

\beq\label{gtwistm}
{\bf g}  = q^{mS} \times \hbar^{-S_H} \times t^{-S_V}, 
\eeq
where now
\beq\label{relation}
\hbar  = q^m/t
\eeq
for the index to preserve supersymmetry. (The action of $S$, $S_H$ and $S_V$ on the supersymmetry generators is independent of global identifications we make, so it is not sensitive to folding by $H$.) This is the string origin of the identification of parameters  in \eqref{first}.

\subsection{Vertex functions from 3d gauge theory} \label{subsubsec:fp}

The index in \eqref{Trace}, computed in the 3d ${\cal N}=4$ gauge theory on ${C} \times S^1$ based on the quiver ${\scQ}$, is the vertex function ${\bf V}$  of $X$ from \eqref{tVx}.%

\beq\label{equality}
\mathsf{ Index} = {\rm Tr} (-1)^F {\bf g} = {\bf V}
\eeq
The $\mathsf{Index}$ is not a function -- it is a vector instead, because it is defined in the 3d gauge theory on $S^1 \times {C}$, and thus depends on the choice of the vacuum of the gauge theory at infinity in $C = \C$. We will show momentarily that the vector space it takes values in can be identified with $K_{\mathsf{T}}(X)$.

For a non-simply laced Lie algebra $\fg$, the meaning of the $\mathsf{Index}$ is different. It is the vertex function ${\bf V}^H$ of $X_0$, restricted to $H$-invariant modes.

The relation between the partition function of 3d gauge theory on ${\C}\times S^1$ and vertex functions of quantum K-theory of its Higgs branch are well known \cite{Nekrasov, NO2}. The integral representation of vertex functions, which we proved in Section \ref{sec:quasimap-integral} are also known in the physics literature, see for example \cite{beem}. We will briefly review the physics perspective on these.
\subsubsection{}
The 6d little string Hilbert space effectively localizes to the Hilbert space of the 3d gauge theory on $C=\C$, but even that is much larger than the space of configurations that end up contributing to \eqref{Trace}. The index receives contributions only from configurations that are annihilated by the pair of supersymmetry generators ${\overline Q}$, ${\overline Q}^{\dagger}$, which anti-commute with $(-1)^F {\bf g}$; all others come in pairs related by actions of these generators, and cancel out from the index. The field configurations which preserve the supersymmetries are ``quasi-maps'' from $C$ to $X$. The quasi-maps are simply the solutions to vortex equations on $C$ \cite{Wp, Nekrasov}. In the adiabatic approximation, the supersymmetric path integral of the three dimensional theory on $R\times C$ (with $R$ viewed as time direction) localizes to the supersymmetric quantum mechanics on the moduli space ${\cal M}=\QM_{nonsing}(X)$ of quasimaps to $X$, see \cite{Nekrasov, NO2}. The quasi-maps are non-singular at infinity of $C$: this corresponds to working with boundary conditions which require the gauge field strength to vanish there. In addition,  finite energy configurations require one to restrict the matter fields to a approach a vacuum at infinity. In a theory deformed by masses, i.e. working equivariantly with respect to $\mathsf{T}$, the latter corresponds to fixed point of $\mathsf{T}$-action on $X$. 

In supersymmetric quantum mechanics with a pair of supercharges, the partition function ${\rm Tr} (-1)^F$ computes the index of the Dirac operator on ${\cal M}$. In the present case, the supersymmetric quantum mechanics has twice as many supersymmetries: there are in fact two more supercharges $Q, Q^{\dagger}$ that annihilate the solutions in ${\cal M}$, they just fail to commute with ${\bf g}$ for generic $\hbar$. The supercharges ${Q}$, ${Q}^{\dagger}$ and ${\overline Q}$, ${\overline Q}^{\dagger}$ are identified with Dolbeault operators $\partial,  \partial^{\dagger},  {\overline \partial}, {\overline \partial}^{\dagger}$ acting on differential forms on ${\cal M}$. The index of ${\slashed {D}} =    {\overline \partial}+ {\overline \partial}^{\dagger}$  operator on ${\cal M}$ is the holomorphic Euler characteristic 
of the symmetrized virtual structure sheaf $\hat{\cal O}_{\rm vir}$ of ${\cal M}$ in \eqref{def_tO}, see \cite{OK} and also \cite{NO2}.
The $\hat{\cal O}_{\rm vir}$ bundle is the cohomology of the complex generated by the broken supersymmetries $Q \sim  \partial$ and $Q^{\dagger} \sim  \partial^{\dagger}$ acting on differential forms on ${\cal M}$, obtained by quantizing the collective coordinates of fermions. The Kahler variables of $X$ come from the (complexified) real FI parameters in the 3d gauge theory; they lead to grading of quasi-map moduli space by the degree.

In practice, we like to think about indices as functions of their parameters, so we want to extract a particular component of the vector ${\bf V}$. This corresponds to picking a specific vacuum state at infinity. The vacua lie on the $\mathsf{T}$-fixed locus in $X$; if fixed points $p\in X_\mathsf{T}$ are isolated, it suffices restrict ${\cal M}$ to the moduli space of maps ${\cal M}_p$ approaching $p\in X_\mathsf{T}$ at infinity. In that case, $K_\mathsf{T}(X)$ is spanned by classes of fixed points ${\cal O}_p$. A class in $K_T(X)$ labels the choice of a vacuum state even in more general situations. (More naturally, the supersymmetric vacua are ground states of an effective supersymmetric quantum mechanics which arises in studying the 3d gauge theory on $R\times T^2$, with $T^2$ of complex structure parameter $q$, with equivariant/mass deformations turned on corresponding to parameters in $\mathsf{T}$. In this setting, the ground states should be labeled by elements of $\Ell_\mathsf{T}(X)$, the equivariant elliptic cohomology of $X$.  For Nakajima varieties, the ranks of $\Ell_\mathsf{T}(X)$ and $K_\mathsf{T}(X)$ turn out to be the same, so we will use the latter to label the vacua.)
\subsubsection{}\label{subsubsec:sp}

The second way to compute \eqref{Trace}, which leads to integral formulas, is simpler in many respects.

Since $q^{S}$ factor in ${\fg}$ regularizes the non-compactness of $C$, one can treat the three dimensional gauge theory on $S^1 \times C$ as a (gauged) supersymmetric quantum mechanics on the $S^1$, with discrete spectrum. 
The computation becomes as elementary exercise in quantum mechanics (see \cite{Nekrasov} for more detail): enumerating the fields in the 3d theory, decomposing each field into modes on $C$ of fixed momentum, and evaluating their contribution to the trace.  For non-simply laced Lie algebras one includes in the trace only the $H$-invariant configurations, obeying  \eqref{eqf}. 

It is easiest to start by treating $G_{\scQ}$ a global symmetry; gauging it corresponds to projecting to $G_{\scQ}$ invariant states, which one can do in the end. In addition, it is helpful to abelianize the theory, breaking the gauge group $G_{\scQ}$ to its maximal abelian subgroup. Then, at the outset, the partition function depends on equivariant parameters associates with maximal torus of $G_{\scQ}$.  These we denoted by $x$'s elsewhere (and by $s$ in the appendix and in Section \ref{sec:quasimap-integral}) since they come from positions of compact D3 branes on ${\cal C}$. They are also (part of) the Coulomb branch moduli of the 3d gauge theory, so this computes the partition function from the Coulomb-branch perspective.  The index in \eqref{Trace} depends on Kahler moduli of $X$ via the classical FI terms in the Lagrangian. In the end, since $G_{\scQ}$ is gauged one integrates over the $x$'s. The contour is chosen to project to states which are neutral. This means integrating over
\beq\label{scont}
\int_{|x|=1} \dots \, d_\textup{Haar}x \,
\eeq
as in Sections \ref{s_GIT} and \ref{sec:quasimap-integral}, where $d_\textup{Haar}x =\prod_{a, \alpha} dx_{a, \alpha}/x_{a, \alpha}$ and the contour is chosen to pick out contributions independent of $x$'s. Depending on the values of FI parameters, one gets to deform the contour, picking up the residues in the process. This is the GIT quotient from Section \ref{sec:quasimap-integral}.

The contribution to \eqref{Trace} of vector multiplets from the $a$-th node of the Dynkin diagram of ${\bf g}$, is
\beq\label{gaugsl}
\prod_{\alpha \neq \alpha'}{ \varphi_{q_a}(  x_{\alpha ,a}/x_{\alpha',a})\over \varphi_{q_a}(   t \,x_{\alpha, a}/x_{\alpha',a})} \prod_{\alpha < \alpha'} { \theta_{q_a}( t x_{\alpha ,a}/x_{\alpha',a})\over \theta_{q_a}(    \,x_{\alpha, a}/x_{\alpha',a})}.
\eeq
see \cite{Nekrasov} for derivation. 
The one new aspect is the dependence, in non-simply laced cases, on whether $"a"$ labels a short, or a long root.  
Recall that a node corresponding to a short root of ${\fg}$ collects contributions $m$ nodes of ${\fg}_0$ which are in a single orbit of $H$.  The corresponding field configurations come back to themselves only after going around the origin of the ${\mathbb C}$-plane $m$ times. By contrast, a node corresponding to a long root of ${\fg}$ comes from node of ${\fg}_0$ which comes back to itself going around once. Since $q$ keeps track of the minimum momentum on the disk, so that only (half-)integer powers of $q$ enter the partition function, then for $"a"$ a short root $q_a=q$, and for a long root, $q_a = q^m$. This coincides with the ${\cal W}_{q,t}({\fg})$ algebra contributions from screening currents associated to a single node in \eqref{sctdasl1}.
Similarly, hypermultiplets connecting a pair of distinct nodes $a,b$ in the Dynkin diagram of ${\fg}$ contribute:
\beq\label{lowsl}
\prod_{\alpha, \beta} {{ \varphi_{q_{ab}}( t v_{ab} \,x_{\alpha,a}/x_{\beta,b})\over \varphi_{q_{ab}}(v _{ab}\,{x_{\alpha ,a} /x_{\beta,b}})}}
\eeq
where  $v_{ab} =\sqrt{q_{ab}/t}$.
%$\alpha, \beta$ run from $1$ to $d_a, d_b$, respectively.  
If either of the nodes $a,b$ is short, $q_{ab} = q$  since then the fields that contribute are single valued only on the $m$-fold cover of the disk. If both of the nodes are long then $q_{ab}=q^m$. This coincides with the two-point functions of screening currents associated to the distinct pair of nodes $a, b$, in \eqref{sctdasl2}. 
Finally, for each node of the ${\fg}$ Dynkin diagram, the charged fields in fundamental representation contribute
\beq\label{framsl}
\prod_{i, \alpha}  { \varphi_{q_a}( t v_a \,{a_{i, a} /x_{\alpha,a}})\over \varphi_{q_a}( v_a \,{a_{i,a} /x_{\alpha,a}})}
\eeq
where $i$ runs from $1$ to ${\rm rk}(W_a)$, and $q_{a} =q$ for short roots, and $q_a=q^m$ for the long roots, and $v_a = \sqrt{q_a/t}$.  This coincides with the two-point function, from \eqref{sctdasl3}, of screening currents and vertex operators associated to this node.

We have yet to pick a specific vacuum at infinity. In simple cases, this can be done by changing the contour of integration (to replace the contour in \eqref{scont} by an inequivalent one, that approaches thimble integrals in $q\rightarrow 1$ limit). This is not the most convenient way to do that, since in general cases construction of such contours becomes difficult.
Instead, it is better keep the contour of integration fixed to \eqref{scont} and instead realize the choice of vacua as additional insertions in the integral. They arise as follows.

We treat $C=\C$ as a finite disk (since nothing in the computation depends on the area of $C$), with boundary. To reproduce the vertex function ${\bf V}$, we need to impose Dirichlet boundary conditions on the gauge fields, and place the conditions on the matter fields to localize them to a component of $X_T$ at the boundary.  Instead of imposing boundary conditions by hand, we couple the 3d theory to a 2d theory at the boundary, and integrate over all the fields with no restrictions (for examples, see \cite{BDGH}). Due to couplings in \eqref{Trace} the boundary theory has only $(0,2)$ supersymmetry. The contribution of the elliptic genus of the boundary theory to the partition function leads to an additional insertion of \eqref{autcF}
$${\cF(x)}/{\theta(T^{1/2})} ={\cF'(x)},$$
in the integral, see \eqref{VI4}. The condition, from  \eqref{autcF}, that ${\cF'(x)}$ is invariant under $x\mapsto q^\bd x$ says that the boundary theory has no gauge anomalies. While there are many different theories that can be coupled consistently (any anomaly free $(0,2)$ theory would do), there is a finite dimensional space of distinct nontrivial contributions they could give rise to, parameterized by classes in $\Ell_\textup{T}(X)$.

\subsubsection{}

The vertex functions ${\bf V}$ lead solutions of qKZ which are holomorphic in $z$, per construction. We get a second basis of solutions to the same equation, which we denoted ${\bf V}_{{\fC}}$, which are holomorphic in a chamber $\fC$ of mass/equivariant parameter space, corresponding the choice of ordering of defects on ${\cal C}$. 
The vertex functions
 ${\bf V}_{\fC}$ and ${\bf V}$ solve the same set of difference equations in equivariant (and Kahler variables) since they originate from the same 3d gauge theory on $C\times S^1$. 
Correspondingly, the matrix ${\fP}_{\fC}$ relating them
\beq\label{mirror}
{\bf V}_{\fC} =  {\bf V} \; {\fP}_{\fC}
\eeq
 is a matrix of pseudo-constants. Theorem 4 of \cite{ese}, gives the matrix elements of ${\fP}_{\fC}$ in terms of elliptic stable envelopes of $X$.

The change of basis in \eqref{mirror} corresponds to imposing different conditions on the fields of the 3d theory at the $\partial({C} \times S^1) = T^2$ boundary. We can in principle impose boundary conditions leading to ${\bf V}_{\fC}$ in the same way as we did for ${\bf V}$, by coupling the 3d theory to a 2d theory on the boundary. 
This time the coupling, among other things, has an effect of imposing Neumann boundary conditions on the gauge fields. Having picked the chamber ${\fC}$, the stable basis leading to $a$-solutions of qKZ in this chamber is unique \cite{ese}. Here, we will only sketch some salient features of its construction.

To obtain a component of the covector ${\bf V}_{\fC}$, one starts by picking a component of the $A$-fixed point set $X_A \subset X$. The boundary conditions on matter fields parameterizing directions transverse to the fixed locus are either Neumann or Dirichlet boundary conditions depending on whether they correspond to attracting or repelling directions; this depends on ${\fC}$. The rest of boundary theory is determined by cancelation of gauge anomalies. More precisely, the choices left to make are parameterized by equivariant elliptic cohomology classes of the corresponding fixed point locus.
The elliptic genus of the boundary theory leads to a contribution to the integral which now takes the form 
\begin{equation}
  \label{defbS}
 {\cF(x)}/{\theta(T^{1/2})}  \qquad  \rightarrow  \qquad  \Stab_{\fC}^{\textup{\it ell}} (x,z) \,  \be(z)^{-1}/{\theta(T^{1/2})}   \, 
\end{equation}
 (See Sec. 6.3. of \cite{ese} for a more precise statement).
Here $ {\Stab}^{\textup{\it ell}}$ are elements of the elliptic stable basis, which assign, to every class in $\Ell_{\mathsf{T}}(X_\mathsf{A})$
a class in $\Ell_{\mathsf{T}}(X)$,
$$
\Stab_{\fC}^{\textup{\it ell}}(X): \qquad  \Ell_{\mathsf{T}}(X_\mathsf{A}) \;  \longrightarrow  \; \Ell_{\mathsf{T}}(X)
$$
 Per definition, 
the right hand side is invariant under  $x\mapsto q^\bd x$: the automorphy of the elliptic genus of the boundary theory cancels the bulk contribution coming from $ \be(z)= \exp\left(\frac{\bla(z,x)}{\ln q}\right)$.
This reflects the contribution of boundary degrees of freedom to the anomaly which cancels the anomaly the bulk theory has, in presence of $T^2$ boundary. Since the right hand side is constant under  $x\mapsto q^\bd x$ in computing the integral by residues, 
\beq\label{pcx}
{\fP}_{\fC} (x)= \Stab_{\fC}^{ell}(x) \, \be(z,x)^{-1} /{\theta(T^{1/2})}   \,  
\eeq
acts like a matrix of constants, so
${\bf V}_{\fC}$ is related to ${\bf V}$ by a linear operator ${\fP}_{\fC}$ in \eqref{mirror}, obtained by evaluating \eqref{pcx} on classical vacua.

The matrix ${\fP}_{\fC}$ in  \eqref{mirror} itself has a gauge theory interpretation. It the partition function of the 3d gauge theory on $I \times T^2$ with Neumann-type boundary conditions that lead to $a$-solutions are imposed on one end of $I$, and those for Dirichlet-type $z$-solutions on the other. The supersymmetric partition function does not depend on the size of the interval, and shrinking it to zero, one gets an effective two dimensional gauge theory on $T^2$ with $(0,2)$ supersymmetry. The entries of the matrix ${\fP}_{\fC}$ are elliptic genera of the resulting theories.

\subsubsection{}

Vertex functions with descendants correspond to placing line operators at $0\subset {\mathbb C}$, winding around the $S^1$. The line operators one needs can be constructed geometrically as well, see \cite{Bethe}, in terms of the K-theoretic stable basis ${\Stab}^K$. The later is a $q\rightarrow 0$ limit of the elliptic stable basis. One can make use of this fact to obtain its gauge theory construction: first cut open a neighborhood of $0\subset {\mathbb C}$, and impose the boundary conditions corresponding to elliptic stable basis. Then, shrinking the boundary back to a point has the same effect as taking $q$ to zero. The elliptic genus of the boundary theory becomes a line operator insertion -- this is the supersymmetric partition function on $S^1$ of the resulting quantum mechanics problem.
Inserting the line operator, in the integral \eqref{scont} takes ${\bf V}$ and ${\bf V}_{\fC}$ to fundamental $z$- and $a$-solutions of qKZ.

\subsection{Conformal limit}\label{sec:defse}

The variables $q,t, \hbar$ are related to the parameters of the $\Omega$-background as 
\beq\label{epsilons}
q=\exp(R' \epsilon_q), \qquad t=\exp(R' \epsilon_t),\qquad \hbar=\exp(R' \epsilon_\hbar),
\eeq
In the conformal limit, point particle limit, we send $
m_s \rightarrow \infty$
and we keep $\epsilon$'s fixed, since they are part of the definition of the background the $(2,0)$ theory is compactified on. For the same reason, we keep the Riemann surface ${\cal C}$ fixed.  This means the radius ${R}$ of the circle in ${\cal C}$ must stay fixed, and hence the $T$-dual radius 
$$
R' = 1/(m_s^2 {R}) \qquad \rightarrow \qquad  0  
$$ 
goes to zero in the limit. Since $R'$ goes to zero, with $\epsilon$'s fixed, we recover \eqref{limitwzw} and \eqref{limitw}. 

The positions of the points on the Riemann surface are fixed as well, but the
$z$'s have to scale to $1$ to keep the moduli of the $(2,0)$ theory fixed in the limit. Namely, 
\beq\label{deflambda}
z = \exp(R' \zeta) = q^{\mu}
\eeq
where $\zeta$ is the 3d FI parameter complexified by the holonomy of the corresponding background gauge field around the $S^1$. It follows from its string theory origin that ${\rm Re}(\zeta_a)$ is $R$ times the modulus of the $(2,0)$ theory, and both of these we need to fix in the limit. This implies that  ${\rm Re}(R' \zeta_a)$ goes to zero in the conformal limit, and hence $z$ goes to $1$. The rate at which $z$ goes to one is fixed, however, so $\mu$ defined by \eqref{deflambda} remains fixed.

%
%\beq\label{again trace}
%{\rm Tr}(-1)^F {\bf g}
%\eeq

\section{Langlands Correspondence From Little Strings}\label{sec:nine}

It has been known for a long time that geometric Langlands correspondence should be a consequence of $S$-duality of the maximally supersymmetric ${\cal N}=4$ Yang-Mills theory \cite{KW, Wittenl1, Wittenl2}. While some aspects of $S$ duality can be understood within the gauge theory, and many more from theory ${\scX}$, to derive ${S}$-duality one needs string theory.
This was shown in \cite{Vafa}, and reviewed recently in \cite{Seiberg}.

In this section we will recall the derivation of $S$-duality from little string theory, as well as the expected relation between $S$-duality of the ${\cal N}=4$ theory and the geometric Langlands. The fact that one is able to derive $S$-duality from little string theory offers an explanation why one can derive (quantum) geometric Langlands from it.

\subsection{$S$ duality of 4d Yang-Mills theory} 
$S$-duality relates Yang-Mills theories with ${\cal N}=4$ supersymmetry and gauge groups based on Lie algebras ${\fg}$ and $^L{\fg}$, 
\beq\label{Sdual}
S: \qquad (^L{\fg}, {^L\tau}) \qquad \longleftrightarrow \qquad  ({\fg}, \tau).
\eeq
The gauge coupling parameters are related by
\beq\label{dualt}
m
\, \tau \,^L\tau =-1.
\eeq
where $\tau$ is given by $\tau ={ \theta \over {2\pi}} + i {4\pi \over g^2_{YM}}$ in terms of the Yang-Mills coupling constant $g_{YM}$ and the $\theta$ angle.
The theory with Lie algebra ${\fg}$ has in addition a symmetry $T$ corresponding to the $2\pi$ shift of the theta angle\footnote{This assumes the normalization of the invariant metric on ${\fg}$ we have chosen -- the one in which the short coroots of ${\fg}$ have length squared equal to $2$, see \cite{D, Seiberg}.}
$$T: \qquad ({\fg}, \tau)\qquad  \rightarrow  \qquad ({\fg}, \tau+1),
$$ 
which maps the theory to itself for any ${\fg}$. The action of the $S$ and $T$ on the particles of the theory is always non-trivial.
\subsection{Derivation of $S$-duality from little string theory}\label{sec:lsduality}
Start with IIB string on 
\beq\label{IIBb}
(Y_0 \times S^1_{q} \times S^1_t)/H \times M_4,
\eeq
where $Y_0$ is an ADE surface corresponding to ${\fg}_0$ Lie algebra, as in section  \ref{sec:six}, and $S^1_q$, $S^1_t$ are a pair of circles (the subscripts are there to distinguish them). The $H$-twist acts by folding the Dynkin diagram of ${\fg}_0$, in going once around $S^1_q$. Nothing in what follows depends on the value of the string coupling, so we can take $g_s$ to zero to get the $\fg_0$ little string theory on $S^1_{q} \times S^1_t \times M_4$, with the $H$-twist.

We would like to understand which four dimensional theory we get when we send to zero the characteristic size of the string $1/m_s$ and the area of the two torus $T^2 = S^1_q \times S^1_t$. The resulting theory can be derived using $T$-duality symmetry of string theory \cite{Vafa}.

$T$-duality on the $S^1_t$ circle leads to the description based on ${\cal N}=4$ SYM theory with Lie algebra ${\fg}$. The description based on $^L{\fg}$ follows from $T$-duality on the $S^1_q$ circle, instead. The more weakly coupled description comes from $T$-duality on the smaller of the two circles. 

\subsubsection{}\label{sec:gc}

$T$-duality on $S^1_t$ circle relates IIB string on \eqref{IIBb}
to IIA  string on 
\beq\label{IIAa}
(Y_0 \times S^1_q \times S^1_{t'})/H \times M_4.
\eeq
The two string theories are equivalent once we exchange the momentum and the winding modes around the $S^1_t$. 
$Y_0$ is the same ADE surface as in \eqref{IIBb}, and the  $H$-twist is on $S^1_q$. If $R_q$ and $R_t$ are the radii of $S^1_q$ and $S^1_t$ circles, respectively, then ${S}^1_{t'}$ is a circle of radius $R_{t'}= 1/(R_t \,m_s^2 )$. At the singularity in IIA theory we get the $(1,1)$ little string theory, as described in Section \ref{sec:six}.  This theory has a six dimensional gauge symmetry, with gauge group based of ${\fg}_0$ Lie algebra; its gauge coupling parameter is $m_s^2$ \cite{SeibergN}. Presently, the $(1,1)$ little string is compactified on the two-torus $(S^1_q \times S^1_{t'})/H$ times $M_4$, where 
the $H$-twist around $S^1_q$ permutes the fields of the gauge theory according to the action of $H$ on the Dynkin diagram of ${\fg}_0$. 
The theta angle originates from the $NS$ B-field on the two torus; its periodicity results in the symmetry $T$ we had before.

Starting with the gauge theory in six dimensions with $(1,1)$ supersymmetry, in the limit we take the area of the two-torus to zero, and $m_s$ to infinity, we get a four dimensional gauge theory on $M_4$ with ${\cal N}=4$ supersymmetry, gauge group based on ${\fg}$, and coupling $\tau = i m_s^2 R_q R_{t'} =  i R_q/R_t$. This follows by restricting the fields of the 6d gauge theory to constant (zero momentum) modes around the $T^2$; these are the only excitations of the whose energy remains finite as the size of the torus goes to zero.

\subsubsection{}\label{sec:gcL}

It was shown in \cite{Vafa} that $T$-duality on $S^1_q$ circle relates IIB string on \eqref{IIBb} to
IIA  string on 
\beq\label{IIAb}
{(^LY_0 \times S^1_{q'} \times S^1_t)/ {^LH} }\times M_4,
\eeq 
with twist by $^L H$, going around $S^1_{q'}$ circle once. Here $^LY_0$ is the ADE singularity based on a simply laced Lie algebra $^L {\fg}_0$. The Lie algebra $^L{\fg}_0$ has outer automorphism group $^L H$, such that by projecting to its $^L H$ invariant part we get $^L {\fg}$, the Lie algebra which is the Langlands dual of ${\fg}$. 
 The radius of $S^1_{q'}$ is $R_{q'} = {1/(m\,  R_q  \,m_s^2)}$. The factor of $m$ comes about due to the $H$-twist on the original circle in IIB:  the momentum on $S^1_q$ is quantized in units of $1/(mR_q)$ since all modes come back to themselves only after going $m$ times around $S^1_q$. Hence, the strings wound on the $T$-dual circle $S^1_{q'}$ must have masses quantized in units of $m_s^2 \, R_{q'} = 1/(m R_q)$.

The $(1,1)$ little string theory one gets by decoupling the modes far from the singularity in IIA theory now has the low energy description as a six dimensional maximally supersymmetric gauge theory based on the $^L{\fg}_0$ Lie algebra. 
%The little string theory is now compactified on the two-torus $(S^1_{q'} \times S^1_{t})/^LH$, with $H$-twist around $S^1_{q'}.$ 
The four dimensional theory on $M_4$, which we get in the limit the area of the two-torus goes to zero, has ${\cal N}=4$ supersymmetry, gauge group based on $^L{\fg}$, and coupling $^L\tau = i m_s^2 R_t R_{q'} = i R_t/(m R_q)$. In particular, $\tau$ and $^L\tau$ are related by \eqref{dualt}.  \subsubsection{}

In principle, in addition to the Lie algebra, one should specify the global form of the gauge group on each side in \eqref{Sdual}. This corresponds to specifying the allowed representations of electrically charged fields, a character sublattice of the weight lattice of $^L{\fg}$; its dual lattice is the character lattice of ${\fg}$, see \cite{GuW} for review. In this paper, we will allow for the most general choice of electric charges for $^L{\fg}$, choosing the character and weight lattices to coincide. This implicitly sets $^LG$ to be the simply connected group with Lie algebra $^L{\fg}$. The dual group $G$ than is of adjoint type, as its weight lattice equal to its root lattice. 

\subsection{Gauge theory partition function from little string}

We showed that partition functions of $(2,0)$ 6d theory on $M_6^{\times}$ and $M_6$ compute the conformal blocks of $\Lfgh_{^L k}$ and ${\cal W}_{\beta}(\fg)$ (in the conformal limit of our results).  The relation of $(2,0)$ 6d theory to the pair of ${\cal N}=4$ gauge theories with gauge groups based on $^L{\fg}$ and ${\fg}$ then implies that conformal blocks are the partition functions of these gauge theories, in the background induced from their six dimensional origin. 

This leads to an explicit relation between the $S$-duality of ${\cal N}=4$ gauge theories in four dimensions and the conformal field theory approach to geometric Langlands, which we reviewed in Sec. \ref{glc}. We will describe some essential aspects (see also Sec. 8 of \cite{GW}), leaving a more detailed analysis for future work.

\subsubsection{}
Consider ${\fg}_0$-type $(2,0)$ little string theory compactified on a six manifold 
$$
 M_6^\times = {\cal C} \times ({\mathbb C} \times {\mathbb C}^\times)/H,
$$ 
with an $H$ twist around ${\mathbb C}^{\times}$. $M_6^{\times}$ differs from $M_6$ in \eqref{6man} by having the origin of one of the complex planes deleted. This is merely a convenient choice made for ease of discussion. 
Working with $M_6^\times$ leads to partition functions which compute
vector valued $q$-conformal blocks, instead of scalar ones we get from
$M_6$. The converse is that closing up the puncture, and thereby
replacing $M_6^\times$ with $M_6$, corresponds to contraction of the
vector valued partition function with the Whittaker type vector in
\eqref{linear}.

The six-manifold  $M_6^{\times}$ is a $T^2 =S^1_t \times S^1_q
$
fibration 
\beq\label{surface}
T^2 \;\; \rightarrow\;\; M_6^{\times} \;\; \rightarrow \;\; M_4={\cal C} \times B.
\eeq
As we go once around the $S^1_q$ circle, viewed as the circle fiber of ${\mathbb C}^{\times}$, we twist by $H$; the fiber of ${\mathbb C}$ is $S^1_t$. $T$-duality of little string theory on the $S^1_t$ or on the $S^1_q$ circle fiber leads to two distinct descriptions of the four dimensional theory on $M_4$, as we reviewed in section \ref{sec:lsduality}: the first leads to the ${\cal N}=4$ SYM theory based on gauge group ${\fg}$, the second based on$\,^L{\fg}$. 

The base $M_4$ is a manifold with a boundary, since $B={\mathbb R} \times {\mathbb R}^{+}.$  The boundary conditions \cite{KW, GW} for the two 4d gauge theories we need are defined by recalling their origin from the 6d $(2,0)$ theory on $M_6^{\times}$, which is a six manifold without boundaries \cite{WF}.
\subsubsection{}\label{boundaries}

We studied the supersymmetric partition function of the $(2,0)$ little string theory on $M_6^{\times}$ in section \ref{sec:seven}.
In the conformal limit, the partition function we defined in section \ref{sec:seven} becomes the partition function of the $(2,0)$ 6d conformal field theory on ${\cal C}$ times the 4d $\Omega$-background on ${\mathbb C}\times {\mathbb C}^{\times}$, or on ${\mathbb C}\times {\mathbb C}$ if we replace $M_6^{\times}$ by $M_6$. 

It was argued in \cite{NW} that placing the $(2,0)$ 6d CFT theory on ${\cal C}$ times a 4d $\Omega$-background leads to partition functions of the two $S$-dual  ${\cal N}=4$ theories on $M_4$ with topological twist of geometric Langlands type, studied in \cite{KW}. 

Further, \cite{NW} explained how to relate the parameters of ${\Omega}$-background to the effective coupling constant of the gauge theory. One uses the fact that, asymptotically and locally, far away from the fixed points of rotations by ${\mathbb C}_{t}^{\times}$ and ${\mathbb C}_q^{\times}$, $M_6$ is a flat manifold. There, all effects of topological twisting go away and the ${\Omega}$-background parameters of the twisted theory are identified\footnote{Our conventions for $\epsilon$'s are set in Sec. \ref{sec:seven} and in \ref{sec:defse} and they differ from those in \cite{NW} by factors of $2\pi$.} with the inverse radii of the two $S^1$'s in $T^2 = S^1_q \times S^1_t$ in the undeformed gauge theory: $\epsilon_t = 2\pi /R_t$ and $\epsilon_q = 2\pi/(i m R_q)$. 

Putting our results together with those of \cite{NW}, we find the following.

%The partition function of the little string theory on $M_6$ (or on $M_6^\times$) was computed in Sections 4-6. This requires an ${\Omega}$-background which rotates ${\mathbb C}_{q}$ and ${\mathbb C}_t$ planes by $q$ and $t^{-1}$; in addition, to preserve supersymmetry, we imposed an R-symmetry twist, leading to the index in \eqref{Trace}. In the conformal limit, we get the partition function of theory ${\scX}$ on $M_6^{\times}$, and the partition functions of the two ${\cal N}=4$ theories on $M_4$.
%

%The two partition functions, of theories based on ${\fg}$ and $^L{\fg}$, depend each on a single complex parameter $\Psi$ and $^L\Psi$, related by
%
%$$
%m \Psi \, ^L\Psi =-1.
%$$
%They are related to the parameters $q$ and $t$ of the ${\Omega}$-background\footnote{
%The parameters  $\Psi$ and $^L\Psi$ are each combination of the respective Yang-Mills coupling constants $g_{YM}$, a theta angle $\theta$, and an additional parameter, called $t$ in \cite{KW, WF}, associated to the choice of the supersymmetry preserved. Ordinarily, the modular parameter $\tau$ of the torus is $\tau = {\theta\over 2\pi} + {2\pi i \over g_{YM}^2}.$ $\Psi$ ends up transforming in the same way as $\tau$ does under modular transformations, even though it depends on $t$ in addition.} in the way described in \cite{NW}
%\beq\label{Pe}
%\,^L\Psi =  \, {\epsilon_q / \epsilon_t}, \qquad  \,\Psi =  \, {\epsilon_t/(m \epsilon_q)}.
%\eeq
%\end{comment}

\subsubsection{}
The chiral conformal block of ${\cal W}_{\beta} (\fg)$,
 corresponds to the partition function of 4d SYM theory with gauge group based on $\fg$ and coupling (see Sec. \ref{sec:gc})
\beq\label{Peb}
\tau =  {\epsilon_t / m \epsilon_q} = \beta/m,
\eeq
which one gets from $(2,0)$ theory on $M_6$. One wants to work with $M_6$ rather than $M_6^{\times}$ here since the ${\cal W}_{q,t}(\fg)$ algebra blocks from \eqref{magnetic} are naturally scalar.

This relation  follows from AGT correspondence \cite{AGT}, and was used in \cite{NW}.
It also follows from our results (and from \cite{AH}) by taking the conformal limit.

\subsubsection{}
The chiral conformal blocks of $\Lfgh_{^L k}$ 
%\beq\label{pg}
%^L \beta = ^L(k+h) 
%\eeq
correspond to YM theory with gauge group $^L{\fg}$ and coupling parameter $^L\tau$ given by (see Sec. \ref{sec:gcL})
\beq\label{Pe}
\,^L\tau ={\epsilon_q / \epsilon_\hbar} = -{^L(k+h)}.
\eeq
It follows when we place the theory on $M_6^{\times}$.

This relation to WZW models was predicted in \cite{OV} nearly 20 years ago. The $(2,0)$ conformal filed theory on any three manifold $M_3$ times ${\mathbb C} \times S^1_q$ is expected \cite{OV} to compute the partition function of Chern-Simons theory based on  $^L{\fg}$ Lie algebra, on $M_3$.  In the present case, this applies with $M_3 = {\cal C}\times {\mathbb R}$. For non-simply laced $^L{\fg}$, one starts with the ${\fg}_0$ type $(2,0)$ theory, and introduces the twist by $H$ \cite{WF},  just as we did. 

In the construction of \cite{OV}, the level $^L(k+h^{\vee})$ of Chern-Simons theory is determined by the parameter $q'$ arising geometrically from the $\Omega$-background on ${\mathbb C}\times S^1_q$. To get Chern-Simons theory at level $^Lk$ from $(2,0)$ theory on $M_3\times {\mathbb C}\times S^1_q$,  we rotate ${\mathbb C}$  by $q' =\exp({2\pi i \over ^L(k+h^{\vee})})$ as we go around $S^1_q$, and accompany the rotation with an $R$-symmetry twist. 

We can apply this here, with one subtle point. Namely, we need $\epsilon_{\hbar}$ not $-\epsilon_t$ to be the ${\Omega}$ background that rotates the complex plane (the two are related by $ \epsilon_{\hbar} = \epsilon_q-\epsilon_t$). This is related to the fact that since ${\mathbb C}^{\times}$ is a cylinder, the topological twist on it is trivial. We only need the $R$-symmetry twist with parameter $\epsilon_{\hbar}$, from section \ref{sec:seven}, which gets compensated by a twist of the ${\mathbb C}$-plane in $M_6^{\times}$ by $\epsilon_{\hbar}$. Altogether, we find $q' = \exp(- m R_q \cdot \epsilon_\hbar)$, and since $ i m R_q = 2\pi /\epsilon_q$, \eqref{Pe} follows.

\subsubsection{}

Since $m \epsilon_q = \epsilon_t+\epsilon_{\hbar}$, we have that the two theories are related by
$$ 
\tau - 1=-1/(m ^L \tau),
$$
as in \eqref{firstl}.
This is the action of $S$-duality, together with the shift of the $\theta$ angle in the ${\fg}$-theory. 
\subsection{Little string defects and line operators in gauge theory}

In the $(2,0)$ theory on $M_6^{\times}$, we have defects supported on ${\mathbb C}_q^{\times}$, labeled by collection of weights of $^L{\fg}.$ These defects, which originate as D3 branes of IIB wrapping 2-cycles of $Y_0$, are self-dual strings of the $(2,0)$ theory. (The self-dual strings are strings present both in the $(2,0)$ little string theory, and in theory ${\scX}$.  
%charged under the Lie algebra valued 2-form field of the $(2,0)$ theory, and their tension depends on the moduli of $Y_0$. 
They are distinct from fundamental strings of little string theory, which are not present in theory ${\scX}$.) 
 
Reducing the 6d $(2,0)$ theory on $T^2$ to ${\cal N}=4$ theory on ${M}_4  ={\cal C} \times {\mathbb R}_q \times {\mathbb R}_t^+$, the self dual strings supported on $S^1 \subset T^2$ become particles on ${M}_4$. We have particles are supported at a collection of points on ${\cal C}$, with coordinates $\{a_{i}\}$, and charges which are labeled by weights of $^L {\fg}$. They are located at the tip of  ${\mathbb R}_t^+$, and their world lines are along the ``time'' direction ${\mathbb R}_q$. Presence of such particles affects the partition function of the 4d theory by insertion of line operators. Which line operator we get, depends on the ${\cal N}=4$ gauge theory description one uses.

\subsubsection{}
In the ${\cal N}=4$ theory description based on $^L{\fg}$, the self-dual string of the 6d theory supported on ${\mathbb C}_q^{\times}$ become the Wilson line operators.  Namely, when we view the $T^2$ compactification of theory ${\scX}$ as a two step reduction, reducing on $S^1_q\in {\mathbb C}_q^{\times}$ first, the self-dual string defects become particles already in five dimensions, electrically charged under $^L{\fg}$ valued gauge field. Reducing further on $S^1_q$, we get the Wilson lines of ${\cal N}=4$ theory on $M_4$.
This is as expected from our description, in \ref{sec:two}, of the conformal blocks we study. Namely, the Wilson line operators of the Yang-Mills theory become Wilson line operators of $\Lfgh$ Chern-Simons theory and the corresponding WZW model. 
\subsubsection{}
In the ${\cal N}=4$ theory based on ${\fg}$, the same strings give rise to 't Hooft line operators. Compactifying theory ${\scX}$ on $S^1_t$, we get a five dimensional gauge theory with strings; the strings are charged under the magnetic dual of the ${\fg}$ valued 5d gauge field, a two-form. After further compactifying on $S^1_q$ they become magnetically charged particles; their world-lines introduce t'Hooft line operators in $M_4$. The 't Hooft line operators are labeled by coweights of ${\fg}$ and hence by weights of $^L{\fg}$. The effect the line operator in the ${\fg}$ gauge theory is described by how they affect the boundary conditions at the tip of ${\mathbb R}_t^+$.
 In the limit when the gauge theory become classical, \cite{GW} argued that the boundary conditions one gets are described in terms of ${\fg}$ {opers}. 
 %Opers are flat ${\fg}$-bundles over ${\cal C}$ with singularities at points where the monopoles are inserted on ${\cal C}$, but no monodromy around them. 
This agrees with what we find, since{ opers} describe the classical limit of ${\cal W}_{\beta}({\fg})$ algebra conformal blocks of our paper.

\appendix

\section{Integral formulas in K-theory of 
GIT quotients}\label{s_GIT}

Let a reductive group $G$ act on an variety $\tX$ and let $\cL$ be a very 
ample $G$-linearized line bundle on $\tX$. The GIT quotient 
\begin{equation}
X = \tX \rd_\cL G  =  \Proj 
\left( \bigoplus_n H^0(\tX,\cL^{\otimes n})^G \right) \label{XProj}
\end{equation}
is the categorical quotient of the set of semistable points 
$$
\tX_\textup{ss} = \{x \in \tX, \exists s\in H^0(\tX,\cL^{\otimes n})^G, 
s(x) \ne 0\}
$$
by the action of $G$. We denote by 
$$
\pi_{\textup{\tiny GIT}}: \tX_\textup{ss} \to X
$$
the canonical affine morphism.

\subsubsection{}

Let $\tcF$ be a $G$-equivariant 
coherent sheaf on $\tX$, it induces a coherent sheaf $\cF$ on $X$ 
by 
$$
\Gamma(U,\cF)  = \Gamma(\pi_{\textup{\tiny GIT}}^{-1}(U),\tcF)^G \,.
$$
In particular, $\cL$ itself induces the canonical line bundle $\cO(1)$ 
on $X$.

Our interest is in the computation of $\chi(X,\cF)$ in terms involving 
the prequotient $\tX$. This is the value of the quasipolynomial 
$\chi(X,\cF(m))$ at $m=0$, where $\cF(m) = \cF \otimes \cO(m)$. 
By definition, a quasipolynomial in $m$ is an element 
of a ring of the form 
$$
\Q[m,a_1^{\pm m}, a_2^{\pm m}, \dots] \,,
$$
where the parameters $a_i$ may be roots of unity 
 or weights of a group of automorphisms 
of $X$. 

\subsubsection{}

There is the following basic 

\begin{Lemma}
 For $m\gg 0$, 
 \begin{equation}
   \label{mgg0}
    \chi(X,\cF(m)) = \chi(\tX, \tcF 
\otimes \cL^m)^G \,. 
 \end{equation}
\end{Lemma}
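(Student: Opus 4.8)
The statement is the standard comparison between Euler characteristics upstairs and downstairs, and the natural route is through the higher direct images of $\tcF$ under $\pi_{\textup{\tiny GIT}}$ and a vanishing argument for $m\gg 0$. First I would recall that $\pi_{\textup{\tiny GIT}}\colon \tX_\textup{ss}\to X$ is an affine morphism, so that $R^{>0}\pi_{\textup{\tiny GIT},*}\tcF = 0$ and $\pi_{\textup{\tiny GIT},*}\tcF$ is the sheaf whose sections over $U$ are $\Gamma(\pi_{\textup{\tiny GIT}}^{-1}(U),\tcF)^G$; by construction this is exactly $\cF$. Hence on the level of cohomology over $X$ one has $H^i(X,\cF(m)) = H^i(X, (\pi_{\textup{\tiny GIT},*}\tcF)\otimes\cO(m))$. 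The point is then to compare this with $H^i(\tX_\textup{ss},\tcF\otimes\cL^m)^G$.

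The key steps, in order, would be: (i) identify $\cF$ with $(\pi_{\textup{\tiny GIT},*}\tcF)$ and $\cO(m)$ with the descent of $\cL^m$ for $m$ in the appropriate congruence class (if $\cL$ is only $G$-linearized rather than descending, this already requires $m$ divisible by the relevant exponent, which is absorbed into the quasipolynomial formalism); (ii) apply the projection formula $\pi_{\textup{\tiny GIT},*}(\tcF\otimes\pi_{\textup{\tiny GIT}}^*\cO(m)) = (\pi_{\textup{\tiny GIT},*}\tcF)\otimes\cO(m)$ together with the Leray spectral sequence, which degenerates because $\pi_{\textup{\tiny GIT}}$ is affine, to get $H^i(X,\cF(m)) = H^i(\tX_\textup{ss},\tcF\otimes\cL^m)^G$; (iii) for $m\gg 0$ invoke ampleness of $\cL$ together with the fact that the complement $\tX\setminus\tX_\textup{ss}$ is cut out by the vanishing of $G$-invariant sections of powers of $\cL$, so that taking $G$-invariants of cohomology on $\tX$ with coefficients in $\tcF\otimes\cL^m$ agrees with the same computation restricted to $\tX_\textup{ss}$ — the obstruction classes supported on the unstable locus are killed after twisting by enough powers of $\cL$; (iv) use Serre vanishing upstairs: $H^{>0}(\tX,\tcF\otimes\cL^m) = 0$ for $m\gg 0$, so $\chi(\tX,\tcF\otimes\cL^m)^G = H^0(\tX,\tcF\otimes\cL^m)^G = H^0(\tX_\textup{ss},\tcF\otimes\cL^m)^G = H^0(X,\cF(m)) = \chi(X,\cF(m))$, the last equality by Serre vanishing downstairs.

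The main obstacle is step (iii): controlling the difference between cohomology on $\tX$ and on the semistable locus $\tX_\textup{ss}$, and checking that after taking $G$-invariants this difference disappears for large $m$. The cleanest way I would handle this is via the local cohomology exact sequence for the closed unstable locus $Z = \tX\setminus\tX_\textup{ss}$, $\dots\to H^i_Z(\tX,\tcF\otimes\cL^m)\to H^i(\tX,\tcF\otimes\cL^m)\to H^i(\tX_\textup{ss},\tcF\otimes\cL^m)\to\dots$, and then argue that the $G$-invariant part of $H^i_Z$ vanishes for $m\gg 0$: since $Z$ is the common zero locus of a collection of $G$-invariant sections of powers of $\cL$, multiplication by such a section is injective on $G$-invariants, and an Artin–Rees / graded-module argument (using that $\bigoplus_m H^i_Z(\tX,\tcF\otimes\cL^m)$ is a finitely generated module over the section ring) forces the invariant part to vanish in high degree. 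This is exactly the argument underlying the classical fact that $\Proj$ of the invariant ring computes GIT quotients, and the quasipolynomial bookkeeping (roots of unity, automorphism weights) only affects which arithmetic progression of $m$ one restricts to, not the structure of the proof.
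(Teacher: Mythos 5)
Your strategy is essentially the paper's: both proofs reduce, via ampleness of $\cO(1)$ downstairs and the definition of $\cF$ as the invariant pushforward, to showing that the restriction map $\Gamma(\tX,\tcF\otimes\cL^m)^G \to \Gamma(\tX_\textup{ss},\tcF\otimes\cL^m)^G$ is an isomorphism for $m\gg 0$, and both kill the difference using the fact that the unstable locus is the common zero locus of $G$-invariant sections of positive powers of $\cL$ together with a graded finite-generation argument over the invariant section ring. The difference is in packaging: the paper argues directly with the modules $\bigoplus_m \Gamma(\tX,\tcF\otimes\cL^m)^G$ and $\bigoplus_m \Gamma(\tX_\textup{ss},\tcF\otimes\cL^m)^G$, proves both are finitely generated over $\bigoplus_n \Gamma(\tX,\cL^n)^G$ (using coherence, ampleness, and reductivity), and concludes because each generator of the target, once multiplied by invariant sections of positive powers of $\cL$ (which necessarily vanish on the unstable locus), extends to all of $\tX$; you instead route the comparison through the local cohomology sequence for $Z=\tX\setminus\tX_\textup{ss}$. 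Your route works, but one claim needs repair: $\bigoplus_m H^i_Z(\tX,\tcF\otimes\cL^m)$ is \emph{not} in general a finitely generated module over the section ring --- local cohomology modules with supports in the zero locus of an irrelevant ideal are almost never finitely generated. What is true, and what your argument actually needs, is that the \emph{graded pieces} of $H^i_{(S^G)_+}$ of a finitely generated graded $S^G$-module vanish in sufficiently high degree; the finite generation that powers this statement is that of the invariant section module itself, which is precisely what the paper establishes as its key intermediate step. With that substitution (and noting that in the paper's setting $\tX$ is affine, so $\chi(\tX,\cdot)=\Gamma(\tX,\cdot)$ without appeal to Serre vanishing upstairs), your proof closes.
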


A more general formula, valid without the $m\gg 0$ assumption, 
follows from the results of Teleman \cite{Tel}, see also 
\cite{Wood} and \cite{DHL}. 

\begin{proof}
Since $\cL$ is ample, we have 
$$
\chi(X,\cF(m)) = \Gamma(X,\cF(m)) = \Gamma(\tX_\textup{ss}, \tcF 
\otimes \cL^m)^G \,, 
$$
for $m\gg 0$. Therefore, it suffices to see that 
the natural restriction map 
\begin{equation}
\Gamma(\tX, \tcF 
\otimes \cL^m)^G  \to \Gamma(\tX_\textup{ss}, \tcF 
\otimes \cL^m)^G \label{resG}
\end{equation}
is an isomorphism for $m\gg 0$. The spaces in the 
source and the target in \eqref{resG} form a module over 
the graded algebra in \eqref{XProj}. The sheaf $\tcF$ is 
coherent and the line bundle $\cL$ is ample, hence for 
sufficiently large $r$ and $d$ there is a map 
$$
\left(\cL^{-d}\right)^{\oplus r} \to \tcF
$$
inducing a surjection 
$$
\Gamma(X,\cL^{m-d})^{\oplus r} \to 
\Gamma(X,\tcF \otimes \cL^m) \to 0 
$$
for $m\gg 0$. Because $G$ is reductive, we get a surjectivity 
for $G$-invariant sections and so the modules in question 
are finitely generated. Therefore, 
$$
\bigoplus_{d \le D} 
\Gamma(\tX, 
\cL^{m-d})^G  
\otimes \Gamma(\tX, \tcF 
\otimes \cL^{d})^G  \to \Gamma(\tX, 
\tcF  \otimes \cL^{m})^G \to 0 
$$
where $D$ is the maximal degree of a generator, 
 and similarly for $\tX_\textup{ss}$ in place of $X$. 
Since all sections in $\Gamma(\tX_\textup{ss}
\otimes \cL^{m})^G$ extend by zero to $\tX$, the isomorphism in 
\eqref{resG} follows.
\end{proof}

\subsubsection{}

{}From now on we assume there is a torus
$$
\bT \subset \Aut_G(\tX)
$$
acting on $\cL$ and $\cF$ that contracts $\tX$ to a proper
$G$-invariant set as $t\to 0_\bT$, where $0_\bT$ is a point in a toric 
compactification of $\bT$. This is the case, 
for example, when $\tX$ is a linear representation of $G$, or the 
zero locus of a moment map in a linear symplectic representation of 
$G$. The additional $\bT$-grading makes the trace
\begin{equation}
\tr_{\Gamma(\tX, \tcF 
\otimes \cL^m)} (g,t) \in \Q(G\times \bT) \label{trgt}
\end{equation}
converge for $|t|\gg 1$ to a rational function. Here $|t|\gg 1$ means 
that $t^{-1}$ lies in a certain neighborhood of $0_\bT$ and in that 
region the poles of \eqref{trgt} are disjoint from any fixed 
maximal compact subgroup $G_\textup{compact}\subset G$. 
Therefore 
\begin{align}
  \tr_{\Gamma(\tX, \tcF 
\otimes \cL^m)^G} t &= 
\int_{G_\textup{compact}} \tr_{\Gamma(\tX, \tcF 
\otimes \cL^m)} (g,t) \, d_\textup{Haar}g \notag \\
&= 
\frac{1}{|W|} \int_{|s|=1} \Delta_\textup{Weyl}(s) \tr_{\Gamma(\tX, \tcF 
\otimes \cL^m)} (s,t) \, d_\textup{Haar}s 
\label{intW} 
\end{align}
for $|t|\gg 1$, where 
\begin{align}
  W &\quad \textup{is the Weyl group of $G$,}\notag \\
  \{|s|=1\} \subset G_\textup{compact} &\quad \textup{is a maximal
                                         torus,} \label{torusS} \\
 \Delta_\textup{Weyl}(s)
                                       &\quad \textup{is the Weyl
                                         denominator}\,,
\notag 
\end{align}
and the Haar measures are normalized to have total mass $1$. 

\subsubsection{}

We denote by $\bS\subset G$ the complexification of the 
torus in \eqref{torusS}. 
By localization, 
\begin{equation}
\tr_{\Gamma(\tX, \tcF 
\otimes \cL^m)} (s,t) = 
\sum_k  \frac{p_k(s,t)}{\prod (1-w_{k,i}^{-1})} \, \nu_k^m \label{loc_sum}
\end{equation}
where the sum in \eqref{loc_sum} is over the components $
\widetilde{F}_k
$ of the 
fixed locus 
$$
\tX^{\bS\times \bT} = \bigsqcup \widetilde{F}_k \,,
$$ 
$p_k$ are certain Laurent polynomials in $s$ and $t$, 
the characters $\nu_k$ are the $\bS\times \bT$-weights of 
$\cL$ restricted to the components of the fixed locus, and 
$w_{k,i}$ are the weights in the denominators of the localization 
formula (i.e.\ normal weights to the fixed locus in some ambient 
smooth equivariant embedding of $\tX$).

\subsubsection{}\label{ss_residues}

The integral in \eqref{intW} may be computed by residues as follows. 
By linearity, it suffices to deal with each term in \eqref{loc_sum} 
separately. If $\nu_k\big|_\bS$ is a trivial character then 
$$
\int_{|s|=1} \dots \, \nu_k^m \, d_\textup{Haar}s  = 
\nu_k^m \,  \int_{|s|=1} \dots \, d_\textup{Haar}s \,,
$$
which is tautologically a quasipolynomial in $m$. 

If $\nu_k\big|_\bS$ is a nontrivial character, then we 
deform the integration contour $\{|s|=1\}$ to the region 
$|\nu_k| \ll 1$ in $\bS$ while picking up residues in the 
process. These residues are integrals of the same form over 
translates of codimension $1$ subtori is $\bS$, and so we
can deal with them inductively. 

The resulting quasipolynomial in $m$ computes the 
quasipolynomial $\chi(X,\cF(m))$. 

\subsubsection{}\label{s_ex_Pn}
For the most basic example, we can take $G=GL(1)$ acting with 
weight one on $\C^n$ and $\cL = \cO_{\C^n}(k)$, where the twist
is by the $k$th power of the defining representation. Then 
$$
\Gamma(\cL^{m})^G = \textup{polynomials of degree $km$}\,, 
$$
and so 
$$
X = \begin{cases}
\bP^{n-1}\, \quad &k> 0 \,, \\
\pt\, \quad &k= 0 \,, \\
\varnothing \,, \quad &k<0\,, 
\end{cases}
$$
or, taking polarization into the account, $X$ is the $k$th Veronese embedding of $\bP^{n-1}$
for $k>0$. 
We can take $\bT=\bA$, where 
\begin{equation}
\bA = \left\{ 
  \begin{pmatrix}
    a_1 \\
 & \ddots \\
 && a_ n 
  \end{pmatrix} \right\} \subset GL(n)\,,\label{bAGL}
\end{equation}
which gives normal weights $w_i = s a_i$, $i=1,\dots,n$ at 
the unique fixed point $0\in \C^n$. Therefore, we get the 
integral 
$$
\chi(X,\cO(m)) = 
\frac{1}{2\pi i} 
\int_{|s|=1} \frac{s^{km}}{\prod(1-a_i^{-1} s^{-1})} \, \frac{ds}{s}
\,,
\quad |a_i| > 1 \,, 
$$
which may be computed by 
deforming the contour to $|s|=\varepsilon^{\pm 1}$, depending 
on the sign of $k$. 

\subsubsection{}\label{s_large_chi}

Of importance to us will be the special case when $\cL$ is twisted by 
a large power of a nontrivial $G$-characters $\chi$. In this case, we 
can start the analysis of \eqref{intW} with deforming the 
contour 
into the region $|\chi| \ll 1$.  

We denote by 
$$
\bS^\circ = \left\{s\, \Big| \, \forall w_{k,i}, w_{k,i} \ne 1
\right\}
\subset \bS 
$$
the regular locus of the integrand. The homology groups 
of $\bS^\circ$ have been studied in detail, see 
\cite{ConProc}. In particular, noncanonically, 
\begin{equation}
H_*(\bS^\circ,\C) = \bigoplus_{\bS'}
H_*(\bS',\C) \otimes H_*(N_{\bS/\bS'} \setminus 
\{\textup{hyperplanes}\}, \C) \label{homS0_}
\end{equation}
where $\bS'$ ranges over the components of all possible 
intersections of $\{w_{k,i}=1\}$, and hyperplanes in the 
(trivial) normal bundle $N_{\bS/\bS'}$ to $\bS'$ are cut out by 
the differentials of the characters $w_{k,i}$ trivial on $\bS'$.  Since all 
homology groups vanish above the middle dimension, we have
\begin{equation}
H_\textup{mid}(\bS^\circ,\C) = \bigoplus_{\bS'}
H_\textup{mid} (\bS',\C) \otimes H_\textup{mid} (N_{\bS/\bS'} \setminus 
\{\textup{hyperplanes}\}, \C) \label{homS0} \,. 
\end{equation}

For the computation of the integral, we are interested in homology
relative the subset $|\chi| \ll 1$, and for those we conclude 
\begin{equation}
H_\textup{mid}(\bS^\circ,\{|\chi| \ll 1\}, \C) = \bigoplus_{\bS', \chi\big|_{\bS'}=\textup{const}}
\textup{same as in \eqref{homS0}} \,, \label{relhom}
\end{equation}
which parallels the computation by residues discussed in Section 
\ref{ss_residues}. We set
\begin{equation}
\gamma_\chi = \textup{image of $\{|s|=1\}$ in LHS of \eqref{relhom}}
\,.\label{def_gamma_chi}
\end{equation}
As a middle-dimensional cycle, it is represented by products of a 
maximal compact torus in 
$\bS'$ with a middle-dimensional cycle in a certain 
hyperplane arrangement. We conclude the following 

\begin{Proposition}
If $\cL$ is twisted by a sufficiently large power of a character
$\chi$ then 
\begin{equation}
  \label{intchi}
  \chi(X, \cF(m)) = \frac{1}{|W|}  
\int_{\gamma_\chi}  \Delta_\textup{Weyl}(s) \tr_{\chi(\tX, \tcF 
\otimes \cL^m)} (s,t) \, d_\textup{Haar}s 
\end{equation}
for $m\gg 0$, and for all $m$ if $\dim \bS'=0$ for all $\bS'$ in 
\eqref{relhom}. 
\end{Proposition}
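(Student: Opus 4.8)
The plan is to deduce the Proposition by assembling three ingredients already established in this appendix: the identification $\chi(X,\cF(m))=\chi(\tX,\tcF\otimes\cL^m)^G$ for $m\gg 0$ of Lemma~\eqref{mgg0}, the Weyl integration formula~\eqref{intW}, and the homological analysis of the integration cycle of Section~\ref{s_large_chi}. First I would take $m\gg 0$, so that Lemma~\eqref{mgg0} applies and, by Serre vanishing, $\chi(X,\cF(m))=\Gamma(X,\cF(m))=\Gamma(\tX,\tcF\otimes\cL^m)^G$. Since $\bT\subset\Aut_G(\tX)$ contracts $\tX$ to a proper set, the trace~\eqref{trgt} is a rational function of $t$ for $|t|\gg 1$, so~\eqref{intW} gives
$$\chi(X,\cF(m)) = \frac{1}{|W|}\int_{|s|=1}\Delta_\textup{Weyl}(s)\,\tr_{\chi(\tX,\tcF\otimes\cL^m)}(s,t)\,d_\textup{Haar}s,$$
and since both sides are rational in $t$ this is an identity of rational functions. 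It remains only to replace the torus $\{|s|=1\}$ by the cycle $\gamma_\chi$ of~\eqref{def_gamma_chi}.

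The heart of the argument is this contour replacement. I would expand the integrand by localization as in~\eqref{loc_sum}, a sum over components $\widetilde{F}_k$ of the fixed locus of terms $p_k(s,t)\,\nu_k^m\big/\prod_i(1-w_{k,i}^{-1})$. Using the hypothesis $\cL=\cL_0\otimes\chi^N$ with $N$ large, each localization weight restricts to $\bS$ as $(\chi\big|_\bS)^N$ times a character independent of $N$. Following Sections~\ref{ss_residues} and~\ref{s_large_chi}, I would homotope $\{|s|=1\}$ into the region $|\chi|\ll 1$ inside the regular locus $\bS^\circ$ where no $w_{k,i}$ equals $1$, collecting along the way residues on translates of the codimension-one subtori $\{w_{k,i}=1\}$. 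Taking $N$ large guarantees that for every summand with $\nu_k\big|_\bS$ nontrivial the monomial $\nu_k^m$ (with $m\ge 0$) decays toward $\{|\chi|\ll 1\}$, so no boundary contribution survives there, while for summands with $\nu_k\big|_\bS$ trivial $\nu_k^m$ pulls out of the $s$-integral. Iterating the residue step and invoking the decomposition~\eqref{homS0} of $H_\textup{mid}(\bS^\circ)$, the resulting cycle represents exactly the relative class in $H_\textup{mid}(\bS^\circ,\{|\chi|\ll 1\},\C)$ defining $\gamma_\chi$, with the summands in~\eqref{relhom} matching, piece by piece, the subtori $\bS'$ on which $\chi$, hence $\nu_k$, is constant. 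This proves~\eqref{intchi} for $m\gg 0$.

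For the refinement to all $m$, suppose $\dim\bS'=0$ for every $\bS'$ occurring in~\eqref{relhom}. Then~\eqref{relhom} is spanned by middle-dimensional cycles in complements of hyperplane arrangements in the full torus $\bS$, so $\int_{\gamma_\chi}$ is a \emph{finite} iterated residue — an evaluation of the rational function $\Delta_\textup{Weyl}(s)\,p_k(s,t)\,\nu_k^m/\prod_i(1-w_{k,i}^{-1})$ at torus points — and is therefore a genuine quasipolynomial in $m$ with no asymptotic remainder. Since $\chi(X,\cF(m))$ is also a quasipolynomial in $m$ and the two sides coincide for $m\gg 0$, their difference is a quasipolynomial vanishing for large $m$, hence identically zero, so~\eqref{intchi} holds for all $m\in\Z$.

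I expect the main obstacle to be making the homotopy step precise: one must verify that $\{|s|=1\}$ can be pushed to the region $|\chi|\ll 1$ through $\bS^\circ$ with exactly the residue corrections predicted by the structure of $H_*(\bS^\circ)$ in~\eqref{homS0}, and that for $N$ sufficiently large every boundary term is uniformly suppressed across the finitely many localization weights $\nu_k$. This is where the homology theory of complements of toric hyperplane arrangements does the real work; once it is in place, the bookkeeping matching residues with the components of $\gamma_\chi$, and the quasipolynomial interpolation closing the ``all $m$'' case, are routine.
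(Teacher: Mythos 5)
Your proposal is correct and follows essentially the same route as the paper, which states this Proposition as the direct conclusion of the preceding discussion (Lemma \eqref{mgg0}, the Weyl integration formula \eqref{intW}, the residue analysis of Section \ref{ss_residues}, and the definition \eqref{def_gamma_chi} of $\gamma_\chi$ as the image of $\{|s|=1\}$ in the relative homology \eqref{relhom}) without writing out a separate proof. Your quasipolynomial-interpolation argument for the ``all $m$'' case when $\dim\bS'=0$ is exactly the intended mechanism, as confirmed by the paper's remark that for $\dim\bS'>0$ one must work harder ``to pick the right quasipolynomial in $m$.''
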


If $\dim \bS' >0$ for a certain $\bS'$ in \eqref{relhom} then 
the corresponding integral needs to be treated as in 
Section \ref{ss_residues} to pick the right quasipolynomial in $m$. 

An important special case when one can be sure that $\dim \bS' = 0$
for all $\bS'$ in \eqref{ss_residues} is the case of Nakajima quiver 
varieties. More generally, we have the following simple 

\begin{Lemma}\label{l_dim0}
Suppose $G=\prod GL(V_i)$ and 
$$
\{w_k\} = \textup{weights of $V_i$, $V_i^*$, and $V_i \otimes
  V_j^*$}\,,
$$
where $i,j=1,\dots,n$. Then a generic character $\chi$ of $G$ is 
nontrivial on every components $\bS'$ of $\{w_{k_1}=\dots=w_{k_l}=1\}$ 
of positive dimension. 
\end{Lemma}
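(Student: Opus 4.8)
The statement to prove is Lemma \ref{l_dim0}: for $G = \prod GL(V_i)$ with $\{w_k\}$ the set of weights occurring in $V_i$, $V_i^*$, and $V_i \otimes V_j^*$, a generic character $\chi$ of $G$ is nontrivial on every positive-dimensional component $\bS'$ of a coordinate-type subtorus $\{w_{k_1} = \dots = w_{k_l} = 1\}$ of the maximal torus $\bS \subset G$. The plan is to reduce everything to a linear-algebra statement over $\Q$ about the character lattice of $G$, and then observe that a generic (rational) character lies outside a finite union of proper subspaces.

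\textbf{Key steps.} First I would set up coordinates: write $\bS = \prod_i (\Ct)^{\dim V_i}$ with multiplicative coordinates $s_{i,\alpha}$, so that the characters of $\bS$ form the lattice $\Lambda = \bigoplus_{i,\alpha} \Z\, e_{i,\alpha}$, and each weight $w_k$ is one of the vectors $e_{i,\alpha}$, $-e_{i,\alpha}$, or $e_{i,\alpha} - e_{j,\beta}$. A subtorus $\bS' \subseteq \{w_{k_1} = \dots = w_{k_l} = 1\}$ is (a coset of) the connected subgroup whose character group is the quotient $\Lambda / \Lambda'$, where $\Lambda' \subseteq \Lambda$ is the saturation of the sublattice spanned by $\{w_{k_1}, \dots, w_{k_l}\}$; and $\dim \bS' = \operatorname{rk} \Lambda - \operatorname{rk} \Lambda'$. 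Now a character $\chi \in \Lambda_G$ of $G$ restricts to $\bS$ as an element of the Weyl-invariant part of $\Lambda$; concretely $\chi$ is determined by integers $(c_1, \dots, c_n)$ and restricts to $\sum_{i,\alpha} c_i\, e_{i,\alpha}$. The second step: the claim "$\chi$ is nontrivial on $\bS'$" means precisely that the image of $\chi|_{\bS}$ in $\Lambda / \Lambda'$ is nonzero, i.e. $\chi|_{\bS} \notin \Lambda' \otimes \Q$ inside $\Lambda \otimes \Q$. So it suffices to show: for every subspace $U = \Lambda' \otimes \Q$ arising this way with $\dim \bS' = \operatorname{codim} U > 0$ (equivalently $U \neq \Lambda \otimes \Q$), the space of diagonal-type vectors $\{\sum_{i,\alpha} c_i e_{i,\alpha} : c_i \in \Q\}$ is not contained in $U$.

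\textbf{The heart of the argument.} The third step is the genuinely substantive one: I must show that no such proper $U$ contains the entire $n$-dimensional space $D$ of "block-constant" vectors $v(c) = \sum_{i,\alpha} c_i e_{i,\alpha}$. Suppose for contradiction $D \subseteq U = \operatorname{span}_\Q\{w_{k_1}, \dots, w_{k_l}\}^{\mathrm{sat}}$. I would argue by examining which $e_{i,\alpha}$ and $e_{i,\alpha} - e_{j,\beta}$ can appear. Consider the graph on the index set $\{(i,\alpha)\}$ whose edges are the pairs appearing among the chosen $w_{k_m}$ of difference-type, together with "loops/marks" at indices $(i,\alpha)$ for which $\pm e_{i,\alpha}$ appears. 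A short propagation argument shows that $U$ contains $v(c)$ for all $c$ only if, in each connected component of this graph that is marked, $U$ already contains every $e_{i,\alpha}$ in that component, and in each unmarked component $U$ contains all the differences, hence $v(c)$ restricted to that component forces $c_i$ to be equal across all nodes of the component — but since $U \subsetneq \Lambda \otimes \Q$, at least one coordinate direction $e_{i_0,\alpha_0}$ is missing from $U$, and then picking $c$ supported only on block $i_0$ produces a vector $v(c) \notin U$, a contradiction. So $D \not\subseteq U$ for every admissible proper $U$. Finally, the fourth step: there are only finitely many such subspaces $U$ (the $w_k$ form a finite set, so there are finitely many sublattices $\Lambda'$), so $D \setminus \bigcup_U (U \cap D)$ is the complement of finitely many proper subspaces of $D$, hence Zariski-dense; a generic integral point $(c_1, \dots, c_n)$ therefore gives a character $\chi$ that is nontrivial on every positive-dimensional $\bS'$, which is exactly the claim. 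I expect the main obstacle to be making the "propagation"/graph argument in the third step fully rigorous and correctly handling the saturation (so that $\bS'$ is genuinely a subtorus, not just a subgroup) — but this is bookkeeping rather than a deep difficulty, and for the paper's purposes a brief indication that it reduces to this elementary lattice statement should suffice.
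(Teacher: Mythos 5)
Your proposal is correct, and your reduction to a statement about the $\Q$-span of the chosen weights inside the character lattice of the maximal torus is exactly the paper's first step: triviality of $\chi$ on a positive-dimensional $\bS'$ is equivalent to $d\chi$ lying in the proper subspace $U=\operatorname{span}(dw_{k_1},\dots,dw_{k_l})$. Where you diverge is in how the impossibility of this is ruled out. The paper argues by induction on $\sum\dim V_i$: a generic block-constant $d\chi$ cannot lie in the span of the difference weights $e_{i,\alpha}-e_{j,\beta}$ alone (these all have coordinate sum zero), so some coordinate weight $\pm e_{i,\alpha}$ must occur among the $w_{k_m}$; one then argues modulo that weight and repeats, concluding that $U$ must be the whole space. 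You instead give a one-shot structural description of every subspace $U$ spanned by a subset of weights, via the graph on the indices $(i,\alpha)$ with difference weights as edges and coordinate weights as marks: $U$ decomposes over connected components, is full on marked components and is the coordinate-sum-zero hyperplane on unmarked ones, whence the block-constant space $D$ is never contained in a proper $U$; finiteness of the set of such $U$ then gives genericity. Both arguments are sound; yours is more explicit and makes visible the stronger fact that a single $\chi$ works simultaneously for all collections $\{w_{k_m}\}$ (which is what the application needs), at the cost of being longer, while the paper's induction is more compact. One sentence of yours should be repaired: in an unmarked component $C$ the condition for $v(c)$ to lie in the span of the differences is $\sum_{(i,\alpha)\in C} c_i=0$, not that the $c_i$ be equal across the nodes of $C$; your concluding contradiction (take $c$ supported on the block $i_0$ containing a coordinate direction missing from $U$, so the restriction of $v(c)$ to that necessarily unmarked component has positive coordinate sum) is nevertheless valid as stated.
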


\begin{proof}
This is equivalent to the differential 
$d\chi$ being in the span of $dw_{k_1},\dots,dw_{k_l}$ if and only if
this span is the whole space. The generic character is not 
in the span of weights of $V_i \otimes V_j^*$ and so at least one
fundamental or the dual fundamental weight has to appear among 
$w_{k_i}$. We can then argue modulo this weight and induct on 
$\sum \dim V_i$. 
\end{proof}
\pagebreak


\begin{thebibliography}
%\normalsize
%\normalsize\vskip-1.5cm
\raggedright

\bibitem{IH} 
  M.~Aganagic, R.~Dijkgraaf, A.~Klemm, M.~Marino and C.~Vafa,
  ``Topological strings and integrable hierarchies,''
  Commun.\ Math.\ Phys.\  {\bf 261} (2006) 451
  %doi:10.1007/s00220-005-1448-9
  [arXiv:hep-th/0312085].
  %%CITATION = doi:10.1007/s00220-005-1448-9;%%
  %296 citations counted in INSPIRE as of 28 Dec 2016

\bibitem{AH} 
  M.~Aganagic and N.~Haouzi,
  ``ADE Little String Theory on a Riemann Surface (and Triality),''
  arXiv:1506.04183 [hep-th].
  %%CITATION = ARXIV:1506.04183;%%
  %1 citations counted in INSPIRE as of 08 Oct 2015 
\bibitem{AHKS} 
  M.~Aganagic, N.~Haouzi, C.~Kozcaz and S.~Shakirov,
  ``Gauge/Liouville Triality,''
  arXiv:1309.1687 [hep-th].
  %%CITATION = ARXIV:1309.1687;%%
  %50 citations counted in INSPIRE as of 19 Apr 2018
\bibitem{An} M.~Aganagic, N.~Haouzi and S.~Shakirov,
  ``$A_n$-Triality,''
  arXiv:1403.3657 [hep-th].
  %%CITATION = ARXIV:1403.3657;%%
  %29 citations counted in INSPIRE as of 08 Oct 2015


\bibitem{ese} 
  M.~Aganagic and A.~Okounkov,
  ``Elliptic stable envelope,''
  arXiv:1604.00423 [math.AG].
  %%CITATION = ARXIV:1604.00423;%%
  %1 citations counted in INSPIRE as of 28 Apr 2016

 \bibitem{Bethe}{M.~Aganagic and A.~ Okounkov, ``Quasimap counts and
     Bethe eigenfunctions,''  
Mosc.\ Math. J.\ \textbf{17} (2017), no.~4, 565¡V-600. 
}
 \bibitem{AS} 
  M.~Aganagic and S.~Shakirov,
  ``Gauge/Vortex duality and AGT,''
  %doi:10.1007/978-3-319-18769-3_13
  arXiv:1412.7132 [hep-th].
  %%CITATION = doi:10.1007/978-3-319-18769-3_13;%%
  %14 citations counted in INSPIRE as of 19 Apr 2018
 \bibitem{AGT}
  L.~F.~Alday, D.~Gaiotto and Y.~Tachikawa,
 ``Liouville Correlation Functions from Four-dimensional Gauge Theories,''
  Lett.\ Math.\ Phys.\  {\bf 91} (2010) 167
  [arXiv:0906.3219 [hep-th]].
  %%CITATION = ARXIV:0906.3219;%%
  %596 citations counted in INSPIRE as of 08 Oct 2015

\bibitem{Arakawa} T. Arakawa, ``Representation Theory of
  Superconformal Algebras and the Kac-Roan-Wakimoto Conjecture,'' Duke
  Math. J. {\bf 130} (2005) 435-478.

\bibitem{Arakawa1} T. Arakawa, ``Representation theory of
  $W$-algebras,'' Invent. Math. {\bf 169} (2007) 219-320.

\bibitem{Seiberg} 
  P.C.~Argyres, A.~Kapustin and N.~Seiberg,
  ``On S-duality for non-simply-laced gauge groups,''
  JHEP ${\bf 0606}$ (2006) 043
  %doi:10.1088/1126-6708/2006/06/043
  [arXiv:hep-th/0603048].
  %%CITATION = doi:10.1088/1126-6708/2006/06/043;%%
  %25 citations counted in INSPIRE as of 24 May 2016

\bibitem{AG} D. Arinkin and D. Gaitsgory, ``Singular support of
  coherent sheaves and the geometric Langlands conjecture,'' Selecta
  Math. {\bf 21} (2015) 1-199.

\bibitem{AKOS} H. Awata, H. Kubo, S. Odake, and J. Shiraishi,
  ``Quantum ${\mc W}_N$ algebras and Macdonald polynomials,'' Comm. Math. Phys.
179 (1996) 401-416.

\bibitem{ATY} H. Awata, A. Tsuchiya and Y. Yamada, ``Integral formulas
  for the WZNW correlation functions,'' Nuclear Physics {\bf B365} (1991)
  680-696.

\bibitem{beem} 
  C.~Beem, T.~Dimofte and S.~Pasquetti,
 ``Holomorphic Blocks in Three Dimensions,''
  JHEP {\bf 1412} (2014) 177.
  %doi:10.1007/JHEP12(2014)177
  [arXiv:1211.1986 [hep-th]].
  %%CITATION = doi:10.1007/JHEP12(2014)177;%%
  %91 citations counted in INSPIRE as of 01 Dec 2016

\bibitem{BF}
Behrend, K., Fantechi, B., ``The intrinsic normal cone,'' Invent. Math., {\bf 128}, {1997},{45-88}.

\bibitem{BD}
A.~Beilinson and V.~Drinfeld, ``Quantization of Hitchin's Integrable
System and Hecke Eigensheaves,'' preprint available at
www.math.uchicago.edu/$\sim$arinkin/langlands
 
 \bibitem{Bershadsky} 
  M.~Bershadsky, K.A.~Intriligator, S.~Kachru, D.R.~Morrison, V.~Sadov and C.~Vafa,
  ``Geometric singularities and enhanced gauge symmetries,''
  Nucl.\ Phys.\ {\bf B481} (1996) 215.
  %doi:10.1016/S0550-3213(96)90131-5
 % [arXiv:hep-th/9605200].
  %%CITATION = doi:10.1016/S0550-3213(96)90131-5;%%
  %360 citations counted in INSPIRE as of 30 May 2016 

\bibitem{BO} M. Bershadsky and H. Ooguri, ``Hidden $SL(n)$ symmetry in
  conformal field theory,'' Commun. Math. Phys. {\bf 126} (1989) 49-83.
  
\bibitem{BFS} R. Bezrukavnikov, M. Finkelberg and V. Schechtman,
  Factorizable Sheaves and Quantum Groups, Lect. Notes in Math. {\bf 1691},
  Springer 1998.

  \bibitem{BFN} 
  A.~Braverman, M.~Finkelberg and H.~Nakajima,
  ``Instanton moduli spaces and ${\cal W}$-algebras,''
  arXiv:1406.2381 [math.QA].
  %%CITATION = ARXIV:1406.2381;%%
  %8 citations counted in INSPIRE as of 31 Oct 2015

\bibitem{Nak13a} 
  A.~Braverman, M.~Finkelberg and H.~Nakajima,
  ``Coulomb branches of $3d$ ${\cal N}=4$ quiver gauge theories and
  slices in the affine Grassmannian'' (with appendices by A.
  Braverman, M. Finkelberg, J. Kamnitzer, R. Kodera,
  H. Nakajima, B. Webster, and A. Weekes),
  arXiv:1604.03625 [math.RT].
  %%CITATION = ARXIV:1604.03625;%%

\bibitem{Nak13c} 
   A.~Braverman, M.~Finkelberg and H.~Nakajima,
  ``Towards a mathematical definition of Coulomb branches of $3$-dimensional ${\cal N}=4$ gauge theories, II,''
  arXiv:1601.03586 [math.RT].
  %%CITATION = ARXIV:1601.03586;%%
  %6 citations counted in INSPIRE as of 23 May 2016


\bibitem{BDGH} 
  M.~Bullimore, T.~Dimofte, D.~Gaiotto and J.~Hilburn,
  ``Boundaries, Mirror Symmetry, and Symplectic Duality in 3d $\mathcal{N}=4$ Gauge Theory,''
  arXiv:1603.08382 [hep-th].
  %%CITATION = ARXIV:1603.08382;%%
  %1 citations counted in INSPIRE as of 13 May 2016


\bibitem{CV}
S.~Cecotti and C.~Vafa,
  ``Topological antitopological fusion,''
  Nucl.\ Phys.\ {\bf B367} (1991) 359.
  %doi:10.1016/0550-3213(91)90021-O
  %%CITATION = doi:10.1016/0550-3213(91)90021-O;%%
  %267 citations counted in INSPIRE as of 08 Jun 2016 ;  
  S.~Cecotti, D.~Gaiotto and C.~Vafa,
  ``$tt^*$ geometry in 3 and 4 dimensions,''
  JHEP {\bf 1405} (2014).
 % doi:10.1007/JHEP05(2014)055
 % [arXiv:1312.1008 [hep-th]];
  %%CITATION = doi:10.1007/JHEP05(2014)055;%%
  %18 citations counted in INSPIRE as of 08 Jun 2016
  S.~Cecotti, A.~Neitzke and C.~Vafa,
  ``Twistorial Topological Strings and a tt* Geometry for N=2 Theories in 4d,''
  arXiv:1412.4793 [hep-th];
  %%CITATION = ARXIV:1412.4793;%%
  %3 citations counted in INSPIRE as of 08 Jun 2016,
\bibitem{CF3} Ciocan-Fontanine, I., Kapranov, M., ``{Virtual fundamental classes via dg-manifolds},'' {Geom. Topol.}, {\bf 13}, {(2009)}, {1779-1804}


\bibitem{CKM} {Ciocan-Fontanine, I.}, {Kim, B.}, {Maulik, D.}, ``{Stable quasimaps to GIT quotients},'' {J. Geom. Phys.}, {\bf 75} {(2014)}, {17-47}.


\bibitem{ConProc}
C.~De Concini and C.~Procesi, 
\emph{On the geometry of toric arrangements},  
Transform.\ Groups \textbf{10} (2005) 387-422. 


\bibitem{D} 
  N.~Dorey, C.~Fraser, T.~J.~Hollowood and M.~A.~C.~Kneipp,
  ``S duality in N=4 supersymmetric gauge theories with arbitrary gauge group,''
  Phys.\ Lett.\ B {\bf 383} (1996) 422.
  %[arXiv:hep-th/9605069].
  %%CITATION = doi:10.1016/0370-2693(96)00773-3;%%
  %44 citations counted in INSPIRE as of 29 May 2016


\bibitem{DM}
  M.~R.~Douglas and G.~W.~Moore,
  ``D-branes, quivers, and ALE instantons,''
  arXiv:hep-th/9603167.
  %%CITATION = HEP-TH/9603167;%%
  %1019 citations counted in INSPIRE as of 28 Sep 2015

\bibitem{Drinfeld} V. Drinfeld, ``Hopf algebras and the quantum
  Yang-Baxter equation,'' Dokl. Soviet Math. {\bf 32} (1985) 254-258.

\bibitem{EFK} P.I. Etingof, I.B. Frenkel, and A.A. Kirillov, Jr.,
  Lectures on Representation Theory and Knizhnik-Zamolodchikov
  Equations, Math. Surveys and Monographs {\bf 58}, Amer. Math. Soc.,
  Providence, RI, 1998.

\bibitem{EV} Etingof, P., {Varchenko, A.},
``{Dynamical Weyl groups and applications},''
{Adv. Math.},
{\bf 167},
{(2002)},
{74-127}.

\bibitem{FG}{Fantechi, B.},
{G{\"o}ttsche, L.},
``{Riemann-Roch theorems and elliptic genus for virtually smooth
   schemes},''
{Geom. Topol.},{\bf 14},
 {(2010)},
{83-115}.
   
   \bibitem{FF:ds} B. Feigin and E. Frenkel, ``Quantization of the
  Drinfeld-Sokolov reduction,'' Phys. Lett. {\bf B246} (1990) 75-81.

\bibitem{FF} B. Feigin and E. Frenkel, ``Affine Kac-Moody algebras at
  the critical level and Gelfand-Dikii algebras,''
  Int. J. Mod. Phys. {\bf A7}, Suppl. 1A (1992) 197-215 (reprinted in
  ``Infinite analysis'' (Kyoto, 1991), Adv. Ser. Math. Phys. 16, World
  Scientific, 1992).

\bibitem{FF:laws} B. Feigin and E. Frenkel, ``Integrals of motion and
  quantum groups,'' Proceedings of the C.I.M.E. School Integrable
    Systems and Quantum Groups, Italy, June 1993, Lect. Notes in
  Math. {\bf 1620}, Springer, Berlin, 1995 [arXiv:hep-th/9310022].

\bibitem{FF:qw} B. Feigin and E. Frenkel, ``Quantum ${\mathcal
    W}$--algebras and elliptic algebras,'' Comm. Math. Phys. {\bf 178}
  (1996) 653-678 [arXiv:q-alg/9508009].

\bibitem{FFR} B. Feigin, E. Frenkel and N. Reshetikhin, ``Gaudin
  model, Bethe ansatz and critical level,'' Comm. Math. Phys.
    {\bf 166} (1994) 27-62.

\bibitem{Felder} G.~Felder, ``Conformal field theory and integrable
  systems associated to elliptic curves,'' arXiv:hep-th/9407154.
 
\bibitem{F} 
  E.~Frenkel,
  ``${\mc W}$-algebras and Langlands-Drinfeld correspondence,'' in
New Symmetries in Quantum Field Theory,
eds. J. Fr\"{o}hlich, e.a., pp. 433-447, New York: Plenum Press, 1992.
  
\bibitem{Frenkel1}
E.~Frenkel, ``Affine Algebras, Langlands Duality and Bethe Ansatz,'' in
Proceedings of the  International Congress of Mathematical Physics,
Paris, 1994, ed. D. Iagolnitzer,  pp. 606-642, International Press,
1995 [arXiv:q-alg/9506003].

\bibitem{Frenkel2} E.~Frenkel, ``Wakimoto modules, opers and the center
  at the critical level,'' Adv. Math. {\bf 195} (2005) 297-404
  [arXiv:math/0210029].

\bibitem{Frenkel}
{E.~{Frenkel}}, ``Lectures on the Langlands Program and Conformal
Field Theory,'' in Frontiers in Number Theory, Physics and Geometry
    II, eds. P. Cartier, e.a., pp. 387-536, Springer, 2007
[arXiv:hep-th/0512172].

\bibitem{F:bourbaki} E. Frenkel, ``Gauge theory and Langlands
duality,'' S\'eminaire Bourbaki, Ast\'erisque {\bf 332} (2010) 369-403
[arXiv:0906.2747].

\bibitem{FB} E.~Frenkel and D.~Ben-Zvi, {\em Vertex Algebras and
Algebraic Curves}, Mathematical Surveys and
Monographs {\bf 88}, Second Edition, AMS, 2004.

\bibitem{FGT}
  E.~Frenkel, S.~Gukov and J.~Teschner,
  ``Surface Operators and Separation of Variables,'' JHEP (2016) 179
  [arXiv:1506.07508 [hep-th]].

\bibitem{FKW} E. Frenkel, V. Kac and M. Wakimoto, ``Characters and
  fusion rules of W-algebras via quantized Drinfeld-Sokolov
  reduction,'' Comm. Math. Phys. {\bf 147} (1992) 295-328.

\bibitem{FR:dca} E.~Frenkel, and N.~{Reshetikhin}, ``Towards Deformed
  Chiral Algebras,'' in Proceedings of the
  Quantum Group Symposium at the XXIth International Colloquium on
  Group Theoretical Methods in Physics (Goslar, 1996), eds. H.-D. Doebner
  and V.K. Dobrev, pp. 27-42, Heron Press, Sofia, 1997
  [arXiv:q-alg/9706023].

\bibitem{FR} E.~Frenkel, and N.~{Reshetikhin}, ``Deformations of
    W-algebras associated to simple Lie algebras,''
    Comm. Math. Phys. {\bf 197}
(1998) 1-32 [arXiv:q-alg/9708006].

\bibitem{FIR}
I.B.~Frenkel, N.Yu.~Reshetikhin, ``Quantum affine algebras and
holonomic difference equations,'' Comm. Math. Phys. {\bf 146} (1992) 1-60.

\bibitem{FGPP} P. Furlan, A. Ch. Ganchev, R. Paunov, and
    V. B. Petkova, ``Solutions of the Knizhnik - Zamolodchikov
      Equation with Rational Isospins and the Reduction to the Minimal
      Models,'' Nucl. Phys. {\bf B394} (1993) 665-706.

 \bibitem{GK} 
  D.~Gaiotto and P.~Koroteev,
  ``On Three Dimensional Quiver Gauge Theories and Integrability,''
  JHEP {\bf 1305} (2013) 126
  %doi:10.1007/JHEP05(2013)126
  [arXiv:1304.0779 [hep-th]].
  %%CITATION = doi:10.1007/JHEP05(2013)126;%%
  %40 citations counted in INSPIRE as of 28 Nov 2016

 \bibitem{GW}
  D.~Gaiotto and E.~Witten,
  ``Knot Invariants from Four-Dimensional Gauge Theory,''
  Adv.\ Theor.\ Math.\ Phys.\  {\bf 16} (2012) 935
  [arXiv:1106.4789 [hep-th]].
  %%CITATION = ARXIV:1106.4789;%%
  %46 citations counted in INSPIRE as of 28 Sep 2015
  

\bibitem{gaitsW} D.~Gaitgory, ``Twisted Whittaker model and
  factorizable sheaves,'' Sel. Math., New ser. {\bf 13} (2008) 617
  [arXiv:0705.4571].

\bibitem{gaitsQ} D.~Gaitsgory, ``Quantum Langlands Correspondence,''
  arXiv:1601.05279.


\bibitem{GinzNak}
 {Ginzburg, V.},
 ``{Lectures on Nakajima's quiver varieties}",
{Geometric methods in representation theory. I},
{S\'emin. Congr.}, {\bf 24}, {Soc. Math. France, Paris}, {(2012)},
   pages={145-219},
   %review={\MR{3202703}},



\bibitem{Giv}
{Givental, A.}, ``On the WDVV equation in quantum $K$-theory",''
Dedicated to William Fulton on the occasion of his 60th birthday,
{Michigan Math. J.},
{\bf 48},
{(2000)},
{295-304}.

\bibitem{GP}
{Graber, T.},
{Pandharipande, R.},
``Localization of virtual classes,''
{Invent. Math.},
{\bf 135},
{(1999)},
%  number={2},
{487-518}.
 
 
  \bibitem{GuW} 
  S.~Gukov and E.~Witten,
  ``Gauge Theory, Ramification, And The Geometric Langlands Program,''
Current Developments in Mathematics {\bf 2006} (2008) 35-180.
%[arXiv:hep-th/0612073].
  %%CITATION = HEP-TH/0612073;%%
  %218 citations counted in INSPIRE as of 19 May 2016
  

  
  
\bibitem{DHL}
{Halpern-Leistner, D.},
``The derived category of a GIT quotient,''
{J. Amer. Math. Soc.},
{\bf 28}, {(2015)},
%  number={3},
{871-912}.

\bibitem{HL} Y.-Z. Huang and J. Lepowsky, ``Tensor categories and the
  mathematics of rational and logarithmic conformal field theory,''
  Journal of Physics {\bf A46} (2013) 494009 [arXiv:1304.7556
  [hep-th]].

\bibitem{Jimbo} M. Jimbo, ``A $q$-difference analogue of
  $\mathcal{U}({\mathfrak{g}})$ and Yang-Baxter equation,''
  Lett. Math. Phys. {\bf 10} (1985) 63-69.

\bibitem{Kapustin} 
  A.~Kapustin,
  ``A Note on Quantum Geometric Langlands Duality, Gauge Theory, and
  Quantization of the Moduli Space of Flat Connections,''
  arXiv:0811.3264 [hep-th].
  %%CITATION = ARXIV:0811.3264;%%
  %8 citations counted in INSPIRE as of 04 Oct 2016


 \bibitem{KW}
  A.~Kapustin and E.~Witten,
  ``Electric-Magnetic Duality And The Geometric Langlands Program,''
  Commun.\ Num.\ Theor.\ Phys.\  {\bf 1} (2007) 1
  [arXiv:hep-th/0604151].
  %%CITATION = HEP-TH/0604151;%%
  %306 citations counted in INSPIRE as of 20 Aug 2015

\bibitem{KL} D. Kazhdan and G. Lusztig, ``Tensor structures arising
from affine Lie algebras,'' Journal of AMS {\bf 6} (1993) 905-947.

\bibitem{KimPest} 
  T.~Kimura and V.~Pestun,
  ``Quiver W-algebras,''
  arXiv:1512.08533 [hep-th].
  %%CITATION = ARXIV:1512.08533;%%
  %7 citations counted in INSPIRE as of 13 May 2016
   
\bibitem{KZ} 
  V.~G.~Knizhnik and A.~B.~Zamolodchikov,
 ``Current Algebra and Wess-Zumino Model in Two-Dimensions,''
  Nucl.\ Phys.\ {\bf B247} (1984) 83.
 % doi:10.1016/0550-3213(84)90374-2
  %%CITATION = doi:10.1016/0550-3213(84)90374-2;%%
  %1348 citations counted in INSPIRE as of 02 May 2016
  
  

\bibitem{KN}
 K. Kuroki, A. Nakayashiki, 
``Free Field Approach to Solutions of the Quantum
Knizhnik--Zamolodchikov Equations,'' SIGMA {\bf 4} (2008) 049.

\bibitem{Lee} {Lee, Y.-P.}, ``Quantum $K$-theory. I. Foundations,''
{Duke Math. J.},
{\bf 121}
{(2004)},
%  number={3},
{389-424}.

\bibitem{Laumon} G. Laumon, ``Transformation de Fourier
g\'{e}n\'{e}ralis\'{e}e,'' Preprint alg-geom/9603004.

\bibitem{Li1}
   {Li, J.},
   ``Stable morphisms to singular schemes and relative stable
   morphisms,''
   {J. Differential Geom.},
   {\bf 57}
   {(2001)},
%D   number={3},
   {509-578}.
   
\bibitem{Li2}
  {Li, J.},
   ``A degeneration formula of GW-invariants,''
   {J. Differential Geom.},
   {\bf 60} {(2002)},
%  number={2},
  {199-293}.
  
\bibitem{LiWu} {Li, J.}, {Wu, B.},
   ``Good degeneration of Quot-schemes and coherent systems,''
 {Comm. Anal. Geom.},
{\bf 23}  {(2015)},
%  number={4}, 
{841-921}.

 \bibitem{LMN} 
  A.~S.~Losev, A.~Marshakov and N.~A.~Nekrasov,
  ``Small instantons, little strings and free fermions,''
  in From Fields to Strings, eds. Shifman, M., e.a., vol. 1, pp. 581-621
  [arXiv:hep-th/0302191].
  %%CITATION = HEP-TH/0302191;%%
  %138 citations counted in INSPIRE as of 31 May 2016 
 
 \bibitem{Mm} 
  A.~Losev, G.~W.~Moore and S.~L.~Shatashvili,
  ``M \& m's,''
  Nucl.\ Phys.\ {\bf B522} (1998) 105
  %doi:10.1016/S0550-3213(98)00262-4
  [arXiv:hep-th/9707250].
  %%CITATION = doi:10.1016/S0550-3213(98)00262-4;%%
  %92 citations counted in INSPIRE as of 22 Aug 2016
   
  \bibitem{Matsuo}
  A. Matsuo, ``Jackson Integrals of Jordan-Pochhammer Type and
  Quantum Knizhnik-Zamolodchikov Equations,'' Comm. Math. Phys. {\bf
    151} (1993) 263-273

  \bibitem{Matsuo3}
  A. Matsuo, ``Quantum algebra structure of certain Jackson
  integrals,'' Comm. Math. Phys. {\bf 157} (1993) 479-498. 
    
\bibitem{MO} 
  D.~Maulik and A.~Okounkov,
  ``Quantum Groups and Quantum Cohomology,''
  arXiv:1211.1287, to appear in Ast\'erisque. 
  %%CITATION = ARXIV:1211.1287;%%
  %49 citations counted in INSPIRE as of 08 Oct 2015

\bibitem{mn}
K.~McGerty and T.~Nevins, ``Kirwan surjectivity for quiver varieties,''
arXiv:1610.08121. 

      
\bibitem{Nakajima}
H.~Nakajima, ``Quiver varieties and finite-dimensional representations of quantum
   affine algebras,''
{J. Amer. Math. Soc.}, {\bf 14}, (2001) 145--238.

\bibitem{Nak13b}  H.~Nakajima,
  ``Towards a mathematical definition of Coulomb branches of $3$-dimensional ${\cal N}=4$ gauge theories, I,''
  arXiv:1503.03676 [math-ph].
  %%CITATION = ARXIV:1503.03676;%%
  %11 citations counted in INSPIRE as of 23 May 2016

\bibitem{BPS} N. Nekrasov, ``On the BPS/CFT correspondence,'' seminar
  at the University of Amsterdam, Feb 2004.

\bibitem{Nekrasov} 
  N.~A.~Nekrasov,
  ``Instanton partition functions and M-theory,''  Proceedings of the
15th International Seminar on High Energy
                        Physics (Quarks 2008).                   
                          %%CITATION = INSPIRE-1372940;%%
\bibitem{Nekrasov:2015wsu} 
  N.~Nekrasov,
  ``BPS/CFT correspondence: non-perturbative Dyson-Schwinger equations and qq-characters,''
  JHEP {\bf 1603} (2016) 181
  %doi:10.1007/JHEP03(2016)181
  [arXiv:1512.05388 [hep-th]].
  %%CITATION = doi:10.1007/JHEP03(2016)181;%%
  %22 citations counted in INSPIRE as of 28 Dec 2016

\bibitem{Nekrasov:2016qym} 
  N.~Nekrasov,
  ``BPS/CFT correspondence II: Instantons at crossroads, Moduli and Compactness Theorem,''
  arXiv:1608.07272 [hep-th].
  %%CITATION = ARXIV:1608.07272;%%
  %4 citations counted in INSPIRE as of 28 Dec 2016

\bibitem{Nekrasov:2016ydq} 
  N.~Nekrasov,
  ``BPS/CFT Correspondence III: Gauge Origami partition function and qq-characters,''
  arXiv:1701.00189 [hep-th].
  %%CITATION = ARXIV:1701.00189;%%

\bibitem{NO1} 
  N.~Nekrasov and A.~Okounkov,
  ``Seiberg-Witten theory and random partitions,''
  Prog.\ Math.\  {\bf 244} (2006) 525
%  doi:10.1007/0-8176-4467-9_15
  [arXiv:hep-th/0306238].
  %%CITATION = doi:10.1007/0-8176-4467-9_15;%%
  %482 citations counted in INSPIRE as of 31 May 2016
 
 \bibitem{NO2} 
  N.~Nekrasov and A.~Okounkov,
  ``Membranes and Sheaves,''
  arXiv:1404.2323 [math.AG].
  %%CITATION = ARXIV:1404.2323;%%
  %17 citations counted in INSPIRE as of 08 Jan 2016
  
  \bibitem{NP} 
  N.~Nekrasov and V.~Pestun,
  ``Seiberg-Witten geometry of four dimensional N=2 quiver gauge theories,''
  arXiv:1211.2240 [hep-th].
  %%CITATION = ARXIV:1211.2240;%%
  %63 citations counted in INSPIRE as of 30 May 2016

\bibitem{Rosly} 
  N.~Nekrasov, A.~Rosly and S.~Shatashvili,
  ``Darboux coordinates, Yang-Yang functional, and gauge theory,''
  Nucl.\ Phys.\ Proc.\ Suppl.\  {\bf 216} (2011) 69
 % doi:10.1016/j.nuclphysbps.2011.04.150
  [arXiv:1103.3919 [hep-th]].
  %%CITATION = doi:10.1016/j.nuclphysbps.2011.04.150;%%
  %80 citations counted in INSPIRE as of 01 Jun 2016
  
\bibitem{NW} 
  N.~Nekrasov and E.~Witten,
  ``The Omega Deformation, Branes, Integrability, and Liouville Theory,''
  JHEP {\bf 1009} (2010) 092
  [arXiv:1002.0888 [hep-th]].
  %%CITATION = ARXIV:1002.0888;%%
  %113 citations counted in INSPIRE as of 20 Aug 2015

 \bibitem{OK}
A.~Okounkov, ``Lectures on K-theoretic computations in enumerative
geometry,'' Geometry of moduli spaces and representation theory, 251-¡V380, 
IAS/Park City Math.\ Ser., 24, AMS 2017. 

\bibitem{OS}
A.~Okounkov and A.~Smirnov, ``Quantum difference equations for Nakajima varieties,''
arXiv:1602.09007.

   \bibitem{OV} 
  H.~Ooguri and C.~Vafa,
  ``Knot invariants and topological strings,''
  Nucl.\ Phys.\ {\bf B577} (2000) 419
  [arXiv:hep-th/9912123].
  %%CITATION = HEP-TH/9912123;%%
  %310 citations counted in INSPIRE as of 08 Oct 2015

\bibitem{PRY} J.L. Petersen, J. Rasmussen, and M. Yu,
     ``Conformal Blocks for admissible representations in $SL(2)$
     current algebra,'' Nucl.Phys. {\bf B457} (1995) 309-342.

\bibitem{Tasi} 
  J.~Polchinski,
  ``Tasi lectures on D-branes,''
  arXiv:hep-th/9611050.
  %%CITATION = HEP-TH/9611050;%%
  %1192 citations counted in INSPIRE as of 01 Dec 2016

\bibitem{PR}
A. Polishchuk and M. Rothstein, ``Fourier transform for D-algebras,''
Duke Math. J. {\bf 109} (2001) 123-146.

\bibitem{Raskin} S. Raskin, ``${\mc W}$-algebras and Whittaker
categories,'' arXiv:1611.04937.

\bibitem{Raskin:ch} S. Raskin, ``Chiral categories,'' available at
  http://math.mit.edu/$\sim$sraskin/chiralcats.pdf



\bibitem{Reshetikhin} N.~Reshetikhin, ``Quantized Universal Enveloping
  Algebras, The Yang-Baxter Equation and Invariants of Links, I and
  II,'' LOMI Preprints, E-4-87 and E-17-87, USSR Academy of Sciences.

\bibitem{ReshqKZ} N.~Reshetikhin, 
``Jackson-type integrals, Bethe vectors, and solutions to a difference
analog of the Knizhnik-Zamolodchikov system'', 
 Lett.\ Math.\ Phys.\ 26 (1992), no.~3, 153¡V-165. 

\bibitem{Rothstein} M. Rothstein, ``Connections on the total Picard
    sheaf and the KP hierarchy,'' Acta Applicandae Mathematicae {\bf
    42} (1996) 297-308.

\bibitem{Sch} V. Schechtman, ``Dualit\'e de Langlands quantique,''
  Ann. Fac. Sci. Toulouse {\bf 23} (2014) 129--158.

\bibitem{SV} V. Schechtman and A. Varchenko, ``Arrangements of
  hyperplanes and Lie algebra homology,'' Invent. Math. {\bf 106}
  (1991) 139-194.

\bibitem{SV1} V. Schechtman and A. Varchenko, ``Quantum groups and
  homology of local systems,'' in Algebraic Geometry and Analytic
  Geometry, ICM-90 Satellite Conference Proceedings,
pp. 182-197, Springer, 1991.

\bibitem{SeibergN} 
  N.~Seiberg,
  ``New theories in six-dimensions and matrix description of M theory on $T^5$ and $T^5/Z_2$,''
  Phys.\ Lett.\ {\bf B408} (1997) 98
 % doi:10.1016/S0370-2693(97)00805-8
  [arXiv:hep-th/9705221].
  %%CITATION = doi:10.1016/S0370-2693(97)00805-8;%%
  %366 citations counted in INSPIRE as of 01 Jun 2016
  
\bibitem{SKAO} J. Shiraishi, H. Kubo, H. Awata, S.  Odake, ``A
  quantum deformation of the Virasoro algebra and the Macdonald
  symmetric functions,'' Lett. Math. Phys. {\bf 38} (1996) 33-51.

\bibitem{Sm_cap}
A.~Smirnov, ``Rationality of capped descendent vertex in K-theory,''
arXiv:1612.01048, and in preparation.

\bibitem{FFS} 
A.~{Stoyanovsky}, ``On Quantization of the Geometric Langlands Correspondence I,''
arXiv:math/9911108.

\bibitem{Stoyan} A. Stoyanovsky, {\em A Relation Between the
  Knizhnik-Zamolodchikov and Belavin-Polyakov-Zamolodchikov Systems of
  Partial Differential Equations}, arXiv:math-ph/0012013v3.

\bibitem{Stoyanovsky} A.~Stoyanovsky, ``Quantum Langlands duality and
  conformal field theory,'' arXiv:math/0610974.

\bibitem{TV1}
 {Tarasov, V.},
 {Varchenko, A.},
  ``Geometry of $q$-hypergeometric functions as a bridge between
   Yangians and quantum affine algebras,''
   {Invent. Math.},
{\bf 128},
{(1997)},
%  number={3},
{501-588}.


\bibitem{TV} {Tarasov, V.}, {Varchenko, A.},
   ``Geometry of $q$-hypergeometric functions, quantum affine algebras
   and elliptic quantum groups,''
   %language={English, with English and French summaries},
{Ast\'erisque},
 {\bf 246}
  {(1997)}, {vi+135}.


\bibitem{TV4}
{Tarasov, V.},
{Varchenko, A.},
``Combinatorial formulae for nested Bethe vectors,''
{SIGMA Symmetry Integrability Geom. Methods Appl.},
{\bf 9}, {(2013)}.

\bibitem{TV5}
{Tarasov, V.},
{Varchenko, A.},
  ``Jackson integral representations for solutions of the
   Knizhnik-Zamolodchikov quantum equation,''
{Algebra i Analiz}, {\bf 6} {(1994)},
 {90-137}; {St. Petersburg Math. J.}, {\bf 6} {(1995)},{2}, {275--313}.


\bibitem{Tel} C.~Teleman, ``The quantization conjecture revisited,''
 Ann.\ of Math., {\bf 152}, (2000) 1-43.


\bibitem{JT} 
  J.~Teschner,
  ``Quantization of the Hitchin moduli spaces, Liouville theory, and
  the geometric Langlands correspondence I,''
  Adv.\ Theor.\ Math.\ Phys.\  {\bf 15} (2011) 471.
%  [arXiv:1005.2846 [hep-th]].
  %%CITATION = ARXIV:1005.2846;%%
  %74 citations counted in INSPIRE as of 22 Oct 2015

  \bibitem{Vafa} 
  C.~Vafa,
  ``Geometric origin of Montonen-Olive duality,''
  Adv.\ Theor.\ Math.\ Phys.\  {\bf 1} (1998) 158
  [arXiv:hep-th/9707131].
  %%CITATION = HEP-TH/9707131;%%
  %35 citations counted in INSPIRE as of 24 May 2016

\bibitem{Varchenko}
A. Varchenko, ``Quantized Knizhnik-Zamolodchikov equations,
quantum Yang-Baxter equation, and difference equations for
$q$-hypergeometric functions,'' Comm. Math. Phys. {\bf 162} (1994) 499-528.  
  
\bibitem{Varchenko1} A. Varchenko, Multidimensional Hypergeometric
  Functions and Representation Theory of Lie Algebras and Quantum
  Groups, World Scientific, 1995.

\bibitem{Wittengrass} E. Witten, ``Quantum field theory, Grassmannians,
and algebraic curves,'' Comm. Math. Phys. {113} (1988) 529-600.

\bibitem{wm} 
  E.~Witten,
  ``Mirror manifolds and topological field theory,''
  in Mirror symmetry I, ed. S.-T. Yau, pp. 121-160
  [arXiv:hep-th/9112056].
  %%CITATION = HEP-TH/9112056;%%
  %338 citations counted in INSPIRE as of 01 Dec 2016
  
  
  
\bibitem{Wp} 
  E.~Witten,
  ``Phases of N=2 theories in two-dimensions,''
  Nucl.\ Phys.\ B {\bf 403} (1993) 159
 % doi:10.1016/0550-3213(93)90033-L
  [arXiv:hep-th/9301042].
  %%CITATION = doi:10.1016/0550-3213(93)90033-L;%%



\bibitem{Wittenl1} 
  E.~Witten,
  ``Geometric Langlands From Six Dimensions,''
  arXiv:0905.2720 [hep-th].
  %%CITATION = ARXIV:0905.2720;%%
  %69 citations counted in INSPIRE as of 20 ao¡Pt 2015

\bibitem{Wittenl3} 
  E.~Witten,
  ``Geometric Langlands And The Equations Of Nahm And Bogomolny,''
  arXiv:0905.4795 [hep-th].
  %%CITATION = ARXIV:0905.4795;%%
  %18 citations counted in INSPIRE as of 01 Jun 2016

\bibitem{WF} 
  E.~Witten,
  ``Fivebranes and Knots,''
  arXiv:1101.3216 [hep-th].
  %%CITATION = ARXIV:1101.3216;%%
  %96 citations counted in INSPIRE as of 28 Sep 2015
 
\bibitem{WpI}
  E.~Witten,
  ``A New Look At The Path Integral Of Quantum Mechanics,''
  arXiv:1009.6032 [hep-th].
  %%CITATION = ARXIV:1009.6032;%%
  %67 citations counted in INSPIRE as of 23 May 2016
  
 
\bibitem{Wittenl2} 
  E.~Witten,
  ``More On Gauge Theory And Geometric Langlands,''
  arXiv:1506.04293 [hep-th].
  %%CITATION = ARXIV:1506.04293;%%
 
  
\bibitem{Wood} C.~Woodward, ``Moment maps and geometric invariant theory,''
arXiv:0912.1132. 

\end{thebibliography}
\end{document}